\newtheorem{theorem}{Theorem}
\newtheorem{lemma}{Lemma}
\newtheorem{corollary}{Corollary}
\newtheorem{proposition}{Proposition}
\newtheorem{definition}{Definition}
\newtheorem{remark}{Remark}
\newenvironment{keywords}
{\small\textit{\hspace{8pt} Keywords: }}
{\ignorespacesafterend}
\title{Total robustness in Bayesian Nonlinear Regression}
\author{%
Mengqi Chen$^{1}$\quad
Charita Dellaporta$^{2}$\quad
Thomas B.~Berrett$^{1}$\quad
Theodoros Damoulas$^{1,3}$
\vspace{1.6ex}\\
\small $^{1}$Department of Statistics, University of Warwick, UK\\
\small $^{2}$Department of Statistical Science, University College London, UK\\
\small $^{3}$Department of Computer Science, University of Warwick, UK\\
\small Corresponding author: \texttt{mengqi.chen.2@warwick.ac.uk}
}
\date{}
\newcommand{\indicator}{\mathbf{1}}
\providecommand{\mathscr}{\mathcal}
\newcommand\norm[1]{\left\lVert#1\right\rVert}
\newcommand\abs[1]{\left\lvert#1\right\rvert}
\newcommand\given[1][]{\:#1\vert\:}
\newcommand\ind{\protect\mathpalette{\protect\independenT}{\perp}}
\def\independenT#1#2{\mathrel{\rlap{$#1#2$}\mkern2mu{#1#2}}}
\newcommand{\D}{\mathcal{D}}
\renewcommand{\P}{\mathbb{P}}
\DeclareMathOperator{\R}{\mathbb{R}}
\newcommand{\Hk}{\mathcal{H}_k}
\DeclareMathOperator{\simiid}{\overset{iid}{\sim}}
\DeclareMathOperator*{\argmin}{arg\,min}
\DeclareMathOperator{\Q}{\mathbb{Q}}
\newcommand{\E}{E}
\DeclareMathOperator{\MMD}{\operatorname{MMD}}
\DeclareMathOperator{\DP}{\operatorname{DP}}
\DeclareMathOperator{\Var}{\operatorname{var}}
\newcommand{\KL}{\operatorname{KL}}
\newcommand{\TV}{\operatorname{TV}}
\newcommand{\Hilb}{\mathcal H}
\newcommand{\dd}{\mathrm d}
\def\T{{ \mathrm{\scriptscriptstyle T} }}
\let\Pr\relax
\DeclareMathOperator{\Pr}{\operatorname{Pr}}
\definecolor{mygreen}{RGB}{9,121,105} 
\definecolor{mypink}{RGB}{231,84,128}
\newif\ifshowcomments
  \newcommand*\td[1]{{\color{blue} [\textbf{Theo}: #1]}}
  \newcommand*\hd[1]{{\color{orange} [\textbf{Harita}: #1]}}
  \newcommand*\tb[1]{{\color{mygreen} [\textbf{Tom}: #1]}}
  \newcommand*\td[1]{}
  \newcommand*\hd[1]{}
  \newcommand*\tb[1]{}
\begin{document}

\maketitle
\begin{abstract}
Modern regression analyses are often undermined by covariate measurement error, misspecification of the regression model, and misspecification of the measurement error distribution. We present, to the best of our knowledge, the first Bayesian nonparametric learning framework targeting total robustness to all three challenges in general nonlinear regression. Our framework places a joint Dirichlet process prior on the latent covariate--response distribution and updates it with posterior pseudo-samples of the latent covariates, so that inference is calibrated to the joint law. This yields estimators defined by minimizing the discrepancy between posterior realizations of the joint Dirichlet process and the model-implied joint distribution. We establish generalization bounds and provide a first proof of convergence and consistency of the resulting estimators under non-degenerate measurement error. A gradient-based implementation enables efficient computation; simulations and two real-data studies show improved stability to misspecification under increasing measurement error relative to recent Bayesian and frequentist alternatives.
\end{abstract}

\begin{keywords}
Bayesian nonparametric learning; measurement error; misspecification; robustness.
\end{keywords}

\section{Introduction}
\label{sec:Intro_v2} 
\subsection{Background and motivation}
Contemporary robust regression often faces three ubiquitous threats: covariate measurement error (ME), regression model misspecification, and misspecification of the ME distribution. However, existing robust approaches typically target only one of them, failing when also confronted by another. In regression, these problems are linked because inference depends on how the latent covariate relates to the observed response. ME obscures that relationship, while misspecification can distort how it is modelled and assessed. We introduce a Bayesian framework that simultaneously protects against all three sources of bias, delivering \emph{total robustness}. To motivate this goal, and to clarify the limits of current methodology, we begin by examining the distinct challenges posed by ME and the many forms of misspecification.

\emph{ME in covariates} arises in numerous fields, including economic, biomedical, and environmental studies, where recorded values deviate from the true unknown signal \citep{hausman1995nonlinear,brakenhoff2018measurement,haber2021bias,curley2022nonlinear}. This discrepancy can be classical (when the observation is a noisy version of the true covariate) or Berkson (when the true covariate has a random offset from a nominal target), as established by \citet{berkson1950there} and summarized by \citet{carroll2006measurement}. These two canonical forms of ME, which we focus on in this paper, are part of a broader typology that also includes differential, multiplicative, and systematic ME \citep[see][Chapter 1 for a full taxonomy]{buonaccorsi2010measurement}. If ignored, ME commonly biases estimates and can degrade inference on quantities of interest \citep{gustafson2003measurement}. A substantial literature addresses ME \citep{deming1943statistical, stefanski1985effects, cook1994simulation,wang2004estimation, schennach2013regressions,hu2020measurementerrormodelsnonparametric}, yet many rely on restrictive assumptions, such as known error distributions or replicate measurements \citep{delaigle2006nonparametric, mcintyre2011density}. Nonparametric approaches, including deconvolution and Bayesian frameworks, were developed to alleviate such assumptions \citep{delaigle2016methodology,dellaporta2023robust}. For regression, however, the aim is not simply to de-noise the covariate, but to recover the part of latent-covariate information that remains relevant to the response.

\emph{Model misspecification} arises whenever the model family cannot explain the true data-generating process (DGP). In regression, this may occur because the true regression function lies outside the assumed parametric family, the noise distribution is misspecified, or the data contain a fraction of outliers generated by a different law.  Classical robust frequentist approaches such as Huber’s M-estimators and Hampel’s influence-curve framework aim to limit the impact of atypical observations \citep{huber1992robust,hampel1974influence}.  In the Bayesian paradigm, generalized posteriors replace the likelihood with a loss or divergence to maintain coherent inference under model misspecification \citep{bissiri2016general,grunwald2017inconsistency,jewson2018principles,knoblauch2022optimization} and extend to intractable likelihood problems \citep{matsubara2024generalized}.  Divergence-based updates built on the maximum mean discrepancy (MMD) \citep{gretton2012kernel} further reduce sensitivity to contamination without requiring precise knowledge of the misspecification form \citep{briol2019statistical,alquier2024universal}. Although these techniques mitigate bias from regression-model misspecification, they assume accurately measured covariates. Extensions to ME settings are nontrivial because influence-function calculations and divergence minimization typically rely on unbiased estimating equations in the covariates. Replacing latent covariates by error-prone measurements perturbs the covariate--outcome joint distribution, and can invalidate the unbiasedness and regularity conditions that are required for influence-function and estimating-equation arguments.

\emph{Misspecification in the ME mechanism} is also damaging: \citet{yi2021estimation} show that even advanced corrections become biased when the assumed error law is wrong in parameter estimation and hypothesis tests. \citet{roy2006flexible} develop an expectation-maximization (EM) algorithm for generalized linear models (GLM) with heavy-tailed ME (Student-t) and potentially multimodal covariate distributions, but they rely on external validation data to identify the ME variance. Later, \citet{cabral2014multivariate} relax this requirement by imposing a Student-t family for the ME distribution and estimating its parameters via EM, though their framework is restricted to linear regression models. More flexible estimators, such as the phase-function technique of \citet{delaigle2016methodology}, avoid the need for a known ME distribution, but do not deal with regression model misspecification.

Existing literature has presented several strands of work that mitigate either ME bias or misspecification, and a smaller but growing body attempts to address them \emph{simultaneously}. Corrected-score methods, which adjust the score equations so that their expectation remains zero in the presence of classical ME \citep{nakamura1990corrected}, were extended by \citet{huang2014corrected}, who studied their pathology under sizeable ME and proposed trend-constrained corrected scores. \citet{huang2016dual} analysed maximum likelihood estimation under the coexistence of ME and model misspecification in GLMs. These approaches handle ME and misspecification through corrected estimating equations or likelihood-based adjustments, but they still require accurate knowledge of error moments or strong parametric assumptions. \citet{zhang2018robust} handle outliers and ME in longitudinal data using robust estimating equations, but their framework requires replicate measurements and is restricted to linear models. Recent Bayesian frameworks \citep{dellaporta2023robust} place priors on latent covariate distributions to target ME uncertainty, but they do not address the combined regime of non-degenerate ME and regression-model misspecification. Although there has been incremental progress in this joint regime, existing methods still rely on auxiliary data or restrictive assumptions. Researchers who recognize this gap have called for unified frameworks to tackle ME and misspecification simultaneously \citep{gustafson2002simultaneous,hu2020measurementerrormodelsnonparametric,zhou2023gaussian}.

To answer these calls, we formalize \emph{total robustness} as simultaneous robustness to covariate ME, regression-model misspecification, and misspecification of the ME distribution in general nonlinear regression. Robustness in this regime is naturally expressed through the unobserved covariate--outcome joint distribution, since both ME and regression misspecification perturb it. The observed response therefore remains informative about the latent covariate through the regression relation, and we take this joint law as the nonparametric learning target. The basis of our framework is \emph{Bayesian nonparametrics}, which enables us to model unknown distributions and incorporate prior information via Dirichlet processes (DPs), providing the flexibility required for Bayesian total robustness. To explain this claim and situate our contribution, we now survey the existing Bayesian nonparametric work on robustness.

\emph{Bayesian nonparametric} methods are widely used for flexible modelling and, among other benefits, can reduce sensitivity to model misspecification. DP mixtures flexibly describe unknown distributions, such as heavy-tailed residuals or random effects, thus reducing the dependence on distributional assumptions \citep{muller1997bayesian,neal2000markov,lee2020bayesian}. Gaussian process (GP) priors, meanwhile, allow decision makers to place nonparametric priors over functions, facilitating complex regression relations to be captured without fixing a functional form \citep[Section~5]{gramacy2020surrogates}; see also \citet{zhou2023gaussian}. The Bayesian semiparametric regression of \citet{sarkar2014bayesian} combines B-spline mixtures with a DP mixture prior for the covariate density to target some classes of heteroscedastic ME but requires replicated measurements and does not deliver theoretical guarantees. A related line of work \citep{dellaporta2023robust} addresses ME by placing a DP prior on the latent covariate distribution \emph{alone}, and pairing the response-agnostic DP samples with the observed outcomes. In their approach, the DP posterior updates ignore ME, so any correction for ME enters only through the prior centring measure, where latent covariate draws are then generated only given their noisy observations, breaking the regression-induced dependence structure, which is crucial for regression. Consequently, their error bounds scale with the variance of the ME. By contrast, our framework places the DP prior directly on the \emph{joint} distribution of the latent covariate and response and updates it using latent-variable pseudo-samples informed by the observed outcomes. This joint, response-informed formulation yields posterior summaries that separate ME uncertainty from regression-model misspecification, and it enables convergence and consistency guarantees for the resulting estimators under non-degenerate ME, overcoming their key limitations. We will further clarify this distinction in Section~\ref{sec:Methodology_v2}.

We therefore present a \emph{unified framework for total robustness} based on Bayesian nonparametric learning (NPL) \citep{lyddon2018nonparametric, fong2019scalable} that simultaneously handles covariate ME, regression-model misspecification, and misspecification of the ME distribution in general nonlinear models. Our framework is designed to be flexible, enabling decision makers to tune prior strength and choose whether to sample latent covariates or to work directly with ME-prone observations. At the same time, our theory isolates the effect of each decision through generalization bounds and convergence properties of the resulting estimators. This addresses the long-standing robustness gap and offers a blueprint for trustworthy regression in complex, error-prone, and data-driven settings.


\subsection{Problem setting}
\label{sec:problem-setting}
Let \((\Omega,\mathscr F,\P)\) be a complete probability space on which all random variables are defined. For a fixed dimension \(d\ge 1\), let \(\mathcal X,\mathcal W\subseteq\R^{d}\) denote the spaces of latent covariates \(X\) and their noisy observations \(W\), and let \(\mathcal Y\subseteq\R\) denote the outcome space. We observe i.i.d. pairs \((W_i,Y_i)\sim \P_{WY}^0\ (i=1,\ldots,n)\).

The covariates are generated by a mechanism involving a latent \(X\) and one of two standard forms of ME (shown in Fig.~\ref{fig:intro-graphs}; application examples are listed in Table~\ref{tab:mot_examples}):
\[
\begin{aligned}
\textit{Classical ME:}\quad & X_i \sim \P_X^0,\quad N_i \sim F_N^{0},\quad N_i \ind X_i,\quad W_i = X_i + N_i,\\
\textit{Berkson ME:}\quad& W_i \sim \P_W^{0},\quad N_i \sim F_N^{0},\quad N_i \ind W_i,\quad X_i = W_i + N_i.
\end{aligned}
\]

The response relates to the latent covariate via a nonlinear regression function,
\[
Y_i = g^0(X_i) + E_i,\qquad E_i \sim F_E^0,\quad E_i \ind (X_i,N_i).
\]
Throughout, \(\P^{0}_{\cdot}\) denotes the unknown data-generating law, whereas \(\P_{\cdot}\) (without superscript) denotes the working model. Differences between \(\P_{\cdot}\) and \(\P^{0}_{\cdot}\) can arise at three distinct levels:

\begin{figure}[ht]
    \centering
    \begin{tikzpicture}[x=1cm,y=1cm]

\tikzset{
obs/.style={circle,draw,thick,minimum size=6.5mm,inner sep=0pt},
latent/.style={circle,draw,thick,dashed,minimum size=6.5mm,inner sep=0pt},
noise/.style={circle,draw,thick,dashed,fill=gray!20,minimum size=6.5mm,inner sep=0pt},
method/.style={rectangle,draw,thick,rounded corners=2pt,minimum height=6.5mm,minimum width=18mm,align=center},
loss/.style={rectangle,draw,thick,rounded corners=2pt,minimum height=6.5mm,minimum width=14mm,align=center},
arr/.style={-{Stealth[length=2.2mm]},thick},
darr/.style={-{Stealth[length=2.2mm]},thick,dashed},
lab/.style={font=\scriptsize,inner sep=1pt},
tlab/.style={font=\small}
}

\begin{scope}[xshift=0cm]
\node[tlab] at (0.75,-0.75) {(a) Ignoring ME};
\node[obs]    (Wa) at (0,0) {$W$};
\node[obs]    (Ya) at (1.5,0) {$Y$};
\node[noise]  (Ea) at (1.5,1.25) {$E$};

\draw[arr] (Wa) -- (Ya);
\draw[arr] (Ea) -- (Ya);
\end{scope}

\begin{scope}[xshift=4.2cm]
\node[tlab] at (1.5,-0.75) {(b) Classical ME};

\node[obs]    (Wb) at (0,0) {$W$};
\node[latent] (Xb) at (1.5,0) {$X$};
\node[obs]    (Yb) at (3,0) {$Y$};
\node[noise]  (Nb) at (0,1.25) {$N$};
\node[noise]  (Eb) at (3,1.25) {$E$};

\draw[arr] (Xb) -- (Wb);
\draw[arr] (Xb) -- node[lab,above] {$g^0(\cdot)$} (Yb);
\draw[arr] (Nb) -- (Wb);
\draw[arr] (Eb) -- (Yb);
\end{scope}

\begin{scope}[xshift=10cm]
\node[tlab] at (1.5,-0.75) {(c) Berkson ME};
\node[obs]    (Wa) at (0,0) {$W$};
\node[latent] (Xa) at (1.5,0) {$X$};
\node[obs]    (Ya) at (3,0) {$Y$};
\node[noise]  (Na) at (1.5,1.25) {$N$};
\node[noise]  (Ea) at (3,1.25) {$E$};

\draw[arr] (Wa) -- (Xa);
\draw[arr] (Xa) -- node[lab,above] {$g^0(\cdot)$} (Ya);
\draw[arr] (Na) -- (Xa);
\draw[arr] (Ea) -- (Ya);
\end{scope}

\end{tikzpicture}
  \caption{Graphical representation of the regression structure without ME and under the two canonical ME mechanisms considered in this paper. In classical ME, the observed covariate is a noisy measurement of the latent covariate; in Berkson ME, the latent covariate is a perturbation of the observed value.}
  \label{fig:intro-graphs}
\end{figure}

\begin{table}[ht]
\centering
\small
\begin{tabular}{@{}p{2.7cm} p{7.5cm} p{3.2cm}@{}}
\toprule
\textbf{ME type} & \textbf{Example} & \textbf{Literature}\\
\midrule
\multirow[c]{8}{*}{\centering Classical}%
  & Economic study of Engel curves 
  & \multirow{4}{3.2cm}{\raggedright\cite{hausman1995nonlinear}}\\
  & \quad\(X=\) true household expenditure &\\
  & \quad\(W=\) self-reported expenditure from surveys &\\
  & \quad\(Y=\) budget share of some commodity (e.g. food) &\\\cmidrule{2-3}
  & Effect of potassium intake on health outcomes 
  & \multirow{4}{3.2cm}{\raggedright\cite{curley2022nonlinear}}\\
  & \quad\(X=\) true long-term intake of potassium &\\
  & \quad\(W=\) potassium intake converted from self-reported diet &\\
  & \quad\(Y=\) systolic blood pressure &\\
\midrule[\heavyrulewidth]
\multirow[c]{8}{*}{\centering Berkson}%
  & Relationship between body fat level and risk of diabetes 
  & \multirow{4}{3.2cm}{\raggedright\cite{haber2021bias}}\\
  & \quad\(X=\) true body fat percentage &\\
  & \quad\(W=\) BMI-predicted body fat percentage&\\
  & \quad\(Y=\) blood sugar level (HbA1c) &\\\cmidrule{2-3}
  & Effect of air pollution on respiratory health 
  & \multirow{4}{3.2cm}{\raggedright\cite[Section 6]{schennach2013regressions}}\\
  & \quad\(X=\) true exposure to pollution for each individual &\\
  & \quad\(W=\) state-level average of pollution measure &\\
  & \quad\(Y=\) respiratory health indicator &\\
\bottomrule
\end{tabular}
\caption{Applications that involve classical or Berkson ME. 
For each example we specify the latent covariate $X$, its noisy observation $W$ and the response $Y$.}
\label{tab:mot_examples}
\end{table}

\emph{ME mechanism.}
     The working ME density may deviate from the truth. A common example is scale miscalibration \(f_{N,\tau}(u)=\tau^{-1} f_N^0(u/\tau)\). See \citet{yi2021estimation} for examples.

\emph{Regression function.}
      The parametric family chosen by the decision maker
      \(\{g(\cdot,\theta):\theta\in\Theta\}\), where
\(g:\mathcal X\times\Theta\to\R\) is a nonlinear parametric regression function indexed by \(\theta\in\Theta\subseteq\R^{p}\),
      need not contain the regression function: 
      \(
        g^0(\cdot)\notin
        \bigl\{g(\cdot,\theta):\theta\in\Theta\bigr\}
      \).
      
\emph{Outcome noise.}
      Heavy tails, contamination, or heteroskedasticity may render the working distribution \(F_E\) different from the true \(F_E^0\). A well-known class of outcome noise misspecification is the Huber contamination model \citep{huber1992robust}, where \(F_E^0=(1-\eta)F_E
      +\eta Q_E\) with contamination ratio \(\eta\in[0,1)\).
      
Table~\ref{tab:me_review} gathers notable references on regression with ME, organized by the error mechanism (classical or Berkson), the regression type (linear or nonlinear), and the types of misspecification they address.

Our goal is to find $\theta_0\in\Theta$ such that $g(\cdot, \theta_0)$ recovers $g^0(\cdot)$ as accurately as possible, despite the existence of ME and misspecification coming from all aspects of the model. In Section~\ref{sec:Methodology_v2}, we formalize this by defining the \emph{optimal} estimator in the MMD sense,
$
  \theta_0 =
  \arg\min_{\theta\in\Theta}
  \MMD\bigl(\P_{XY}^0,\P^{\theta}_{XY}\bigr)
$,
where $\P^{\theta}_{XY}$ denotes the joint law of $(X,Y)$ induced by the working regression function $g(\cdot,\theta)$ and the outcome noise model $F_E$. The MMD is the distance between kernel mean embeddings of probability measures in a Reproducing Kernel Hilbert Space (RKHS). With characteristic kernels, the MMD is a robust metric that has been a popular choice as an optimization target in recent robustness literature. It limits the influence of outliers under bounded kernels, and admits unbiased U-statistic estimators with straightforward stochastic gradient calculations \citep{gretton2012kernel,briol2019statistical,alquier2024universal,cherief2025parametric}.

\begin{table}[ht]
\centering\small
\begin{tabular}{lccccc}
\toprule
\textbf{Method} & \textbf{Error type} & \textbf{Regression type} & \textbf{RF} & \textbf{RN} & \textbf{MEM}\\
\midrule
\cite{deming1943statistical}         & C               & Linear                          & $\times$ & $\times$ & $\times$\\
\cite{berkson1950there}              & B               & Linear                          & $\times$  & $\times$ & $\times$\\
\cite{zamar1989robust}               & C               & Linear                          & $\times$ & $\times$ & $\checkmark$\\
\cite{nakamura1990corrected}         & C               & Nonlinear (GLM)                & $\times$  & $\times$ & $\times$\\
\cite{cook1994simulation}            & C               & Nonlinear (parametric)               & $\times$  & $\times$ & $\times$\\
\cite{berry2002bayesian}             & C               & Nonlinear (splines)             & $\checkmark$  & $\times$ & $\times$\\
\cite{schennach2013regressions}      & B               & Nonlinear (Instrumental Variable)               & $\times$  & $\times$ & $\checkmark$\\
\cite{zhou2023gaussian}        & C               & Nonlinear (Gaussian Process)                  & $\checkmark$ & $\times$ & $\checkmark$\\
\cite{dellaporta2023robust}          & C \emph{or} B   & Nonlinear        & $\times$  & $\times$ & $\checkmark$\\
\textbf{Present paper (2026)}        & C \emph{or} B   & Nonlinear  & $\checkmark$ & $\checkmark$ & $\checkmark$\\
\bottomrule
\end{tabular}
\caption{Representative methods for regression with ME. C = classical ME; B = Berkson ME. RF = target regression-function misspecification; RN = target outcome-noise misspecification; MEM = target misspecification of the ME distribution.}
\label{tab:me_review}
\end{table}

\subsection{Main contributions}
We develop a unified framework that learns the latent covariate--response joint distribution under Berkson or classical ME, while remaining robust to joint misspecification of (i) the regression model, (ii) the outcome-noise law, and (iii) the ME distribution. Our framework is flexible: prior strength can be tuned based on confidence levels, and decision makers may either pseudo-sample latent \(X\) or work directly with ME-prone observations \(W\), depending on the scale of ME and the reliability of the pseudo-sampling procedure. We provide a thorough theoretical assessment via finite-sample generalization bounds that offer interpretable guarantees through a decomposition of excess risk. We also establish consistency of the NPL estimator across variants of the framework. Practically, we implement a posterior bootstrap that combines Hamiltonian Monte Carlo (HMC)-based pseudo-sampling of latent $X$ with gradient-based MMD minimisation. In simulations and two real-data studies, the method yields lower estimation error and greater stability under misspecification as ME increases, compared to recent robust Bayesian and frequentist methods. Code to reproduce results in this paper is available at \texttt{https://github.com/MengqiChenMC/tot_robust_code}.

\section{Methodology}
\label{sec:Methodology_v2}
\subsection{Overview of methodology}
\label{sec:methodology_overview}
We retain parametric structure for the inferential target of the regression function \(g(\cdot,\theta)\), and we model all remaining components nonparametrically. In the spirit of Bayesian NPL, we place a DP prior on the unknown joint law $\P_{XY}$, covering both Berkson and classical ME regimes without fixing a parametric likelihood. The regression parameter \(\theta\) enters through \(g(\cdot,\theta)\) and is learned via minimizing the MMD between the nonparametric DP posterior of the joint distribution of \((X,Y)\) and the $\theta$-implied joint distribution of \((X,Y)\). The latent covariate values required to update the DP are handled by posterior pseudo-sampling. Fig.~\ref{fig:workflow-comparison} illustrates our DP construction and compares it with that of \citet{dellaporta2023robust}.

\begin{figure}[t]
\centering
\begin{tikzpicture}[
  >=Stealth,
  obs/.style={circle,draw,thick,minimum size=7mm,inner sep=0pt},
  lat/.style={circle,draw,thick,dashed,minimum size=7mm,inner sep=0pt},
  noise/.style={circle,draw,thick,dashed,fill=gray!20,minimum size=7mm,inner sep=0pt},
  box/.style={rectangle,rounded corners,draw,thick,align=center,inner sep=3pt},
  arr/.style={-Stealth,thick,dashed},
  darr/.style={-Stealth,thick,dotted},
  doarr/.style={-Stealth,thick},
  lab/.style={inner sep=1pt}
]

\begin{scope}[xshift=-0.5cm]
  \node[font=\small] at (0,1.15) {(a) Joint prior + response-informed posterior};

  \node[obs]   (Wb) at (-1.8,0) {$W$};
  \node[lat]   (Xb) at (0,0) {$X$};
  \node[obs]   (Yb) at (1.8,0) {$Y$};

  \draw[doarr] (Wb) -- (Xb);
  \draw[doarr] (Xb) -- node[lab,above] {$g^0$} (Yb);

  \node[obs] (Xtb) at (0,-1.85) {$\tilde X$};
  \node[lab] at (0,-0.85) {$\tilde X \mid (W,Y)$};

  \node[box,minimum width=3.2cm] (Refb) at (0,-3.35)
    {$\P^{\DP}_{XY} \leftarrow (X^\prime,Y^\prime)\oplus(\tilde X,Y)$};
  \draw[darr] (Wb) to[out=-85,in=180] node[lab,near start,below left] {$X^\prime\mid W \times \tilde Y^\prime \mid X^\prime$} (Refb);

  \draw[arr] (Wb) -- (Xtb);
  \draw[arr] (Yb) -- (Xtb);
  \draw[arr]  (Xtb) -- (Refb);
\end{scope}

\begin{scope}[xshift=6.9cm]
  \node[font=\small] at (0,1.15) {(b) Response-agnostic prior  + ME-ignoring posterior}; 

  \node[obs]   (Wa) at (-1.8,0) {$W$};
  \node[lat]   (Xa) at (0,0) {$X$};
  \node[obs]   (Ya) at (1.8,0) {$Y$};

  \draw[doarr] (Wa) -- (Xa);
  \draw[doarr] (Xa) -- node[lab,above] {$g^0$} (Ya);

  \node[obs] (Xta) at (0,-1.85) {$\tilde X$};
  \node[lab] at (0,-1.0) {$X^\prime \mid W$};

  \node[box,minimum width=3.2cm] (Refa) at (0,-3.35)
    {$\P^{\DP}_{XY} \leftarrow (X^\prime,Y)\oplus(W,Y)$};

  \draw[darr] (Wa) -- (Xta);

  \draw[arr] (Wa) to[out=-85,in=180] node[lab,near start,below left] {$W$} (Refa);
  \draw[arr] (Ya) to[out=-95,in=0] node[lab,near start,below right] {$Y$} (Refa.east);

  \draw[darr] (Xta) -- (Refa);
\end{scope}

\end{tikzpicture} 
\caption{Two DP constructions under Berkson ME. Dotted arrows denote prior-driven components; dashed arrows denote posterior-updated components. Panel (a) shows our framework, where the DP is built on the joint law and updated using pseudo-samples $\tilde X$ informed by $W$ and $Y$. Panel (b) shows the construction of \citet{dellaporta2023robust}, where latent covariates are sampled from $W$ alone and then paired with $Y$, so the DP reference distribution breaks the covariate--response dependence.}
\label{fig:workflow-comparison}
\end{figure}

\subsection{MMD target}
\label{sec:MMD-target}
We begin by recalling the definition of the MMD, our chosen loss, from \citet{gretton2012kernel}. 
\begin{definition}[MMD with characteristic kernel]
    Given an RKHS $(\Hilb,k)$ with characteristic kernel $k$, the MMD between two probability measures $P$ and $Q$ on~$\mathcal X$ is \(\MMD(P,Q)=\norm{\mu_P-\mu_Q}_{\Hilb}\), where \(\mu_P(\cdot):=\int k(\cdot,x)\dd P(x)\).
\end{definition}
Here, $k$ being characteristic means that $\MMD(P,Q)=0$ if and only if $P=Q$ \citep{gretton2012kernel}. This is satisfied by common kernel choices (e.g. Gaussian kernels, Matérn kernels, Laplace kernels). Therefore, minimizing MMD aligns two distributions without requiring a correctly specified likelihood. MMD-based losses are commonly used in the robustness literature: by comparing distributions in feature space, they can reduce sensitivity to atypical observations under heavy tails, outliers, or model misspecification \citep{briol2019statistical,alquier2024universal}. These properties are useful in our setting, where both the regression model and the ME mechanism can be misspecified.

To formalize our loss target, assume hypothetically that we know the true conditional laws \(\P^0_{X\mid w}\) (capturing ME) and \(\P^0_{Y\mid x}\) (capturing the regression model). Then, we could form
\[
  \P_{XY}^0=\int_{\mathcal{W}} \P^0_{XY\mid w}F^0_W(\dd w) = \int_{\mathcal{W}} \P^0_{X\mid w}\times \P^0_{Y\mid X}F^0_W(\dd w),
\]
where \(F^0_W\) is the marginal law of \(W\). Suppose we posit a parametric family \(\{g(\cdot,\theta):\theta\in\Theta\}\) for the regression function. Write \(\P_{g(x, \theta)}(\,\cdot\,)\) for the \(g(\cdot,\theta)\)-induced conditional law of \(Y\) given \(X=x\), i.e. \(\P_{g(x, \theta)}(\,\cdot\,)=\operatorname{Law}\{Y\mid X=x;\theta\}\). The resulting model-implied joint distribution of \((X,Y)\) is
\[
   \P^{\theta}_{XY} = \P_X^0 \times \P_{g(X, \theta)}
  = \int_{\mathcal{W}} \P^{0}_{X\mid w} \times \P_{g(X, \theta)}F^0_W(\dd w).
\]
We then define the optimal $\theta_0$ in the MMD sense: $\theta_0=\argmin_{\theta \in \Theta}\MMD\Bigl(\P_{XY}^0,\P^{\theta}_{XY}\Bigr)$. In reality, neither $\P^0_{XY\mid w}$ nor $\P^0_{X\mid w}$ is known. We therefore place DP priors on these laws, leading to the Bayesian NPL procedure described next. 

\subsection{DP-based Bayesian NPL framework: general formulation}
\label{sec:DP-Constr}
To model the latent covariates $X$ and the response $Y$ without restrictive parametric assumptions, we place a DP prior, $\DP(c,F)$, on the joint law $\P_{X, Y \mid w}$ \citep{ferguson1973bayesian}. 
\begin{definition}
For every measurable partition $\left(A_1, \ldots, A_k\right)$ of the sample space, a random measure $P \sim \DP(c, F)$ satisfies
$
\left(P\left(A_1\right), \ldots, P\left(A_k\right)\right) \sim \operatorname{Dir}\left(c F\left(A_1\right), \ldots, c F\left(A_k\right)\right).
$
\end{definition}
The concentration parameter $c>0$ determines how tightly draws of $P$ concentrate around the base measure $F$. A key property for our framework is \emph{conjugacy}: after observing data $z_{1: m}$, the posterior remains a DP, 
$
P \mid z_{1: m} \sim \DP\left(c+m, \frac{c}{c+m} F+\frac{1}{c+m} \sum_{j=1}^m \delta_{z_j}\right).
$
Hence each posterior draw simply reweights the empirical atoms and the prior. 

Let \(\{w_i\}_{i=1}^n\) be the observed noisy covariates, with unknown true \(\{x_i\}_{i=1}^n\). For each $w_i$, we define a base measure \(\Q_{XY,i}\) that reflects any initial beliefs about \((X,Y)\) (discussed in Section~\ref{sec:prior-construct}). Given data \(\{(\tilde{x}_{i,j}, y_i)\}_{j=1}^m\), where \(\{\tilde{x}_{i,j}\}_{j=1}^m\) are posterior pseudo-samples of the unobserved covariate \(x_i\) given \(w_i\) and \(y_i\) (discussed in Section~\ref{sec:pseudo-sampling}), we propose the DP framework by conjugacy:
\label{def:joint-DP}
    \begin{align}
  \text{Prior: } &
    \P_{XY\mid w_i}
  \sim
  \DP\bigl(c,\Q_{XY,i}\bigr), \label{eq:joint-prior} \\
  \text{Posterior: } &\P_{XY \mid w_i} \bigm|
  \bigl\{(\tilde{x}_{i,j},y_i)\bigr\}_{j=1}^m
  \sim
  \DP\Bigl(c + m,\frac{c}{c + m}\Q_{XY,i}+\frac{1}{c + m}\sum_{j=1}^m \delta_{(\tilde{x}_{i,j},y_i)}\Bigr). \label{eq:joint-posterior}
\end{align}
Each posterior realization from the posterior DP (\ref{eq:joint-posterior}) is a distribution over \((X,Y)\).

\subsection{Constructing the prior centring measure \texorpdfstring{$\Q_{XY,i}$}{Qi}}
\label{sec:prior-construct}
The joint prior centring measures $\Q_{XY,i}$ encode prior knowledge about $(X,Y)$. In practice, this could be a prior built upon historical data or domain knowledge. In the absence of such information, we can often define $\Q_{XY,i}$ through three components - a prior on $\theta$, a prior on $X\mid W$, and a prior for $Y\mid X$:
\begin{enumerate}
    \item Prior on $\theta$: let $\theta$ have prior density $f(\theta)$, for example a normal distribution centred at an initial estimate or a uniform distribution on a compact parameter space $\Theta$.
    \item Prior on $X\mid W$: We denote the marginal prior for $\P_{X\mid w_i}$ as $\Q_{X,i}$, which needs to be defined differently in the classical and Berkson ME cases. Denote $F_N$ (with density $f_N$) as our assumed ME distribution, which is not necessarily the same as the true ME $F_N^0$.
    \begin{enumerate}
        \item \emph{Berkson:} We have $X = W + N, \; N\ind W, \; N\sim F_N$, so $\Q_{X,i}$ simply has density
        \begin{equation*}
            q_{X,i}(x) = f_N(x-w_i).
        \end{equation*}
        \item \emph{Classical:} We have $W = X + N, \; N\ind X, \; N\sim F_N$. We further assume the marginal distribution of $X\sim\P_X$ (with density $p_X$), which also need not equal the true $\P_X^0$. Then
        \begin{equation*}
            q_{X,i}(x) = \frac{f_N(w_i-x) p_X(x)}{\int_{\mathcal{X}}f_N(w_i-t) p_X(t) \dd t}.
        \end{equation*}
    \end{enumerate}
    \item Prior for $\P_{Y\mid X}^{\mathrm{prior}}$: the integral
   \(\P_{Y\mid X}^{\mathrm{prior}}(\dd y)=\int_{\Theta} \P_{g(X,\theta)}(\dd y)f(\theta)\,\dd\theta\)
   often lacks a closed-form for nonlinear $g(x, \theta)$. We approximate it by Monte Carlo.
\end{enumerate}
Hence, $\Q_{XY,i}$ can be expressed as
\[
 \Q_{XY,i} \;=\; 
 \underbrace{\Q_{X,i}}_{\text{prior for }X\text{ around }w_i\text{, depending on classical or Berkson ME}}
 \;\times\;
 \underbrace{\int \P_{g(X,\theta)} f(\theta)\,\dd\theta}_{\text{for }Y\mid X}.
\]
This ensures $\Q_{XY,i}$ is sufficiently rich to capture a broad range of $(X,Y)$ configurations. 

This prior construction also defines the corresponding marginal DP for $\P_{X\mid w_i}$:
\label{def:marginal-DP}
\begin{align}
  \text{Prior: } &\quad
  \P_{X\mid w_i} \sim \DP\Bigl(c,\,\Q_{X,i}\Bigr), \label{eq:marginal-prior} \\
  \text{Posterior: } &\quad
  \P_{X\mid w_i}\,\bigm|\,\bigl\{\tilde{x}_{i,j}\bigr\}_{j=1}^m
  \sim \DP\Biggl(
    c+m, \,
    \frac{c}{c+m}\,\Q_{X,i}
    +\frac{1}{c+m}\,\sum_{j=1}^m \delta_{\tilde{x}_{i,j}}
  \Biggr). \label{eq:marginal-posterior}
\end{align}    

The base measures in the marginal DPs \eqref{eq:marginal-prior} and \eqref{eq:marginal-posterior} are the $X$-marginals of the base measures in the joint DPs \eqref{eq:joint-prior} and \eqref{eq:joint-posterior}.

\subsection{Sampling latent covariates for NPL updates}
\label{sec:pseudo-sampling}
Before defining the pseudo-sampling scheme, it is useful to see why latent covariates cannot in general be generated from \(W\) alone and then paired with the observed responses, which is the approach taken by \citet{dellaporta2023robust}. Consider fixed-design Berkson ME with \(W_i\equiv 0\) and no outcome noise,
\[
X_i = W_i+\nu_i,\qquad \nu_i \simiid \mathcal N(0,\sigma_\nu^2),
\qquad
Y_i = X_i+\theta_0.
\]
Now suppose that latent covariates are sampled from \(\P^0_{X\mid W_i}\) alone, namely
\[
\tilde X_i \simiid \P^0_{X\mid W_i}=\mathcal N(0,\sigma_\nu^2),
\]
independently of \(Y_i\). If one then estimates \(\theta_0\) from the synthetic pairs \((\tilde X_i,Y_i)\) by least squares, one obtains
\[
\hat\theta :=\arg\min_{\theta\in\mathbb R}\sum_{i=1}^n (Y_i-\tilde X_i-\theta)^2
=\frac1n\sum_{i=1}^n (Y_i-\tilde X_i)
=\theta_0+\frac1n\sum_{i=1}^n (\nu_i-\tilde \nu_i),
\]
where \(\tilde \nu_i \simiid\mathcal N(0,\sigma_\nu^2)\) and \(\tilde \nu_i \ind \nu_i\). Hence \(\hat\theta-\theta_0\sim \mathcal N(0,2\sigma_\nu^2/n)\), and for any fixed \(M>0\),
\[
\Pr\bigl(|\hat\theta-\theta_0|>M\bigr)
=
2\Bigl\{1-\Phi\!\Bigl(\frac{M\sqrt n}{\sqrt2\,\sigma_\nu}\Bigr)\Bigr\}
\longrightarrow1
\qquad\text{as }\sigma_\nu/\sqrt n\to\infty,
\]
in particular as \(\sigma_\nu\to\infty\) for fixed \(n\). Thus, even with perfect knowledge of the measurement-error law, sampling latent covariates from \(W\) alone can produce synthetic pairs \((\tilde X_i,Y_i)\) that no longer represent the regression relation of interest. This is precisely the pathology our pseudo-sampling step is designed to avoid.

Motivated by this pathology, we require samples compatible with the joint law \(\P^0_{XY\mid w_i}\). Since \(X\) is unobserved, we employ a \emph{pseudo-sampling} procedure to acquire plausible realizations of the latent covariate \(X_i\) given observed data \((w_i,y_i)\).

Let \(\mathcal{L}(\theta,x_{1:n};w_{1:n},y_{1:n})\) denote the negative log-likelihood for the joint model of \((x_{1:n},y_{1:n},\theta)\), and let \(f(\theta)\) be a prior density for \(\theta\).
The resulting posterior over the parameters and latent variables is proportional to
\begin{equation}
\label{stationary-dist}
      P\bigl(x_{1:n},\,\theta \,\bigm|\,w_{1:n},y_{1:n}\bigr)
  \;\propto\;
  \exp\Bigl\{-\,\mathcal{L}\bigl(\theta,\,x_{1:n};\,w_{1:n},y_{1:n}\bigr)\Bigr\}
  f(\theta).
\end{equation}
In general, the conditional distribution of \( X_i \) given \( (w_i, y_i) \) is not available in closed form due to the nonlinear structure of the outcome model \( g(X,\theta) \). We therefore run Markov chain Monte Carlo (MCMC) schemes that target the joint posterior in \eqref{stationary-dist}. The DP update requires \emph{independent} draws from the posterior-predictive kernel
\begin{equation}
\label{eq:predictive-kernel}
    \Psi_n(\dd x\mid w_i,y_i)
:=
\int_{\Theta}\Pi\bigl(\dd x\mid\theta,w_i,y_i\bigr)\,
                \Pi_n\bigl(\dd\theta\mid w_{1:n},y_{1:n}\bigr),
\end{equation}
so we generate \(\tilde x_{i,j}\stackrel{\text{iid}}{\sim}\Psi_n(\cdot\mid w_i,y_i)\) via \emph{posterior predictive sampling}: draw \(\theta_{ij}\stackrel{\text{iid}}{\sim}\Pi_n(\cdot\mid w_{1:n},y_{1:n})\) and then \(\tilde x_{i,j}\sim\Pi(\cdot \mid\theta_{ij},w_i,y_i)\). Classic examples include: posterior predictive checks, which sample \(\theta\) and simulate replicated data for model checking \citep{gelman1996posterior}; and multiple imputation, which repeatedly draws missing or latent values from the posterior predictive conditional on \(\theta\) and combines analyses across imputations \citep{rubin1987multiple}.

As \(n \to \infty\), the posterior \(\Pi_n(\dd\theta \mid w_{1:n}, y_{1:n})\) concentrates around the parameter value minimizing the Kullback-Leibler (KL) divergence to the data-generating model, effectively transferring the best information the observed data carries on the model parameter into the pseudo-sampling procedure. Consequently, the independence requirement \( \theta_{ij}\stackrel{\text{iid}}{\sim}\Pi_n(\cdot \mid w_{1:n},y_{1:n}) \) is asymptotically immaterial: under contraction, the bias induced by the dependence among $\{\tilde x_{i,j}\}_{i=1,j=1}^{n,\;\;\,m}$ through the \(\theta\)-mixture in \eqref{eq:predictive-kernel} vanishes. For small \(n\), however, near-independent draws of \(\theta\) can improve exploration of a potentially dispersed posterior. A detailed theoretical assessment of this procedure is provided in Section~\ref{subsec:genbounds}, and we discuss dropping the conditional independence requirement in Appendix~\ref{appendix-independence-discussion}.

To implement the joint posterior sampling in \eqref{stationary-dist} we use HMC: its gradient-informed proposals typically yield high effective sample sizes with low autocorrelation, and independent chains parallelize naturally with modest additional cost. In theory, one could obtain independent posterior-predictive draws by running \(n\times m\) independent chains, but this is computationally onerous. In practice, we run a small number ($<10$) of well-mixed chains targeting \(\Pi_n(\cdot\mid w_{1:n},y_{1:n})\) and retain sparsely spaced post-burn-in states so that autocorrelation of the retained \(\theta\) is negligible. Convergence diagnostics and effective sample sizes are reported in Appendix~\ref{subsec:HMC-mixing-diagnostics}. A sensitivity analysis in Appendix~\ref{sec:HMC-sensitivity} shows that the resulting distributions of \(\{\tilde X_{ij}\}\) are empirically indistinguishable from those obtained by the theoretical construction of \(n\times m\) independent chains.

The pseudo-sampling scheme can be less desirable or infeasible in some settings. For example, when the scale of ME is small, acquiring information about the latent covariate $X_i$ from $(w_i,y_i)$ may not justify the extra computation. When the parameter space is high-dimensional, or the model is severely misspecified, reaching stationarity can be difficult, and the information about \(X_i\) contained in \((w_i,y_i)\) may be too weak to recover reliably. In such cases, we can set $m\equiv 1$ and replace the pseudo-samples $\tilde x_{ij}$ with $w_i$, so that the DP posteriors are updated by $(w_i,y_i)$. This update does not coincide with the true joint distribution $\P_{XY}^0$ and serves as a compromise. We call this the \emph{no-pseudo-sampling variant} of our framework. Section~\ref{subsec:genbounds} provides detailed theoretical assessments of this variant relative to the pseudo-sampling scheme.

Now we have the DP posterior realizations of our target distributions $\P^0_{XY\mid w_i}$ and $\P^0_{X\mid w_i}$, which we denote by $\P^{\DP}_{XY\mid w_i}$ (from \eqref{eq:joint-posterior}) and $\P^{\DP}_{X\mid w_i}$ (from \eqref{eq:marginal-posterior}), respectively. Collecting these DP posteriors over $i$ and representing $F_W$ as the empirical distribution of $W$, $F_W = n^{-1}\sum_{i=1}^n \delta_{w_i}$, we formally define the DP counterpart of our MMD target as
\begin{equation}
\label{eq:MMD-DP}
\hat{\theta}_n=\argmin_{\theta\in\Theta} \MMD_k\left(\frac{1}{n}\sum_{i=1}^{n}\P_{XY\mid w_i}^{\DP}, \Bigl(\frac{1}{n}\sum_{i=1}^{n}\P_{X\mid w_i}^{\DP}\Bigr) \P_{g(X, \theta)}\right).
\end{equation}
\subsection{Posterior bootstrap implementation}
\label{sec:posterior-boostrap}
For each bootstrap replicate \(b=1,\dots,B\) we independently draw random measures $\{\P_{XY\mid w_i}^{\DP,(b)}\}_{i=1}^n$ and $\{\P_{X\mid w_i}^{\DP,(b)}\}_{i=1}^n$ from \eqref{eq:joint-posterior} and \eqref{eq:marginal-posterior}, respectively. We then solve
\[
  \hat\theta_{n,b}=\argmin_{\theta\in\Theta} \MMD_k\left(\frac1n\sum_{i=1}^{n}\P_{XY\mid w_i}^{\DP, (b)},
\Bigl(\frac1n\sum_{i=1}^{n}\P_{X\mid w_i}^{\DP, (b)}\Bigr) \P_{g(X, \theta)}\right).
\]
The collection \(\{\hat\theta_{n,b}\}_{b=1}^{B}\) forms an empirical approximation to the posterior implied by our MMD loss, thereby placing the proposed method within the Bayesian NPL framework. Algorithm~\ref{alg:dp-boot} in Appendix~\ref{subsec:algo} demonstrates our bootstrapping procedure.

\section{Theoretical assessments}\label{sec:theoretical-assessments}
\subsection{Notations and assumptions}
This section studies the estimator $\hat\theta_n$ defined above in \eqref{eq:MMD-DP}. We provide two types of results. First, in Section~\ref{subsec:genbounds}, we derive generalization bounds for the excess risk under model misspecification. The bounds separate three contributions: statistical fluctuation, prior-data discrepancy, and pseudo-sample discrepancy, which are weighted by the DP centring parameter $c$ and the number of pseudo-samples per observation $m$. Second, we prove consistency in Section~\ref{subsec:consistency}: under regularity and identifiability conditions, $\hat\theta_n$ converges in probability to the minimizer of the limiting MMD, which coincides with the data-generating parameter under correct model specification. Proofs are supplied in Appendix~\ref{sec:additional_proofs}.

We now introduce the construction used in both results and define the notation. Let $\D=\{(W_i,Y_i)\}_{i=1}^n$ be the observed sample, drawn i.i.d.\ from $\P_{WY}^0$. Fix a pseudo-sample size $m\ge 1$. Given $\D$, define the posterior-predictive kernel $\Psi_n(\dd x\mid w,y):=\int_{\Theta}\Pi_\theta(\dd x\mid w,y)\Pi_n(\dd\theta\mid\D)$, where $\Pi_\theta(\dd x\mid w,y):=\Pi(\dd x\mid \theta, w, y)$ and $\Pi_n(\dd\theta\mid\D)$ is the (possibly misspecified) posterior for $\theta$. For each $i$, independently draw pseudo-samples \(\tilde X_{i1},\ldots,\tilde X_{im}\mid\D \simiid \Psi_n(\cdot\mid W_i,Y_i)\)
and collect them in $S:=\{\tilde X_{ij}\}_{i=1,j=1}^{n,\;\;\;m}$. They feed the DP update by supplying atoms for the posteriors of $(X,Y)\mid W_i$ and $X\mid W_i$. 


Let $\{\Q_{XY,i},\Q_{X,i}\}_{i=1}^n$ be prior centring measures. Given $(\D,S)$, draw independent realizations from the joint and marginal DP posteriors \eqref{eq:joint-posterior} and \eqref{eq:marginal-posterior}:
\begin{equation}
\label{DP-posterior-def}
    \begin{aligned}
        \P_{XY\mid W_i}^{\DP}&\sim \DP\left(c+m,\frac{c}{c+m}\Q_{XY,i}+\frac{1}{c+m}\sum_{k=1}^m \delta_{(\tilde X_{ik},Y_i)}\right), \\
\P_{X\mid W_i}^{\DP}&\sim \DP\left(c+m,\frac{c}{c+m}\Q_{X,i}+\frac{1}{c+m}\sum_{k=1}^m \delta_{\tilde X_{ik}}\right).
    \end{aligned}
\end{equation}
Finally, we define the DP-based MMD minimization target and the resulting estimator $\hat\theta_n$
\[
M_n(\theta):=\MMD_k\left(\frac1n\sum_{i=1}^{n}\P_{XY\mid W_i}^{\DP},
\Bigl(\frac1n\sum_{i=1}^{n}\P_{X\mid W_i}^{\DP}\Bigr) \P_{g(X, \theta)}\right),
\qquad
\hat\theta_n:=\arg\min_{\theta\in\Theta} M_n(\theta).
\]
To facilitate theoretical assessments, we define the following notation for the average prior measures and empirical measures for the pseudo-samples:
\[
\P^{\mathrm{prior}}_{XY}:=\frac1n\sum_{i=1}^n \Q_{XY,i},
\;
\P^{\mathrm{prior}}_{X}:=\frac1n\sum_{i=1}^n \Q_{X,i}; \;
\P^{\mathrm{pseudo}}_{XY} := \frac{1}{nm}\sum_{i=1}^n\sum_{j=1}^m \delta_{(\tilde X_{ij},Y_i)},
\;
\P^{\mathrm{pseudo}}_{X} := \frac{1}{nm}\sum_{i=1}^n\sum_{j=1}^m \delta_{\tilde X_{ij}}.
\]
We also define the average base measures in the DP posteriors \eqref{DP-posterior-def}:
\[
\P^{\mathrm{base}}_{XY}:=\frac{c}{c+m}\P^{\mathrm{prior}}_{XY} + \frac{m}{c+m} \P^{\mathrm{pseudo}}_{XY},
\quad
\P^{\mathrm{base}}_{X}:=\frac{c}{c+m}\P^{\mathrm{prior}}_{X} + \frac{m}{c+m} \P^{\mathrm{pseudo}}_{X}.
\]
All randomness is defined on the product law
\(
\Pr = \Pr_{\D,S}\,\Pr_{\DP\mid \D,S}
\),
where $\Pr_{\D,S}$ is the joint law of $(\D,S)$ under the DGP and the pseudo-sampling scheme, and $\Pr_{\DP\mid \D,S}$ is the conditional law of the DP realizations given $(\D,S)$. We write
\[
\E_{\D,S}[\cdot]:=\E_{\Pr_{\D,S}}[\cdot],\qquad
\E_{\DP}[\cdot\mid \D,S]:=\E_{\Pr_{\DP\mid \D,S}}[\cdot].
\]
Unless noted otherwise, expectations are taken with respect to $\Pr$. 

If the decision maker chooses not to perform pseudo-sampling as discussed at the end of Section~\ref{sec:pseudo-sampling}, they can set $m\equiv 1$ and replace the pseudo-samples by the observed covariates, which gives
$\P^{\mathrm{base}}_{XY,i}=(c+1)^{-1}\{c\,\Q_{XY,i}+\delta_{(W_i,Y_i)}\}$ and
$\P^{\mathrm{base}}_{X,i}=(c+1)^{-1}\{c\,\Q_{X,i}+\delta_{W_i}\}$.
The empirical laws corresponding to $\P^{\mathrm{pseudo}}_{XY}$ and $\P_X^{\mathrm{pseudo}}$ reduce to
$\hat{\P}^n_{WY}=n^{-1}\sum_{i=1}^n \delta_{(W_i,Y_i)}$ and
$\hat{\P}^n_{W}=n^{-1}\sum_{i=1}^n \delta_{W_i}$.
All other notation remains unchanged. This formulation is investigated in Section~\ref{sec:nopseudo}.

We impose two standing conditions that apply throughout the theory section.

\begin{enumerate}[label=G\arabic*]
\item\label{ass:G1}
The MMD kernel on $\mathcal{X}\times\mathcal{Y}$ is
$k\big((x,y),(x',y')\big)=k_X(x,x')\,k_Y(y,y')$, where $k_X,k_Y$ are measurable, positive-definite, and characteristic, with
$\sup_{x,x'}k_X(x,x')=\kappa_X<\infty$ and $\sup_{y,y'}k_Y(y,y')=\kappa_Y<\infty$.
Consequently $k$ is measurable, positive-definite, and characteristic on $\mathcal X\times\mathcal Y$ with $\kappa:=\sup k(\cdot, \cdot) = \kappa_X\kappa_Y<\infty$.

\item\label{ass:G2}
Let $\mathcal H_X,\mathcal H_Y$ be the RKHSs of $k_X,k_Y$. For each $\theta\in\Theta$, the conditional mean embedding
$\mu_\theta(x):=\int_{\mathcal Y} k_Y(y,\cdot)\,\P_{g(x, \theta)}(\dd y)\in\mathcal H_Y$
extends to a bounded linear operator $\mathcal C_\theta:\mathcal H_X\to\mathcal H_Y$ with
$\sup_{\theta\in\Theta}\|\mathcal C_\theta\|_{\mathrm{op}}=\Lambda<\infty$, where $\|\cdot\|_{\mathrm{op}}$ is the operator norm.
\end{enumerate}

\begin{remark}
\begin{enumerate}[label=(\alph*)]
\item  Common bounded, characteristic kernels such as the Gaussian, Laplace, Matérn and rational-quadratic kernels all satisfy Assumption~\ref{ass:G1}.

\item Assumption~\ref{ass:G2} is commonly assumed in robust regression methods involving kernel mean embeddings \citep{alquier2024universal, dellaporta2023robust}. We need it to apply Lemma 2 of \citet{alquier2024universal}: under Assumption~\ref{ass:G2}, we have
\[
\MMD_k\left(\P_X \P_{g(X,\theta)}, \P_X^{\prime} \P_{g(X,\theta)}\right)
\le \|\mathcal{C}_{\theta}\|_{\mathrm{op}}\MMD_{k_X^2}\left(\P_X,\P_X^{\prime}\right).
\]
Here $\MMD_{k_X^2}(\P_X,\P_X^{\prime})$ denotes the MMD computed with kernel $k_X^2:=k_X\otimes k_X$ on $\Hilb_{k_X^2}:=\Hilb_{k_X}\otimes\Hilb_{k_X}$: it compares the mean embeddings of $\P_X$ and $\P_X^{\prime}$ in the tensor product space $\Hilb_{k_X^2}$; equivalently, it is the MMD (with kernel $k_X^2$) between the pushforward laws of $(X,X)$ with $X\sim \P_X$ and of $(X^{\prime},X^{\prime})$ with  $X^\prime\sim \P_X^\prime$ under the diagonal map $x\mapsto(x,x)$. This lemma links the MMD between joint distributions and marginals, which is necessary when $\P_X$ is unknown or misspecified. We refer the reader to \citep{alquier2024universal} for a detailed discussion on the existence and boundedness of $\mathcal{C}_{\theta}$.
\end{enumerate}
\end{remark}

\subsection{Generalization bound: both settings}

We present the general version of the generalization error bound that applies to both Berkson and classical ME settings.
\label{subsec:genbounds}
\begin{theorem}[Generalization error bound]\label{thm:gen-bound-main} Under Assumptions~\ref{ass:G1}-\ref{ass:G2}:
        \begin{align*}
        &\E_{\D,S}\left[\E_{\DP} \left[\MMD_k\left(\P_{XY}^0, \P_X^0 \P_{g(X,\hat{\theta}_n)}\right) \mid \D,S\right] \right] - \inf_{\theta\in\Theta} \MMD_k\left(\P_{XY}^0, \P_X^0\P_{g(X, \theta)}\right)
        \\ &\le \underbrace{\frac{2(\sqrt{\kappa}+\kappa_X\Lambda)}{\sqrt{n(c+m+1)}}}_{\text{statistical fluctuation}} + \underbrace{\frac{2c}{c+m}\E_{\D,S}\left[\Lambda\MMD_{k_X^2}\left(\P_X^0,\P_X^{\mathrm{prior}}\right)+\MMD_{k}\left(\P_{XY}^0,\P^{\mathrm{prior}}_{XY} \right)\right]}_{\text{prior model discrepancy, vanishes as } c/m\to 0 }
        \\&\qquad \qquad \qquad \qquad   + \underbrace{\frac{2m}{c+m}\E_{\D,S}\left[\Lambda\MMD_{k_X^2}\left(\P_X^0,\P_X^{\mathrm{pseudo}}\right)+\MMD_k\left(\P_{XY}^0,\P^{\mathrm{pseudo}}_{XY}\right)\right]}_{\text{pseudo-sample distribution discrepancy}}.
    \end{align*}
\end{theorem}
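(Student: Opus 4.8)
The plan is a two-stage argument: a standard $M$-estimation ``basic inequality'' that reduces the excess risk to a uniform-in-$\theta$ deviation between the population MMD risk and its DP surrogate $M_n$, followed by a chain of MMD triangle inequalities combined with the operator-norm bound of Remark~(b) and a second-moment identity for Dirichlet-process functionals. Write $R(\theta):=\MMD_k\bigl(\P^0_{XY},\P^0_X\P_{g(X,\theta)}\bigr)$ and $\overline{\P}^{\DP}_{XY}:=\tfrac1n\sum_{i=1}^n\P^{\DP}_{X,Y\mid W_i}$, $\overline{\P}^{\DP}_{X}:=\tfrac1n\sum_{i=1}^n\P^{\DP}_{X\mid W_i}$, so that $M_n(\theta)=\MMD_k\bigl(\overline{\P}^{\DP}_{XY},\overline{\P}^{\DP}_{X}\P_{g(X,\theta)}\bigr)$. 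For arbitrary $\varepsilon>0$ pick $\theta_\varepsilon$ with $R(\theta_\varepsilon)\le\inf_\theta R+\varepsilon$; since $M_n(\hat\theta_n)\le M_n(\theta_\varepsilon)$, telescoping $R(\hat\theta_n)-R(\theta_\varepsilon)=[R(\hat\theta_n)-M_n(\hat\theta_n)]+[M_n(\hat\theta_n)-M_n(\theta_\varepsilon)]+[M_n(\theta_\varepsilon)-R(\theta_\varepsilon)]$ and discarding the nonpositive middle bracket gives, pathwise in $(\D,S)$ and in the DP draws, $R(\hat\theta_n)-\inf_\theta R\le 2\sup_{\theta\in\Theta}\abs{R(\theta)-M_n(\theta)}+\varepsilon$; let $\varepsilon\downarrow0$. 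The pathwise nature here matters because $\hat\theta_n$ is itself a function of the DP realisations.

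Next I remove the $\theta$-dependence. By the triangle inequality for the pseudometric $\MMD_k$ (valid for the bounded PD kernel $k$ of \ref{ass:G1}), $\abs{R(\theta)-M_n(\theta)}\le\MMD_k\bigl(\P^0_{XY},\overline{\P}^{\DP}_{XY}\bigr)+\MMD_k\bigl(\P^0_X\P_{g(X,\theta)},\overline{\P}^{\DP}_{X}\P_{g(X,\theta)}\bigr)$; the first term does not involve $\theta$, and the second is, by the bound quoted in Remark~(b) (Lemma~2 of \citet{alquier2024universal}, which uses \ref{ass:G2}), at most $\norm{\mathcal C_\theta}_{\mathrm{op}}\MMD_{k_X^2}\bigl(\P^0_X,\overline{\P}^{\DP}_{X}\bigr)\le\Lambda\,\MMD_{k_X^2}\bigl(\P^0_X,\overline{\P}^{\DP}_{X}\bigr)$, again free of $\theta$. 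Hence $\sup_{\theta}\abs{R(\theta)-M_n(\theta)}\le\MMD_k\bigl(\P^0_{XY},\overline{\P}^{\DP}_{XY}\bigr)+\Lambda\,\MMD_{k_X^2}\bigl(\P^0_X,\overline{\P}^{\DP}_{X}\bigr)$, so the supremum is controlled with no empirical-process machinery.

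The core step inserts the average base measures $\P^{\mathrm{base}}_{XY}$, $\P^{\mathrm{base}}_{X}$ and separates DP sampling noise from model bias. Triangle inequality again gives $\MMD_k\bigl(\P^0_{XY},\overline{\P}^{\DP}_{XY}\bigr)\le\MMD_k\bigl(\P^0_{XY},\P^{\mathrm{base}}_{XY}\bigr)+\MMD_k\bigl(\P^{\mathrm{base}}_{XY},\overline{\P}^{\DP}_{XY}\bigr)$, and likewise for the marginal with $k_X^2$. Conditionally on $(\D,S)$, $\overline{\P}^{\DP}_{XY}$ is an average of $n$ independent draws, the $i$th from $\DP\bigl(c+m,\tfrac{c}{c+m}\Q_{XY,i}+\tfrac1{c+m}\sum_k\delta_{(\tilde X_{ik},Y_i)}\bigr)$; its kernel mean embedding has conditional mean $\mu_{\P^{\mathrm{base}}_{XY}}$, and the Hilbert-space version of the Dirichlet-process second-moment identity, $\E\norm{\mu_P-\mu_F}_{\Hilb}^2=(\alpha+1)^{-1}\bigl(\E_{Z\sim F}[k(Z,Z)]-\norm{\mu_F}^2\bigr)\le\kappa/(\alpha+1)$ for $P\sim\DP(\alpha,F)$, together with conditional independence across $i$, yields $\E_{\DP}\bigl[\norm{\mu_{\overline{\P}^{\DP}_{XY}}-\mu_{\P^{\mathrm{base}}_{XY}}}_{\Hilb}^2\mid\D,S\bigr]\le\kappa/\bigl(n(c+m+1)\bigr)$; Jensen then bounds $\E_{\DP}\bigl[\MMD_k(\P^{\mathrm{base}}_{XY},\overline{\P}^{\DP}_{XY})\mid\D,S\bigr]$ by $\sqrt{\kappa/(n(c+m+1))}$, and the marginal version (with $\sup k_X^2$ absorbed into $\kappa$) contributes the $\Lambda$-weighted twin, together giving the statistical-fluctuation term. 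For the bias, linearity of the mean embedding gives $\mu_{\P^0_{XY}}-\mu_{\P^{\mathrm{base}}_{XY}}=\tfrac{c}{c+m}\bigl(\mu_{\P^0_{XY}}-\mu_{\P^{\mathrm{prior}}_{XY}}\bigr)+\tfrac{m}{c+m}\bigl(\mu_{\P^0_{XY}}-\mu_{\P^{\mathrm{pseudo}}_{XY}}\bigr)$, so a final triangle inequality splits $\MMD_k(\P^0_{XY},\P^{\mathrm{base}}_{XY})$ into $\tfrac{c}{c+m}\MMD_k(\P^0_{XY},\P^{\mathrm{prior}}_{XY})+\tfrac{m}{c+m}\MMD_k(\P^0_{XY},\P^{\mathrm{pseudo}}_{XY})$, and symmetrically for the $\MMD_{k_X^2}$ marginal. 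It remains to take $\E_{\DP}[\cdot\mid\D,S]$ (which only touches the fluctuation pieces, all prior/pseudo quantities being $(\D,S)$-measurable), then $\E_{\D,S}$, and collect the $1$- and $\Lambda$-weighted contributions into the three labelled groups; this reproduces the stated inequality.

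The only genuinely non-routine ingredient is the Dirichlet-process mean-embedding variance bound $\E\norm{\mu_P-\mu_F}_{\Hilb}^2\le\kappa/(\alpha+1)$ and its aggregation over the $n$ conditionally independent group posteriors: this is precisely what produces the $\bigl(n(c+m+1)\bigr)^{-1/2}$ rate and pins down the $\tfrac{c}{c+m}$ versus $\tfrac{m}{c+m}$ weighting of the prior- and pseudo-discrepancy terms. Everything else is bookkeeping --- the MMD triangle inequality, linearity of kernel mean embeddings, the operator-norm transfer from joint to marginal discrepancies via Remark~(b), and careful ordering of the iterated expectation $\E_{\D,S}\E_{\DP}[\,\cdot\mid\D,S]$.
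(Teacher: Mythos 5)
Your proposal is correct and follows essentially the same route as the paper: both arguments reduce the excess risk to $2\,\MMD_k\bigl(\P^0_{XY},\mathcal P^{\DP}_{X,Y}\bigr)+2\Lambda\,\MMD_{k_X^2}\bigl(\P^0_X,\mathcal P^{\DP}_{X}\bigr)$ (your basic-inequality/uniform-deviation packaging is an equivalent repackaging of the paper's explicit triangle-inequality chain involving $\hat\theta_n$ and an arbitrary $\theta$, using the same Lemma~2 of \citet{alquier2024universal}), and then split via the average base measures into the $\sqrt{\kappa/\{n(c+m+1)\}}$ DP-fluctuation term and the $\tfrac{c}{c+m}$-, $\tfrac{m}{c+m}$-weighted prior and pseudo-sample discrepancies. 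The only cosmetic difference is that you invoke the DP second-moment identity $\E\|\mu_P-\mu_F\|_{\Hilb}^{2}=(\alpha+1)^{-1}\bigl(\E_{Z\sim F}[k(Z,Z)]-\|\mu_F\|^{2}\bigr)$ directly, whereas the paper derives exactly this quantity from Sethuraman's stick-breaking representation and GEM moment identities.
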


The first term, $2(\sqrt{\kappa}+\kappa_X\Lambda)\{n(c+m+1)\}^{-1/2}$, arises from DP variations. 

The second term measures how far the chosen centring measures $(\Q_{XY,i},\Q_{X,i})$ are from the true DGP.  Its weight is $c/(c+m)$.  Thus we can borrow strength from a well-calibrated prior by taking a moderate or large value of $c$. Conversely, if historical information is unreliable, we can downweight it by letting $c/m\to0$, after which this term vanishes.

The final line quantifies the error introduced by the latent covariate pseudo-sampling scheme.  Its weight is $m/(c+m)$.  We will show in Theorem~\ref{thm:pseudo-classical} that, for any sequence $M_n \to \infty$,
\[
     \E_{\D,S}\left[\MMD_{k}\left(\P^{\mathrm{pseudo}}_{XY},\P_{XY}^0\right)\right]=O\left(\frac{M_{n}}{\sqrt n}+\sqrt{1-e^{-\KL_{\ast}}}+ r_{n}\right),
\]
with an analogous bound for the $X$-marginal.  The term
$\sqrt{1-e^{-\KL_{\ast}}}$, which will be formally defined in Theorem~\ref{thm:pseudo-classical}, measures the \emph{total misspecification}: it is zero when the working model is correct and otherwise gives the best error attainable under the chosen family. 
\begin{remark}
    Theorem~\ref{thm:pseudo-classical} yields a remainder term $r_n\to 0$.  Since the misspecified Bernstein-von Mises theorem \citep{kleijn2012bernstein} required for this bound is asymptotic, no general rate for $r_n$ is available. We therefore retain $r_n$ explicitly in the bound.
\end{remark}
If one adopts the no-pseudo-sampling variant (end of Section~\ref{sec:pseudo-sampling}) and $\P^{\mathrm{pseudo}}_{XY}$ reduces to $\hat{\P}^n_{WY}:=\sum_{i=1}^n n^{-1} \delta_{(W_i, Y_i)}$, we will show a replacement bound in Lemma~\ref{thm:nopseudo} that depends solely on the true ME distribution:
\begin{equation*}
    \E_{\D} \left[\MMD_k(\hat{\P}^n_{WY}, \P_{XY}^0)\right] = O\left(\frac{1}{\sqrt{n}}+\sqrt{\MMD_{k_X}(F_N^0,\delta_0)}\right).
\end{equation*}
These three pieces provide a clear strategy.  A decision-maker can start with a relatively informative prior (moderate $c$) to exploit domain knowledge. If such knowledge is unreliable, reducing $c$ or increasing the pseudo-sample size $m$ shifts weight to the empirical component.

\subsection{Pseudo-sampling bounds: classical ME}
\label{sec:pseudo-classical}
We now control the pseudo-sample discrepancy term in Theorem~\ref{thm:gen-bound-main} under \emph{classical} ME. Let the true DGP under classical ME be 
\begin{equation}
\label{classical:model-def}
    W=X+N,\; Y=g^0(X)+E,\; X\sim \P_{X}^{0}, \; N\sim F_{N}^{0}, \; E\sim F_{E}^{0},\; N,E\ind X,\; N\ind E. 
\end{equation}
The joint density of the observed \((W,Y)\) is $p^{0}_{WY}(w,y)=\int_{\mathcal X}p^{0}_{X}(x)f^{0}_{N}(w-x)f_E^0\bigl(y-g^0(x)\bigr){\dd}x$.

We work under a misspecified model
\[W=X+N,\; Y=g(X, \theta)+E,\; X\sim \P_{X}, \; N\sim F_{N}, \; E\sim F_{E},\; N,E\ind X,\; N\ind E, \] 
and \(g^0(\cdot)\notin\{g(\cdot,\theta):\theta\in\Theta\}\).
Here \(\Theta\subset\R^{p}\) is compact.  For each \(\theta\) the induced density of \((W,Y)\) is \(p^{\theta}_{WY}(w,y)=\int_{\mathcal X}p_{X}(x)f_{N}(w-x)f_{E}\bigl(y-g(x, \theta)\bigr){\dd}x\). Define the pseudo-true parameter by $\theta^{\ast}:=\arg\min_{\theta\in\Theta}\KL\bigl(p^{0}_{WY}\Vert p^{\theta}_{WY}\bigr)$. We make the following assumptions:
\begin{enumerate}[label=A\arabic*]
\item \label{A1} The family $\{p_{WY}^{\theta}:\theta\in\Theta\}$ satisfies the conditions of \citet[Theorem~3.1]{kleijn2012bernstein} around $\theta^\ast$, ensuring posterior convergence.
\item \label{A2}There exists a neighbourhood $\Theta_\rho$ of $\theta^\ast$ such that, for every $(w,y)$ and all $\theta_1,\theta_2\in\Theta_\rho$, we have
$\MMD_{k_X}\left(\Pi_{\theta_1}(\cdot\mid w,y),\Pi_{\theta_2}(\cdot\mid w,y)\right)
\le L(w,y)\,\|\theta_1-\theta_2\|$ with $\E_{(W,Y)\sim \P_{WY}^0}L(W,Y)<\infty$.
\end{enumerate}

\begin{remark}
Assumption~\ref{A1} requires Bernstein-von Mises-type assumptions for posterior contraction, which collect the local asymptotic normality, smoothness and integrability conditions of \citet[Theorem~3.1]{kleijn2012bernstein}. Assumption~\ref{A2} is a local stability condition that transfers this contraction to the posterior predictive. Appendix~\ref{sec:appendix-A1} relists sufficient conditions for Assumptions~\ref{A1}--\ref{A2} and provides practical scenarios where they hold.
\end{remark}

\begin{theorem}\label{thm:pseudo-classical}
For each \(x \in \mathcal X\) write \(p^{0}_{Y\mid x}(y):=f_{E}^{0}\bigl(y-g^0(x)\bigr)\) and \(p^{\theta^{\ast}}_{Y\mid x}(y):=f_{E}\bigl(y-g(x,\theta^{\ast})\bigr)\). Let
\[
\begin{aligned}
\KL_{X}&:=\KL\bigl(p_{X}^{0}\Vert p_{X}\bigr),&
\KL_{N}&:=\KL\bigl(f_{N}^{0}\Vert f_{N}\bigr),&
\KL_{E}&:=\E_{X\sim p_{X}^{0}}\KL\bigl(p^{0}_{Y\mid X}\Vert p^{\theta^{\ast}}_{Y\mid X}\bigr).
\end{aligned}
\]
Denote \(\KL_{\ast}:=\KL_{X}+\KL_{N}+\KL_{E}\) and recall that \(\kappa=\kappa_{X}\kappa_{Y}\).
Under Assumptions~\ref{ass:G1} and~\ref{A1}--\ref{A2}, for all \(n,m\ge 1\) and any sequence $M_n\to\infty$,
\[
   \E_{\D,S}\Bigl[\MMD_{k}\bigl(\P^{\mathrm{pseudo}}_{XY},\P_{XY}^0\bigr)\Bigr]
   \le\frac{4\sqrt{\kappa}}{\sqrt n}
      +\frac{M_{n}}{\sqrt n}
      +2\sqrt{\kappa}\sqrt{1-\exp(-\KL_{\ast})}
      +\sqrt{\kappa} r_{n},
\]
where $r_n$ depends on $M_n$ and satisfies \(0\le r_{n}\le1\) and \(r_{n}\to0\).
\end{theorem}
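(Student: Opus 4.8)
The plan is to telescope $\MMD_k\bigl(\P^{\mathrm{pseudo}}_{XY},\P^0_{XY}\bigr)$ through two intermediate measures and control each piece. Write $\hat\P^{\Psi}_n:=\frac1n\sum_{i=1}^n \Psi_n(\cdot\mid W_i,Y_i)\otimes\delta_{Y_i}$, which is exactly the conditional mean $\E[\P^{\mathrm{pseudo}}_{XY}\mid\D]$; write $\hat\P^{\theta^\ast}_n:=\frac1n\sum_{i=1}^n \Pi_{\theta^\ast}(\cdot\mid W_i,Y_i)\otimes\delta_{Y_i}$, obtained by replacing the data-dependent predictive kernel $\Psi_n$ by the pseudo-true one; and write $\P^{\theta^\ast}_{XY}$ for the law of $(\tilde X,Y)$ when $(W,Y)\sim p^0_{WY}$ and $\tilde X\mid(W,Y)\sim\Pi_{\theta^\ast}(\cdot\mid W,Y)$, i.e.\ the $(X,Y)$-marginal of $q^{\theta^\ast}(x,w,y):=p^0_{WY}(w,y)\,\Pi_{\theta^\ast}(x\mid w,y)$. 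The triangle inequality for $\MMD_k$ gives
\[
\MMD_k(\P^{\mathrm{pseudo}}_{XY},\P^0_{XY})\le \MMD_k(\P^{\mathrm{pseudo}}_{XY},\hat\P^{\Psi}_n)+\MMD_k(\hat\P^{\Psi}_n,\hat\P^{\theta^\ast}_n)+\MMD_k(\hat\P^{\theta^\ast}_n,\P^{\theta^\ast}_{XY})+\MMD_k(\P^{\theta^\ast}_{XY},\P^0_{XY}),
\]
and I would bound the four terms respectively by an empirical fluctuation of order $1/\sqrt n$, the Bernstein--von Mises contribution $M_n/\sqrt n$ plus a vanishing residual $\sqrt\kappa\,r_n$, a second empirical fluctuation of order $1/\sqrt n$, and the misspecification term $2\sqrt\kappa\sqrt{1-e^{-\KL_\ast}}$; summing and crudely bounding constants yields the claim.

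\emph{Empirical terms.} Conditionally on $\D$ the $nm$ pairs $(\tilde X_{ij},Y_i)$ are independent with average law $\hat\P^{\Psi}_n$; expanding $\|\mu_{\P^{\mathrm{pseudo}}_{XY}}-\mu_{\hat\P^{\Psi}_n}\|^2_{\Hilb}$, using independence and $\sup k\le\kappa$ gives $\E[\MMD_k(\P^{\mathrm{pseudo}}_{XY},\hat\P^{\Psi}_n)^2\mid\D]\le\kappa/(nm)$, hence by Jensen and $m\ge1$ this term is at most $\sqrt{\kappa/n}$ in expectation. Since $(W_i,Y_i)$ are i.i.d.\ from $p^0_{WY}$, $\hat\P^{\theta^\ast}_n$ is an average of $n$ i.i.d.\ random measures with mean $\P^{\theta^\ast}_{XY}$, so the same variance computation gives $\E_{\D}[\MMD_k(\hat\P^{\theta^\ast}_n,\P^{\theta^\ast}_{XY})]\le\sqrt{\kappa/n}$. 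Together these contribute at most $4\sqrt\kappa/\sqrt n$ after a crude bound; these steps are routine.

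\emph{Misspecification term.} This is the main analytical point, handled by a data-processing/Bretagnolle--Huber argument. Since $\sup k\le\kappa$ every $f$ in the unit ball of $\Hilb$ obeys $\|f\|_\infty\le\sqrt\kappa$, whence $\MMD_k(P,Q)\le 2\sqrt\kappa\,\TV(P,Q)$, and the Bretagnolle--Huber inequality gives $\TV(\P^0_{XY},\P^{\theta^\ast}_{XY})\le\sqrt{1-\exp(-\KL(\P^0_{XY}\Vert\P^{\theta^\ast}_{XY}))}$. It therefore suffices to show $\KL(\P^0_{XY}\Vert\P^{\theta^\ast}_{XY})\le\KL_\ast$. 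Both laws are $(X,Y)$-marginals of joint laws on $(X,W,Y)$ sharing the $(W,Y)$-marginal $p^0_{WY}$, namely $p^0_{XWY}(x,w,y)=p^0_{WY}(w,y)p^0_{X\mid W,Y}(x\mid w,y)$ and $q^{\theta^\ast}$; data processing and the chain rule (the $(W,Y)$-marginals agree, so their KL is zero) give $\KL(\P^0_{XY}\Vert\P^{\theta^\ast}_{XY})\le\KL(p^0_{XWY}\Vert q^{\theta^\ast})=\E_{p^0_{WY}}\KL\bigl(p^0_{X\mid W,Y}\Vert\Pi_{\theta^\ast}(\cdot\mid W,Y)\bigr)$. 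The fully model-specified joint $p^{\theta^\ast}_{XWY}(x,w,y)=p_X(x)f_N(w-x)f_E(y-g(\theta^\ast,x))$ has the same conditional $\Pi_{\theta^\ast}$ of $X$ given $(W,Y)$, so expanding $\KL(p^0_{XWY}\Vert p^{\theta^\ast}_{XWY})$ the other way gives $\KL(p^0_{XWY}\Vert p^{\theta^\ast}_{XWY})=\KL(p^0_{WY}\Vert p^{\theta^\ast}_{WY})+\E_{p^0_{WY}}\KL\bigl(p^0_{X\mid W,Y}\Vert\Pi_{\theta^\ast}(\cdot\mid W,Y)\bigr)$; hence the quantity of interest equals $\KL(p^0_{XWY}\Vert p^{\theta^\ast}_{XWY})-\KL(p^0_{WY}\Vert p^{\theta^\ast}_{WY})\le\KL(p^0_{XWY}\Vert p^{\theta^\ast}_{XWY})$. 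Finally, since both $p^0_{XWY}$ and $p^{\theta^\ast}_{XWY}$ factor across the independent blocks $X$, $N=W-X$ and $E=Y-g_0(X)$ (resp.\ $Y-g(\theta^\ast,X)$), this last divergence decomposes exactly as $\KL_X+\KL_N+\KL_E=\KL_\ast$. Monotonicity of $t\mapsto\sqrt{1-e^{-t}}$ then yields $\MMD_k(\P^{\theta^\ast}_{XY},\P^0_{XY})\le 2\sqrt\kappa\sqrt{1-e^{-\KL_\ast}}$.

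\emph{The Bernstein--von Mises term is the obstacle.} Convexity of the mean embedding and the fact that the $Y$-coordinate is shared reduce $\MMD_k(\hat\P^{\Psi}_n,\hat\P^{\theta^\ast}_n)$ to $\sqrt{\kappa_Y}\,\frac1n\sum_i\MMD_{k_X}\bigl(\Psi_n(\cdot\mid W_i,Y_i),\Pi_{\theta^\ast}(\cdot\mid W_i,Y_i)\bigr)$; writing $\Psi_n=\int_\Theta\Pi_\theta\,\Pi_n(\dd\theta\mid\D)$, Jensen followed by the local Lipschitz bound~\ref{A2} on a neighbourhood $\Theta_\rho$ of $\theta^\ast$ and the trivial bound $2\sqrt{\kappa_X}$ off $\Theta_\rho$ gives
\[
\MMD_k(\hat\P^{\Psi}_n,\hat\P^{\theta^\ast}_n)\le\sqrt{\kappa_Y}\Bigl(\tfrac1n\textstyle\sum_{i=1}^n L(W_i,Y_i)\Bigr)\!\int_\Theta\|\theta-\theta^\ast\|\,\Pi_n(\dd\theta\mid\D)+2\sqrt\kappa\,\Pi_n(\Theta_\rho^c\mid\D).
\]
On the event $G_n:=\{\sqrt n\int_\Theta\|\theta-\theta^\ast\|\,\Pi_n(\dd\theta\mid\D)\le M_n\}$ the right-hand side is $O(M_n/\sqrt n)$ (bounding $\Pi_n(\Theta_\rho^c\mid\D)$ by Markov's inequality); taking $\E_\D$ and absorbing $\sqrt{\kappa_Y}(\E_{p^0_{WY}}L+\text{const})$ into the still-divergent sequence $M_n$ gives the $M_n/\sqrt n$ term, while on $G_n^c$ the term is at most the MMD diameter $2\sqrt\kappa$ and contributes $\le 2\sqrt\kappa\,\Pr(G_n^c)$. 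The hard part is proving $\Pr(G_n^c)\to0$ for every $M_n\to\infty$: this needs posterior consistency $\Pi_n(\Theta_\rho^c\mid\D)\to0$ in $p^0$-probability together with the misspecified Bernstein--von Mises theorem of \citet{kleijn2012bernstein} under~\ref{A1} to conclude $\sqrt n\int_\Theta\|\theta-\theta^\ast\|\,\Pi_n(\dd\theta\mid\D)=O_P(1)$, using compactness of $\Theta$ to keep the integral finite and the integrability of the Gaussian limit of the rescaled posterior. Because this input is purely asymptotic no explicit rate is available, so the leftover is packaged as a sequence $r_n$ with $0\le r_n\le1$ and $r_n\to0$ (truncation at $1$ is harmless since the whole left-hand side is trivially at most $2\sqrt\kappa$). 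Summing the four bounds gives the stated inequality.
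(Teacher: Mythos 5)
Your decomposition and the first, third and fourth terms match the paper's proof in substance: the two empirical-fluctuation terms are handled exactly as in the paper (conditional independence of the pseudo-draws given $\D$, an RKHS variance computation, Jensen), and your treatment of the misspecification term is a valid, slightly different route — you apply $\MMD_k\le 2\sqrt{\kappa}\,\TV$ and Bretagnolle--Huber directly to the $(X,Y)$-marginals and then bound $\KL(\P^0_{XY}\Vert\Pi^{XY}_{\theta^\ast})$ by data processing plus two chain-rule expansions of the joint $(X,W,Y)$ divergence, whereas the paper applies Bretagnolle--Huber at the conditional level and lifts to the joint via its Lemmas~\ref{lem:MMD-conditional} and~\ref{lem:MMD-Jensen} together with concavity of $t\mapsto\sqrt{1-e^{-t}}$. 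Both give $2\sqrt{\kappa}\sqrt{1-e^{-\KL_\ast}}$; your version avoids the conditional-to-joint MMD lemma for this term and is arguably cleaner.

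The genuine gap is in the Bernstein--von Mises term. You route the bound through the full posterior first moment, conditioning on the event $G_n=\{\sqrt n\int_\Theta\|\theta-\theta^\ast\|\,\Pi_n(\dd\theta\mid\D)\le M_n\}$, and you assert that $\sqrt n\int_\Theta\|\theta-\theta^\ast\|\,\Pi_n(\dd\theta\mid\D)=O_P(1)$ follows from \citet{kleijn2012bernstein} under~\ref{A1} "using compactness of $\Theta$". It does not: Assumption~\ref{A1} (the conditions of their Theorem~3.1) yields only that $\Pi_n\bigl(\|\theta-\theta^\ast\|>M_n/\sqrt n\mid\D\bigr)\to0$ in probability for every $M_n\to\infty$, and the total-variation BvM statement gives no uniform integrability of the rescaled posterior, hence no moment convergence. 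Compactness only gives $\sqrt n\int_\Theta\|\theta-\theta^\ast\|\,\Pi_n(\dd\theta)\le M_n'+\sqrt n\,\mathrm{diam}(\Theta)\,\Pi_n(\|\theta-\theta^\ast\|>M_n'/\sqrt n)$, and the second term needs a \emph{rate} $o_P(n^{-1/2})$ on the posterior mass outside shrinking balls, which is not available. So $\Pr(G_n^c)\to0$ for arbitrary $M_n\to\infty$ is unproven as you argue it, and with it the $M_n/\sqrt n+\sqrt\kappa\,r_n$ bound. The repair is to do what the paper does: never integrate $\|\theta-\theta^\ast\|$ over all of $\Theta$, but split the $\theta$-integral inside $\Psi_n=\int\Pi_\theta\,\Pi_n(\dd\theta)$ over $B_n=\{\|\theta-\theta^\ast\|\le M_n/\sqrt n\}$ (where~\ref{A2} gives the factor $M_n L(W_i,Y_i)/\sqrt n$) and $B_n^c$ (where the trivial MMD bound times $\Pi_n(B_n^c)$ applies), and set $r_n:=\E_{\D}[\Pi_n(B_n^c)]\to0$ by~\ref{A1}; this needs only mass contraction, no posterior moments. (A secondary point: your leftover constant is $2\sqrt\kappa\Pr(G_n^c)$ rather than $\sqrt\kappa r_n$, and "truncate $r_n$ at $1$" does not by itself preserve the inequality for large $n$; the paper avoids this because its $r_n$ is a posterior mass, automatically in $[0,1]$.)
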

The pseudo-true parameter $\theta^{\ast}:=\arg\min_{\theta\in\Theta}\KL\!\bigl(p^{0}_{WY}\Vert p^{\theta}_{WY}\bigr)$ is the value around which the misspecified posterior $\Pi_n(\dd\theta\mid\D)$ concentrates by the misspecified Bernstein-von Mises theorem; it is the best estimator attainable from the information in the error-prone pairs $(W_i,Y_i)$ within the working family $\{p^{\theta}_{WY}\}$ in a Bayesian posterior. Theorem~\ref{thm:pseudo-classical} shows how this information is conveyed to the latent covariate distribution through the pseudo-sampling kernel $\Psi_n(\dd x\mid w,y)=\int \Pi_\theta(\dd x\mid w,y)\,\Pi_n(\dd\theta\mid\D)$: as $n$ increases, $\Psi_n(\cdot\mid w,y)$ tracks $\Pi_{\theta^\ast}(\cdot\mid w,y)$ and the resulting pseudo-sample joint distribution $\P^{\mathrm{pseudo}}_{XY}$ approaches the true joint distribution $\P_{XY}^0$ up to the term $\sqrt{1-\exp(-\KL_{\ast})}$ summarizing total misspecification. 

Our framework employs the ME-contaminated pairs $(W_i,Y_i)$ only to form a Bayesian predictive law for $X_i\mid(W_i,Y_i)$. It does not commit to the parametric model when estimating $\theta$. The pseudo-samples $\{\tilde X_{ij}\}$ propagate the information about $\theta$ learned from the data into the nonparametric stage by updating the DP priors. The final estimator is the $\theta$ that best aligns the DP-updated joint distribution with the model-implied joint under the MMD.

This construction is flexible: any (generalized) posterior for $\theta$ that contracts to a limiting pseudo-truth can be used inside $\Psi_n(\cdot\mid w,y)$ (e.g., MMD-Bayes \citep{cherief2020mmd} or $\alpha$-posteriors \citep{medina2022robustness}) and produce pseudo-samples $\{\tilde X_{ij}\}$ corresponding to the generalized posterior. Under Assumption~\ref{A2}, the pseudo-samples inherit the required stability, and extending the bounds to a generalized posterior amounts to verifying the analogue of Assumption~\ref{A1} (posterior contraction under misspecification). 

The quantity $\sqrt{1-\exp(-\KL_{\ast})}$ in the bound provides a summary of \emph{total} misspecification of the working model relative to the DGP, with $\KL_{\ast}=\KL_X+\KL_N+\KL_E=0$ if and only if the latent covariate law, the ME distribution, and the outcome model are correctly specified. A sharper alternative is to replace $\sqrt{1-e^{-\KL_{\ast}}}$ by
$\MMD_{k}\bigl(\Pi_{\theta^\ast}^{XY},\P_{XY}^0\bigr)$, where
\[
\Pi_{\theta^\ast}^{XY}(\dd x,\dd y)
=\int_{\mathcal W\times\mathcal Y}
       \Pi_{\theta^\ast}(\dd x\mid w,y)\,\delta_y(\dd y)\,
       p_{WY}^0(\dd w,\dd y).
\]
Although this alternative does not explicitly separate the contributions of the different sources of misspecification, it avoids relying on the KL divergence and does not collapse to a trivial bound even when some component KLs are infinite.

\begin{proposition}[Classical ME: Marginal--$X$]
\label{cor:pseudo-marginalX-classical}
Define the kernel
$k_{X}^{2}:=k_{X}\otimes k_{X}$ with RKHS $\Hilb_{k_{X}^{2}}:=\Hilb_{k_{X}}\otimes\Hilb_{k_{X}}$. Under Assumptions~\ref{ass:G1} and~\ref{A1}--\ref{A2}, for all $n,m\ge 1$ and every $M_{n}\to\infty$,
\begin{equation*}
      \E_{\D,S}
     \bigl[\MMD_{k_{X}^{2}}(\P_X^{\mathrm{pseudo}},\P_X^0)\bigr]
  \le
      \frac{4\kappa_{X}}{\sqrt n}
      + \frac{M_{n}}{\sqrt n}
      + 2\kappa_{X}\sqrt{1-\exp(-\KL_{\ast})}
      + \kappa_{X} r_{n},
\end{equation*}
where $r_n$ depends on $M_n$ and satisfies \(0\le r_{n}\le1\) and \(r_{n}\to0\).
\end{proposition}

\subsection{Pseudo-sampling bounds: Berkson ME}
\label{sec:pseudo-berkson}
Next, we demonstrate the \emph{Berkson} ME counterpart for Theorem~\ref{thm:pseudo-classical}. Let the true DGP be  
\begin{equation}
\label{berkson:model-def}
       X = W + N,\quad Y = g^0(X)+E, \quad N\sim F_{N}^{0}, \quad E\sim F_{E}^{0},\quad  N, E\ind W, \;N\ind E,
\end{equation}
where the covariates $W_{1},\dots ,W_{n}$ are i.i.d. draws from a design law assigned by experts $\P_{W}^{0}$ or treated as fixed design values with empirical law \(\P_{W}^{0} = \frac{1}{n} \sum_{i=1}^n \delta_{w_i}\), as is common in Berkson ME \citep{delaigle2006nonparametric}.

For $\theta\in\Theta\subset\R^{p}$ consider the (misspecified) model
\[
   X = W + N,\; Y = g(X, \theta)+E, \; N\sim F_{N}, \; E\sim F_{E},\;  N\ind W, \;E\ind(N,W).
\]
Then we can define the true and model-implied joint densities of $(W,Y)$ as
\[ p^{0}_{WY}(w,y)=\int_{\mathcal X}p^{0}_{W}(w)f_N^0(x-w)f_E^0\bigl(y-g^0(x)\bigr)\dd x, \; p^{\theta}_{WY}(w,y)=\int_{\mathcal X} p^{0}_{W}(w)f_N(x-w)f_E\bigl(y-g(x,\theta)\bigr)\dd x\]
Define the pseudo-true parameter
\(
   \theta^{\ast}:=\arg\min_{\theta\in\Theta}\KL(p^{0}_{WY}\Vert p^{\theta}_{WY}).
\)
We set
\[
   \KL_{N}:=\KL(f_{N}^{0}\Vert f_{N}),\;
   \KL_{E}:=\int_{\R}\int_{\mathcal X}
               p^{0}_{W}(w)
               f_{N}^{0}(x-w)
               \KL\bigl(p^{0}_{Y\mid x}\Vert p^{\theta^{\ast}}_{Y\mid x}\bigr){\dd}x{\dd}w, \; \KL_{\ast}:=\KL_{N}+\KL_{E}.
\]
\begin{proposition}[Berkson ME]\label{cor:pseudo-berkson}
Under Assumptions~\ref{ass:G1} and~\ref{A1}--\ref{A2},
for all $n,m\ge1$ and any sequence $M_{n}\to\infty$, 
\[
\E_{\D,S}\Bigl[\MMD_{k}\bigl(\P^{\mathrm{pseudo}}_{XY},\P_{XY}^0\bigr)\Bigr]
   \le\frac{4\sqrt{\kappa}}{\sqrt n}
   +\frac{M_{n}}{\sqrt n}
   +2\sqrt{\kappa}\sqrt{1-\exp(-\KL_{\ast})}
   +\sqrt{\kappa} r_{n},
\]
\[
\E_{\D,S}\Bigl[\MMD_{k_X^2}\bigl(\P^{\mathrm{pseudo}}_{X},\P_{X}^{0}\bigr)\Bigr]
   \le
   \frac{4\kappa_X}{\sqrt n}
   +\frac{M_{n}}{\sqrt n}
   +2\kappa_X\sqrt{1-\exp(-\KL_{\ast})}
   +\kappa_X r_{n},
\]
where $r_n$ depends on $M_n$ and satisfies \(0\le r_{n}\le1\) and \(r_{n}\to0\).
\end{proposition}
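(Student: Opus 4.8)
The plan is to follow the route of Theorem~\ref{thm:pseudo-classical}, exploiting the structural simplification specific to Berkson ME: the covariate law $p^{0}_{W}$ is fixed and known, so it enters both the true joint $p^{0}_{WY}$ and every model-implied joint $p^{\theta}_{WY}$ through the \emph{same} factor and cancels in all Kullback--Leibler computations; hence no $\KL_{X}$ term survives and $\KL_{\ast}=\KL_{N}+\KL_{E}$. Write $\Pi^{0}(\dd x\mid w,y)\propto f_{N}^{0}(x-w)\,f_{E}^{0}(y-g^{0}(x))\,\dd x$ for the true conditional of $X$ given $(W,Y)$ and $\Pi_{\theta}(\dd x\mid w,y)\propto f_{N}(x-w)\,f_{E}(y-g(x,\theta))\,\dd x$ for its misspecified analogue, so that $\P^{0}_{XY}=\E_{(W,Y)\sim p^{0}_{WY}}[\Pi^{0}(\cdot\mid W,Y)\otimes\delta_{Y}]$. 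Introducing the intermediate measures $\bar\P^{\Psi}_{n}:=\tfrac1n\sum_{i}\Psi_{n}(\cdot\mid W_{i},Y_{i})\otimes\delta_{Y_{i}}$ and $\bar\P^{0}_{n}:=\tfrac1n\sum_{i}\Pi^{0}(\cdot\mid W_{i},Y_{i})\otimes\delta_{Y_{i}}$, the triangle inequality gives
\[
\MMD_{k}(\P^{\mathrm{pseudo}}_{XY},\P^{0}_{XY})\le\MMD_{k}(\P^{\mathrm{pseudo}}_{XY},\bar\P^{\Psi}_{n})+\MMD_{k}(\bar\P^{\Psi}_{n},\bar\P^{0}_{n})+\MMD_{k}(\bar\P^{0}_{n},\P^{0}_{XY}).
\]

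First I would dispatch the two ``empirical $\to$ population'' terms. Conditional on $\D$ the atoms $\{(\tilde X_{ij},Y_{i})\}$ are independent with mean measure $\bar\P^{\Psi}_{n}$, and $\bar\P^{0}_{n}$ is an average of $n$ i.i.d.\ random measures with mean $\P^{0}_{XY}$; the standard Hilbert-space variance bound for the MMD between an empirical measure of $N$ independent draws and its mean (using $\sup k=\kappa$ from \ref{ass:G1}) then gives $\E_{\D,S}[\MMD_{k}(\P^{\mathrm{pseudo}}_{XY},\bar\P^{\Psi}_{n})]\le 2\sqrt{\kappa/(nm)}$ and $\E_{\D}[\MMD_{k}(\bar\P^{0}_{n},\P^{0}_{XY})]\le 2\sqrt{\kappa/n}$, jointly the $4\sqrt\kappa/\sqrt n$ term. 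For the middle term the product structure $k=k_{X}\otimes k_{Y}$ together with the Dirac $Y$-coordinate gives $\MMD_{k}(\nu\otimes\delta_{y},\nu'\otimes\delta_{y})=k_{Y}(y,y)^{1/2}\MMD_{k_{X}}(\nu,\nu')\le\sqrt{\kappa_{Y}}\,\MMD_{k_{X}}(\nu,\nu')$, so by convexity of the mean embedding it reduces to bounding $\tfrac1n\sum_{i}\MMD_{k_{X}}(\Psi_{n}(\cdot\mid W_{i},Y_{i}),\Pi^{0}(\cdot\mid W_{i},Y_{i}))$, which I split at the pseudo-true conditional as $\MMD_{k_{X}}(\Psi_{n},\Pi_{\theta^{\ast}})+\MMD_{k_{X}}(\Pi_{\theta^{\ast}},\Pi^{0})$.

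The term $\MMD_{k_{X}}(\Psi_{n},\Pi_{\theta^{\ast}})$ is handled by Jensen — $\Psi_{n}(\cdot\mid w,y)=\int\Pi_{\theta}(\cdot\mid w,y)\,\Pi_{n}(\dd\theta\mid\D)$ gives $\MMD_{k_{X}}(\Psi_{n},\Pi_{\theta^{\ast}})\le\int\MMD_{k_{X}}(\Pi_{\theta},\Pi_{\theta^{\ast}})\,\Pi_{n}(\dd\theta\mid\D)$ — followed by splitting the $\theta$-integral over the neighbourhood $\Theta_{\rho}$ of Assumption~\ref{A2} and its complement, bounding it by $L(w,y)\int_{\Theta_{\rho}}\|\theta-\theta^{\ast}\|\,\Pi_{n}(\dd\theta\mid\D)+2\sqrt{\kappa_{X}}\,\Pi_{n}(\Theta_{\rho}^{c}\mid\D)$. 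Averaging over $i$, taking $\E_{\D}$, using $\tfrac1n\sum_{i}L(W_{i},Y_{i})\to\E_{p^{0}_{WY}}L<\infty$, truncating the posterior first moment at level $M_{n}/\sqrt n$ (with $\|\theta-\theta^{\ast}\|\le\operatorname{diam}(\Theta)$ since $\Theta$ is compact), and invoking the misspecified Bernstein--von Mises theorem of \citet{kleijn2012bernstein} under Assumption~\ref{A1} — which yields $\sqrt n$-contraction of $\Pi_{n}(\cdot\mid\D)$ at $\theta^{\ast}$ and $\E_{\D}[\Pi_{n}(\Theta_{\rho}^{c}\mid\D)]\to0$ — gives a contribution of order $M_{n}/\sqrt n$ plus a non-explicit remainder $r_{n}\in[0,1]$ with $r_{n}\to0$. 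The term $\MMD_{k_{X}}(\Pi_{\theta^{\ast}},\Pi^{0})$ I would pass through total variation, $\MMD_{k_{X}}(\mu,\nu)\le2\sqrt{\kappa_{X}}\|\mu-\nu\|_{\TV}$: since the joints $\mathbb M^{0}(\dd x,\dd w,\dd y):=p^{0}_{WY}(\dd w,\dd y)\,\Pi^{0}(\dd x\mid w,y)$ and $\mathbb M^{\ast}(\dd x,\dd w,\dd y):=p^{0}_{WY}(\dd w,\dd y)\,\Pi_{\theta^{\ast}}(\dd x\mid w,y)$ share the $(W,Y)$-marginal, one has $\E_{(W,Y)\sim p^{0}_{WY}}\|\Pi^{0}(\cdot\mid W,Y)-\Pi_{\theta^{\ast}}(\cdot\mid W,Y)\|_{\TV}=\|\mathbb M^{0}-\mathbb M^{\ast}\|_{\TV}\le\sqrt{1-e^{-\KL(\mathbb M^{0}\Vert\mathbb M^{\ast})}}$ by the Bretagnolle--Huber inequality, and a direct log-ratio expansion in which $p^{0}_{W}$ cancels yields $\KL(\mathbb M^{0}\Vert\mathbb M^{\ast})=\KL_{N}+\KL_{E}-\KL(p^{0}_{WY}\Vert p^{\theta^{\ast}}_{WY})\le\KL_{\ast}$. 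Monotonicity of $t\mapsto1-e^{-t}$ and the $\sqrt{\kappa_{Y}}$ factor then turn this into $2\sqrt\kappa\sqrt{1-e^{-\KL_{\ast}}}$; collecting the three pieces proves the first display.

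The $X$-marginal display follows by repeating the argument with $k_{X}^{2}=k_{X}\otimes k_{X}$ in place of $k$, using the diagonal-pushforward reading of $\MMD_{k_{X}^{2}}$ from the remark after Assumption~\ref{ass:G2} so that $\kappa_{X}=\sqrt{\sup k_{X}^{2}}$ replaces $\sqrt\kappa$, and noting $\P^{0}_{X}=\E_{(W,Y)\sim p^{0}_{WY}}[\Pi^{0}(\cdot\mid W,Y)]$ while $\P^{\mathrm{pseudo}}_{X}$ is built from the same conditionals $\Psi_{n}(\cdot\mid W_{i},Y_{i})$, so the concentration, Bernstein--von Mises and Bretagnolle--Huber steps transfer unchanged and give $4\kappa_{X}/\sqrt n+M_{n}/\sqrt n+2\kappa_{X}\sqrt{1-e^{-\KL_{\ast}}}+\kappa_{X}r_{n}$. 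I expect the main obstacle to be the $\Psi_{n}\to\Pi_{\theta^{\ast}}$ step: Assumption~\ref{A2} is only a \emph{local} Lipschitz bound while Assumption~\ref{A1} delivers an \emph{asymptotic} Bernstein--von Mises statement with no quantitative rate, so one must carefully glue the local regime (where $\sqrt n$-contraction applies, but only after truncating at the free level $M_{n}$) to the global regime (controlled solely by the vanishing posterior mass $\Pi_{n}(\Theta_{\rho}^{c}\mid\D)$ and the integrability of $L$), and it is precisely this gluing that forces the tuning-sequence artefacts $M_{n}/\sqrt n$ and $r_{n}$ into the bound rather than an explicit rate.
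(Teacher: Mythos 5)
Your proposal is correct and rests on the same three ingredients as the paper's own argument: a Hilbert-space concentration bound for the pseudo-sample/empirical error, the misspecified Bernstein--von Mises theorem of \citet{kleijn2012bernstein} combined with the local MMD-Lipschitz property \ref{A2} to control $\Psi_n\to\Pi_{\theta^\ast}$, and a total-variation/Bretagnolle--Huber step in which the Berkson structure (fixed, shared $p^0_W$ cancelling, $N\ind W$, $Y\mid X\ind W$) yields $\KL_\ast=\KL_N+\KL_E$. The organisation differs mildly: the paper decomposes through the \emph{population} measure $\Pi^{XY}_{\theta^\ast}$, paying the i.i.d.\ concentration cost $2\sqrt\kappa/\sqrt n$ when comparing $\tfrac1n\sum_i\Pi_{\theta^\ast}(\cdot\mid W_i,Y_i)\delta_{Y_i}$ with $\Pi^{XY}_{\theta^\ast}$ and then bounding the fixed misspecification term by pointwise Bretagnolle--Huber plus concavity of $\sqrt{1-e^{-x}}$ (Jensen); you instead decompose through the empirical average of the \emph{true} conditionals $\bar\P^0_n$, pay the same $2\sqrt\kappa/\sqrt n$ there, and apply Bretagnolle--Huber once at the level of the two joints sharing the $(W,Y)$-marginal. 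The two routes are equivalent in rate and constants; yours avoids the Jensen step and even exhibits the slightly sharper exponent $\KL_\ast-\KL\bigl(p^0_{WY}\Vert p^{\theta^\ast}_{WY}\bigr)$ before relaxing to $\KL_\ast$. Two cosmetic points: your $2\sqrt{\kappa_X}\,\Pi_n(\Theta_\rho^c\mid\D)$ tail bound and the $\operatorname{diam}(\Theta)$-weighted moment truncation leave constants in front of the vanishing terms that do not literally match the stated $\sqrt\kappa\, r_n$; as in the paper's proof of Theorem~\ref{thm:pseudo-classical}, these are absorbed by rescaling the free sequence $M_n$ and redefining $r_n$, so nothing substantive is affected. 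The marginal-$X$ statement indeed follows by the same substitution $k\mapsto k_X^2$, $\sqrt\kappa\mapsto\kappa_X$, exactly as in Proposition~\ref{cor:pseudo-marginalX-classical}.
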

These bounds highlight a difference from the Robust--MEM method of \citet{dellaporta2023robust}. Their construction places a single DP prior on the conditional latent-covariate law \(\P_{X\mid W=w_i}\) and updates the DP with the error-contaminated observation \(w_i\). This results in the nonvanishing terms $\sqrt{\Var_{\mathcal{H}_{k_X}}\left(F_N^0\right)}$ and $\MMD_{k_X}\left(F_N^0, \delta_0\right)$ in their generalization bounds, which remain $O(1)$ and can approach the trivial upper bound for the MMD as the ME scale grows. By contrast, our bounds decompose the excess MMD risk into misspecification components.
$$
\begin{aligned}
\text {Robust--MEM}&: \, \mathcal{E}_n \lesssim n^{-1 / 2}+\frac{c}{c+1}\left(\Delta_{\text {prior }}+\sqrt{\Var_{\mathcal{H}_{k_X}}\left(F_N^0\right)}\right)+\frac{1}{c+1} \sqrt{\MMD_{k_X}\left(F_N^0, \delta_0\right)}, \\
\text {Ours}&: \,  \mathcal{E}_n \lesssim(n(c+m))^{-1 / 2}+\frac{c}{c+m} \Delta_{\text {prior }}+\frac{m}{c+m}\left\{n^{-1 / 2}+\sqrt{1-e^{-\KL_{\ast}}}+r_n\right\} .
\end{aligned}
$$

\subsection{Bounds without pseudo-sampling}
\label{sec:nopseudo}
When we replace the pseudo-samples $\{\tilde X_{ij}\}_{i=1,j=1}^{n,\;\;\; m}$ by the observed covariates $\{W_i\}_{i=1}^n$ with $m\equiv 1$, the following bounds replace the pseudo-sample discrepancy term in Theorem~\ref{thm:gen-bound-main}.
\begin{lemma}
\label{thm:nopseudo}
Let $(X,W,Y)$ be jointly distributed random variables on $\R^d \times\R^d \times \R$ satisfying the classical ME model \eqref{classical:model-def} or Berkson ME model \eqref{berkson:model-def}, with $N \sim F_N^0$. Define the empirical measures
\(
\hat{\P}^n_{W} := \frac{1}{n} \sum_{i=1}^n \delta_{W_i}. \;
\hat{\P}^n_{WY} := \frac{1}{n} \sum_{i=1}^n \delta_{(W_i, Y_i)}
\).

We assume that $k_X$ is translation-invariant, i.e. that there exists a positive-definite function $\psi$ on $\mathcal{X}$ such that \(k_X(x,x') = \psi(x-x')\), $\forall x,x'\in\mathcal{X}$. Then $\psi(0) = \kappa_X$ by positive-definiteness of $ k_X$. Under Assumption~\ref{ass:G1}, we have
\begin{align}
&\E_{\D} \left[\MMD_k(\P_{XY}^0, \hat{\P}^n_{WY})\right] \leq \sqrt{2}\kappa_Y^{1/2}\kappa_X^{1/4} \sqrt{\MMD_{k_X}(F_N^0,\delta_0)} + \frac{\sqrt{\kappa}}{\sqrt{n}},\label{eq:nopseudo-joint}
    \\&\E_{\D} \left[\MMD_{k_X^2}(\P_X^0, \hat{\P}^n_{W})\right] \leq \sqrt{2\kappa_X\MMD_{k_X^2}(F_N^0,\delta_0)} + \frac{\kappa_X}{\sqrt{n}}. \label{eq:nopseudo-marginal}
\end{align}
\end{lemma}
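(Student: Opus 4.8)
The plan is to split the target, via the triangle inequality for the MMD, into a deterministic measurement-error bias and a standard empirical fluctuation, treating the classical and Berkson models in one stroke. Writing everything for the joint version (the marginal version being identical with $k$ replaced by $k_X^2$), I would bound
\[
\MMD_k(\P_{XY}^0,\hat\P^n_{WY})\le \MMD_k(\P_{XY}^0,\P_{WY}^0)+\MMD_k(\P_{WY}^0,\hat\P^n_{WY}),
\]
where $\P_{WY}^0$ is the true law of the error-prone pair $(W,Y)$. The second term is controlled by the standard empirical-MMD bound $\E_\D[\MMD_k(\P_{WY}^0,\hat\P^n_{WY})]\le\sqrt{\kappa/n}$, which follows by expanding $\MMD_k(\P_{WY}^0,\hat\P^n_{WY})^2=\|\hat\mu_n-\mu_{\P_{WY}^0}\|_{\Hilb_k}^2$ and using that the $(W_i,Y_i)$ are i.i.d., so $\E_\D[\MMD_k(\P_{WY}^0,\hat\P^n_{WY})^2]=n^{-1}(\E\,k(Z,Z)-\|\mu_{\P_{WY}^0}\|^2)\le\kappa/n$, and then Jensen; run with the kernel $k_X^2$, whose supremum is $\kappa_X^2$, this produces the $\kappa_X/\sqrt n$ term of \eqref{eq:nopseudo-marginal}.

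The core of the proof is the deterministic term $\MMD_k(\P_{XY}^0,\P_{WY}^0)$, bounded through a single coupling covering both ME mechanisms. On the underlying space, realise $(X,W,Y)$ jointly with $W=X+N$ in the classical model \eqref{classical:model-def} and $X=W+N$ in the Berkson model \eqref{berkson:model-def}; in either case $X-W=\pm N$, where $N$ has marginal law $F_N^0$ even though in the Berkson model $N$ is not independent of $(W,Y)$. Writing the product feature map $\Phi(x,y)=k_X(x,\cdot)\otimes k_Y(y,\cdot)$, Jensen's inequality and $\|a\otimes b\|=\|a\|\,\|b\|$ give
\[
\MMD_k(\P_{XY}^0,\P_{WY}^0)=\bigl\|\E[\Phi(X,Y)-\Phi(W,Y)]\bigr\|_{\Hilb_k}
\le \E\bigl[\,\|k_X(X,\cdot)-k_X(W,\cdot)\|_{\Hilb_{k_X}}\;\|k_Y(Y,\cdot)\|_{\Hilb_{k_Y}}\bigr].
\]
Now $\|k_Y(Y,\cdot)\|^2=k_Y(Y,Y)\le\kappa_Y$, while translation invariance $k_X(x,x')=\psi(x-x')$ together with symmetry of $k_X$ (so $\psi$ is even and $\psi(0)=\kappa_X$) collapses the $X$-increment to a function of $N$ alone: $\|k_X(X,\cdot)-k_X(W,\cdot)\|^2=2(\psi(0)-\psi(X-W))=2(\kappa_X-\psi(N))$. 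Hence $\MMD_k(\P_{XY}^0,\P_{WY}^0)\le\sqrt{2\kappa_Y}\,\E_N\!\bigl[\sqrt{\kappa_X-\psi(N)}\,\bigr]$, and only the $F_N^0$-marginal of $N$ enters --- this is precisely why no independence assumption is needed and the two mechanisms are handled identically.

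It then remains to convert $\E_N[\sqrt{\kappa_X-\psi(N)}]$ into $\MMD_{k_X}(F_N^0,\delta_0)$. Concavity of $\sqrt{\cdot}$ gives $\E_N[\sqrt{\kappa_X-\psi(N)}]\le\sqrt{\kappa_X-\E_N[\psi(N)]}$; then one recognises $\kappa_X-\E_N[\psi(N)]=k_X(0,0)-\E_N[k_X(N,0)]=\langle\mu_{\delta_0}-\mu_{F_N^0},\mu_{\delta_0}\rangle_{\Hilb_{k_X}}$, a nonnegative quantity (since $\psi(0)=\sup\psi$), which Cauchy--Schwarz bounds by $\|\mu_{\delta_0}\|\,\MMD_{k_X}(F_N^0,\delta_0)=\sqrt{\kappa_X}\,\MMD_{k_X}(F_N^0,\delta_0)$. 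Chaining the three displays yields $\MMD_k(\P_{XY}^0,\P_{WY}^0)\le\sqrt2\,\kappa_Y^{1/2}\kappa_X^{1/4}\sqrt{\MMD_{k_X}(F_N^0,\delta_0)}$, which with the empirical term proves \eqref{eq:nopseudo-joint}. For \eqref{eq:nopseudo-marginal} I would rerun the same argument with $k_X$ replaced throughout by $k_X^2=k_X\otimes k_X$ --- which, as noted earlier in the text, acts on $\mathcal X$ as the scalar kernel $(x,x')\mapsto k_X(x,x')^2$, still translation-invariant with function $\psi^2$ (positive-definite by the Schur product theorem) and $\psi^2(0)=\kappa_X^2$ --- dropping the $k_Y$ factor and replacing $\sqrt{\kappa_X}$ by $\sqrt{\kappa_X^2}=\kappa_X$ in the Cauchy--Schwarz step, which gives the $\sqrt{2\kappa_X\,\MMD_{k_X^2}(F_N^0,\delta_0)}$ term.

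The only genuinely delicate point is the uniform treatment of classical and Berkson ME: in the Berkson case $\P_{WY}^0$ is not a convolution of $\P_{XY}^0$ (indeed $N$ is dependent on $(W,Y)$ there), so the argument must avoid any convolution or independence structure --- which the coupling-plus-triangle-inequality route does, since after factoring the product feature map the bias depends on $N$ only through its common marginal $F_N^0$. The remaining ingredients --- measurability and integrability of the feature map needed for the Jensen and triangle steps (immediate from boundedness of the kernels under~\ref{ass:G1}) and the stability of translation invariance and boundedness under $k_X\mapsto k_X^2$ --- are routine.
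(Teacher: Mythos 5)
Your proof is correct and follows essentially the same route as the paper: the triangle-inequality split into a deterministic ME bias plus the standard $O(n^{-1/2})$ empirical-MMD term, the tensor feature-map factorisation with translation invariance reducing the bias to $\kappa_X-\E_N[\psi(N)]$, and a Cauchy--Schwarz step converting this to $\sqrt{\kappa_X}\,\MMD_{k_X}(F_N^0,\delta_0)$, with $k_X$ replaced by $k_X^2$ for the marginal bound. Your variant is marginally slicker in two places (the identity $\kappa_X-\E_N[\psi(N)]=\langle\mu_{\delta_0}-\mu_{F_N^0},\mu_{\delta_0}\rangle_{\Hilb_{k_X}}$ replaces the paper's quadratic expansion of $\MMD_{k_X}^2(F_N^0,\delta_0)$, and you handle the classical and Berkson cases in one stroke via $X-W=\pm N$ rather than by substitution), but these are cosmetic rather than a different argument.
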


\subsection{Consistency}
We next establish consistency of $\hat\theta_n$ for $\theta^\dagger$, the unique minimizer of the limiting loss $M(\theta)$. Before specializing to our ME settings, we first establish a general consistency guarantee for MMD-based NPL procedures, a setting that is popular \citep[e.g.][]{dellaporta2022robust,fazeli2023semi, dellaporta2023robust} but for which asymptotic results for the NPL estimator are limited. The forms of the limiting measures $\P_{XY}^{\infty}$ and $\P_{X}^{\infty}$, which depend only on the DP base measures, are given explicitly in Proposition~\ref{prop:cond}.
\label{subsec:consistency}

\begin{theorem}
\label{thm:consistency-main}
    Assume that Assumptions~\ref{ass:G1}--\ref{ass:G2} hold. Furthermore, assume that there exist fixed probability measures $\P_{XY}^{\infty}$ and $\P_{X}^{\infty}$, such that 
    \begin{equation}
        \label{B4}
\MMD_{k}\bigl(\P^{\mathrm{base}}_{XY},\P_{XY}^{\infty}\bigr)\xrightarrow{\Pr}0, \qquad \MMD_{k_X^2}\bigl(\P^{\mathrm{base}}_{X},\P_{X}^{\infty}\bigr) \xrightarrow{\Pr}0 \qquad \text{as } n\to\infty.
    \end{equation} 
    Define $M(\theta):= \MMD_k(\P_{XY}^{\infty}, \P_{X}^{\infty}\P_{g(X, \theta)})$. If $\Theta$ is compact, $M(\theta)$ is continuous, and $M(\theta)$ attains its unique minimum at an interior point $\theta^\dagger$, then $ \hat\theta_n \overset{\Pr}{\rightarrow} \theta^\dagger.$
\end{theorem}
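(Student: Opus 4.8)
The plan is to run the textbook uniform-convergence argmin argument, with the only genuinely new work being to show that the averaged Dirichlet process posterior concentrates on the averaged base measure.

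First I would reduce $\sup_{\theta\in\Theta}|M_n(\theta)-M(\theta)|$ to something $\theta$-free. Write $A_n:=\frac1n\sum_{i=1}^n\P_{X,Y\mid W_i}^{\DP}$ and $B_n:=\frac1n\sum_{i=1}^n\P_{X\mid W_i}^{\DP}$, so $M_n(\theta)=\MMD_k\!\bigl(A_n,B_n\P_{g(X,\theta)}\bigr)$; both $A_n,B_n$ are genuine probability measures, being mixtures of DP draws. Since $\MMD_k(\cdot,\cdot)$ is the distance induced by the $\Hilb_k$-norm, the reverse triangle inequality gives, for every $\theta$,
\[
|M_n(\theta)-M(\theta)|\le \MMD_k(A_n,\P_{XY}^{\infty})+\MMD_k\!\bigl(B_n\P_{g(X,\theta)},\P_X^{\infty}\P_{g(X,\theta)}\bigr).
\]
The first term is free of $\theta$; for the second I would invoke Assumption~\ref{ass:G2} and Lemma~2 of \citet{alquier2024universal} (recalled in the Remark after \ref{ass:G2}) to bound it by $\|\mathcal C_\theta\|_{\mathrm{op}}\MMD_{k_X^2}(B_n,\P_X^{\infty})\le\Lambda\,\MMD_{k_X^2}(B_n,\P_X^{\infty})$, again $\theta$-free. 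Hence $\sup_{\theta\in\Theta}|M_n(\theta)-M(\theta)|\le \MMD_k(A_n,\P_{XY}^{\infty})+\Lambda\,\MMD_{k_X^2}(B_n,\P_X^{\infty})$.

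Next I would show both terms vanish in probability. Split $\MMD_k(A_n,\P_{XY}^{\infty})\le \MMD_k(A_n,\P_{XY}^{\mathrm{base}})+\MMD_k(\P_{XY}^{\mathrm{base}},\P_{XY}^{\infty})$; the second piece $\to0$ in probability by \eqref{B4}. For the first, note that conditionally on $(\D,S)$ the $n$ random measures $\P_{X,Y\mid W_i}^{\DP}$ are independent, each a $\DP(c+m,G_i)$ draw with $\tfrac1n\sum_i G_i=\P_{XY}^{\mathrm{base}}$, and for any $\DP(c+m,G)$ draw $P$ one has $\E\|\mu_P-\mu_G\|_{\Hilb_k}^2=(c+m+1)^{-1}\bigl(\int k\,\dd G-\|\mu_G\|^2\bigr)\le \kappa/(c+m+1)$ — the same second-moment computation behind the ``statistical fluctuation'' term of Theorem~\ref{thm:gen-bound-main}. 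Averaging over $i$ by conditional independence gives $\E_{\DP}\!\bigl[\MMD_k(A_n,\P_{XY}^{\mathrm{base}})^2\mid\D,S\bigr]\le \kappa/\{n(c+m+1)\}$, a bound uniform in $(\D,S)$; taking $\E_{\D,S}$ and applying Markov's inequality yields $\MMD_k(A_n,\P_{XY}^{\mathrm{base}})\xrightarrow{\Pr}0$. The identical argument with the bounded kernel $k_X^2$ in place of $k$ (the marginal DPs are driven by the $X$-atoms and the $X$-marginals of the centring measures) gives $\MMD_{k_X^2}(B_n,\P_X^{\mathrm{base}})\xrightarrow{\Pr}0$, and \eqref{B4} handles $\MMD_{k_X^2}(\P_X^{\mathrm{base}},\P_X^{\infty})$. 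Combining, $\sup_{\theta\in\Theta}|M_n(\theta)-M(\theta)|\xrightarrow{\Pr}0$.

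Finally I would close with the standard argmin step. Fix $\varepsilon>0$ and put $K_\varepsilon:=\{\theta\in\Theta:\|\theta-\theta^\dagger\|\ge\varepsilon\}$; if $K_\varepsilon=\emptyset$ there is nothing to prove, and otherwise $K_\varepsilon$ is compact, $M$ is continuous on it, and uniqueness of the minimiser gives $\delta_\varepsilon:=\min_{K_\varepsilon}M-M(\theta^\dagger)>0$. On the event $\{\sup_\Theta|M_n-M|<\delta_\varepsilon/3\}$ we have $M_n(\hat\theta_n)\le M_n(\theta^\dagger)<M(\theta^\dagger)+\delta_\varepsilon/3$, whereas $\|\hat\theta_n-\theta^\dagger\|\ge\varepsilon$ would force $M_n(\hat\theta_n)>M(\hat\theta_n)-\delta_\varepsilon/3\ge M(\theta^\dagger)+2\delta_\varepsilon/3$, a contradiction; so that event is contained in $\{\|\hat\theta_n-\theta^\dagger\|<\varepsilon\}$, and its probability tends to one by the previous paragraph. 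Since $\varepsilon>0$ was arbitrary, $\hat\theta_n\xrightarrow{\Pr}\theta^\dagger$. The only real obstacle is the middle step — correctly isolating the DP fluctuation via the conditional $L^2$ bound at rate $\{n(c+m+1)\}^{-1/2}$ and then knitting it together with the in-probability convergence postulated in \eqref{B4} through the tower property and Markov's inequality; the rest is the usual $M$-estimation template, and the interior-point hypothesis on $\theta^\dagger$ is not needed here (it is presumably reserved for a subsequent asymptotic-distribution result).
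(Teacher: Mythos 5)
Your proposal is correct, and its skeleton (uniform convergence of the MMD criterion plus the standard argmin step) matches the paper's: the same reverse-triangle decomposition through the averaged base measures, the same use of Assumption~\ref{ass:G2} via Lemma~2 of \citet{alquier2024universal} to make the conditional factor $\theta$-free, condition~\eqref{B4} for the base-to-limit piece, and then consistency of the minimiser. Where you genuinely diverge is the concentration step for the DP fluctuation: the paper isolates $\MMD_k(\mathcal P^{\DP}_{X,Y},\P^{\mathrm{base}}_{XY})$ and $\MMD_{k_X^2}(\mathcal P^{\DP}_{X},\P^{\mathrm{base}}_{X})$ and controls them with a Pinelis-type martingale (Bernstein) inequality in Hilbert space (its Lemma~\ref{lem:PinelisConditional}), applied to the conditionally independent, centred, bounded increments $\varphi(\P^{\DP}_{\cdot\mid W_i})-\varphi(\P^{\rm base}_{\cdot,i})$, which yields exponential tails; you instead reuse the exact conditional second-moment computation for a $\DP(c+m,G)$ draw (the same $(c+m+1)^{-1}\{\Psi(G)-\Gamma(G)\}\le\kappa/(c+m+1)$ identity underlying the statistical-fluctuation term of Theorem~\ref{thm:gen-bound-main}), average over $i$ by conditional independence, and finish with Markov's inequality. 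Since only convergence in probability is needed, your $L^2$/Chebyshev route is sufficient and more elementary, at the price of not delivering the exponential concentration the paper's lemma provides; conversely the paper's route needs the extra Pinelis machinery but nothing about exact DP variances beyond boundedness of the increments. Your final argmin paragraph is just the proof of \citet[Theorem~2.1]{newey1994large} written out, and your observation that interiority of $\theta^\dagger$ is not actually used for consistency is accurate — the paper's invocation of Newey--McFadden likewise only needs compactness, continuity and uniqueness.
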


Condition~\eqref{B4} requires that the DP base measures concentrate to fixed limits. The next proposition establishes explicit forms of $\P_{XY}^{\infty}$ and $\P_X^{\infty}$ for pseudo-sampling or non-pseudo-sampling schemes under different $(c,m)$-parameter settings. We first state an assumption on the convergence of the constructed prior centring measures:
\begin{enumerate}[label=(D)]
    \item \label{prop:D} There exist fixed probability measures $\Q_{XY}^{\infty}$ and $\Q_{X}^{\infty}$ such that $\MMD_k(n^{-1}\sum_{i=1}^n \Q_{XY,i}, \Q_{XY}^{\infty})\overset{\Pr}{\rightarrow}0$ and $\MMD_{k_X^2}(n^{-1}\sum_{i=1}^n \Q_{X,i}, \Q_{X}^{\infty})\overset{\Pr}{\rightarrow}0$.
\end{enumerate}
This requirement is automatically met when each centring measure $\Q_{XY,i},\Q_{X,i}$ is a fixed, data-independent choice (for example, a historical distribution or an expert prior). When the centring measures depend on the data, e.g., around some naive estimator, the assumption still holds as long as the preliminary estimator contracts to a point estimate as $n\to\infty$.

\begin{proposition}[Sufficient conditions for~\eqref{B4}]
\label{prop:cond}
\mbox{}
    \begin{enumerate}[label=(\theproposition.\alph*)]
        \item \label{prop:1.1} If the DP base measures are constructed via the pseudo-sampling scheme and Assumptions~\ref{A1}--\ref{A2} are satisfied, recall that $ \theta^{\ast}=\arg\min_{\theta\in\Theta}\KL(p^{0}_{WY}\Vert p^{\theta}_{WY})$ and define
        \begin{equation*}
        \begin{aligned}
       \Pi_{\theta^\ast}^{XY} (\dd x, \dd y) := \int_{\mathcal W\times\mathcal{Y}}\Pi_{\theta^\ast}({\dd}x\mid w,y)\delta_y(\dd y)
     p^{0}_{WY}({\dd}w,{\dd}y), 
     \\
     \Pi_{\theta^\ast}^X(\dd x) := \int_{\mathcal W\times\mathcal{Y}}\Pi_{\theta^\ast}({\dd}x\mid w,y)
     p^{0}_{WY}({\dd}w,{\dd}y).  
        \end{aligned}
        \end{equation*}
        If $c,m$ are finite constants and \ref{prop:D} holds, then~\eqref{B4} holds with 
        \begin{equation}
        \label{eq:pseudo-with-prior}
            \P_{XY}^{\infty} = \frac{c}{c+m}\Q_{XY}^\infty+\frac{m}{c+m}\Pi_{\theta^\ast}^{XY}, \qquad  \P_{X}^{\infty} = \frac{c}{c+m}\Q_{X}^\infty+\frac{m}{c+m}\Pi_{\theta^\ast}^{X}.
        \end{equation}
        Otherwise, if $c/m\to 0$ as $n\to\infty$, then~\eqref{B4} holds with $ \P_{XY}^{\infty} = \Pi_{\theta^\ast}^{XY}, \;  \P_{X}^{\infty} =\Pi_{\theta^\ast}^{X}$. 
        \item \label{prop:1.2} If the DP base measures are constructed without pseudo-sampling, $c$ is a finite constant and \ref{prop:D} holds, then~\eqref{B4} holds with  
    \begin{equation}
        \label{eq:nopseudo-with-prior}
        \P_{XY}^{\infty} = \frac{c}{c+1}\Q_{XY}^\infty + \frac{1}{c+1}\P_{WY}^0,\qquad 
     \P_{X}^{\infty} = \frac{c}{c+1}\Q_{X}^\infty + \frac{1}{c+1}\P_W^0. 
        \end{equation}
    Otherwise, if $c\to 0$ as $n\to \infty$, then $ \P_{XY}^{\infty} = \P_{WY}^0, \;  \P_{X}^{\infty} = \P_W^0$.
    \end{enumerate}
\end{proposition}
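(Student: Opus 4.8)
The plan is to reduce everything to the triangle inequality in the RKHS together with the linearity of kernel mean embeddings under mixtures: since $\mu_{\alpha P_1+(1-\alpha)P_2}=\alpha\mu_{P_1}+(1-\alpha)\mu_{P_2}$, one has $\MMD_k(\alpha P_1+(1-\alpha)P_2,\ \alpha Q_1+(1-\alpha)Q_2)\le\alpha\,\MMD_k(P_1,Q_1)+(1-\alpha)\,\MMD_k(P_2,Q_2)$, and the same holds with $k$ replaced by $k_X^2$. In each regime I would write $\P_{XY}^{\mathrm{base}}=\tfrac{c}{c+m}\P_{XY}^{\mathrm{prior}}+\tfrac{m}{c+m}\P_{XY}^{\mathrm{pseudo}}$ (with $m\equiv1$ and $\P_{XY}^{\mathrm{pseudo}}$ replaced by $\hat{\P}^n_{WY}$ in part~\ref{prop:1.2}), recognise the claimed $\P_{XY}^{\infty}$ as the mixture with the same weights of the two limiting components, and bound $\MMD_k(\P_{XY}^{\mathrm{base}},\P_{XY}^{\infty})$ by the weighted sum of the two component discrepancies. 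It then suffices to show each component discrepancy tends to $0$ in probability. The $X$-marginal claims are identical after the substitutions $k\mapsto k_X^2$, $\Pi_{\theta^\ast}^{XY}\mapsto\Pi_{\theta^\ast}^{X}$, $\P^0_{WY}\mapsto\P^0_W$, and the replacement of the uniform bound $2\sqrt\kappa$ by $2\kappa_X$ (all finite by~\ref{ass:G1}).

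Two component-level convergences are needed. The first, and the crux, is $\MMD_k(\P_{XY}^{\mathrm{pseudo}},\Pi_{\theta^\ast}^{XY})\xrightarrow{\Pr}0$ and $\MMD_{k_X^2}(\P_{X}^{\mathrm{pseudo}},\Pi_{\theta^\ast}^{X})\xrightarrow{\Pr}0$. I would obtain this by revisiting the proofs of Theorem~\ref{thm:pseudo-classical} (classical ME) and Proposition~\ref{cor:pseudo-berkson} (Berkson ME), whose right-hand sides consist of a statistical-fluctuation term $4\sqrt\kappa/\sqrt n$, a posterior-contraction term $M_n/\sqrt n$, a vanishing remainder $\sqrt\kappa\,r_n$, and the misspecification term $2\sqrt\kappa\sqrt{1-e^{-\KL_\ast}}$, which is precisely the upper bound used there for $\MMD_k(\Pi_{\theta^\ast}^{XY},\P^0_{XY})$, so the existing decomposition already passes through $\Pi_{\theta^\ast}^{XY}$. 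Measuring discrepancy against $\Pi_{\theta^\ast}^{XY}$ rather than $\P^0_{XY}$ therefore deletes exactly that term and yields $\E_{\D,S}[\MMD_k(\P_{XY}^{\mathrm{pseudo}},\Pi_{\theta^\ast}^{XY})]\le 4\sqrt\kappa/\sqrt n+M_n/\sqrt n+\sqrt\kappa\,r_n$, uniformly over $m\ge1$; choosing $M_n\to\infty$ with $M_n/\sqrt n\to0$ (e.g.\ $M_n=\log n$) sends the right-hand side to $0$, so this nonnegative variable converges to $0$ in $L^1$ and hence in probability, and Proposition~\ref{cor:pseudo-marginalX-classical} supplies the $X$-marginal version. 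The second convergence, $\MMD_k(\hat{\P}^n_{WY},\P^0_{WY})\xrightarrow{\Pr}0$ and $\MMD_{k_X^2}(\hat{\P}^n_W,\P^0_W)\xrightarrow{\Pr}0$, is the standard empirical-MMD rate: for a bounded kernel $\E_\D[\MMD_k^2(\hat{\P}^n_{WY},\P^0_{WY})]=n^{-1}\bigl(\E k(Z,Z)-\E k(Z,Z')\bigr)\le\kappa/n$, hence $\E_\D[\MMD_k(\hat{\P}^n_{WY},\P^0_{WY})]\le\sqrt{\kappa/n}\to0$, and Markov's inequality then gives convergence in probability (this is precisely the $n^{-1/2}$ term already appearing in Lemma~\ref{thm:nopseudo}).

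Assembling the four cases is then routine. For part~\ref{prop:1.1} with $c,m$ fixed, $\MMD_k(\P_{XY}^{\mathrm{base}},\P_{XY}^{\infty})\le\tfrac{c}{c+m}\MMD_k(\P_{XY}^{\mathrm{prior}},\Q_{XY}^{\infty})+\tfrac{m}{c+m}\MMD_k(\P_{XY}^{\mathrm{pseudo}},\Pi_{\theta^\ast}^{XY})$, which tends to $0$ in probability by~\ref{prop:D} and the first convergence. For part~\ref{prop:1.1} with $c/m\to0$, take $\P_{XY}^{\infty}=\Pi_{\theta^\ast}^{XY}$ and bound $\MMD_k(\P_{XY}^{\mathrm{base}},\Pi_{\theta^\ast}^{XY})\le\tfrac{c}{c+m}\,2\sqrt\kappa+\MMD_k(\P_{XY}^{\mathrm{pseudo}},\Pi_{\theta^\ast}^{XY})$: the first term vanishes deterministically, the second by the first convergence, and here $m=m_n$ may grow with $n$, which is why the bound of the previous paragraph must be uniform in $m$. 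Part~\ref{prop:1.2} is identical with $m\equiv1$, $\P_{XY}^{\mathrm{pseudo}}=\hat{\P}^n_{WY}$, $\Pi_{\theta^\ast}^{XY}$ replaced by $\P^0_{WY}$, and the second convergence in place of the first: for fixed $c$ combine~\ref{prop:D} with the empirical-MMD rate, and for $c\to0$ combine $\tfrac{c}{c+1}\,2\sqrt\kappa\to0$ with that rate. This verifies~\eqref{B4} in every case, the $X$-marginal identities following by the substitutions of the first paragraph.

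The hard part is the first component convergence. It does not follow from Theorem~\ref{thm:pseudo-classical} as stated, since triangulating through $\P^0_{XY}$ would add, rather than cancel, the misspecification bias; so it requires re-entering that proof to check that every term other than $2\sqrt\kappa\sqrt{1-e^{-\KL_\ast}}$ is genuinely a bound on the distance to $\Pi_{\theta^\ast}^{XY}$, and that the resulting bound is uniform in the pseudo-sample size $m$. Everything else, namely the triangle inequality, the mixture-convexity bound, and the classical $O(n^{-1/2})$ empirical-MMD estimate, is routine.
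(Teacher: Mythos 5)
Your proposal is correct and follows essentially the same route as the paper: the same mixture-convexity/triangle decomposition of $\P^{\mathrm{base}}$ against the claimed limit, the same reuse of terms A and B from the proofs of Theorem~\ref{thm:pseudo-classical}, Proposition~\ref{cor:pseudo-marginalX-classical} and Proposition~\ref{cor:pseudo-berkson} to bound $\E_{\D,S}[\MMD_k(\P^{\mathrm{pseudo}}_{XY},\Pi_{\theta^\ast}^{XY})]$ (so the misspecification term never enters), the choice $M_n=\log n$ with Markov's inequality, the deterministic $c/(c+m)$ bound when $c/m\to0$, and the standard $O(n^{-1/2})$ empirical-MMD rate for the no-pseudo-sampling case. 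Your observation that one must re-enter the proof of Theorem~\ref{thm:pseudo-classical} rather than triangulate through $\P^0_{XY}$ is exactly what the paper does.
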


The proposition shows that Assumption~\ref{prop:D} is only required when the prior weight ratio $c/m$ does not vanish. If the ratio $c/m\to 0$ with $n$, the prior contribution is asymptotically negligible and Assumption~\ref{prop:D} can be dropped.

\begin{remark}
\label{rem:joint-eta}
We have focused on estimating $\theta$, which parametrizes $g(\cdot)$. The construction extends to a joint parameter $\eta=(\theta,s)\in\Theta\times\mathcal S$, where $s$ parametrizes $f_E(\cdot\,;s)$. 

All statements in Sections~\ref{subsec:genbounds}-\ref{subsec:consistency} then hold with $(\hat\theta_n,\theta^\ast,\theta^\dagger)$ replaced by $(\hat\eta_n,\eta^\ast,\eta^\dagger)$, provided \ref{A1}-\ref{A2} are imposed on the joint family $\{p^{\eta}_{WY}:\eta\in\Theta\times\mathcal S\}$. 

\end{remark}

\section{Synthetic experiments}
\label{sec:Synthetic-Experiments}

We consider the sigmoid regression model
\(
g(x, \theta)=\theta_1/\left[1+\exp \big\{-\theta_2 (x-\theta_3)\big\}\right].
\)
It is a popular choice in synthetic experiments with nonlinear regression and is widely used in practical applications where one observes threshold behaviour \citep{yin2003flexible,klimstra2008sigmoid}. We study both Berkson and classical ME:
\[
\text{Berkson: } X=W+N,\qquad
\text{Classical: } W=X+N, \qquad F_N^0= \mathcal{N}(0,\sigma_{N}^2).
\]
To introduce misspecification, we contaminate the outcome noise using a Huber-type mixture:
\begin{equation}
\label{eq:Huber-contamination}
    F_E^0= (1-\varepsilon) \mathcal N(0,\sigma_E^2)+\varepsilon \mathcal N(0,\eta_E^2\sigma_E^2).
\end{equation}
Fig.~\ref{fig:synthetic-sigmoid-scatters} illustrates samples under both ME mechanisms with $\theta=(5, 1, 0.02)$, $\sigma_E=0.5$, $(\varepsilon, \eta_E)=(0.1, 9)$, and $\sigma_N=2$. We compare four estimators: \emph{NPL--HMC} (ours); \emph{Robust--MEM} \citep{dellaporta2023robust}, a robust approach for ME models that places a single DP prior on the conditional law of $\P_{X\mid w_i}$, updates this prior using the observed $\{w_i\}_{i=1}^n$ values, and performs parameter inference via a posterior bootstrap; \emph{NLS} (nonlinear least squares fitted to $(W,Y)$); and \emph{HMC} (posterior mean from HMC chains). Performance is summarized by the root mean squared error (RMSE) over 100 independent replications. Fig.~\ref{fig:rmse-synthetics} reports RMSE as the ME scale $\sigma_N$ increases under joint model and ME-distribution misspecification.
\begin{align*}
    &\mathrm{DGP}: \qquad N\sim\mathcal{N}(0,\sigma_N^2), \qquad E\sim (1-\varepsilon)\, \mathcal N(0,\sigma_E^2)+\varepsilon\, \mathcal N(0,\eta_E^2\sigma_E^2) \\
    &\mathrm{Model}: \qquad \qquad N\sim\mathcal{N}(0,\tau_N^2\sigma_N^2), \qquad E\sim\mathcal{N}(0,\tau_E^2\sigma_E^2)
\end{align*}
We fix $n = 300$, $(\varepsilon, \eta_E)=(0.1, 9)$, and $(\tau_N,\tau_E) = (0.7,2)$. In both Berkson and classical settings, NPL--HMC attains the lowest median RMSE and interquartile ranges for moderate-to-large ME ($\ge$ 2), while remaining comparable to Robust--MEM for small ME. At the largest ME scales considered, Robust--MEM can even underperform the HMC baseline, which demonstrates the ME-ignoring posterior update pathology anticipated by our theoretical comparison. The advantage of NPL--HMC widens as $\sigma_N$ increases: all competing methods degrade markedly, whereas the RMSE for NPL--HMC remains comparatively stable. Implementation and additional set-up details are in Appendix~\ref{app:exp-details}.


\begin{figure}[t]
  \centering
  \begin{subfigure}[b]{0.475\textwidth}
    \centering
    \includegraphics[width=\linewidth]{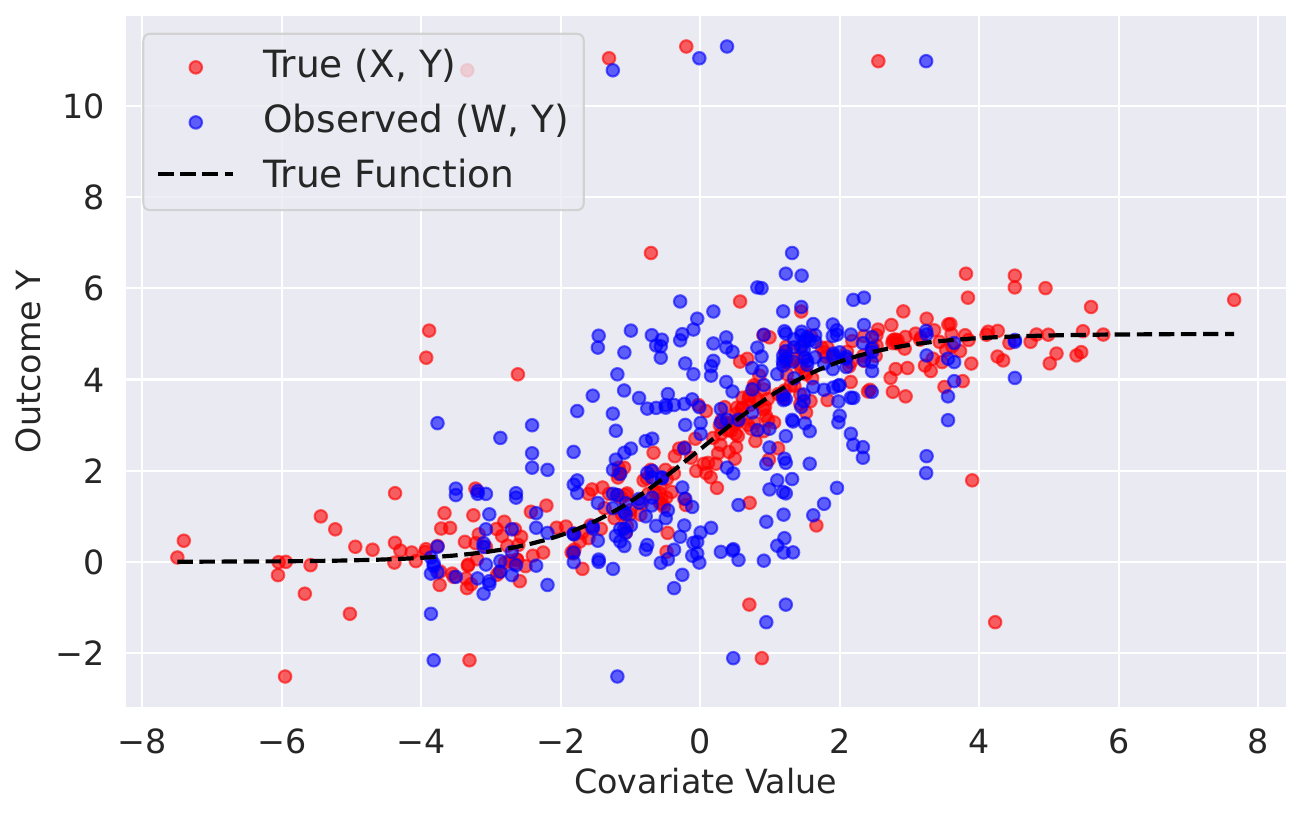}
    \caption{Berkson ME: $(X,Y)$ (red) and $(W,Y)$ (blue).}
  \end{subfigure}\hfill
  \begin{subfigure}[b]{0.49\textwidth}
    \centering
    \includegraphics[width=\linewidth]{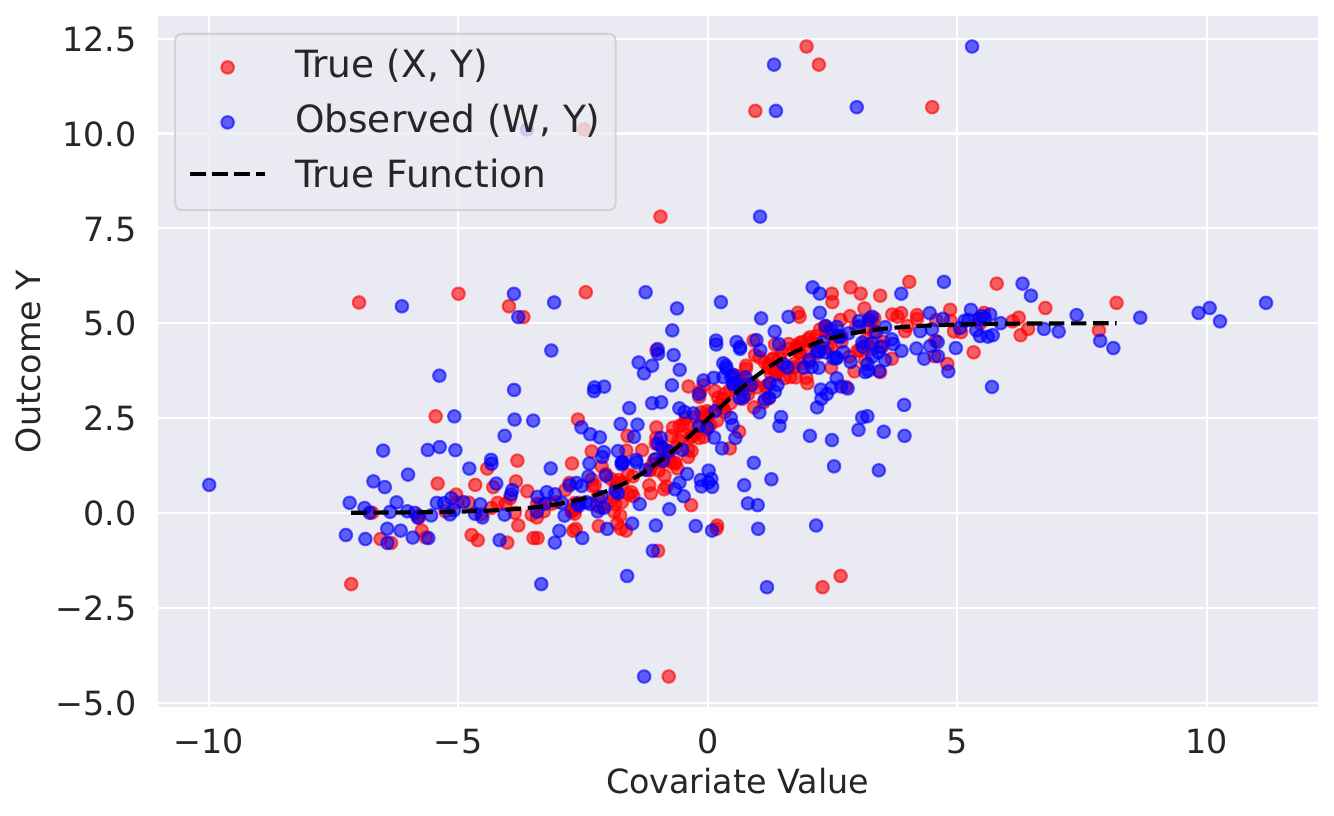}
    \caption{Classical ME: $(X,Y)$ (red) and $(W,Y)$ (blue).}
  \end{subfigure}
  \caption{Illustrative samples for the sigmoid model under ME and 10\% Huber contamination.}
  \label{fig:synthetic-sigmoid-scatters}
\end{figure}

\begin{figure}[t]
  \centering
  \begin{subfigure}[b]{0.48\textwidth}
    \centering
    \includegraphics[width=\linewidth]{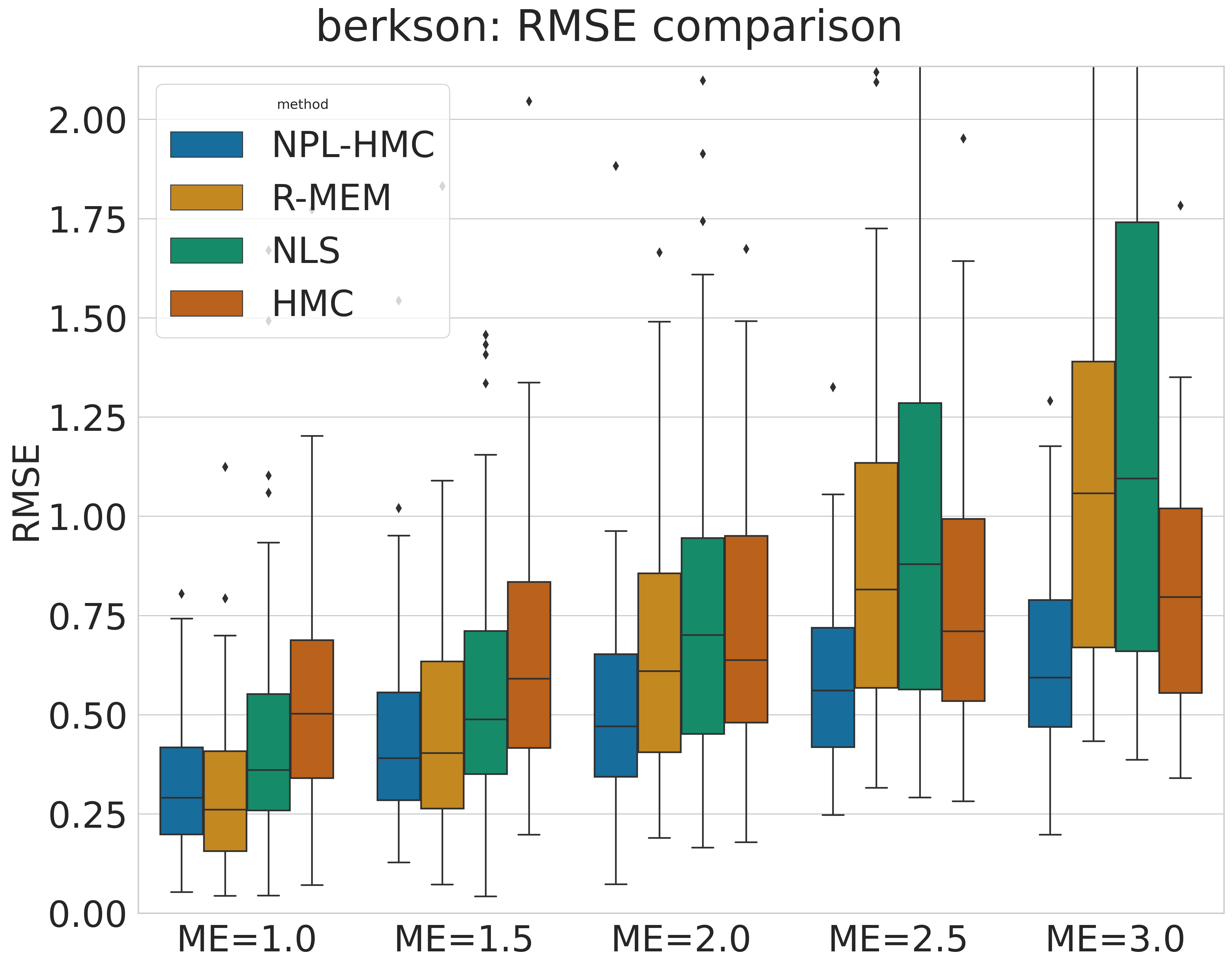}
    \caption{Berkson ME.}
    \label{fig:rmse-berkson}
  \end{subfigure}\hfill
  \begin{subfigure}[b]{0.48\textwidth}
    \centering
    \includegraphics[width=\linewidth]{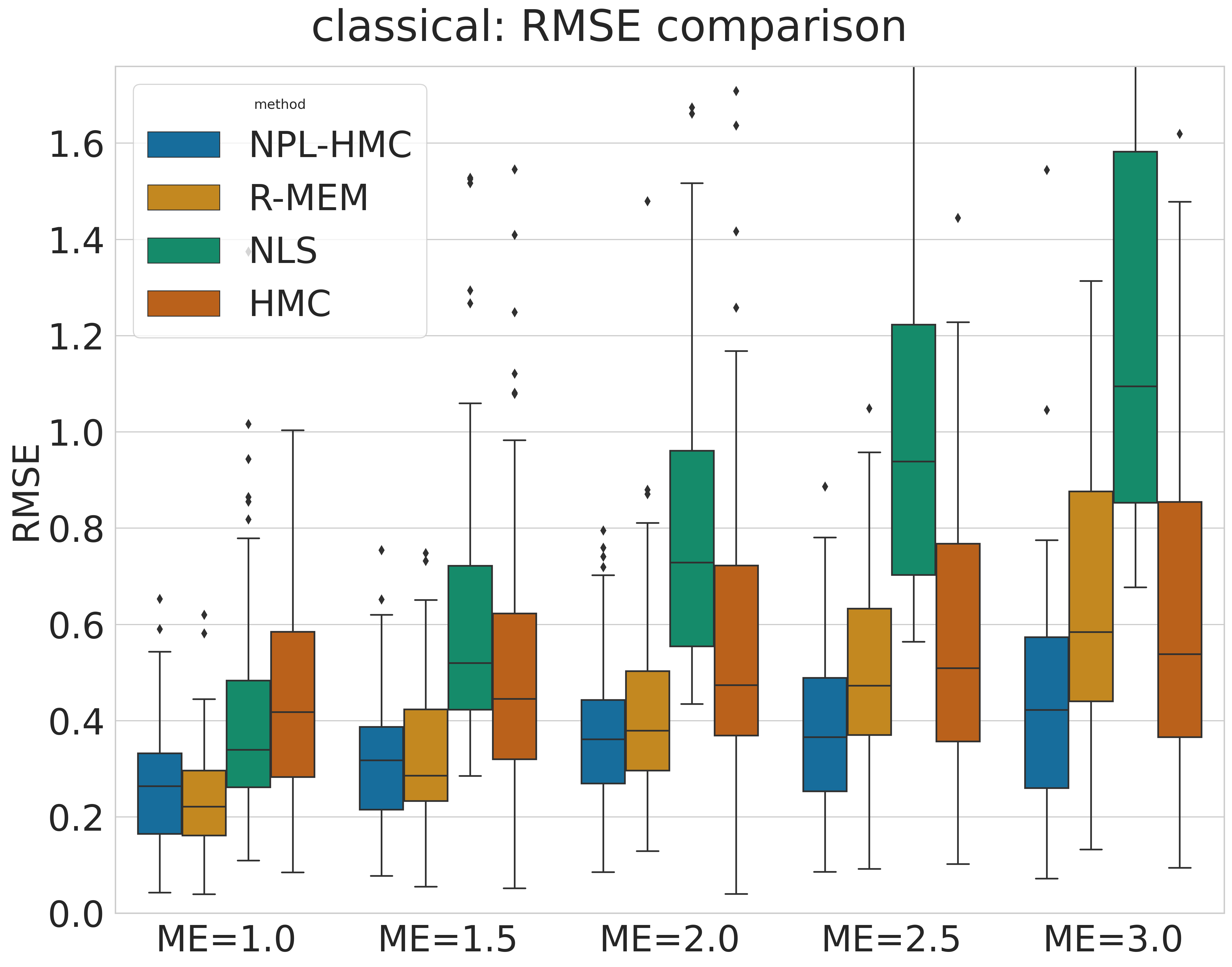}
    \caption{Classical ME.}
    \label{fig:rmse-classical}
  \end{subfigure}
  \caption{RMSE comparison for the sigmoid model under misspecification. ME denotes $\sigma_N$. Blue: NPL--HMC; yellow: Robust--MEM (shortened as R--MEM); green: NLS; orange: HMC.}
  \label{fig:rmse-synthetics}
\end{figure}

\section{Real-world experiments}
\label{sec:real-Experiments}
\subsection{Berkson ME: LIDAR range data}

We analyse a Berkson-type ME setting using the LIDAR data \citep{Sigrist1994AirMB} studied by \citet{ruppert2003variance}. The response is the log ratio \(Y_i\) and the covariate is range \(X_i\). The conditional variance \(\Var(Y\mid X)\) varies substantially with range and is not well represented as a function of \(\E(Y\mid X)\) \citep{ruppert2003variance}. To emulate a coarsened covariate measurement, we construct an observed regressor \(W_i\) by partitioning the empirical support of \(X\) into \(K_{\text{bins}}=20\) equal--width bins and replacing each \(X_i\) by the within--bin mean \(W_i\). This yields a Berkson decomposition \(X_i=W_i+\nu_i\), where the ME \(\nu_i=X_i-W_i\) represents within--bin variation. We fit the working model
\[
Y_i=g_\theta(X_i)+\varepsilon_i,\qquad 
g_\theta(x)=\theta_3+\frac{\theta_0}{1+\exp\{-\theta_1(x-\theta_2)\}},\qquad 
\varepsilon_i\sim\mathcal N(0,\sigma_\varepsilon^2).
\]
To assess robustness, we consider contamination ratios \(r_Y\in\{0.05, 0.10, 0.15, 0.20, 0.25\}\) and construct contaminated datasets by shifting a proportion \(r_Y\) of responses by \(6\sqrt{\hat v(X_i)}\), where \(\hat v(\cdot)\) is an estimated variance function for \(\Var(Y\mid X)\). We compare our NPL--HMC estimator with two baselines. Nonlinear least squares (NLS) fits \(g_\theta\) on \((W,Y)\); under Berkson ME it targets \(\E(Y\mid W)\) and does not recover \(\E(Y\mid X)\) in general. SIMEX \citep{cook1994simulation} is implemented for Berkson ME by adding synthetic normal noise to \(W\) at known multiples of the induced ME variance \(\Var(X-W)\) and extrapolating to the error--free limit.

Fig.~\ref{fig:lidar-fits} shows fitted curves. For smaller values of \(r_Y\), the fitted curves are similar; at \(r_Y=0.25\), NLS and SIMEX are more affected by the upward shifts, whereas NPL--HMC remains closer to the dashed oracle curve. Table~\ref{tab:lidar-rmse} reports (i) a dimensionless coefficient RMSE for \(\hat\theta\), compared with an oracle \(\theta^\star\) fitted using the true \(X\) while accounting for heteroscedasticity via an iterative variance-function procedure (smoothing the log squared residuals against \(X\) to estimate the variance function), as in \citet{ruppert2003variance}; and (ii) \(Y\)-RMSE computed using the true \(X\) and evaluated on uncontaminated \(Y\). The dimensionless \(\theta\)-RMSE uses componentwise scaling \(s_j=\max\{|\theta_j^\star|,0.01\,\mathrm{median}_k|\theta_k^\star|\}\) before forming the usual RMSE. 

\begin{figure}[t]
  \centering
  \includegraphics[width=\textwidth]{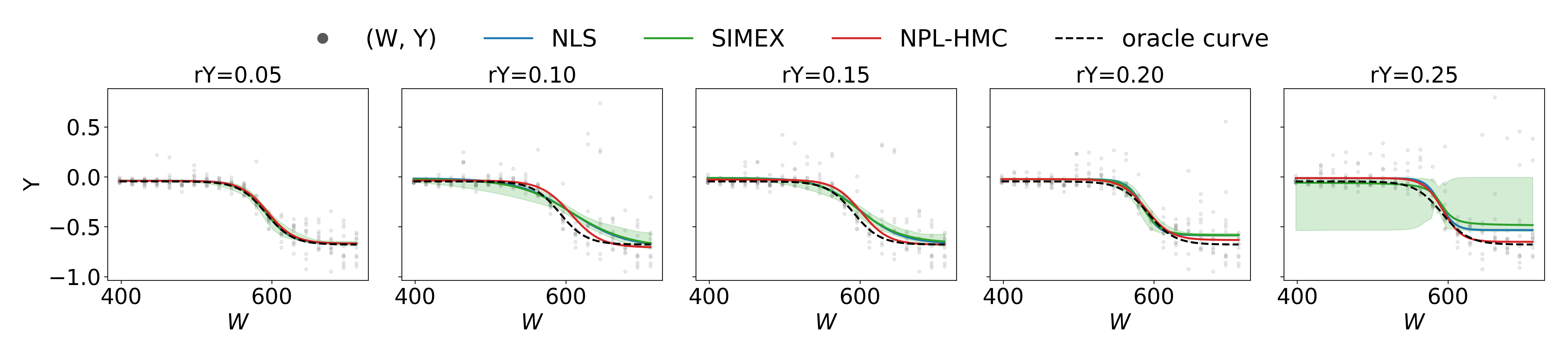}
  \caption{Estimated curves for the LIDAR data under a range of contamination ratios \(r_Y\), with \(K_{\text{bins}}=20\) in the Berkson construction. NLS (blue), SIMEX (green), NPL--HMC (red); 95\% bands are shaded; the dashed line is the oracle fit based on latent \(X\).}
  \label{fig:lidar-fits}
\end{figure}

\begin{table}[t]
\centering
\small
  \begin{tabular}{@{}lccc@{\hspace{1.2em}}ccc@{}}
    \toprule
    & \multicolumn{3}{c}{dimensionless $\theta$-RMSE} & \multicolumn{3}{c}{$Y$-RMSE} \\
    \cmidrule(lr){2-4} \cmidrule(lr){5-7}
    $r_Y$ & NLS & SIMEX & \textbf{NPL--HMC} & NLS & SIMEX & \textbf{NPL--HMC} \\
    \midrule
    0.05 & 0.0342 & 0.0852 (0.0577) & \textbf{0.0140} (0.0061) & \textbf{0.0825} & 0.0839 (0.0016) & 0.0834 (0.0002) \\
    0.10 & 0.2586 & 0.2747 (0.0689) & \textbf{0.0670} (0.0096) & 0.0992 & 0.1042 (0.0145) & \textbf{0.0918} (0.0006) \\
    0.15 & 0.1816 & 0.1851 (0.0708) & \textbf{0.0227} (0.0061) & 0.0963 & 0.0994 (0.0050) & \textbf{0.0920} (0.0005) \\
    0.20 & 0.3178 & 0.3309 (0.1872) & \textbf{0.0594} (0.0107) & 0.0975 & 0.0995 (0.0033) & \textbf{0.0899} (0.0003) \\
    0.25 & 0.4395 & 1.2994 (3.8130) & \textbf{0.2096} (0.0293) & 0.1143 & 0.1546 (0.0999) & \textbf{0.0878} (0.0006) \\
    \bottomrule
  \end{tabular}
  \caption{Dimensionless coefficient RMSE ($\theta$-RMSE) and prediction RMSE ($Y$-RMSE) by contamination ratio $r_Y$ for the LIDAR experiment with $K_{\text{bins}}=20$. For each $r_Y$, the dataset is fixed across methods; SIMEX and NPL--HMC entries report mean and standard deviation (SD), and NLS entries are point estimates.}
  \label{tab:lidar-rmse}
\end{table}

\subsection{Classical ME: Engel curves}

We analyse a log-quadratic Engel curve for food expenditure using the Belgian household data assembled by Engel in the 1850s (235 households; distributed as \texttt{statsmodels::engel} in the Python package by \citet{seabold2010statsmodels}). The outcome is food expenditure \(Y_i\) (francs) and the true covariate is household income \(X_i\) \citep{engel1857produktions}. The working model is
\(
Y_i=\theta_0+\theta_1\log X_i+\theta_2(\log X_i)^2+\varepsilon_i,
\; \varepsilon_i \ind \log X_i.
\)
Income is observed with error via self-reports \(W_i\). Motivated by evidence that misreporting is roughly proportional to income, we adopt a classical error model on the log scale,
\[
\log W_i=\log X_i+N_i,\qquad N_i\simiid\mathcal{N}(0,\sigma_N^2),
\]
with \(N_i\) independent of \((X_i,\varepsilon_i)\) \citep{kedir2003quadratic}. We set \(\log X_i\sim\mathcal{N}(\mu,\sigma_X^2)\) and index the ME magnitude by the \emph{error--variance ratio} on the log scale, \(\rho=\sigma_N^2/\sigma_X^2\).

In our application, the sample is small and the log-quadratic model is an approximation, so the error ratio cannot be identified precisely. Because Engel-curve elasticities inform economic and policy analysis, it is important that estimates are not overly sensitive to plausible choices of $\rho$. Estimated survey error varies across studies: \citet{bound2001measurement} and \citet{aasness1993engel} place $\rho$ in the range $[0.1,0.5]$, while \citet{hausman1995nonlinear} estimates $\rho$ values up to $0.72$ in a generalized setting with total expenditure as $X_i$ and budget shares as $Y_i$. We therefore examine $\rho \in [0,0.8]$, which covers the empirical ranges observed under different settings.

We first study how fitted curves change with \(\rho\). We compare our NPL--HMC estimator with SIMEX. Fig.~\ref{fig:reliability-selected} shows fitted curves with pointwise \(95\%\) bands for four different $\rho$ values. As \(\rho\) increases, the SIMEX curves move substantially, whereas their band widths remain roughly constant. By contrast, the NPL--HMC curves vary little with \(\rho\), and their bands widen as \(\rho\) grows. Thus NPL--HMC yields more stable point estimates and a cautious widening of uncertainty as the assumed error variance increases.

To summarize curve variation across \(\rho\), we use
\[
\widehat{S}=\Bigl\{\tfrac{1}{|\mathcal R|}\sum_{\rho\in\mathcal R}\sum_{x\in\mathcal G}
w(x)\bigl[g(x,\theta_\rho)-\bar g(x)\bigr]^2\Bigr\}^{1/2},
\]
the weighted deviation of fitted curves \(g(x,\theta_\rho)=\theta_{0,\rho}+\theta_{1,\rho}\log x+\theta_{2,\rho}(\log x)^2\) over a set $\mathcal{R}$ of $\rho$ values on a grid \(\mathcal G\), with weights \(w\) given by the empirical histogram of \(W\). We compute \(\widehat S\) with $\mathcal{R} = \{0, 0.1, \dots, 0.8\}$, which is smaller for NPL--HMC (0.017) than for SIMEX (0.028).

\begin{figure}[t]
  \centering
  \includegraphics[width=\textwidth]{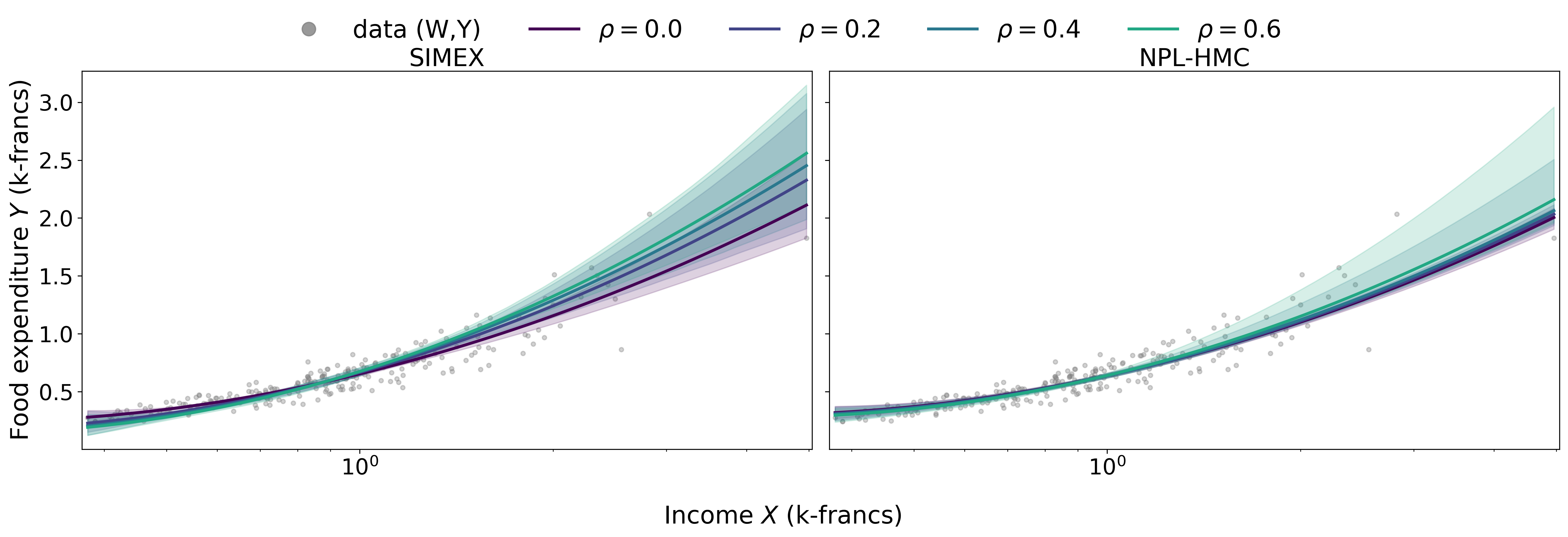}
  \caption{Fitted Engel curves for increasing values of $\rho$ with pointwise \(95\%\) bands for SIMEX (left) and NPL--HMC (right).}
  \label{fig:reliability-selected}
\end{figure}

We next assess sampling variability by repeated subsampling. For each \(\rho\in\mathcal{R}\), we draw \(M =100\) independent subsamples of size \(0.8n\) and re-estimate the curve. Table~\ref{tab:within-reliability-variance} reports, for each \(\rho\), the SD across subsamples of \((\widehat\theta_0,\widehat\theta_1,\widehat\theta_2)\). As a reference, we also report NLS, which ignores ME and is $\rho$-invariant: the SDs (multiplied by 100) are 0.53 for $\hat\theta_{0}$, 1.55 for $\hat\theta_{1}$, and 3.52 for $\hat\theta_{2}$. NPL--HMC has smaller SDs than SIMEX for almost all parameters and all \(\rho\), and its SDs are comparable to the NLS baselines. To summarize sensitivity of parameter estimation across different \(\rho\) values, Table~\ref{tab:across-rho-variance-meanwithinrep} reports the mean (over subsamples) of the variance of $\hat\theta_i$ across \(\rho\), defined as $\Var_\rho (\hat\theta_i) =\frac{1}{M} \sum_{m=1}^M \Var_\rho\left(\hat \theta_{i,\rho}^{(m)}\right)$, where $\hat \theta_{i,\rho}^{(m)}$ is the estimator for $\theta_i$ under ME ratio $\rho$ and for subsample index $m$. The across-\(\rho\) variance is smaller for NPL--HMC than for SIMEX for each parameter, with the largest difference in \(\hat\theta_1\), the \emph{income-elasticity index} used in Engel-curve applications.

\begin{table}[t]
\centering
\small

\begin{tabular}{lcccccc}
\toprule
& \multicolumn{3}{c}{\textbf{NPL--HMC}} & \multicolumn{3}{c}{\textbf{SIMEX}} \\
\cmidrule(lr){2-4}\cmidrule(lr){5-7}
$\rho$ & $\mathrm{SD}(\hat\theta_{0})$ & $\mathrm{SD}(\hat\theta_{1})$ & $\mathrm{SD}(\hat\theta_{2})$
       & $\mathrm{SD}(\hat\theta_{0})$ & $\mathrm{SD}(\hat\theta_{1})$ & $\mathrm{SD}(\hat\theta_{2})$ \\
\midrule
0.00  & 0.38  & 1.49  & 3.06  & 0.53  & 1.56  & 3.53 \\
0.10  & 0.47  & 1.57  & 3.10  & 0.87  & 2.10  & 5.17 \\
0.20  & 0.45  & 1.56  & 3.19  & 1.05  & 2.24  & 5.73 \\
0.30  & 0.54  & 1.55  & 3.19  & 1.00  & 2.60  & 6.39 \\
0.40  & 0.55  & 1.50  & 3.13  & 1.16  & 2.60  & 6.20 \\
0.50  & 0.53  & 1.91  & 3.29  & 1.18  & 2.75  & 6.63 \\
0.60  & 0.59  & 2.40  & 3.29  & 1.22  & 2.76  & 6.45 \\
0.70  & 0.57  & 2.82  & 3.67  & 1.29  & 2.76  & 6.22 \\
0.80  & 0.66  & 3.92  & 4.94  & 1.25  & 2.79  & 6.33 \\
\midrule
\textbf{Mean} & \textbf{0.53} & \textbf{2.08} & \textbf{3.43}
& \textbf{1.06} & \textbf{2.46} & \textbf{5.85} \\
\bottomrule
\end{tabular}
\caption{Within--$\rho$ SDs of subsample estimates ($\times 100$).}
\label{tab:within-reliability-variance}
\end{table}

\begin{table}[t]
\centering
\small
\begin{tabular}{lccc}
\toprule
Method & $\Var_\rho(\hat\theta_0)$ & $\Var_\rho(\hat\theta_1)$ & $\Var_\rho(\hat\theta_2)$ \\
\midrule
NPL--HMC & 0.20 & 15.46 & 9.11 \\
SIMEX    & 0.97  & 47.10 & 15.57 \\
\bottomrule
\end{tabular}
\caption{Across-\(\rho\) variance of parameter estimates (\(\times 10^{\,4}\)), reported as the mean (over subsamples) of the per-subsample variance across different \(\rho\).}
\label{tab:across-rho-variance-meanwithinrep}
\end{table}

\section{Discussion}
\label{sec:Conclusions_v2} 

Our framework opens promising avenues for
future research. For example, the pseudo-sampling step is generalizable: any (generalized) posterior for $\theta$ that contracts to a pseudo-true value can be used, and our risk decomposition still applies once the same contraction and stability conditions are verified. A limitation is that, in high-dimensional settings, the HMC step can mix slowly and be computationally expensive. Alternatives such as preconditioned/tempered or stochastic-gradient MCMC can be desirable, and variational approximations (e.g., mean-field or $\alpha$-variational inference \citep{blei2017variational, yang2020alpha}) may be used to produce pseudo-samples at the price of approximation bias. More generally, our framework is compatible with modern machine-learning components (e.g., Bayesian neural networks) as modelling and inference modules. Exploring these integrations is a natural direction for future work. Furthermore, a potential computational extension is to replace HMC pseudo-sampling with an amortized conditional sampler, so that latent draws can be generated at negligible marginal cost. Recent work on amortized generalized Bayes for simulator-based models using neural score-matching surrogates suggests one route to reducing or even removing the need for MCMC in such updates \citep{bharti2026amortised}.

Several important theoretical questions remain open. First, the distributional properties of $\hat\theta_n$ beyond consistency are unknown. In particular, it is open whether it satisfies asymptotic normality or a Bernstein-von Mises-type limit. Settling these would enable interval estimation and hypothesis testing. Second, fully nonparametric modelling of $g$ via Gaussian process priors under ME is attractive, but current approaches typically assume a known ME law and lack guarantees under misspecification. In parallel, it would be useful to develop adversarial robustness guarantees and (near-)minimax rates under $\varepsilon$-contamination and ME, with rates that degrade explicitly with the level of total model misspecification.



\section*{Acknowledgements}
MC is supported by the Warwick Statistics Centre for Doctoral Training and acknowledges funding from the University of Warwick. CD was supported by EPSRC grant [EP/Y022300/1]. TBB was supported by European Research Council Starting Grant 101163546. TD acknowledges support from a UKRI Turing AI acceleration Fellowship [EP/V02678X/1].


\bibliographystyle{apalike}
\bibliography{bibliography}

\pagebreak
\appendix
\section*{Supplementary Material}
Section~\ref{subsec:algo} presents the DP posterior bootstrapping algorithm. Section~\ref{sec:additional_proofs} contains proofs for results in the main text. Section~\ref{sec:appendix-A1} lists sufficient conditions for Assumptions~\ref{A1}--\ref{A2} and gives example scenarios where they hold. Section~\ref{app:exp-details} provides implementation and additional set-up details for synthetic and real-world experiments. Section~\ref{sec:HMC} includes diagnostics and sensitivity analysis for the HMC-based pseudo-sampling procedure. Furthermore, a detailed discussion of dropping the conditional independence requirement in the pseudo-sampling procedure (Section~\ref{sec:pseudo-sampling} in the main text) is included at the end.

\section{Algorithm}
\label{subsec:algo}
Below is our DP posterior bootstrapping algorithm as described in Section~\ref{sec:posterior-boostrap}. We use the truncated stick-breaking procedure \citep{sethuraman1994constructive} to approximate samples from the DP posterior:
\[
\P_{XY\mid w_i}^{\DP} \sim \DP\Bigl(c + m,\frac{c}{c + m}\Q_{XY,i}+\frac{1}{c + m}\sum_{j=1}^m \delta_{(\tilde{x}_{i,j},y_i)}\Bigr).
\]
Take
\[
  \{(x_{i,k}^{(\mathrm{prior})},y_{i,k}^{(\mathrm{prior})})\}_{k=1}^{T_{\DP}}\simiid\Q_{XY,i},\qquad \xi_{1:(T_{\DP}+m)}^{i} \sim \mathrm{Dir}\Bigl({\underbrace{\frac{c}{T_{\DP}},\dots,\frac{c}{T_{\DP}}}_{T_{\DP}\text{ terms}}}, {\underbrace{1,\dots,1}_{m \text{ terms}}} \Bigr).
  \]
Then
\[
  \P_{XY\mid w_i}^{\DP} \approx \sum_{k=1}^{T_{\DP}} \xi_k^i \delta_{\left(x_{i,k}^{(\mathrm{prior})},y_{i,k}^{(\mathrm{prior})} \right)} + \sum_{j=1}^m \xi_{T_{\DP}+j}^i \delta_{\left(\tilde x_{ij}, y_i\right)}.
  \]
Here $T_{\DP}$ is a finite truncation limit: in all experiments, we fix $T_{\DP} =100$.
{\singlespacing
\begin{algorithm}[ht]
\DontPrintSemicolon
\caption{DP posterior bootstrapping with truncation and MMD minimization}
\label{alg:dp-boot}
\KwIn{Pseudo-samples $\{\tilde{x}_{i,j}\}_{i=1,\dots,n;\,j=1,\dots,m}$; observed pairs $\{(w_i,y_i)\}_{i=1}^n$; DP concentration $c\ge0$; base measures $\{\Q_{XY,i}\}$; truncation $T_{\DP}$; bootstrap loops $B_{\mathrm{boot}}$.}
\KwOut{Posterior bootstrap draws $\{\hat\theta_{n,b}\}_{b=1}^{B_{\mathrm{boot}}}$.}
\For{$b\leftarrow 1$ \KwTo $B_{\mathrm{boot}}$}{
  \For{$i\leftarrow 1$ \KwTo $n$}{
    Draw $T_{\DP}$ atoms from the prior:
    $\{(x_{i,k}^{(\mathrm{prior})},y_{i,k}^{(\mathrm{prior})})\}_{k=1}^{T_{\DP}}\simiid\Q_{XY,i}$\;
    Sample weights
    $\xi_{1:(T_{\DP}+m)}^{(i,b)}\sim\mathrm{Dir}\bigl(\frac{c}{T_{\DP}},\dots,\frac{c}{T_{\DP}},1,\dots,1\bigr)$\;
    Construct
    $\P_{XY\mid w_i}^{(b)}:=\sum_{k=1}^{T_{\DP}}\xi_k^{(i,b)}\delta_{(x_{i,k}^{(\mathrm{prior})},y_{i,k}^{(\mathrm{prior})})}+\sum_{j=1}^m\xi_{T_{\DP}+j}^{(i,b)}\delta_{(\tilde{x}_{i,j},y_i)}$\;
    Construct
    $\P_{X\mid w_i}^{(b)}:=\sum_{k=1}^{T_{\DP}}\xi_k^{(i,b)}\delta_{x_{i,k}^{(\mathrm{prior})}}+\sum_{j=1}^m\xi_{T_{\DP}+j}^{(i,b)}\delta_{\tilde{x}_{i,j}}$\;
  }
  Set $\mathcal{P}^{\DP,(b)}_{XY}=\frac{1}{n}\sum_{i=1}^n\P_{XY\mid w_i}^{(b)}$\;
  Set $\widetilde{\mathcal{P}}^{(\theta,b)}_{XY}=\Bigl(\frac{1}{n}\sum_{i=1}^n\P_{X\mid w_i}^{(b)}\Bigr)\times\P_{g(\cdot,\theta)}$\;
  Compute $\hat\theta_{n,b}=\argmin_{\theta\in\Theta}\MMD_k\bigl(\mathcal{P}^{\DP,(b)}_{XY},\widetilde{\mathcal{P}}^{(\theta,b)}_{XY}\bigr)$\;
}
\end{algorithm}
}

\section{Proofs}\label{sec:additional_proofs}
\subsection{Proof of Theorem~\ref{thm:gen-bound-main}}
Before proving Theorem~\ref{thm:gen-bound-main}, we first state and prove the following lemma.
\begin{lemma}[Joint DP MMD Bound with concentration parameter \(c+m\)]
\label{lem:joint-dp-MMD}
Let \(\mathcal{Z} = \mathcal{X}\times\mathcal{Y}\) be a joint input-output space, and let \(k: \mathcal{Z}\times\mathcal{Z} \to \R\) be a positive-definite kernel with associated Hilbert space \(\Hk\). Suppose \(\P_{XY}^0\) is a target joint distribution on \(\mathcal{Z}\). For \(i=1,\dots,n\), let \(\P^i = \P_{XY\given w_i}^{\DP}\) be drawn from a Dirichlet Process
\[
  \DP(c+m, \P^{\mathrm{base}}_{XY,i}),
\]
where each base measure is
\[
  \P^{\mathrm{base}}_{XY,i} =\frac{c}{c+m}\Q_{XY,i} +\frac{1}{c+m}\sum_{k=1}^m \delta_{\bigl(\tilde{X}_{ik},Y_i\bigr)},
\]
and define the aggregated measure
\[
  \mathcal{P}_{XY}^{\DP} =\frac{1}{n}\sum_{i=1}^n \P^i, \qquad
  \P^{\mathrm{base}}_{XY} =\frac{1}{n}\sum_{i=1}^n \P^{\mathrm{base}}_{XY,i},
  \qquad
  \P^{\mathrm{prior}}_{XY}  =\frac{1}{n}\sum_{i=1}^n \Q_{XY,i}.
\]
Also define the pseudo-sample empirical distribution
\[
  \P^{\mathrm{pseudo}}_{XY}
  =\frac{1}{n}\sum_{i=1}^n \Bigl[\frac{1}{m}\sum_{k=1}^m \delta_{\bigl(\tilde{X}_{ik},Y_i\bigr)}\Bigr]
  =\frac{1}{nm}\sum_{i=1}^n \sum_{j=1}^m \delta_{\bigl(\tilde{X}_{ij},Y_i\bigr)}.
\]
Let $\mu(\D,S)$ be the joint distribution of $\P = (\P^1,\dots,\P^n)$ given the observed data $\D := \{(W_i, Y_i)\}_{i=1}^n$ and pseudo-samples \(S:=\{\tilde X_{ij}:1\le i\le n,1\le j\le m\}\). Then 
\begin{align*}
\E_{\D,S}\E_{\DP}\left[\MMD_k(\P_{XY}^0,\mathcal{P}_{XY}^{\DP})\mid \D,S\right]
   \le
   \frac{\sqrt{\kappa}}{\sqrt{n(c+m+1)}}
   +
   \frac{c}{c+m}\E_{\D,S}\MMD_k(\P_{XY}^0,\P^{\mathrm{prior}}_{XY} )
   \\+
   \frac{m}{c+m}\E_{\D,S}\left[\MMD_k(\P_{XY}^0,\P^{\mathrm{pseudo}}_{XY})\right].
\end{align*}

\end{lemma}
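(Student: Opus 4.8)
The plan is to decompose the target discrepancy into a part coming from the DP sampling noise (conditional on $\D,S$) and a part coming from the difference between the aggregated base measure and the truth, and then to decompose the latter by convexity into the prior piece and the pseudo-sample piece. First I would condition on $(\D,S)$ and write, by the triangle inequality for the RKHS norm $\|\cdot\|_{\Hk}$ applied to mean embeddings,
\[
\MMD_k\bigl(\P^0_{XY},\mathcal P_{X,Y}^{\DP}\bigr)
\le \MMD_k\bigl(\mathcal P_{X,Y}^{\DP},\P^{\rm base}_{XY}\bigr)
 + \MMD_k\bigl(\P^{\rm base}_{XY},\P^0_{XY}\bigr).
\]
For the first term, note $\mu_{\mathcal P_{X,Y}^{\DP}} - \mu_{\P^{\rm base}_{XY}} = \frac1n\sum_{i=1}^n\bigl(\mu_{\P^i}-\mu_{\P^{\rm base}_{XY,i}}\bigr)$, and conditional on $(\D,S)$ the $\P^i\sim\DP(c+m,\P^{\rm base}_{XY,i})$ are independent across $i$ with $\E_{\DP}[\mu_{\P^i}\mid\D,S] = \mu_{\P^{\rm base}_{XY,i}}$. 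Hence by Jensen and independence,
\[
\E_{\DP}\!\left[\MMD_k\bigl(\mathcal P_{X,Y}^{\DP},\P^{\rm base}_{XY}\bigr)\,\middle|\,\D,S\right]^2
\le \frac1{n^2}\sum_{i=1}^n \E_{\DP}\!\left[\bigl\|\mu_{\P^i}-\mu_{\P^{\rm base}_{XY,i}}\bigr\|_{\Hk}^2\,\middle|\,\D,S\right].
\]
The standard DP variance identity (for $P\sim\DP(\alpha,H)$, $\Var$ of $\int f\,dP$ equals $(\alpha+1)^{-1}$ times the $H$-variance of $f$) gives $\E_{\DP}\|\mu_{\P^i}-\mu_{\P^{\rm base}_{XY,i}}\|_{\Hk}^2 = (c+m+1)^{-1}\bigl(\E_{\P^{\rm base}_{XY,i}}k(Z,Z) - \|\mu_{\P^{\rm base}_{XY,i}}\|_{\Hk}^2\bigr) \le \kappa/(c+m+1)$ by \ref{ass:G1}. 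Summing over $i$ yields $\sqrt{\kappa}\,/\sqrt{n(c+m+1)}$ for the conditional expectation of the first term, which survives taking $\E_{\D,S}$.

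For the second term, I would use that $\P^{\rm base}_{XY} = \frac{c}{c+m}\P_{XY}^{\mathrm{prior}} + \frac{m}{c+m}\P^{\mathrm{pseudo}}_{XY}$ (this is exactly the definition of $\P^{\rm base}_{XY}$ after averaging over $i$, since $\frac1n\sum_i \frac1m\sum_k\delta_{(\tilde X_{ij},Y_i)} = \P^{\mathrm{pseudo}}_{XY}$). Writing $\P^0_{XY}$ as the same convex combination of two copies of itself and using linearity of the mean embedding plus the triangle inequality,
\[
\MMD_k\bigl(\P^{\rm base}_{XY},\P^0_{XY}\bigr)
\le \frac{c}{c+m}\MMD_k\bigl(\P_{XY}^{\mathrm{prior}},\P^0_{XY}\bigr)
 + \frac{m}{c+m}\MMD_k\bigl(\P^{\mathrm{pseudo}}_{XY},\P^0_{XY}\bigr).
\]
Combining the two bounds, taking $\E_{\DP}[\cdot\mid\D,S]$ (the second term is already $(\D,S)$-measurable) and then $\E_{\D,S}$, and using symmetry of $\MMD_k$, gives exactly the claimed inequality.

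The routine parts are the triangle inequalities and the convex-combination rewriting. The one step that needs genuine care is the DP variance computation: I must justify that the RKHS-valued quantity $\mu_{\P^i}$ is well-defined (Bochner integrability, which follows from boundedness of $k$ under \ref{ass:G1}), that its conditional mean is $\mu_{\P^{\rm base}_{XY,i}}$, and that the second-moment identity $\E\|\mu_P-\mu_H\|_{\Hk}^2 = (\alpha+1)^{-1}(\E_H k(Z,Z)-\|\mu_H\|_{\Hk}^2)$ holds — this is obtained by expanding the squared norm into a double integral $\int\!\int k(z,z')\,dP(z)\,dP(z')$, taking $\E_{\DP}$, and invoking the well-known formula $\E_{\DP}[P\times P] = \frac{1}{\alpha+1}(\text{diagonal }H) + \frac{\alpha}{\alpha+1}(H\times H)$ for the DP, so the cross terms reduce to $\|\mu_H\|_{\Hk}^2$ with the stated correction. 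I would cite this as a standard DP moment identity (e.g. as used in the NPL/MMD literature such as \citet{dellaporta2023robust}) rather than reprove it, and the bound $\le\kappa/(\alpha+1)$ then follows immediately since the $H$-variance of a function bounded in $[0,\kappa]$ — here $k(Z,\cdot)$ in the RKHS sense — is at most $\kappa$.
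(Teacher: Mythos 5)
Your proposal is correct and follows essentially the same route as the paper: the identical triangle-inequality split of $\MMD_k(\P^0_{XY},\mathcal P_{X,Y}^{\DP})$ into the DP sampling fluctuation around $\P^{\rm base}_{XY}$ plus the base-measure discrepancy, the identical convex decomposition $\P^{\rm base}_{XY}=\tfrac{c}{c+m}\P_{XY}^{\mathrm{prior}}+\tfrac{m}{c+m}\P^{\mathrm{pseudo}}_{XY}$, and the identical per-observation variance bound $\kappa/(c+m+1)$ combined via conditional independence and Jensen. The only difference is in how that variance bound is justified: you invoke the standard DP second-moment identity $\E[P\times P]=\tfrac{1}{\alpha+1}\,\mathrm{diag}(H)+\tfrac{\alpha}{\alpha+1}\,H\times H$ (with the Bochner-integrability remark you make), whereas the paper derives the same quantity explicitly through Sethuraman's stick-breaking representation and GEM weight moments — both yield exactly the same bound, so your shortcut is perfectly valid.
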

\begin{proof}[of Lemma~\ref{lem:joint-dp-MMD}]
\label{proof:joint-dp-MMD}
Throughout this proof, we condition on $(\D,S)$ and treat them as fixed, until the final step where we take the expectation over all $(\D,S)$. Each \(\P^i\) can be written via Sethuraman’s stick-breaking representation with concentration parameter \(\alpha = c+m\) \citep{sethuraman1994constructive}:
\[
  \P^i
  =
  \sum_{j=1}^{\infty}
    \xi_j^i\delta_{z_j^i},
\]
where \(\{\xi_j^i\}\sim\mathrm{GEM}(c+m)\), \(\{z_j^i\}\) are i.i.d.\ draws from the base measure \(\P^{\mathrm{base}}_{XY,i}\).
Hence
\[
  \mathcal{P}_{XY}^{\DP}
  =
  \frac{1}{n}\sum_{i=1}^n\sum_{j=1}^{\infty}\xi_j^i\delta_{z_j^i}.
\]
We use the triangle inequality:
\[
  \MMD_k\left(\P_{XY}^0,\mathcal{P}_{XY}^{\DP}\right)
  \le
  \MMD_k\left(\P_{XY}^0,\P^{\mathrm{base}}_{XY}\right)
  +
  \MMD_k\left(\P^{\mathrm{base}}_{XY},\mathcal{P}_{XY}^{\DP}\right),
\]
where
\[
  \P^{\mathrm{base}}_{XY}
  =
  \frac{1}{n}\sum_{i=1}^n \P^{\mathrm{base}}_{XY,i}
  =
  \frac{c}{c+m}\P^{\mathrm{prior}}_{XY} 
  +
  \frac{m}{c+m}\P^{\mathrm{pseudo}}_{XY}.
\]

We will show the following two bounds:

\begin{enumerate}
\item
\(\displaystyle
   \E_{\DP}\bigl[\MMD_k(\mathcal{P}_{XY}^{\DP},\P^{\mathrm{base}}_{XY})\mid \D,S\bigr] \le \frac{\sqrt{\kappa}}{\sqrt{n(c+m+1)}}
.\)

\item
\(\displaystyle
   \MMD_k\left(\P_{XY}^0,\P^{\mathrm{base}}_{XY}\right)
   \le
   \frac{c}{c+m}\MMD_k\left(\P_{XY}^0,\P^{\mathrm{prior}}_{XY} \right)
   +
   \frac{m}{c+m}\MMD_k\left(\P_{XY}^0,\P^{\mathrm{pseudo}}_{XY}\right).
\)
\end{enumerate}

We first bound \(\E_{\DP}\left[\MMD_k(\mathcal{P}_{XY}^{\DP},\P^{\mathrm{base}}_{XY})\mid \D,S\right]\).

We proceed in two steps: (A) compute the squared MMD, (B) apply Jensen’s inequality.

Rewrite
\[
  \MMD_k^2(\mathcal{P}_{XY}^{\DP},\P^{\mathrm{base}}_{XY})
  =
  \bigl\|\varphi(\mathcal{P}_{XY}^{\DP})-\varphi(\P^{\mathrm{base}}_{XY})\bigr\|_{\Hk}^2
  =
  \bigl\langle
    \varphi(\mathcal{P}_{XY}^{\DP})-\varphi(\P^{\mathrm{base}}_{XY}),    
    \varphi(\mathcal{P}_{XY}^{\DP})-\varphi(\P^{\mathrm{base}}_{XY})
  \bigr\rangle_{\Hk}.
\]
Express the embeddings:
\[
  \varphi(\mathcal{P}_{XY}^{\DP})
  =
  \int k(z,\cdot)d\mathcal{P}_{XY}^{\DP}(z)
  =
  \frac{1}{n}\sum_{i=1}^n \int k(z,\cdot)d\P^i(z),
\]
\[
  \varphi(\P^{\mathrm{base}}_{XY})
  =
  \int k(z,\cdot)d\P^{\mathrm{base}}_{XY}(z)
  =
  \frac{1}{n}\sum_{i=1}^n \int k(z,\cdot)d\P^{\mathrm{base}}_{XY,i}(z).
\]
Denote
\(\varphi(\P^i)=\int k(z,\cdot)d\P^i(z),
  \varphi(\P^{\mathrm{base}}_{XY,i})=\int k(z,\cdot)d\P^{\mathrm{base}}_{XY,i}(z).\)
Then
\[
  \varphi(\mathcal{P}_{XY}^{\DP})
  =
  \frac{1}{n}\sum_{i=1}^n \varphi(\P^i),
  \quad
  \varphi(\P^{\mathrm{base}}_{XY})
  =
  \frac{1}{n}\sum_{i=1}^n \varphi(\P^{\mathrm{base}}_{XY,i}).
\]
Hence
\[
  \varphi(\mathcal{P}_{XY}^{\DP}) - \varphi(\P^{\mathrm{base}}_{XY})
  =
  \frac{1}{n}\sum_{i=1}^n \left(\varphi(\P^i)-\varphi(\P^{\mathrm{base}}_{XY,i})\right).
\]
Therefore,
\begin{align*}
    \bigl\|\varphi(\mathcal{P}_{XY}^{\DP}) - \varphi(\P^{\mathrm{base}}_{XY})\bigr\|_{\Hk}^2
 &=
  \left\langle
    \frac{1}{n}\sum_{i=1}^n (\varphi(\P^i)-\varphi(\P^{\mathrm{base}}_{XY,i})),
    \frac{1}{n}\sum_{\ell=1}^n (\varphi(\P^\ell)-\varphi(\P^{\mathrm{base}}_{XY,\ell}))
  \right\rangle_{\Hk}
  \\&=\frac{1}{n^2} \sum_{i=1}^n \sum_{\ell=1}^n
    \Bigl\langle
      \varphi(\P^i)-\varphi(\P^{\mathrm{base}}_{XY,i}),
      \varphi(\P^\ell)-\varphi(\P^{\mathrm{base}}_{XY,\ell})
    \Bigr\rangle_{\Hk}.
\end{align*}  
We want its expectation w.r.t. \(\P=(\P^1,\dots,\P^n)\) under $\E_{\DP}[\cdot\mid\D,S]$. For brevity, we omit the conditioning on $\D,S$ for the rest of the proof and write $\E_{\DP}[\cdot] := \E_{\DP}[\cdot\mid\D,S]$. We have
\[
  \E_{\DP}\bigl[
    \MMD_k^2(\mathcal{P}_{XY}^{\DP},\P^{\mathrm{base}}_{XY})
  \bigr]
  =
  \frac{1}{n^2}\sum_{i,\ell=1}^n
  \E_{\DP}\Bigl[
    \bigl\langle
      \varphi(\P^i)-\varphi(\P^{\mathrm{base}}_{XY,i}),
      \varphi(\P^\ell)-\varphi(\P^{\mathrm{base}}_{XY,\ell})
    \bigr\rangle_{\Hk}
  \Bigr]
\]
Next, since \(\P^i\) and \(\P^\ell\) are drawn independently from the Dirichlet processes \(\DP(c+m,\P^{\mathrm{base}}_{XY,i})\) and \(\DP(c+m,\P^{\mathrm{base}}_{XY,\ell})\), so:
\begin{enumerate}
    \item If \(i\neq\ell\),  \[
          \E_{\DP}\bigl[
            \langle\varphi(\P^i)-\varphi(\P^{\mathrm{base}}_{XY,i}),\varphi(\P^\ell)-\varphi(\P^{\mathrm{base}}_{XY,\ell})\rangle
          \bigr]
          =
          \E_{\DP}\bigl[
            \varphi(\P^i)-\varphi(\P^{\mathrm{base}}_{XY,i})
          \bigr]
          \cdot
          \E_{\DP}\bigl[
            \varphi(\P^\ell)-\varphi(\P^{\mathrm{base}}_{XY,\ell})
          \bigr],
        \]
        since \(\P^i\) is independent from \(\P^\ell\) given $(\D,S)$. But \(\E[\varphi(\P^i)]=\varphi(\P^{\mathrm{base}}_{XY,i})\) by definition of the DP’s mean. Indeed, for any function \(f\),
        \(\E_{\P^i\sim\mu_i}\bigl[\int f(z)\mathrm{d}\P^i(z)\bigr] = \int f(z)\mathrm{d}\P^{\mathrm{base}}_{XY,i}(z)\). Hence
        \[
          \E_{\DP}[\varphi(\P^i)]=\varphi(\P^{\mathrm{base}}_{XY,i}).
        \]
        So
        \[
          \E_{\DP}\bigl[
            \varphi(\P^i)-\varphi(\P^{\mathrm{base}}_{XY,i})
          \bigr]
          =
          \varphi(\P^{\mathrm{base}}_{XY,i})-\varphi(\P^{\mathrm{base}}_{XY,i})
          =
          0.
        \]
        Thus any cross-term with \(i\neq\ell\) has expectation zero:
        \[
          \E_{\DP}\bigl[
            \langle\varphi(\P^i)-\varphi(\P^{\mathrm{base}}_{XY,i}),\varphi(\P^\ell)-\varphi(\P^{\mathrm{base}}_{XY,\ell})\rangle
          \bigr]
          = 0.
        \]
        \item If \(i=\ell\) we consider
  \(\E\bigl[
    \|\varphi(\P^i)-\varphi(\P^{\mathrm{base}}_{XY,i})\|^2_{\Hk}
  \bigr].\)
\end{enumerate}
By Sethuraman’s construction, we write $\P^i = \sum_{j=1}^{\infty} \xi_j^i \delta_{z_j^i}$, where $\{\xi_j^i\}_{j=1}^{\infty} \sim \mathrm{GEM}(\alpha)$, and $ z_j^i \simiid \P^{\mathrm{base}}_{XY,i}$.
Define the kernel mean embeddings
\[
  \varphi(\P^i) =
  \int k(z,\cdot)d\P^i(z)
  =
  \sum_{j=1}^{\infty} \xi_j^i k\left(z_j^i, \cdot\right),
\]
and
\[
  \varphi(\P^{\mathrm{base}}_{XY,i}) =
  \int k(z,\cdot)d\P^{\mathrm{base}}_{XY,i}(z).
\]
We aim to bound
\[
  \E_{\DP}\Bigl[
    \bigl\|\varphi(\P^i) - \varphi(\P^{\mathrm{base}}_{XY,i})\bigr\|_{\mathcal{H}_k}^2
  \Bigr].
\]
First write
\[
  \bigl\|\varphi(\P^i) - \varphi(\P^{\mathrm{base}}_{XY,i})\bigr\|_{\mathcal{H}_k}^2
  =
  \bigl\|\varphi(\P^i)\bigr\|^2_{\mathcal{H}_k}
  -
  2\bigl\langle
    \varphi(\P^i), \varphi(\P^{\mathrm{base}}_{XY,i})
  \bigr\rangle_{\mathcal{H}_k}
  +
  \bigl\|\varphi(\P^{\mathrm{base}}_{XY,i})\bigr\|^2_{\mathcal{H}_k}.
\]
Taking expectations over the random weights \(\{\xi_j^i\}\) and atoms \(\{z_j^i\}\) gives three terms:

\[
  \E_{\DP}\Bigl[\bigl\|\varphi(\P^i)\bigr\|^2_{\mathcal{H}_k}\Bigr],
  \quad
  \E_{\DP}\Bigl[\bigl\langle
    \varphi(\P^i), \varphi(\P^{\mathrm{base}}_{XY,i})
  \bigr\rangle_{\mathcal{H}_k}\Bigr],
  \quad
  \E_{\DP}\Bigl[\bigl\|\varphi(\P^{\mathrm{base}}_{XY,i})\bigr\|^2_{\mathcal{H}_k}\Bigr].
\]

We denote these three pieces as \(\mathrm{(I)}\), \(\mathrm{(II)}\), and \(\mathrm{(III)}\), respectively.

\begin{enumerate}
    \item \(\mathrm{(I)}\): \(\E_{\DP}[\|\varphi(\P^i)\|^2]\)
            Recall \(\varphi(\P^i) = \sum_{j=1}^\infty \xi_j^ik(z_j^i,\cdot)\). Then:
        \[
          \|\varphi(\P^i)\|^2_{\Hk}
          =
          \Bigl\langle
            \sum_{j=1}^\infty \xi_j^ik(z_j^i,\cdot),
            \sum_{t=1}^\infty \xi_t^ik(z_t^i,\cdot)
          \Bigr\rangle_{\Hk}
          =
          \sum_{j=1}^\infty \sum_{t=1}^\infty
            \xi_j^i\xi_t^i
            \langle k(z_j^i,\cdot),k(z_t^i,\cdot)\rangle_{\Hk}.
        \]
        By reproducing-kernel property, \(\langle k(a,\cdot), k(b,\cdot)\rangle = k(a,b)\). Hence
        \[
          \|\varphi(\P^i)\|^2_{\Hk}
          =
          \sum_{j,t}
            \xi_j^i \xi_t^i
            k\left(z_j^i,z_t^i\right).
        \]
        Taking expectation:
        \[
          \mathrm{(I)}
          =
          \E_{\DP}\Bigl[\|\varphi(\P^i)\|^2\Bigr]
          =
          \sum_{j,t}
            \E\bigl[\xi_j^i\xi_t^i\bigr]
            \E\bigl[k(z_j^i,z_t^i)\bigr],
        \]
        where we used independence of \(\xi_j^i\) from \(z_j^i\), plus the i.i.d.\ property across all \(j\). Define
        \[
          \Gamma(\P^{\mathrm{base}}_{XY,i}) := \E_{z,z'\sim \P^{\mathrm{base}}_{XY,i}}\bigl[k(z,z')\bigr], \quad \Psi(\P^{\mathrm{base}}_{XY,i}) := \E_{z\sim \P^{\mathrm{base}}_{XY,i}}[k(z,z)]
        \]
        Then 
        \begin{align*}
            2\Psi(\P^{\mathrm{base}}_{XY,i})-2\Gamma(\P^{\mathrm{base}}_{XY,i}) &=  \E_{z\sim \P^{\mathrm{base}}_{XY,i}}[k(z,z)] + \E_{z'\sim \P^{\mathrm{base}}_{XY,i}}[k(z,z)] - 2\E_{z,z'\sim \P^{\mathrm{base}}_{XY,i}}\bigl[k(z,z')\bigr]
            \\&= \E_{z,z'\sim \P^{\mathrm{base}}_{XY,i}}\bigl[k(z,z) + k(z',z')-2k(z,z')\bigr]
            \\&=\norm{\phi(z)-\phi(z')}_{\Hilb_k}^2
            \\&\ge 0,
        \end{align*}
        where $\phi$ is the feature map. This means $\Psi(\P^{\mathrm{base}}_{XY,i})\ge\Gamma(\P^{\mathrm{base}}_{XY,i})$. We separate diagonal (\(j=t\)) from off-diagonal:
        \[
          \mathrm{(I)}
          =
          \sum_{j=1}^\infty \E[(\xi_j^i)^2]
                              \E[k(z_j^i,z_j^i)]
          +
          \sum_{j\neq t}\E[\xi_j^i\xi_t^i]
                           \E[k(z_j^i,z_t^i)].
        \]
        Hence
        \[
          \mathrm{(I)}
          =
          \sum_{j=1}^\infty \E[(\xi_j^i)^2]\Psi(\P^{\mathrm{base}}_{XY,i})
          +
          \sum_{j\neq t}\E[\xi_j^i\xi_t^i]\Gamma(\P^{\mathrm{base}}_{XY,i}).
        \]
    \item Term \(\mathrm{(III)}\): \(\E_{\DP}[\|\varphi(\P^{\mathrm{base}}_{XY,i})\|^2]\)
    Here,
        \[
          \varphi(\P^{\mathrm{base}}_{XY,i})
          =
          \int k(z,\cdot)d\P^{\mathrm{base}}_{XY,i}(z),
          \quad
          \|\varphi(\P^{\mathrm{base}}_{XY,i})\|^2
          =
          \iint k(z,z')d\P^{\mathrm{base}}_{XY,i}(z)d\P^{\mathrm{base}}_{XY,i}(z').
        \]
        Since \(\varphi(\P^{\mathrm{base}}_{XY,i})\) is fixed given $(\D,S)$ \(\P^i\), we have 
        
        \[
          \mathrm{(III)} = \E_{\DP}\bigl[\|\varphi(\P^{\mathrm{base}}_{XY,i})\|^2\bigr]
          =
          \iint k(z,z')d\P^{\mathrm{base}}_{XY,i}(z)d\P^{\mathrm{base}}_{XY,i}(z').
        \]
        Define
        \[
          \Gamma(\P^{\mathrm{base}}_{XY,i})
          =
          \E_{z,z'\sim \P^{\mathrm{base}}_{XY,i}}[k(z,z')],
        \]
        thus
        \[
          \mathrm{(III)} = \Gamma(\P^{\mathrm{base}}_{XY,i}).
        \]
    \item Term \(\mathrm{(II)}\): \(\E_{\DP}[-2\langle \varphi(\P^i), \varphi(\P^{\mathrm{base}}_{XY,i})\rangle]\)
    We have
\[
  \langle \varphi(\P^i), \varphi(\P^{\mathrm{base}}_{XY,i})\rangle
  =
  \Bigl\langle
    \sum_{j=1}^\infty \xi_j^ik(z_j^i,\cdot),
    \int k(z,\cdot)d\P^{\mathrm{base}}_{XY,i}(z)
  \Bigr\rangle_{\Hk}
  =
  \sum_{j=1}^\infty \xi_j^i
   \int \bigl\langle k(z_j^i,\cdot), k(z,\cdot)\bigr\rangle_{\Hk}
         d\P^{\mathrm{base}}_{XY,i}(z).
\]
Again, \(\langle k(a,\cdot),k(b,\cdot)\rangle = k(a,b)\). So
\[
  \langle \varphi(\P^i), \varphi(\P^{\mathrm{base}}_{XY,i})\rangle
  =
  \sum_{j=1}^\infty
    \xi_j^i
    \int k(z_j^i,z)d\P^{\mathrm{base}}_{XY,i}(z).
\]
Taking expectation,
\[
  \mathrm{(II)}
  =
  -2\E_{\DP}\bigl[\langle \varphi(\P^i), \varphi(\P^{\mathrm{base}}_{XY,i})\rangle\bigr]
  =
  -2
  \sum_{j=1}^\infty
    \E[\xi_j^i]
    \E\Bigl[\int k(z_j^i,z)d\P^{\mathrm{base}}_{XY,i}(z)\Bigr].
\]
But 
\[
  \E\left[\int k(z_j^i,z)d\P^{\mathrm{base}}_{XY,i}(z)\right]
  =
  \E \left[\E_{z\sim \P^{\mathrm{base}}_{XY,i}}[k(z_j^i,z)]\right] = \E\left[\Gamma(\P^{\mathrm{base}}_{XY,i})\right] = \Gamma(\P^{\mathrm{base}}_{XY,i}),
\]
Hence
\[
  \mathrm{(II)}
  =
  -2 \Gamma(\P^{\mathrm{base}}_{XY,i})\sum_{j=1}^\infty \E[\xi_j^i].
\]
Now for a GEM(\(\alpha\)) distribution, we know that 
\(\sum_{j=1}^\infty \E[\xi_j^i] = 1\) by \citet[Lemma 2(b)]{sethuraman2024some}. So:

\[
  \mathrm{(II)}
  =
  -2 \Gamma(\P^{\mathrm{base}}_{XY,i})\cdot 1
  =
  -2\Gamma(\P^{\mathrm{base}}_{XY,i}).
\]
\end{enumerate}

Combining \(\mathrm{(I)}\), \(\mathrm{(II)}\), \(\mathrm{(III)}\), we get
\begin{align}
  \E_{\DP}\bigl[\|\varphi(\P^i)-\varphi(\P^{\mathrm{base}}_{XY,i})\|^2\bigr]
  &=\sum_{j=1}^\infty
    \E[(\xi_j^i)^2]\Psi(\P^{\mathrm{base}}_{XY,i})
    + \sum_{j\neq t}
      \E[\xi_j^i\xi_t^i]\Gamma(\P^{\mathrm{base}}_{XY,i})
  -
  2\Gamma(\P^{\mathrm{base}}_{XY,i}) 
  +
  \Gamma(\P^{\mathrm{base}}_{XY,i}).   \\
  &=
  \sum_{j=1}^\infty \E[(\xi_j^i)^2]\Psi(\P^{\mathrm{base}}_{XY,i})
  + \Gamma(\P^{\mathrm{base}}_{XY,i})\sum_{j\neq t} \E[\xi_j^i\xi_t^i]
  - \Gamma(\P^{\mathrm{base}}_{XY,i}).
\end{align}

Now we recall this identity from a GEM(\(c+m\)) distribution \citep[Lemma 3]{sethuraman2024some}:

\[
  \sum_{j=1}^\infty \E[(\xi_j^i)^2]
  +
  \sum_{j\neq t} \E[\xi_j^i\xi_t^i]
  =
  \frac{1}{c+m +1} + \frac{c+m}{c+m +1}
  =
  1.
\]

So we can write
\[
  \Gamma(\P^{\mathrm{base}}_{XY,i})\sum_{j\neq t} \E[\xi_j^i\xi_t^i]
  =
  \Gamma(\P^{\mathrm{base}}_{XY,i})\Bigl[
    1 - \sum_{j=1}^\infty \E[(\xi_j^i)^2]
  \Bigr].
\]
Thus:
\begin{align*}
  \E_{\DP}\bigl[\|\varphi(\P^i)-\varphi(\P^{\mathrm{base}}_{XY,i})\|^2\bigr]
  &=
  \sum_{j=1}^\infty \E[(\xi_j^i)^2]\Psi(\P^{\mathrm{base}}_{XY,i})
  +
  \Gamma(\P^{\mathrm{base}}_{XY,i})\Bigl[
    1 - \sum_{j=1}^\infty \E[(\xi_j^i)^2]
  \Bigr]
  -
  \Gamma(\P^{\mathrm{base}}_{XY,i}) \\
  &=
  \sum_{j=1}^\infty
    \E[(\xi_j^i)^2]\Psi(\P^{\mathrm{base}}_{XY,i})
  -
  \Gamma(\P^{\mathrm{base}}_{XY,i})\sum_{j=1}^\infty \E[(\xi_j^i)^2]\\
  &=
  \sum_{j=1}^\infty \E[(\xi_j^i)^2]
   \Bigl[\Psi(\P^{\mathrm{base}}_{XY,i}) - \Gamma(\P^{\mathrm{base}}_{XY,i})\Bigr] \\
   &\leq \kappa\sum_{j=1}^\infty \E[(\xi_j^i)^2]
\end{align*}
By \citet[Lemma 2(a), 2(b)]{sethuraman2024some}, we have the identity
\[
\sum_{j=1}^\infty \E[(\xi_j^i)^2] = \E[\xi_1^i] = \frac{1}{c+m+1}
\]
The last inequality follows from $\Psi(\P^{\mathrm{base}}_{XY,i}) - \Gamma(\P^{\mathrm{base}}_{XY,i}) \le\Psi(\P^{\mathrm{base}}_{XY,i}) = \E_{z\sim \P^{\mathrm{base}}_{XY,i}}[k(z,z)]\leq \kappa$.
Therefore,
\begin{align*}    \E_{\DP}\bigl[\MMD_k^2(\mathcal{P}_{XY}^{\DP},\P^{\mathrm{base}}_{XY})\bigr]
  &=\frac{1}{n^2}\sum_{i,\ell=1}^n\E_{\DP}\left[\bigl\langle\varphi(\P^i)-\varphi(\P^{\mathrm{base}}_{XY,i}),\varphi(\P^\ell)-\varphi(\P^{\mathrm{base}}_{XY,\ell})\bigr\rangle_{\Hk}\right]
  \\&=\frac{1}{n^2}\sum_{i=1}^n \E_{\DP}\left[\|\varphi(\P^i)-\varphi(\P^{\mathrm{base}}_{XY,i})\|^2\right]
  \\&\leq \frac{\kappa}{n^2}\sum_{i=1}^n \sum_{j=1}^\infty \E[(\xi_j^i)^2]
  \\&= \frac{\kappa}{n^2} \cdot n \cdot \frac{1}{c+m+1}
\end{align*}

By Jensen’s inequality:
\begin{equation}
\label{eq:MMD-term1}
      \E_{\DP}[\MMD_k(\mathcal{P}_{XY}^{\DP},\P^{\mathrm{base}}_{XY})]
  \le
  \sqrt{\E_{\DP}[\MMD_k^2(\mathcal{P}_{XY}^{\DP},\P^{\mathrm{base}}_{XY})]}
  \le
  \frac{\sqrt{\kappa}}{\sqrt{n(c+m+1)}}.
\end{equation}

We then bound \(\MMD_k(\P_{XY}^0,\P^{\mathrm{base}}_{XY})\).
We next look at
\[
  \P^{\mathrm{base}}_{XY,i}
  =
  \frac{c}{c+m}\Q_{XY,i}
  + \frac{1}{c+m}\sum_{k=1}^m \delta_{(\tilde{X}_{i,j},Y_i)},
\]
and
\[
  \P^{\mathrm{base}}_{XY}
  =
  \frac{c}{c+m}\P^{\mathrm{prior}}_{XY} 
  + \frac{m}{c+m}\P^{\mathrm{pseudo}}_{XY}.
\]
By definition:
\[
  \MMD_k(\P_{XY}^0, \P^{\mathrm{base}}_{XY})
  =
  \|\varphi\left(\P_{XY}^0\right) - \varphi(\P^{\mathrm{base}}_{XY})\|_{\Hk},
\]
where
\[
  \varphi(\P^{\mathrm{base}}_{XY})
  =
  \int k(z,\cdot)d\P^{\mathrm{base}}_{XY}(z)
  =
  \frac{c}{c+m}\varphi(\P^{\mathrm{prior}}_{XY} )
  +
  \frac{m}{c+m}\varphi\left(\P^{\mathrm{pseudo}}_{XY}\right).
\]
Hence
\begin{align*}
      \varphi\left(\P_{XY}^0\right)-\varphi(\P^{\mathrm{base}}_{XY})
  &=
  \varphi\left(\P_{XY}^0\right)
  - \Bigl[\frac{c}{c+m}\varphi(\P^{\mathrm{prior}}_{XY} ) + \frac{m}{c+m}\varphi\left(\P^{\mathrm{pseudo}}_{XY}\right)\Bigr]
  \\&=
  \frac{c}{c+m}\left(\varphi\left(\P_{XY}^0\right)-\varphi(\P^{\mathrm{prior}}_{XY} )\right)
  + \frac{m}{c+m}\left(\varphi\left(\P_{XY}^0\right)-\varphi\left(\P^{\mathrm{pseudo}}_{XY}\right)\right).
\end{align*}

Applying the triangle inequality:
\[
  \|\varphi\left(\P_{XY}^0\right)-\varphi(\P^{\mathrm{base}}_{XY})\|_{\Hk}
  \le
  \frac{c}{c+m}\|\varphi\left(\P_{XY}^0\right)-\varphi(\P^{\mathrm{prior}}_{XY} )\|_{\Hk}
  + \frac{m}{c+m}\|\varphi\left(\P_{XY}^0\right)-\varphi\left(\P^{\mathrm{pseudo}}_{XY}\right)\|_{\Hk}.
\]
Thus
\[
  \MMD_k(\P_{XY}^0,\P^{\mathrm{base}}_{XY})
  =\|\varphi\left(\P_{XY}^0\right)-\varphi(\P^{\mathrm{base}}_{XY})\|_{\Hk}
  \le
  \frac{c}{c+m}\MMD_k(\P_{XY}^0,\P^{\mathrm{prior}}_{XY} )
  +
  \frac{m}{c+m}\MMD_k(\P_{XY}^0,\P^{\mathrm{pseudo}}_{XY}).
\]
Combining this with \eqref{eq:MMD-term1}, we have, given $(\D,S)$
\begin{equation*}
\E_{\DP}\left[\MMD_k(\P_{XY}^0,\mathcal{P}_{XY}^{\DP})\right]
   \le
   \frac{\sqrt{\kappa}}{\sqrt{n(c+m+1)}}
   +
   \frac{c}{c+m}\MMD_k(\P_{XY}^0,\P^{\mathrm{prior}}_{XY} )+
   \frac{m}{c+m}\MMD_k(\P_{XY}^0,\P^{\mathrm{pseudo}}_{XY}).
\end{equation*}
Taking expectation over $(\D,S)$ completes the proof.
\end{proof}

By exactly the same argument as that in Lemma~\ref{lem:joint-dp-MMD} but applied to the marginal DP on $\P_{X\mid W}$ defined in Section~\ref{def:marginal-DP}, we have the following corollary.
\begin{corollary}
\label{cor:marginal-dp-MMD}
    \begin{align*}
                \E_{\D,S}\E_{\DP}\left[\MMD_{k_X^2}(\P_X^0,\mathcal{P}_{X}^{\DP})\mid \D,S\right]
   \le
   \frac{\kappa_X}{\sqrt{n(c+m+1)}}
   +
   \frac{c}{c+m}\E_{\D,S}\left[\MMD_{k_X^2}(\P_X^0,\P_X^{\mathrm{prior}})\right]
   \\+
   \frac{m}{c+m}\E_{\D,S}\left[\MMD_{k_X^2}(\P_X^0,\P^{\mathrm{pseudo}}_{X})\right].
    \end{align*}
\end{corollary}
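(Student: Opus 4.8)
The plan is to rerun the proof of Lemma~\ref{lem:joint-dp-MMD} line by line, now on the covariate space $\mathcal X$ equipped with the kernel $k_X^2:=k_X\otimes k_X$ (equivalently, the reproducing kernel $(x,x')\mapsto k_X(x,x')^2$ on the diagonal image of $\mathcal X$), and with the obvious replacements: the joint DP draws $\P^{\DP}_{X,Y\mid w_i}$ become the marginal DP draws $\P^{\DP}_{X\mid w_i}$ of \eqref{eq:marginal-posterior}; the per-group base measures $\P^{\rm base}_{XY,i}$ become $\P^{\rm base}_{X,i}=\frac{c}{c+m}\Q_{X,i}+\frac{1}{c+m}\sum_{k=1}^m\delta_{\tilde X_{ik}}$; and the averaged auxiliary measures $(\P^{\mathrm{prior}}_{XY},\P^{\mathrm{pseudo}}_{XY})$ become $(\P^{\mathrm{prior}}_{X},\P^{\mathrm{pseudo}}_{X})$. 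The only preliminary fact to record is that $\P^{\DP}_{X\mid w_i}$ is itself a $\DP(c+m,\P^{\rm base}_{X,i})$ realisation whose base measure is the $X$-marginal of the joint base (stated right after \eqref{eq:marginal-posterior}), and that these marginal draws are independent across $i$ given $(\D,S)$; with this, every structural ingredient used in Lemma~\ref{lem:joint-dp-MMD} is available verbatim.

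I would then carry out the same three steps, conditioning on $(\D,S)$ until the final line where $\E_{\D,S}$ is taken. (i) Triangle inequality: $\MMD_{k_X^2}(\P^0_X,\mathcal P^{\DP}_X)\le\MMD_{k_X^2}(\P^0_X,\P^{\rm base}_X)+\MMD_{k_X^2}(\P^{\rm base}_X,\mathcal P^{\DP}_X)$, where $\P^{\rm base}_X=\frac1n\sum_{i=1}^n\P^{\rm base}_{X,i}=\frac{c}{c+m}\P^{\mathrm{prior}}_X+\frac{m}{c+m}\P^{\mathrm{pseudo}}_X$. (ii) DP-fluctuation term: expand $\MMD_{k_X^2}^2(\P^{\rm base}_X,\mathcal P^{\DP}_X)=\frac1{n^2}\sum_{i,\ell}\langle\varphi(\P^i_X)-\varphi(\P^{\rm base}_{X,i}),\,\varphi(\P^\ell_X)-\varphi(\P^{\rm base}_{X,\ell})\rangle_{\Hilb_{k_X^2}}$; the $i\neq\ell$ terms vanish by conditional independence across $i$ together with the DP-mean identity $\E_{\DP}[\varphi(\P^i_X)\mid\D,S]=\varphi(\P^{\rm base}_{X,i})$; for $i=\ell$, insert Sethuraman's stick-breaking representation $\P^i_X=\sum_j\xi^i_j\delta_{x^i_j}$ with $\{\xi^i_j\}\sim\GEM(c+m)$ and $x^i_j\overset{\mathrm{iid}}{\sim}\P^{\rm base}_{X,i}$, split the double sum into diagonal and off-diagonal parts, and use the GEM identities $\sum_j\E[(\xi^i_j)^2]=\E[\xi^i_1]=\frac1{c+m+1}$ and $\sum_j\E[(\xi^i_j)^2]+\sum_{j\neq t}\E[\xi^i_j\xi^i_t]=1$, together with the positive-semidefiniteness bound $0\le\Psi(\P^{\rm base}_{X,i})-\Gamma(\P^{\rm base}_{X,i})\le\Psi(\P^{\rm base}_{X,i})\le\sup k_X^2=\kappa_X^2$, to obtain $\E_{\DP}[\|\varphi(\P^i_X)-\varphi(\P^{\rm base}_{X,i})\|^2\mid\D,S]\le\kappa_X^2/(c+m+1)$; averaging over $i$ and applying Jensen gives $\E_{\DP}[\MMD_{k_X^2}(\P^{\rm base}_X,\mathcal P^{\DP}_X)\mid\D,S]\le\kappa_X/\sqrt{n(c+m+1)}$. (iii) Deterministic term: by linearity of the mean embedding, $\varphi(\P^0_X)-\varphi(\P^{\rm base}_X)=\frac{c}{c+m}(\varphi(\P^0_X)-\varphi(\P^{\mathrm{prior}}_X))+\frac{m}{c+m}(\varphi(\P^0_X)-\varphi(\P^{\mathrm{pseudo}}_X))$, so a further triangle inequality yields $\MMD_{k_X^2}(\P^0_X,\P^{\rm base}_X)\le\frac{c}{c+m}\MMD_{k_X^2}(\P^0_X,\P^{\mathrm{prior}}_X)+\frac{m}{c+m}\MMD_{k_X^2}(\P^0_X,\P^{\mathrm{pseudo}}_X)$. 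Combining (i)--(iii) and taking $\E_{\D,S}$ gives the stated bound.

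I do not anticipate a genuine obstacle: none of the steps of Lemma~\ref{lem:joint-dp-MMD} used the product structure of $\mathcal X\times\mathcal Y$ or of $k$ --- only measurability, positive-definiteness, a uniform envelope, the first two GEM moments, and the DP mean --- and all of these hold for $(\mathcal X,k_X^2)$ exactly as for $(\mathcal X\times\mathcal Y,k)$. The one point needing a word of care is purely bookkeeping: the uniform envelope is now $\sup k_X^2=\kappa_X^2$, so the leading (``statistical fluctuation'') constant the argument actually produces is $\kappa_X=\sqrt{\sup k_X^2}$, matching the constant that appears in Proposition~\ref{cor:pseudo-marginalX-classical}; this is how the fluctuation term of the corollary should be read.
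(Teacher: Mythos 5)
Your proposal is correct and follows essentially the same route as the paper, which simply invokes the argument of Lemma~\ref{lem:joint-dp-MMD} verbatim for the marginal DP in \eqref{eq:marginal-posterior} with kernel $k_X^2$. Your bookkeeping remark is also apt: the argument genuinely yields the fluctuation constant $\kappa_X=\sqrt{\sup k_X^2}$ (consistent with the constants in Proposition~\ref{cor:pseudo-marginalX-classical}), and this is the right way to read the $\sqrt{\kappa}$ appearing in the corollary's first term.
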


We are now able to prove the theorem
\label{proof:gen-bound-main}
\begin{proof}[of Theorem~\ref{thm:gen-bound-main}]
We introduce the notation $\mathcal{P}^{\DP}_{XY}:= \frac{1}{n}\sum_{i=1}^n \P^{\DP}_{XY\mid w_i}$ and $\mathcal{P}^{\DP}_{X}:= \frac{1}{n}\sum_{i=1}^n \P^{\DP}_{X\mid w_i}$ for brevity. For any $\theta\in\Theta$, using the triangle inequality and the definition of $\hat{\theta}_n$, we have
    \begin{align*}
        &\MMD_k\left(\P_{XY}^0, \P_X^0 \P_{g(X,\hat{\theta}_n)}\right) 
        \\ &\le \MMD_k\left(\P_{XY}^0, \mathcal{P}_{XY}^{\DP} \right) + \MMD_k\left( \mathcal{P}_{XY}^{\DP}, \P_X^0 \P_{g(X,\hat{\theta}_n)} \right)
       \\ &\le \MMD_k\left(\P_{XY}^0, \mathcal{P}_{XY}^{\DP} \right) + \MMD_k\left( \mathcal{P}_{XY}^{\DP}, \mathcal{P}_{X}^{\DP} \P_{g(X,\hat{\theta}_n)} \right) + \MMD_k\left( \mathcal{P}_{X}^{\DP} \P_{g(X,\hat{\theta}_n)}, \P_X^0 \P_{g(X,\hat{\theta}_n)} \right) 
       \\ &\le \MMD_k\left(\P_{XY}^0, \mathcal{P}_{XY}^{\DP} \right) + \MMD_k\left( \mathcal{P}_{XY}^{\DP}, \mathcal{P}_{X}^{\DP} \P_{g(X, \theta)} \right) + \MMD_k\left( \mathcal{P}_{X}^{\DP} \P_{g(X,\hat{\theta}_n)}, \P_X^0 \P_{g(X,\hat{\theta}_n)} \right) 
       \\ &\le 2\MMD_k\left(\P_{XY}^0, \mathcal{P}_{XY}^{\DP} \right) +  \MMD_k\left( \P_{XY}^0, \mathcal{P}_{X}^{\DP} \P_{g(X, \theta)} \right) + \MMD_k\left( \mathcal{P}_{X}^{\DP} \P_{g(X,\hat{\theta}_n)}, \P_X^0 \P_{g(X,\hat{\theta}_n)} \right) 
       \\ &\le 2\MMD_k\left(\P_{XY}^0, \mathcal{P}_{XY}^{\DP} \right) +  \MMD_k\left( \P_{XY}^0, \P_X^0 \P_{g(X, \theta)} \right) \\&\quad + \MMD_k\left( \mathcal{P}_{X}^{\DP} \P_{g(X, \theta)}, \P_X^0 \P_{g(X, \theta)} \right) + \MMD_k\left( \mathcal{P}_{X}^{\DP} \P_{g(X,\hat{\theta}_n)}, \P_X^0 \P_{g(X,\hat{\theta}_n)} \right) 
    \end{align*}
    Taking expectations and taking the infimum over $\theta\in\Theta$, we have
    \begin{align*}
        &\E_{\D,S}\left[\E_{\DP} \left[\MMD_k\left(\P_{XY}^0, \P_X^0 \P_{g(X,\hat{\theta}_n)}\right) \mid \D,S\right] \right] - \inf_{\theta\in\Theta} \MMD_k\left(\P_{XY}^0, \P_X^0\P_{g(X, \theta)}\right)
        \\&\le 2\E_{\D,S}\left[\E_{\DP} \left[\MMD_k\left(\P_{XY}^0, \mathcal{P}_{XY}^{\DP} \right) \mid \D,S\right] \right] + 2\Lambda\E_{\D,S}\left[\E_{\DP} \left[\MMD_{k_X^2}\left(\P_X^0, \mathcal{P}_{X}^{\DP} \right)\mid \D,S \right] \right]
\\ &\le \frac{2(\sqrt{\kappa}+\kappa_X\Lambda)}{\sqrt{n(c+m+1)}} + \frac{2c}{c+m}\E_{\D,S}\left[\Lambda\MMD_{k_X^2}\left(\P_X^0,\P_X^{\mathrm{prior}}\right)+\MMD_{k}\left(\P_{XY}^0,\P^{\mathrm{prior}}_{XY} \right)\right] 
\\&\quad \quad \quad \quad   + \frac{2m}{c+m}\E_{\D,S}\left[\Lambda\MMD_{k_X^2}\left(\P_X^0,\P^{\mathrm{pseudo}}_{X}\right)+ \MMD_k\left(\P_{XY}^0,\P^{\mathrm{pseudo}}_{XY}\right)\right]
    \end{align*}
    The first inequality follows from \citet[Lemma 2]{alquier2024universal}, and the final line follows from Lemma~\ref{lem:joint-dp-MMD} and Corollary~\ref{cor:marginal-dp-MMD}. 
\end{proof}

\subsection{Proof of Theorem~\ref{thm:pseudo-classical}}
Before proving the theorem, we first state and prove three lemmas relating to the MMD:
\begin{lemma}[TVD to MMD]\label{lem:MMD-TVD}
Let $(\mathcal X,\mathcal A)$ be a measurable space and $P,Q$ two probability
measures on it.  Let $k:\mathcal X\times\mathcal X\to\R$ be a bounded,
positive-definite kernel with 
\[
K:=\sup_{x\in\mathcal X}k(x,x)<\infty.
\]
Denote by $\mathcal H_k$ the reproducing-kernel Hilbert space (RKHS)
associated with~$k$, let
\[
\MMD_k(P,Q):=\sup_{\|f\|_{\mathcal H_k}\le1}
        \Bigl|\int_{\mathcal X} f\dd (P-Q)\Bigr|,
\qquad
\|P-Q\|_{\TV}
       :=\frac12\int_{\mathcal X}\bigl|\dd P-\dd Q\bigr|
\]
be, respectively, the MMD and the (half-$L^{1}$-normalized)
total-variation distance.  Then
\[
      \MMD_k(P,Q)\le
      2\sqrt K\|P-Q\|_{\TV}.
\]
\end{lemma}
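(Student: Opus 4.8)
\textbf{Proof plan for Lemma~\ref{lem:MMD-TVD}.}
The plan is to bound the supremum defining $\MMD_k$ by first showing that the unit ball of $\mathcal H_k$ consists of uniformly bounded functions, and then applying a crude $L^\infty$–$L^1$ estimate against the signed measure $P-Q$.

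First I would recall the reproducing property: for every $f\in\mathcal H_k$ and $x\in\mathcal X$ we have $f(x)=\langle f,k(x,\cdot)\rangle_{\mathcal H_k}$, so by Cauchy–Schwarz
\[
|f(x)|\le\|f\|_{\mathcal H_k}\,\|k(x,\cdot)\|_{\mathcal H_k}
=\|f\|_{\mathcal H_k}\sqrt{k(x,x)}\le\|f\|_{\mathcal H_k}\sqrt K .
\]
Hence every $f$ with $\|f\|_{\mathcal H_k}\le1$ satisfies $\|f\|_\infty\le\sqrt K$. Next, for such an $f$, I would estimate the integral against the signed measure $P-Q$ by passing to its total–variation measure $|P-Q|$:
\[
\Bigl|\int_{\mathcal X} f\,\mathrm d(P-Q)\Bigr|
\le\int_{\mathcal X}|f|\,\mathrm d|P-Q|
\le\|f\|_\infty\,|P-Q|(\mathcal X)
\le\sqrt K\int_{\mathcal X}\bigl|\mathrm dP-\mathrm dQ\bigr|
=2\sqrt K\,\|P-Q\|_{\mathrm{TV}},
\]
using in the last equality the normalisation $\|P-Q\|_{\mathrm{TV}}=\tfrac12\int|\mathrm dP-\mathrm dQ|$ given in the statement. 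Taking the supremum over all $f$ in the unit ball of $\mathcal H_k$ yields $\MMD_k(P,Q)\le2\sqrt K\,\|P-Q\|_{\mathrm{TV}}$, which is the claim.

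There is no serious obstacle here; the only points requiring a little care are (i) justifying $\|k(x,\cdot)\|_{\mathcal H_k}^2=k(x,x)$ via the reproducing property and the boundedness assumption $K<\infty$, and (ii) keeping track of the factor $2$ coming from the half-normalised total variation distance, so that the constant in the final bound is exactly $2\sqrt K$ rather than $\sqrt K$. One may optionally note that the intermediate step in fact gives the sharper bound $\MMD_k(P,Q)\le 2\sqrt K\,\|P-Q\|_{\mathrm{TV}}$ with the $\sup$ over the Hahn–Jordan decomposition, but no such refinement is needed for the stated result.
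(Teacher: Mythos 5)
Your proposal is correct and follows essentially the same route as the paper: bound unit-ball RKHS functions uniformly by $\sqrt K$ via the reproducing property and Cauchy--Schwarz, then convert to total variation (you integrate directly against $|P-Q|$, the paper invokes the equivalent IPM representation of $\|P-Q\|_{\mathrm{TV}}$ — a purely cosmetic difference). The bookkeeping of the factor $2$ from the half-normalisation is handled correctly.
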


\begin{proof}
For $x\in\mathcal X$ write $\phi(x)\in\mathcal H_k$ for the canonical feature
map.  If
$f\in\mathcal H_k$, then by Cauchy-Schwarz
\[
|f(x)| =\bigl|\langle f,\phi(x)\rangle_{\mathcal H_k}\bigr|
           \le\|f\|_{\mathcal H_k}\|\phi(x)\|_{\mathcal H_k}
           =\|f\|_{\mathcal H_k}\sqrt{k(x,x)}
           \le\|f\|_{\mathcal H_k}\sqrt K.
\]
Hence
\[
\sup_{x\in\mathcal X}|f(x)|
      \le\sqrt K\|f\|_{\mathcal H_k}.
\]

We have the IPM representation of the TVD 
\[
\|P-Q\|_{\TV}
     =\frac{1}{2}\sup_{\|g\|_{\infty}\le1}
               \Bigl|\int_{\mathcal X} g\dd (P-Q)\Bigr|.
\]
For any $f\in\mathcal H_k$ with $\|f\|_{\mathcal H_k}\le1$,
we have $\|f/\sqrt K\|_{\infty}\le1$. Therefore,
\[
\Bigl|\int_{\mathcal X} f\dd (P-Q)\Bigr|
   =\sqrt K
          \Bigl|\int_{\mathcal X} \frac{f}{\sqrt{K}}\dd (P-Q)\Bigr|
   \le2\sqrt K\|P-Q\|_{\TV}.
\]

Since this holds for all $\|f\|_{\mathcal H_k}\le1$, taking the supremum over the unit ball gives
\[
\MMD_k(P,Q)
   =
   \sup_{\|f\|_{\mathcal H_k}\le1}
        \Bigl|\int_{\mathcal X} f\dd (P-Q)\Bigr|
   \le
   2\sqrt K\|P-Q\|_{\TV}.
\]
\end{proof}

\begin{lemma}[Joint--conditional MMD reduction]\label{lem:MMD-conditional}
Let
\[
   k_X:\mathcal X\times\mathcal X\longrightarrow[0,\kappa_X],
   \qquad
   k_Y:\mathcal Y\times\mathcal Y\longrightarrow[0,\kappa_Y],
   \qquad
   0<\kappa_X,\kappa_Y<\infty,
\]
be bounded, measurable, non-negative, positive-definite kernels.
Define the \emph{product kernel}
\[
   k\bigl((x,y),(x',y')\bigr)
      :=k_X(x,x')\,k_Y(y,y'),
   \qquad (x,y),(x',y')\in\mathcal X\times\mathcal Y.
\]
Fix a probability measure \(P_Y^{0}\) on \(\mathcal Y\).
For each \(y\in\mathcal Y\) let
\(P_{X\mid y}\) and \(Q_{X\mid y}\) be probability measures on
\(\mathcal X\), measurable in \(y\).
Form the \emph{joint} laws
\[
   \tilde P(\dd x,\dd y):=P_Y^{0}(\dd y)\,P_{X\mid y}(\dd x),\qquad
   \tilde Q(\dd x,\dd y):=P_Y^{0}(\dd y)\,Q_{X\mid y}(\dd x).
\]

Then
\begin{equation}\label{eq:MainIneq}
   \MMD_{k}\bigl(\tilde P,\tilde Q\bigr)
   \;\le\;
   \sqrt{\kappa_Y}\;
   \E_{Y\sim P_Y^{0}}
        \Bigl[\MMD_{k_X}\bigl(P_{X\mid Y},Q_{X\mid Y}\bigr)\Bigr].
\end{equation}
\end{lemma}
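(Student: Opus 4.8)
The plan is to realise $\MMD_k(\tilde P,\tilde Q)$ as the distance between kernel mean embeddings inside the tensor-product RKHS attached to the product kernel $k=k_X k_Y$, and then exploit two structural facts: the embeddings factorise over the product, and $\tilde P$ and $\tilde Q$ share the marginal $P_Y^0$, so only the conditional parts differ. Concretely, I would use the standard identification $\mathcal H_k\cong\mathcal H_{k_X}\otimes\mathcal H_{k_Y}$, under which the canonical feature map is $(x,y)\mapsto\phi_X(x)\otimes\phi_Y(y)$ with $\phi_X(x)=k_X(x,\cdot)$, $\phi_Y(y)=k_Y(y,\cdot)$.

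The key steps, in order, are as follows. First, write $\mu_{P_{X\mid y}}:=\int_{\mathcal X}\phi_X(x)\,P_{X\mid y}(\dd x)\in\mathcal H_{k_X}$ for the conditional mean embedding (well defined since $k_X\le\kappa_X$), and by Fubini's theorem for Bochner integrals express
\[
  \mu_{\tilde P}=\int_{\mathcal X\times\mathcal Y}\phi_X(x)\otimes\phi_Y(y)\,\tilde P(\dd x,\dd y)
  =\int_{\mathcal Y}\bigl(\mu_{P_{X\mid y}}\otimes\phi_Y(y)\bigr)\,P_Y^0(\dd y),
\]
and similarly for $\mu_{\tilde Q}$ with $Q_{X\mid y}$ in place of $P_{X\mid y}$. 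Subtracting and using linearity of the vector-valued integral gives
\[
  \mu_{\tilde P}-\mu_{\tilde Q}=\int_{\mathcal Y}\bigl(\bigl(\mu_{P_{X\mid y}}-\mu_{Q_{X\mid y}}\bigr)\otimes\phi_Y(y)\bigr)\,P_Y^0(\dd y).
\]
Second, pull the $\mathcal H_k$-norm inside via the triangle inequality for Bochner integrals, factor the tensor norm using $\|a\otimes b\|_{\mathcal H_{k_X}\otimes\mathcal H_{k_Y}}=\|a\|_{\mathcal H_{k_X}}\|b\|_{\mathcal H_{k_Y}}$, and bound $\|\phi_Y(y)\|_{\mathcal H_{k_Y}}=\sqrt{k_Y(y,y)}\le\sqrt{\kappa_Y}$:
\[
  \MMD_k(\tilde P,\tilde Q)=\|\mu_{\tilde P}-\mu_{\tilde Q}\|_{\mathcal H_k}\le\sqrt{\kappa_Y}\int_{\mathcal Y}\|\mu_{P_{X\mid y}}-\mu_{Q_{X\mid y}}\|_{\mathcal H_{k_X}}\,P_Y^0(\dd y).
\]
Third, recognise $\|\mu_{P_{X\mid y}}-\mu_{Q_{X\mid y}}\|_{\mathcal H_{k_X}}=\MMD_{k_X}(P_{X\mid y},Q_{X\mid y})$ by definition, so the right-hand side is $\sqrt{\kappa_Y}\,\mathbb E_{Y\sim P_Y^0}[\MMD_{k_X}(P_{X\mid Y},Q_{X\mid Y})]$, which is exactly \eqref{eq:MainIneq}.

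The only delicate points — and the main (though minor) obstacle — are the measure-theoretic underpinnings: one needs $y\mapsto\mu_{P_{X\mid y}}$ to be strongly measurable with $\sup_y\|\mu_{P_{X\mid y}}\|_{\mathcal H_{k_X}}\le\sqrt{\kappa_X}<\infty$, so that the integrand above is Bochner integrable, and one needs the interchange of the $X$- and $Y$-integrations to be a legitimate instance of Fubini for Bochner integrals. Strong measurability follows from the assumed measurability of $y\mapsto P_{X\mid y}$: for each $h\in\mathcal H_{k_X}$ the map $y\mapsto\langle\mu_{P_{X\mid y}},h\rangle_{\mathcal H_{k_X}}=\int_{\mathcal X}h(x)\,P_{X\mid y}(\dd x)$ is measurable, and $\mathcal H_{k_X}$ is separable (for the kernels in \ref{ass:G1}), giving strong measurability by Pettis's theorem; the uniform norm bound then yields Bochner integrability and licenses the Fubini step. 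I would state these with a one-line justification rather than expand them in full.
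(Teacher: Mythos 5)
Your proposal is correct, but it takes a genuinely different route from the paper. The paper's proof works with the squared MMD: it expands $\MMD_{k}^{2}(\tilde P,\tilde Q)$ via the standard kernel double-integral identity, uses Fubini to reduce the inner $\mathcal X$-integrals to inner products of conditional mean embeddings, so that the squared MMD equals $\iint k_Y(y,y')\,\langle m_P(y)-m_Q(y),\,m_P(y')-m_Q(y')\rangle_{\mathcal H_{k_X}}\,P_Y^0(dy)P_Y^0(dy')$, then bounds the bracket pointwise by Cauchy--Schwarz, uses $0\le k_Y\le\kappa_Y$ to pull out $\kappa_Y$, factorises the remaining double integral into the square of $\E_{Y}[\MMD_{k_X}(P_{X\mid Y},Q_{X\mid Y})]$, and takes square roots. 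You instead work directly at the level of embeddings in the tensor-product RKHS $\mathcal H_k\cong\mathcal H_{k_X}\otimes\mathcal H_{k_Y}$, write $\mu_{\tilde P}-\mu_{\tilde Q}$ as one Bochner integral over $y$ of $\bigl(\mu_{P_{X\mid y}}-\mu_{Q_{X\mid y}}\bigr)\otimes\phi_Y(y)$, and finish with the triangle inequality for vector-valued integrals together with $\|\phi_Y(y)\|_{\mathcal H_{k_Y}}=\sqrt{k_Y(y,y)}\le\sqrt{\kappa_Y}$. Your argument is shorter and marginally more general: it needs only the diagonal bound on $k_Y$ and does not use its non-negativity, which the paper's proof does invoke when passing the pointwise bound through the double integral. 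The price is the Bochner-integral machinery (strong measurability via Pettis and separability of $\mathcal H_{k_X}$, plus measurability of $y\mapsto\MMD_{k_X}(P_{X\mid y},Q_{X\mid y})$ so the right-hand side is well defined), which the paper sidesteps by staying with bounded scalar integrands and ordinary Fubini; your one-line justification of these points is adequate, though in full generality separability of the RKHS is an additional hypothesis beyond \ref{ass:G1} that holds for the continuous kernels on Euclidean domains used throughout the paper.
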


\begin{proof}
Throughout the proof we use this identity of the MMD:
\begin{equation}\label{eq:MMDsq}
   \MMD_{k}^{2}(\mu,\nu)
   =
   \E_{X,X'\sim\mu}k(X,X')
   +\E_{Y,Y'\sim\nu}k(Y,Y')
   -2\,\E_{X\sim\mu,\,Y\sim\nu}k(X,Y).
\end{equation}

Apply~\eqref{eq:MMDsq} with \(\mu=\tilde P\) and
\(\nu=\tilde Q\):
\begin{align}
   \MMD_{k}^{2}(\tilde P,\tilde Q)
   &=
   \underbrace{\E_{(X,Y),(X',Y')\sim\tilde P}K}_{(A)}
   +\underbrace{\E_{(X,Y),(X',Y')\sim\tilde Q}K}_{(B)}
   -2\,\underbrace{\E_{(X,Y)\sim\tilde P,\,(X',Y')\sim\tilde Q}K}_{(C)},
   \label{eq:ThreeTerms}
\end{align}
where we recall that $ k\bigl((x,y),(x',y')\bigr)=k_X(x,x')\,k_Y(y,y')$ is the product kernel. Because \(K\le\kappa_X\kappa_Y<\infty\), all integrands are bounded, and Fubini’s lemma allows us to change integration order freely.
\begin{align}
   (A)
   &=
   \iint_{\mathcal Y^2} P_Y^{0}(\dd y)\,P_Y^{0}(\dd y')
   \iint_{\mathcal X^2} P_{X\mid y}(\dd x)\,P_{X\mid y'}(\dd x')\;
         k_X(x,x')\,k_Y(y,y').
   \label{eq:TermA}
\end{align}
\begin{align}
   (B)
   &=
   \iint_{\mathcal Y^2} P_Y^{0}(\dd y)\,P_Y^{0}(\dd y')
   \iint_{\mathcal X^2} Q_{X\mid y}(\dd x)\,Q_{X\mid y'}(\dd x')\;
         k_X(x,x')\,k_Y(y,y').
   \label{eq:TermB}
\end{align}
\begin{align}
   (C)
   &=
   \iint_{\mathcal Y^2} P_Y^{0}(\dd y)\,P_Y^{0}(\dd y')
   \iint_{\mathcal X^2} P_{X\mid y}(\dd x)\,Q_{X\mid y'}(\dd x')\;
         k_X(x,x')\,k_Y(y,y').
   \label{eq:TermC}
\end{align}

For each \((y,y')\in\mathcal Y^2\) define
\begin{align}
   \Delta(y,y')
   &:=
   \iint_{\mathcal X^2} k_X(x,x')\,
      \Bigl\{
           P_{X\mid y}(\dd x)\,P_{X\mid y'}(\dd x')
          -P_{X\mid y}(\dd x)\,Q_{X\mid y'}(\dd x') \nonumber\\
          &\hspace{7em}
          -Q_{X\mid y}(\dd x)\,P_{X\mid y'}(\dd x')
          +Q_{X\mid y}(\dd x)\,Q_{X\mid y'}(\dd x')
      \Bigr\}.
   \label{eq:DeltaDef} 
\end{align}

Substituting~\eqref{eq:TermA}-\eqref{eq:TermC} into~\eqref{eq:ThreeTerms} yields (because the integrand and measure are symmetric in $(y,y')$):
\begin{equation}\label{eq:MMDJointDelta}
   \MMD_{k}^{2}(\tilde P,\tilde Q)
   =
   \iint_{\mathcal Y^2} k_Y(y,y')\,\Delta(y,y')\,
         P_Y^{0}(\dd y)\,P_Y^{0}(\dd y').
\end{equation}
We have the identity:
\begin{equation}\label{eq:k-to-mean-embedding} 
    \begin{aligned}
        \iint_{\mathcal X^2}
      k_X(x,x')\,
      P_{X\mid y}(\dd x)\,P_{X\mid y'}(\dd x') &= \iint_{\mathcal X^2}
      \bigl\langle\phi_X(x),\phi_X(x')\bigr\rangle
      P_{X\mid y}(\dd x)\,P_{X\mid y'}(\dd x') 
      \\&= \Bigl\langle
      \int_{\mathcal X}\phi_X(x)\,P_{X\mid y}(\dd x),\;
      \int_{\mathcal X}\phi_X(x')\,P_{X\mid y'}(\dd x')
 \Bigr\rangle \\
 &=\bigl\langle m_P(y),\,m_P(y')\bigr\rangle,
    \end{aligned}
\end{equation}
where $m_P(y)$ is the kernel mean embedding of $P_{X\mid y}$: $m_P(y):=m_{P_{X\mid y}}$, and $\phi$ is the feature map. The second equality uses Fubini’s lemma (the integrand is bounded). Similarly, we define $m_Q(y):=m_{Q_{X\mid y}}$. Using similar calculations to that of (\ref{eq:k-to-mean-embedding}) in all components of $\Delta(y,y')$ in (\ref{eq:DeltaDef}), we have
\[
   \Delta(y,y')
   =
   \bigl\langle
      m_P(y)-m_Q(y),\;
      m_P(y')-m_Q(y')
   \bigr\rangle_{\mathcal H_{k_X}},
\]

By Cauchy-Schwarz:
\begin{equation}\label{eq:DeltaBound}
   |\Delta(y,y')|
   \;\le\;
   \|m_P(y)-m_Q(y)\|\;
   \|m_P(y')-m_Q(y')\|.
\end{equation}
By definition, 
\begin{equation}\label{eq:dDef}
   \MMD_{k_X}\bigl(P_{X\mid y},Q_{X\mid y}\bigr)
   =\|m_P(y)-m_Q(y)\| \ge 0.
\end{equation}
Substituting~\eqref{eq:dDef} into~\eqref{eq:DeltaBound} yields
\begin{equation}\label{eq:DeltaFinalBound}
   |\Delta(y,y')|
   \;\le\;   \MMD_{k_X}\bigl(P_{X\mid y},Q_{X\mid y}\bigr)\,\MMD_{k_X}\bigl(P_{X\mid y'},Q_{X\mid y'}\bigr) \qquad \forall (y,y')\in\mathcal Y^2.
\end{equation}

Insert~\eqref{eq:DeltaFinalBound} into~\eqref{eq:MMDJointDelta} and use the
fact that \(k_Y\ge0\):
\begin{align*}
   \MMD_{k}^{2}(\tilde P,\tilde Q) &=\iint_{\mathcal Y^2} k_Y(y,y')\,\Delta(y,y')\,
         P_Y^{0}(\dd y)\,P_Y^{0}(\dd y')
    \\&\le \iint_{\mathcal Y^2} k_Y(y,y')\,\abs{\Delta(y,y')}\,
         P_Y^{0}(\dd y)\,P_Y^{0}(\dd y')
   \\&\le
   \iint_{\mathcal Y^2} k_Y(y,y')\,\MMD_{k_X}\bigl(P_{X\mid y},Q_{X\mid y}\bigr)\,\MMD_{k_X}\bigl(P_{X\mid y'},Q_{X\mid y'}\bigr)\,
         P_Y^{0}(\dd y)\,P_Y^{0}(\dd y').
   \\&\le
   \kappa_Y
   \iint_{\mathcal Y^2} \MMD_{k_X}\bigl(P_{X\mid y},Q_{X\mid y}\bigr)\,\MMD_{k_X}\bigl(P_{X\mid y'},Q_{X\mid y'}\bigr)\,
         P_Y^{0}(\dd y)\,P_Y^{0}(\dd y') \\
   &=
   \kappa_Y
   \Bigl(\int_{\mathcal Y} \MMD_{k_X}\bigl(P_{X\mid y},Q_{X\mid y}\bigr)\,P_Y^{0}(\dd y)\Bigr)^{2}
 \\
   &= \kappa_Y \left(\E_{Y\sim P_Y^{0}}
        \Bigl[\MMD_{k_X}\bigl(P_{X\mid Y},Q_{X\mid Y}\bigr)\Bigr]\right)^2
\end{align*}
Taking square roots on both sides completes the proof.
\end{proof}

\begin{lemma}[Jensen's inequality for the MMD]
\label{lem:MMD-Jensen}
\begin{equation}
    \MMD(P(X),Q(X))\le \int_{\mathcal{Y}}\MMD(P(X\mid y), Q(X\mid y))p(y) \;\dd y
\end{equation}
We call it Jensen's inequality because the LHS equals $\MMD(\E_{Y} P(X\mid Y),\E_{Y}Q(X\mid Y))$, and the RHS is $\E_Y \MMD(P(X\mid Y), Q(X\mid Y))$.
\end{lemma}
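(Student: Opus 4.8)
The plan is to pass to the reproducing kernel Hilbert space and exploit the fact that the kernel mean embedding is \emph{affine} in the mixing measure, so that the result reduces to the triangle inequality for a Bochner integral. Write $\mu_P := \int k(\cdot,x)\,\dd P(x) \in \Hilb_k$ for the mean embedding. The first step is to observe that marginalisation commutes with embedding: since $P(X) = \int_{\mathcal Y} P(X\mid y)\,p(y)\,\dd y$ and likewise for $Q$, and since $k$ is bounded (each $\mu_{P(X\mid y)}$ lies in the ball of radius $\sqrt{\kappa}$ in $\Hilb_k$, with $y\mapsto\mu_{P(X\mid y)}$ measurable), the map $y\mapsto\mu_{P(X\mid y)}$ is Bochner-integrable and
\[
\mu_{P(X)} = \int_{\mathcal Y} \mu_{P(X\mid y)}\,p(y)\,\dd y,
\qquad
\mu_{Q(X)} = \int_{\mathcal Y} \mu_{Q(X\mid y)}\,p(y)\,\dd y.
\]
This is the standard consequence of Fubini's theorem applied to $\langle \mu_{P(X)}, f\rangle_{\Hilb_k} = \int f\,\dd P(X)$ for each $f\in\Hilb_k$, and it is exactly the ``Jensen'' interpretation named in the statement.

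Subtracting the two identities gives
\[
\mu_{P(X)} - \mu_{Q(X)} = \int_{\mathcal Y} \bigl(\mu_{P(X\mid y)} - \mu_{Q(X\mid y)}\bigr)\,p(y)\,\dd y,
\]
and the triangle inequality for the Bochner integral, $\bigl\|\int f(y)\,p(y)\,\dd y\bigr\| \le \int \|f(y)\|\,p(y)\,\dd y$, then yields
\[
\bigl\|\mu_{P(X)} - \mu_{Q(X)}\bigr\|_{\Hilb_k}
\le \int_{\mathcal Y} \bigl\|\mu_{P(X\mid y)} - \mu_{Q(X\mid y)}\bigr\|_{\Hilb_k}\,p(y)\,\dd y.
\]
Identifying the left-hand side with $\MMD\bigl(P(X),Q(X)\bigr)$ and each integrand norm with $\MMD\bigl(P(X\mid y),Q(X\mid y)\bigr)$ completes the argument.

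I do not expect a genuine obstacle: the only technical point is the measurability and Bochner-integrability of $y\mapsto\mu_{P(X\mid y)}$, which follows from boundedness of the kernel (in force throughout via Assumption~\ref{ass:G1}) and which I would dispatch in a single line. The substance of the lemma is simply that the norm is convex and the embedding is affine in the conditioning variable, so after the RKHS reformulation the inequality is immediate.
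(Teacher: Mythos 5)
Your proof is correct, but it takes a different (primal) route from the paper's (dual) one. You work with kernel mean embeddings: the embedding of a mixture is the Bochner integral of the embeddings of the components, so the inequality is just $\bigl\|\int (\mu_{P(X\mid y)}-\mu_{Q(X\mid y)})\,p(y)\,\dd y\bigr\|_{\Hilb_k}\le\int\|\mu_{P(X\mid y)}-\mu_{Q(X\mid y)}\|_{\Hilb_k}\,p(y)\,\dd y$, the triangle inequality for vector-valued integrals. The paper instead uses the IPM representation $\MMD(P,Q)=\sup_{\|f\|_{\Hilb}\le 1}\bigl|\int f\,\dd P-\int f\,\dd Q\bigr|$: for a fixed witness $f$ it swaps the order of integration by Fubini (justified by kernel boundedness), applies the scalar triangle inequality over $\mathcal Y$, bounds the inner difference for each $y$ by the supremum over the unit ball, and only then takes the outer supremum. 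The two arguments encode the same convexity fact; yours is shorter and makes the ``Jensen'' reading explicit (affinity of the embedding plus convexity of the norm), at the cost of having to note measurability and Bochner integrability of $y\mapsto\mu_{P(X\mid y)}$ (weak measurability plus boundedness, which the paper's Assumption~\ref{ass:G1} supplies), while the paper's dual argument avoids any vector-valued integration machinery and needs only scalar Fubini. Either way the conclusion and the required hypotheses are the same.
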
 
\begin{proof}
    For any $f \in \mathcal{H}$ with $\norm{f}_{\mathcal{H}}\le 1$: 
\begin{align*}
    \abs{\int_{\mathcal{X}}f(x)P(x)\;\dd x-\int_{\mathcal{X}}f(x)Q(x)\;\dd x} &= \abs{\int_{\mathcal{X}} \int_{\mathcal{Y}}f(x)P(x\mid y) p(y) \;\dd y\;\dd x-\int_{\mathcal{X}} \int_{\mathcal{Y}}f(x)Q(x\mid y)p(y) \;\dd y\;\dd x} \\
    &=\abs{\int_{\mathcal{Y}} \left[ \int_{\mathcal{X}}f(x)P(x\mid y) \;\dd x-\int_{\mathcal{X}}f(x)Q(x\mid y)\;\dd x\right]p(y)\;\dd y} \\
    &\le \int_{\mathcal{Y}} \abs{ \int_{\mathcal{X}}f(x)P(x\mid y) \;\dd x-\int_{\mathcal{X}}f(x)Q(x\mid y)\;\dd x}p(y)\;\dd y. \\
\end{align*}

Because $|f(x)|\le \|f\|_{\mathcal H}\sqrt{k(x,x)}\le K$
with $K:=\sup_{x\in\mathcal X}\sqrt{k(x,x)}<\infty$, the integrand is
absolutely integrable.  Hence, Fubini’s theorem allows the change in the
order of integration in the second equality. The last inequality follows from triangle inequality. Now by definition of the MMD:
\begin{align*}
\MMD(P(X),Q(X)) &= \sup_{f:\norm{f}_{\mathcal{H}}\le 1}  \abs{\int_{\mathcal{X}}f(x)P(x)\;\dd x-\int_{\mathcal{X}}f(x)Q(x)\;\dd x} \\
&\le \sup_{f:\norm{f}_{\mathcal{H}}\le 1} \int_{\mathcal{Y}} \abs{ \int_{\mathcal{X}}f(x)P(x\mid y) \;\dd x-\int_{\mathcal{X}}f(x)Q(x\mid y)\;\dd x}p(y)\;\dd y 
\end{align*}

By triangle inequality. For every $f:\norm{f}_{\mathcal{H}}\le 1$ and $y\in\mathcal{Y}$, we have
\begin{align*}
    \abs{ \int_{\mathcal{X}}f(x)P(x\mid y) \;\dd x-\int_{\mathcal{X}}f(x)Q(x\mid y) \dd x} 
    &\le  \sup_{g:\norm{g}_{\mathcal{H}}\le 1} \abs{ \int_{\mathcal{X}}g(x)P(x\mid y) \;\dd x-\int_{\mathcal{X}}g(x)Q(x\mid y)\;\dd x} 
    \\&= \MMD(p(X\mid y), Q(X|y)).
\end{align*}
Taking integral over $\mathcal{Y}$ on both sides and then taking $\sup$ over $f:\norm{f}_{\mathcal{H}}\le 1$ finishes the proof.
\end{proof}

We are now ready to prove Theorem~\ref{thm:pseudo-classical}.
\begin{proof}[of Theorem~\ref{thm:pseudo-classical}]
\label{proof:pseudo-classical}
For clarity we collect the notation that governs the pseudo-sampling scheme.  
Given a pair \((w,y)\) and a data set of size \(n\), the (misspecified) posterior predictive distribution of \(X\) is
\[
   \Psi_{n}(\cdot\mid w,y)
      :=\int_{\Theta}\Pi_{\theta}(\cdot\mid w,y)
                       \Pi_{n}(\dd \theta\mid\D_{1:n}),
\]
and its marginal mixture with the true data law is
\[
   \Psi_{n}^{XY}(\dd x,  \dd y)
        :=\int_{\mathcal W}\Psi_{n}(\dd x\mid w,y)
                 p^{0}_{WY}(\dd w,\dd y).
\]
Fix an integer \(m\ge1\).  Conditional on the observed sample \(\D_{1:n}\) we draw, for each \(i=1,\dots,n\),
\[
   \tilde X_{ij}\stackrel{\text{iid}}\sim\Psi_{n}(\cdot\mid W_{i},Y_{i}),
   \qquad j=1,\dots,m,
\]
and collect all pseudo-draws in \(S:=\{\tilde X_{ij}:1\le i\le n,1\le j\le m\}\).  For later bounds we view each observation as generating two probability measures on \(\mathcal X\times\mathcal Y\),
\[
  \mu_{i}:=\Psi_{n}(\cdot\mid W_{i},Y_{i})\delta_{Y_{i}},
  \qquad
  \widehat\mu_{i}:=\frac1m\sum_{j=1}^{m}\delta_{(\tilde X_{ij},Y_{i})},
\]
respectively, the exact posterior predictive and its empirical counterpart based on \(m\) pseudo-samples.  Averaging over \(i\) produces the reference measure and the empirical (“pseudo”) measure
\[
  Q_{n}:=\frac1n\sum_{i=1}^{n}\mu_{i},
  \qquad
  \P^{\mathrm{pseudo}}_{XY}:=\frac1n\sum_{i=1}^{n}\widehat\mu_{i}.
\]
Finally, we define the joint distribution of $(X,Y)$ in the posterior model implied by $\theta^\ast$ as
\[
\Pi_{\theta^\ast}^{XY} (\dd x, \dd y) = \int_{\mathcal W\times\mathcal{Y}}\Pi_{\theta^\ast}(\dd x\mid w,y)\delta_y(\dd y)
                 p^{0}_{WY}(\dd w,\dd y).
\]
By the triangle inequality:
\[
 \MMD_{k}(\P^{\mathrm{pseudo}}_{XY},\P_{XY}^0)
 \le \MMD_{k}(\P^{\mathrm{pseudo}}_{XY},Q_n)
      +\MMD_{k}(Q_n,\Pi_{\theta^\ast}^{XY})
      +\MMD_{k}(\Pi_{\theta^\ast}^{XY},\P_{XY}^0).
\]
Take \(\E_{\D,S}\) on both sides, noting that the second term does not depend on $S$ and the third term does not depend on $\D$ or $S$, we have
\begin{align*}
    \E_{\D,S}\MMD_{k}(\P^{\mathrm{pseudo}}_{XY},\P_{XY}^0) \le \underbrace{\E_{\D,S} \MMD_{k}(\P^{\mathrm{pseudo}}_{XY},Q_n)}_{\text{term A}} &+ \underbrace{\E_{\D}\MMD_{k}(Q_n,\Pi_{\theta^\ast}^{XY})}_{\text{term B}} 
    \\ &+ \underbrace{\MMD_{k}(\Pi_{\theta^\ast}^{XY},\P_{XY}^0)}_{\text{term C}}.
\end{align*}

\emph{Step 1: Bounding term A.}

For any probability measure \(\nu\) on \(\mathcal{X}\times\mathcal{Y}\) we write
\[
      \varphi(\nu):=
      \E_{(X,Y)\sim\nu}\bigl[k\bigl((X,Y),\cdot\bigr)\bigr]
      \in\Hilb_k,
      \qquad
      \MMD_{k}(\nu_1,\nu_2)
        :=\bigl\|\varphi(\nu_1)-\varphi(\nu_2)\bigr\|_{\Hilb_k}.
\]
For a fixed observation pair \((W_i,Y_i)\) let
\[
      \Q_i
          :=\Psi_n\bigl(\cdot\mid W_i,Y_i\bigr)\delta_{Y_i}
      \quad\Bigl(\text{probability on }\mathcal{X}\times\mathcal{Y}\Bigr),
\]
and write its embedding
\(
  \varphi(\Q_i)\in\Hilb_k.
\)
Given \((\D,S)\equiv\bigl\{(W_i,Y_i);(\tilde X_{ij})_{j=1}^{m}\bigr\}\)
we form the empirical measure based on the $m$ pseudo-samples

\[
      \widehat{\Q}_i
      :=\frac1m\sum_{j=1}^{m}\delta_{(\tilde X_{ij},Y_i)},
      \qquad
      \varphi\bigl(\widehat{\Q}_i\bigr)
        =\frac1m\sum_{j=1}^{m}
           k\bigl((\tilde X_{ij},Y_i),\cdot\bigr)\in\Hilb_k.
\]

Conditioned on the data set $\D_{1:n}$, the random vectors
\[
      \varphi_{ij}
        :=k\bigl((\tilde X_{ij},Y_i),\cdot\bigr)\in\Hilb_k,
      \qquad j=1,\dots,m,
\]
are i.i.d.\ with mean
\(
   \E_{S\mid\D}[\varphi_{ij}]
    =\varphi(\Q_i),
\)
because each \(\tilde X_{ij}\) is drawn from
\(\Psi_n(\cdot\mid W_i,Y_i)\).
Moreover
\(
   \|\varphi_{ij}\|_{\Hilb_k}^{2}
      =k\bigl((\tilde X_{ij},Y_i),(\tilde X_{ij},Y_i)\bigr)
      \le \kappa_X \kappa_Y,
\)
so \(\|\varphi_{ij}\|_{\Hilb_k}\le \sqrt{\kappa_X \kappa_Y}\).

We have \(\E_{S\mid\D}[\varphi_{ij}-\varphi(\Q_i)]=0\) and
\(\|\varphi_{ij}-\varphi(\Q_i)\|_{\Hilb_k}\le\|\varphi_{ij}\|_{\Hilb_k}+\|\varphi(\Q_i)\|_{\Hilb_k}\le 2 \sqrt{\kappa_X \kappa_Y}\).

The difference of embeddings admits the representation
\[
      \varphi\bigl(\widehat{\Q}_i\bigr)
      -\varphi(\Q_i)
      =\frac1m\sum_{j=1}^{m}\left[\varphi_{ij}-\varphi(\Q_i)\right].
\]
Conditional on $\D$ the $\varphi_{ij}-\varphi(\Q_i)$ are independent and centred, so
\[
\begin{aligned}
   \E_{S\mid\D}
      \left[
         \left\|
            \varphi(\widehat{\Q}_i)
            -\varphi(\Q_i)
         \right\|_{\Hilb_k}^{2}
      \right]
   &=\E_{S\mid\D}
      \left[
        \frac1{m^{2}}
        \left\|
           \sum_{j=1}^{m}\varphi_{ij}-\varphi(\Q_i)
        \right\|^{2}_{\Hilb_k}
      \right]                                                \\
   &=\frac1{m^{2}}
      \sum_{j=1}^{m}
        \E_{S\mid\D}\left[\|\varphi_{ij}-\varphi(\Q_i)\|^{2}_{\Hilb_k}\right]
      \\&\le \frac{4\kappa_X \kappa_Y}{m}.
\end{aligned}
\]
Observe that $\E_{S\mid\D}\left[\varphi(\widehat{\Q}_i)-\varphi(\Q_i)\right] = \E_{S\mid\D}\left[\frac{1}{m}\sum_{j=1}^m \varphi_{ij}-\varphi(\Q_i)\right] = 0$. Since $\varphi(\widehat{\Q}_i)$ are independent across $i$ given $\D$, we have
\begin{equation}
\begin{aligned}
        \E_{S\mid \D}
   \bigl[\MMD_{k}^2(\P^{\mathrm{pseudo}}_{XY},Q_{n})\bigr] &= \E_{S\mid \D}\left[\norm{\varphi\left(\frac{1}{n}\sum_{i=1}^n\widehat{\Q}_i\right) - \varphi\left(\frac{1}{n}\sum_{i=1}^n\Q_i\right)}_{\Hilb_k}^2\right] 
   \\&= \E_{S\mid \D}\left[\norm{
        \frac1n\sum_{i=1}^{n}
          \Bigl(
            \varphi(\widehat{\Q}_i)
            -\varphi(\Q_i)
          \Bigr)
     }_{\Hilb_k}^2\right]
     \\&=\frac{1}{n^2}\sum_{i=1}^{n}\E_{S\mid \D}\left[
          \norm{
            \varphi(\widehat{\Q}_i)
            -\varphi(\Q_i)}_{\Hilb_k}^2\right]
    \\&\le \frac{4\kappa_X \kappa_Y}{nm}.
\end{aligned}
\end{equation}
Jensen’s inequality implies
\[
    \E_{S\mid\D}
        \Bigl[
           \MMD_{k}(\P^{\mathrm{pseudo}}_{XY},Q_{n})
        \Bigr]
      \le \sqrt{   \E_{S\mid\D}
      \left[
         \MMD_{k}^2(\P^{\mathrm{pseudo}}_{XY},Q_{n})
      \right]} \le  \frac{2\sqrt{\kappa_X \kappa_Y}}{\sqrt{nm}}.
\]
The RHS is deterministic. Taking expectation with respect to $\D$ yields
\begin{equation}
    \label{eq:term-A}
    \E_{\D,S}
   \bigl[\MMD_{k}(\P^{\mathrm{pseudo}}_{XY},Q_{n})\bigr] = \E_{\D}\E_{S\mid \D}
   \bigl[\MMD_{k}(\P^{\mathrm{pseudo}}_{XY},Q_{n})\bigr] \le \frac{2\sqrt{\kappa_X \kappa_Y}}{\sqrt{nm}} \le\frac{2\sqrt{\kappa_X \kappa_Y}}{\sqrt{n}}.
\end{equation}
\emph{Step 2: Bounding term B.}

By triangle inequality:
\begin{align*}
        \E_{\D}\left[\MMD_k(Q_n,\Pi_{\theta^{\ast}}^{XY})\right]&\le  \E_{\D}\left[\MMD_{k}\left(Q_n,
           \frac{1}{n}\sum_{i=1}^n\Pi_{\theta^{\ast}}(\cdot\mid W_i,Y_i) \delta_{Y_i}
         \right)\right]  \qquad \text{term B.1}
    \\&+  \E_{\D}
      \left[
        \MMD_{k}\left(
          \frac1n\sum_{i=1}^{n}
             \Pi_{\theta^{\ast}}(\cdot\mid W_i,Y_i)\delta_{Y_i},
          \Pi_{\theta^{\ast}}^{XY}
        \right)
      \right] \qquad \text{term B.2}.
\end{align*}

We will first bound term B.1 by the misspecified Bernstein von-Mises theorem established by \citet{kleijn2012bernstein}. For each $(W_i,Y_i)$
\begin{equation*}
    \begin{aligned}
         &\quad \MMD_{k_X}\left(
           \Psi_{n}(\cdot\mid W_i,Y_i),
           \Pi_{\theta^{\ast}}(\cdot\mid W_i,Y_i)
         \right) \\&= \MMD_{k_X}\left(\int_\Theta \Pi_{\theta}(\cdot\mid W_i, Y_i)\Pi_n(\dd\theta\mid\D), \int_\Theta \Pi_{\theta^{\ast}}(\cdot\mid W_i, Y_i) \Pi_n(\dd\theta\mid\D)\right)
         \\ &\le \int_{\Theta} \MMD_{k_X}\left(\Pi_{\theta}(\cdot\mid W_i, Y_i), \Pi_{\theta^{\ast}}(\cdot\mid W_i, Y_i) \right) \Pi_n(\dd\theta\mid\D)  \qquad\text{(Lemma~\ref{lem:MMD-Jensen})}
         \\&= \int_{B_n+B_n^C}\MMD_{k_X}\left(\Pi_{\theta}(\cdot\mid W_i, Y_i), \Pi_{\theta^{\ast}}(\cdot\mid W_i, Y_i) \right) \Pi_n(\dd\theta\mid\D),
    \end{aligned}
\end{equation*}
where $B_n:=\{\theta:\|\theta-\theta^\ast\|\le M_n/\sqrt n\}$, and $M_n$ is any sequence such that $M_n \to \infty$. We first choose $M_n$ such that $M_n\le \sqrt{n} \sup_{\Theta_\rho} \|\theta-\theta^\ast\|$ for all $n\ge 1$. Then $B_n \subset \Theta_\rho$ for all $n$.
By the MMD Lipschitz condition~\ref{A2}:
\begin{equation}
\begin{aligned}
        \MMD_{k_X}\left(\Pi_{\theta}(\cdot\mid W_i, Y_i), \Pi_{\theta^{\ast}}(\cdot\mid W_i, Y_i) \right) &\le L(W_i,Y_i)\norm{\theta-\theta^\ast}
        \\&\le \frac{M_n}{\sqrt{n}}L(W_i, Y_i), \qquad \forall \theta\in B_n.
\end{aligned}
\end{equation}
Therefore, 
\begin{equation}
\label{eq:Bn}
    \int_{B_n}\MMD_{k_X}\left(\Pi_{\theta}(\cdot\mid W_i, Y_i), \Pi_{\theta^{\ast}}(\cdot\mid W_i, Y_i) \right) \Pi_n(\dd\theta\mid\D) \le \frac{M_n}{\sqrt{n}}L(W_i, Y_i)\Pi_n(B_n) \le \frac{M_n}{\sqrt{n}}L(W_i, Y_i).
\end{equation}
We denote $\tilde r_n: =\Pi_n(B_n^C) \le 1$, then by Assumption~\ref{A1}, $\Pi_n$ satisfies posterior contraction \citep[Theorem 3.1]{kleijn2012bernstein}:
\begin{equation}
\label{eq:BnC}
    \int_{B_n^C}\MMD_{k_X}\left(\Pi_{\theta}(\cdot\mid W_i, Y_i), \Pi_{\theta^{\ast}}(\cdot\mid W_i, Y_i) \right) \Pi_n(\dd\theta\mid\D) \le \sqrt{\kappa_X} \Pi_n(B_n^C) = \sqrt{\kappa_X} \tilde r_n.
\end{equation}
Combining \eqref{eq:Bn} and \eqref{eq:BnC}, we have
\begin{equation*}
    \MMD_{k_X}\left(
           \Psi_{n}(\cdot\mid W_i,Y_i),
           \Pi_{\theta^{\ast}}(\cdot\mid W_i,Y_i)
         \right) \le \frac{M_n}{\sqrt{n}}L(W_i, Y_i) + \sqrt{\kappa_X} \tilde  r_n.
\end{equation*}
By Lemma~\ref{lem:MMD-conditional}:
\begin{equation*}
    \MMD_{k}\left(
           \Psi_{n}(\cdot\mid W_i,Y_i)\delta_{Y_i},
           \Pi_{\theta^{\ast}}(\cdot\mid W_i,Y_i)\delta_{Y_i}
         \right) \le \frac{\sqrt{\kappa_Y} M_n}{\sqrt{n}}L(W_i, Y_i) + \sqrt{\kappa_X\kappa_Y}\tilde r_n.
\end{equation*}
By the triangle inequality in $\Hilb_k$ and the linearity of kernel mean embedding $\varphi(\cdot)$:
\begin{equation*}
\begin{aligned}
        \MMD_{k}\left(\frac{1}{n}\sum_{i=1}^n P_i, \frac{1}{n}\sum_{i=1}^n Q_i\right) &= \norm{\varphi\left(\frac{1}{n}\sum_{i=1}^n P_i\right) - \varphi\left(\frac{1}{n}\sum_{i=1}^n Q_i\right)}_{\Hilb_k} 
        \\&\le \frac{1}{n}\sum_{i=1}^n \norm{\varphi(P_i) - \varphi(Q_i)}_{\Hilb_k} 
        \\&= \frac{1}{n}\sum_{i=1}^n \MMD_{k}(P_i, Q_i).
\end{aligned}
\end{equation*}
Therefore,
\begin{equation*}
        \MMD_{k}\left(\underbrace{\frac{1}{n}\sum_{i=1}^n
           \Psi_{n}(\cdot\mid W_i,Y_i) \delta_{Y_i}}_{=Q_n},
           \frac{1}{n}\sum_{i=1}^n\Pi_{\theta^{\ast}}(\cdot\mid W_i,Y_i) \delta_{Y_i}
         \right) \le \frac{\sqrt{\kappa_Y} M_n}{\sqrt{n}}\bar{L}(W, Y) +\sqrt{\kappa_X\kappa_Y} \tilde r_n,
\end{equation*}
where $\bar{L}(W, Y) = \frac{1}{n}\sum_{i=1}^n L(W_i, Y_i)$. Taking expectation over $\D = \{(W_i,Y_i)\}$ gives
\begin{equation}
\label{eq:Qn-1'}
     \E_{\D}\MMD_{k}\left(Q_n,
           \frac{1}{n}\sum_{i=1}^n\Pi_{\theta^{\ast}}(\cdot\mid W_i,Y_i) \delta_{Y_i}
         \right) \le \frac{\sqrt{\kappa_Y} M_n}{\sqrt{n}}\underbrace{\E_{W,Y\sim P^0}[L(W, Y)]}_{:= C_L <\infty \text{ by } \ref{A2}} +\sqrt{\kappa_X\kappa_Y} \E_{\D}[\tilde r_n].
\end{equation}
Equation~\eqref{eq:Qn-1'} holds for any $M_n$ such that $M_n\le \sqrt{n} \sup_{\Theta_{\rho}}\norm{\theta-\theta^\ast}$ for all $n$. We can scale with $M'_n := M_n/\max\{\sqrt{\kappa_Y}C_L,1\}$, then $M'_n$ is divergent, and $M'_n \le M_n \le \sqrt{n} \sup_{\Theta_{\rho}}\norm{\theta-\theta^\ast}$. Substituting $M_n$ with $M'_n$ in the arguments above yields
\begin{align*}
    \E_{\D}\MMD_{k}\left(Q_n,
           \frac{1}{n}\sum_{i=1}^n\Pi_{\theta^{\ast}}(\cdot\mid W_i,Y_i)\delta_{Y_i}
         \right) &\le \frac{\sqrt{\kappa_Y} C_L M'_n}{\sqrt{n}} +\sqrt{\kappa_X\kappa_Y} \E_{\D}[\tilde r'_n] 
         \\&\le \frac{M_n}{\sqrt{n}}+ \sqrt{\kappa_X\kappa_Y} \E_{\D}[\tilde r'_n] 
\end{align*}
By Assumption~\ref{A1}, $r_n:= \E_{\D}[\tilde r'_n] \le 1$ satisfies \citet[Theorem 3.1]{kleijn2012bernstein}:
\begin{equation*}
    r_n = \E_{\D}[\tilde r'_n] = \E_{\D}\left[\Pi_n(\norm{\theta-\theta^\ast}\ge M'_n/\sqrt{n})\right] \to 0.
\end{equation*}
Therefore,
\begin{equation}
\label{eq:Qn-1}
    \E_{\D}\MMD_{k}\left(Q_n,
           \frac{1}{n}\sum_{i=1}^n\Pi_{\theta^{\ast}}(\cdot\mid W_i,Y_i) \delta_{Y_i}
         \right) \le \frac{M_n}{\sqrt{n}} +\sqrt{\kappa_X \kappa_Y} r_n.
\end{equation}
The condition $M_n \le \sqrt{n} \sup_{\Theta_{\rho}}\norm{\theta-\theta^\ast}$ can be dropped. For any divergent $M_n$, we let $M''_n:=\min\{M_n, \sqrt{n} \sup_{\Theta_{\rho}}\norm{\theta-\theta^\ast}\}$ for each $n$, then $M''_n$ is also divergent, and $M''_n\le M_n$ for all $n$. Applying \eqref{eq:Qn-1} with $M''_n$ shows that \eqref{eq:Qn-1} holds for any divergent sequence $M_n$.

Next, we bound term B.2 via finite sample convergence of iid observations. For each observation \((W_i,Y_i)\) we denote  
\[
      \mu_i:=\Pi_{\theta^{\ast}}\bigl(\cdot\mid W_i,Y_i\bigr)
              \delta_{Y_i},
\]
Then 
\begin{align*}
    \E_{(W_i,Y_i)\sim p_{WY}^{0}}[\mu_i(\dd x,\dd y)] = \int_{\mathcal{W}\times\mathcal{Y}} \Pi_{\theta^{\ast}}\bigl(\dd x\mid w,y\bigr)
              \delta_{y}(\dd y)p^0_{WY}(\dd w, \dd y) = \Pi_{\theta^\ast}^{XY}(\dd x, \dd y)
\end{align*}

Define the RKHS embedding
\[
      \xi_i :=
      \int_{\mathcal{X}\times\mathcal{Y}}k((x,y),\cdot)\mu_i(\dd x,\dd y)
           \in\Hilb_k.
\]
Because the pairs \((W_i,Y_i)\) are i.i.d.\ under the data-generating
law \(p_{WY}^{0}\), the vectors \(\xi_1,\dots,\xi_n\) are i.i.d.\ in \(\Hilb_k\)
with a common mean
\[
      \xi_\ast := \int_{\mathcal{X}\times\mathcal{Y}}k((x,y),\cdot) \Pi_{\theta^{\ast}}^{XY}(\dd x,\dd y) =\int_{\mathcal{X}\times\mathcal{Y}}k((x,y),\cdot)
        \E_{W_i,Y_i\sim p_{WY}^{0}}
                    \bigl[\mu_i(\dd x,\dd y)\bigr] =
      \E_{W_i,Y_i\sim p_{WY}^{0}}\bigl[\xi_i\bigr].
\]
The last equality follows by Fubini's theorem since $k$ is bounded. By the independence of $\xi_1,\dots,\xi_n$, we have
\[
  \E_{\D}
      \norm{\frac1n\sum_{i=1}^n\xi_i-\xi_\ast}_{\Hilb_k}^{2}
   = \frac1{n^{2}}
     \sum_{i=1}^{n}\E_{\D}\|\xi_i-\xi_\ast\|_{\Hilb_k}^{2}
   \le \frac{4\kappa_X \kappa_Y}{n}.
\]
By Jensen's inequality: 
\[
\E_{\D}
      \norm{\frac1n\sum_{i=1}^n\xi_i-\xi_\ast}_{\Hilb_k}\le \frac{2\sqrt{\kappa_X \kappa_Y}}{\sqrt{n}}.
\]
The LHS is exactly the definition of the expectation of the MMD:
\begin{equation}\label{eq:Qn-2}
          \E_{\D}
      \Bigl[
        \MMD_{k}\Bigl(
          \frac1n\sum_{i=1}^{n}
             \Pi_{\theta^{\ast}}(\cdot\mid W_i,Y_i)\delta_{Y_i},
          \Pi_{\theta^{\ast}}^{XY}
        \Bigr)
      \Bigr]
      \le\frac{2\sqrt{\kappa_X \kappa_Y}}{\sqrt{n}}.
\end{equation}
Combining term B.1 \eqref{eq:Qn-1} and term B.2 \eqref{eq:Qn-2} gives
\begin{equation}
\label{eq:term-B}
    \E_{\D}\left[\MMD_k(Q_n,\Pi_{\theta^{\ast}}^{XY})\right]\le \frac{2\sqrt{\kappa_X \kappa_Y}}{\sqrt{n}}+ \frac{M_n}{\sqrt{n}} +\sqrt{\kappa_X\kappa_Y} r_n.
\end{equation}
\emph{Step 3: Bounding term C.}

By the chain rule of KL divergence: 
\[
  \KL\bigl(p_{0}\Vert p_{\theta^\ast}\bigr)
  =
  \KL\bigl(p^{0}_{W,Y}\Vert p^{\theta^\ast}_{W,Y}\bigr)
  +\E_{W,Y\sim p^{0}_{W,Y}}
      \KL\bigl(
        \Pi_0(\cdot\mid W,Y)\Vert\Pi_{\theta^\ast}(\cdot\mid W,Y)
      \bigr).
\]
Since the first term is non-negative, we have
\[
\E_{W,Y\sim p^{0}_{W,Y}}
      \KL\left(
        \Pi_0(\cdot\mid W,Y)\|\Pi_{\theta^\ast}(\cdot\mid W,Y)
      \right) \le \KL\bigl(p_{0}\Vert p_{\theta^\ast}\bigr).
\]
For any $(W,Y)$, we have 
\begin{equation*}
    \MMD_{k_X}(\Pi_0, \Pi_{\theta^\ast})\le 2\sqrt{\kappa_X}\norm{\Pi_0 -\Pi_{\theta^\ast}}_{\TV} \le 2\sqrt{\kappa_X} \sqrt{1-\exp(-\KL(\Pi_0\Vert\Pi_{\theta^\ast}))},
\end{equation*}
by the Bretagnolle--Huber inequality \citep{bretagnolle1979estimation}.
Since $\sqrt{1-\exp{(-x)}}$ is concave, we have, by Jensen's inequality:
\begin{equation*}
\begin{aligned}
        \E_{W,Y\sim p^0_{WY}} \MMD_{k_X}\left(\Pi_0 (X\mid W,Y), \Pi_{\theta^\ast}(X\mid W,Y)\right) &\le 2\sqrt{\kappa_X}\E_{W,Y\sim p^0_{WY}} \left[\sqrt{1-\exp(-\KL(\Pi_0\Vert\Pi_{\theta^\ast}))}\right]
        \\ &\le 2\sqrt{\kappa_X}\sqrt{1-\exp\left(-\E_{W,Y\sim p^0_{WY}}\KL(\Pi_0\Vert\Pi_{\theta^\ast}) \right)} 
        \\ &\le 2\sqrt{\kappa_X}\sqrt{1-\exp\left(-\KL(p_0\Vert p_{\theta^\ast}) \right)}.
\end{aligned}
\end{equation*}
Therefore,
\begin{equation*}
\begin{aligned}
        &\MMD_{k}\bigl(\P_{XY}^0,\Pi_{\theta^\ast}^{XY}\bigr) \\
        &=\MMD_{k}\Bigl(
               \int_{\mathcal W\times\mathcal Y}
                     \Pi_{0}(\dd x\mid w,y)\delta_{y}(\dd y)
                     p_{WY}^{0}(\dd w,\dd y),
               \int_{\mathcal W\times\mathcal Y}
                     \Pi_{\theta^\ast}(\dd x\mid w,y)\delta_{y}(\dd y)
                     p_{WY}^{0}(\dd w,\dd y)
             \Bigr)                                                          \\
        &\le \E_{(W,Y)\sim p_{WY}^{0}}
               \MMD_{k}\bigl(
                   \Pi_{0}(\cdot\mid W,Y)\delta_{Y},
                   \Pi_{\theta^\ast}(\cdot\mid W,Y)\delta_{Y}
               \bigr)    
               \quad\text{(Lemma~\ref{lem:MMD-Jensen})}                \\
        &\le  \sqrt{\kappa_Y}
            \E_{(W,Y)\sim p_{WY}^{0}}
               \MMD_{k_X}\bigl(
                   \Pi_{0}(\cdot\mid W,Y),
                   \Pi_{\theta^\ast}(\cdot\mid W,Y)
               \bigr) \quad\text{(Lemma~\ref{lem:MMD-conditional})}  
        \\&\le 2\sqrt{\kappa_X \kappa_Y}
              \sqrt{1-\exp\bigl(-\KL(p_{0}\Vert p_{\theta^\ast})\bigr)} .
\end{aligned}
\end{equation*}
Finally, we have the identity
\begin{align*}
\KL(p_{0}\Vert p_{\theta^\ast}) = \KL\left(p^0_X(X)f_N^0(W-X)f_E^0(Y-g^0(X)) \Vert p_X(X)f_N(W-X)f_E(Y-g(X,\theta^\ast))\right).    
\end{align*}
Since $W-X = N \ind X$, and $Y\mid X \ind W$, we have, by the chain rule
\begin{align*}
    \KL(p_{0}\Vert p_{\theta^\ast}) = \KL(p^0_X\Vert p_X) + \KL(f_N^0\Vert f_N)+ \E_{X\sim p_X^{0}}
        \KL\Bigl(p^0_{Y\mid X}
           \Big\Vert
           p^{\theta^\ast}_{Y\mid X}
        \Bigr) = \KL_X + \KL_N + \KL_E := \KL_{\ast},
\end{align*}
which gives 
\begin{equation}
    \label{eq:term-C}
    \MMD_{k}\bigl(\P_{XY}^0,\Pi_{\theta^\ast}^{XY}\bigr) \le 2\sqrt{\kappa_X \kappa_Y}
              \sqrt{1-\exp\bigl(-\KL_{\ast})}.
\end{equation}
Combining term A \eqref{eq:term-A}, term B \eqref{eq:term-B}, and term C \eqref{eq:term-C} and recalling that $\kappa := \kappa_X\kappa_Y$ proves Theorem~\ref{thm:pseudo-classical}.
\end{proof}

\subsection{Proof of Proposition~\ref{cor:pseudo-marginalX-classical}}
\label{proof:pseudo-marginalX-classical}
\begin{proof}[of Proposition~\ref{cor:pseudo-marginalX-classical}]
The proof mirrors that of Theorem~\ref{thm:pseudo-classical}, with the
joint kernel $k=k_{X}k_{Y}$ replaced everywhere by $k_{X}^2$ and every step using Lemma~\ref{lem:MMD-conditional} omitted; for brevity we will only record the changes.

\emph{Term A (Monte Carlo error).}
For each $(W_{i},Y_{i})$ let
$\mu_{i}:=\Psi_{n}(\cdot\mid W_{i},Y_{i})$ and
$\hat\mu_{i}:=m^{-1}\sum_{j}\delta_{\tilde X_{ij}}$. For $k_{X}^{2}$ the kernel mean embedding is
$\varphi_{k_{X}^{2}}(\mu)=\E_{X,X'\sim\mu}[k_{X}(X,\cdot)k_{X}(X',\cdot)]$,
whose norm is bounded by
$\kappa_{X}$. The same Hoeffding inequality in the RKHS $\Hilb_{k_{X}^{2}}$ yields
\[
   \E_{\D,S}\MMD_{k_{X}^{2}}\bigl(\P^{\mathrm{pseudo}}_{X},Q_{n}^{X}\bigr)
      \le \frac{2\kappa_{X}}{\sqrt {n}},
      \qquad
      Q_{n}^{X}:=\frac1n\sum_{i}\mu_{i}.
\]

\emph{Term B (posterior contraction).}
The Lipschitz envelope for $\MMD_{k_X^2}$ is
$2\kappa_{X}L(W,Y)$; every appearance of $\sqrt{\kappa_{X}}$ in the
$\MMD_{k_{X}}$ bound is replaced by $\kappa_{X}$.  Since no $Y$-kernel
is used, all factors $\kappa_{Y}$ disappear, which gives
\[
   \E_{\D}\bigl[\MMD_{k_{X}^{2}}(Q_{n}^X,\Pi_{\theta^{\ast}}^{X})\bigr]
      \le \frac{2\kappa_{X}}{\sqrt n}
           +\frac{M_{n}}{\sqrt n}
           +\kappa_{X} r_{n}.
\]

\emph{Term C (model misspecification).}
Applying the same Bretagnolle--Huber inequality \citep{bretagnolle1979estimation} directly to
$\Pi_{0}(X\mid W,Y)$ and $\Pi_{\theta^{\ast}}(X\mid W,Y)$ yields
\[
   \MMD_{k_{X}^2}\bigl(\P_{X}^{0},\Pi_{\theta^{\ast}}^{X}\bigr)
      \le 2\kappa_{X}
             \sqrt{1-\exp(-\KL_{\ast})}.
\]

Combining the three bounds with the triangle
inequality finishes the proof. 
\end{proof}

\subsection{Proof of Proposition~\ref{cor:pseudo-berkson}}
\label{proof:pseudo-berkson}
\begin{proof}[of Proposition~\ref{cor:pseudo-berkson}]
We first prove the joint bound. This proof follows the structure of Theorem~\ref{thm:pseudo-classical},
highlighting the points at which the Berkson design requires additional justification.


\emph{Term A (Monte Carlo error).}
Since $(W_i, Y_i)$ are still i.i.d., $\mu_{i}$ and $\widehat\mu_{i}$ are defined exactly as in the classical proof.
The bound~(\ref{eq:term-A}) is unchanged.

\emph{Term B (posterior contraction).} \textbf{B.1} Contracting the misspecified posterior still relies on the results of \citet{kleijn2012bernstein}, so Inequality~(\ref{eq:Qn-1}) holds. \textbf{B.2} The same variance calculation gives (\ref{eq:Qn-2}). Combining \textbf{B.1} and \textbf{B.2} recovers (\ref{eq:term-B}) verbatim.

\emph{Term C (model misspecification).} Since $X-W = N \ind W$ and $Y\mid X \ind W$, by the chain rule we have \begin{align*}
    \KL(p_0\Vert p_{\theta^\ast}) &= \KL\left(p_W^0(W)f_N^0(X-W)f_E^0(Y-g^0(X))\Vert  p_W^0(W)f_N(X-W)f_E(Y-g(X,\theta^\ast))\right) 
    \\&=\KL(f_N^0\Vert f_N) + \E_{W\sim p_{W}^0}\E_{X\sim f_N^0(X\mid W)}\left[\KL(p^0_{Y\mid X}\Vert p^{\theta^\ast}_{Y\mid X})\right]
    \\&= \KL_N + \KL_E
    \\&= \KL_{\ast}.
\end{align*}
Applying the total-variation-to-KL bound as in
(\ref{eq:term-C}) therefore gives
\[
   \MMD_{k}\bigl(\P_{XY}^0,\Pi^{XY}_{\theta^{\ast}}\bigr)
      \le 2\sqrt{\kappa}\sqrt{1-\exp(-\KL_{\ast})}.
\]
Inserting the bounds for Terms~A--C into the triangle inequality used at the start of the proof of Theorem~\ref{thm:pseudo-classical} yields the stated inequality, with $\KL_{\ast}$ now equal to $\KL_{N}+\KL_{E}$. All other constants and residual terms are identical to those in the classical case. The marginal-$X$ bound follows exactly as in Proposition~\ref{cor:pseudo-marginalX-classical}.
\end{proof}

\subsection{Proof of Lemma~\ref{thm:nopseudo}}
\label{proof:nopseudo}
We split this proof into two parts: the marginal bound \eqref{eq:nopseudo-marginal} and the joint bound \eqref{eq:nopseudo-joint}.

\begin{proof}[of the Joint MMD bound~\eqref{eq:nopseudo-joint}]
We first prove the joint bound in the Berkson ME model, where
\[
X = W + N,\quad Y = g^0(X)+E, \quad N\sim F_{N}^{0}, \quad E\sim F_{E}^{0},\quad  N, E\ind W, \;N\ind E.
\]
By triangle inequality:
\[
\MMD_k(\P_{XY}^0, \hat{\P}^n_{WY}) \leq \MMD_k(\P_{XY}^0, \P_{WY}^0) + \MMD_k(\P_{WY}^0, \hat{\P}^n_{WY}).
\]

For the first term, we write
$$
\MMD_k(\P_{XY}^0, \P_{WY}^0) = \left\| \varphi(\P_{XY}^0) -  \varphi(\P_{WY}^0) \right\|_{\Hilb_k}.
$$
Let $\Phi(x,y) := k((x,y), \cdot) = k_X(x,\cdot)k_Y(y,\cdot) \in \Hilb_k$ denote the feature map of the product kernel $k((x,y), (x',y')) = k_X(x,x')k_Y(y,y')$, and let $\Phi_X:=k_X(x,\cdot)$ and $\Phi_Y:=k_Y(y,\cdot)$ be the respective feature maps of $k_X$ and $k_Y$. Then we can write the kernel mean embeddings $\varphi(\cdot)$ of $\P_{XY}^0$ and $\P_{WY}^0$ as
\begin{equation}
\label{kernel-mean-embeddings}
    \begin{aligned}
         \varphi(\P_{XY}^0) &= \E_{W,N,E}[\Phi(W + N, Y)] = \E_{W,N,E}\left[\Phi_X(W + N)\Phi_Y(Y)\right], \\
 \varphi(\P_{WY}^0) &= \E_{W,N,E}[\Phi(W, Y)] = \E_{W,N,E}\left[\Phi_X(W)\Phi_Y(Y)\right],
    \end{aligned}
\end{equation}
where $Y= g^0(W + N) + E$.
Taking their difference and applying the reproducing property:
\begin{equation}
\label{eq:main-ineq}
\begin{aligned}
        \MMD_k(\P_{XY}^0, \P_{WY}^0) &=  \norm{\E_{W,N,E} \Bigl[\left(\Phi_X(W + N)-\Phi_X(W)\right)\Phi_Y(Y)\Bigr]}_{\Hilb_k} \\
        &\le \E_{W,N,E} \left[\norm{\left(\Phi_X(W + N)-\Phi_X(W)\right)\Phi_Y(Y)}_{\Hilb_k} \right]
        \\&=\E_{W,N,E} \left[\norm{\Phi_X(W + N)-\Phi_X(W)}_{\Hilb_{k_X}}\norm{\Phi_Y(Y)}_{\Hilb_{k_Y}} \right]
        \\&\le \sqrt{\E_{W,N}\left[\norm{\Phi_X(W + N)-\Phi_X(W)}_{\Hilb_{k_X}}^2\right]} \sqrt{\E_{W,N,E}\left[\norm{\Phi_Y(Y)}_{\Hilb_{k_Y}}^2\right]}
        \\&\le \sqrt{\kappa_Y} \sqrt{\E_{W,N}\left[\norm{\Phi_X(W + N)-\Phi_X(W)}_{\Hilb_{k_X}}^2\right]} 
\end{aligned}
\end{equation}
The first inequality follows from Jensen's inequality; the second equality holds because $\Phi_X(\cdot)\Phi_Y(\cdot)$ is a pure tensor in the RKHS $\Hilb_{k_X}\otimes \Hilb_{k_Y}$ with product kernel; the second inequality is Cauchy-Schwarz, and the final inequality follows by $k_Y(y,y')\le \kappa_Y\quad\forall y,y'$.
Recall that $k_X$ is translation-invariant, i.e., $k_X (x,x') = \psi (x-x')$ with $\psi(0) = \kappa_X$, therefore
\begin{equation}
\label{eq:diag}
    \begin{aligned}
        \E_{W,N}\left[\norm{\Phi_X(W + N)-\Phi_X(W)}_{\Hilb_{k_X}}^2\right] &= \E_{W,N}\left[k_X(W,W) + k_X(W+N, W+N)- 2 k_X(W+N,W)\right]
        \\&=2\kappa_X -2\E_N[\psi(N)].
    \end{aligned}
\end{equation}
Now 
\begin{equation}
\label{eq:ME-MMD-derviation}
    \MMD_{k_X}^2(F_N^0,\delta_0) = \E_{N,N'}[\psi(N-N')]+\kappa_X - 2\E_N[\psi(N)].
\end{equation}
We note the following identities:
\begin{equation*}
\begin{aligned}
    \E_{N,N'}[\psi(N-N')] &= \E_{N,N'}[k(N,N')] = \E_{N,N'}[\langle\Phi_X(N),\Phi_X(N')\rangle] 
    \\&= \langle\E_N[\Phi_X(N)],\E_{N'}[\Phi_X(N')]\rangle = \norm{\E_N[\Phi_X(N)]}_{\Hilb_{k_X}}^2,
\end{aligned}
\end{equation*}
and
\begin{equation*}
    \E_N[\psi(N)] =\E_N[k_X(N,0)] = \E_N[ \langle\Phi_X(N),\Phi_X(0)\rangle] = \langle\E_N[\Phi_X(N)],\Phi_X(0)\rangle.
\end{equation*}
By Cauchy-Schwarz:
\begin{equation*}
    \E_N[\psi(N)]^2 \le  \norm{\E_N[\Phi_X(N)]}_{\Hilb_{k_X}}^2 \norm{\Phi_X(0)}_{\Hilb_{k_X}}^2 = \kappa_X \E_{N,N'}[\psi(N-N')]
\end{equation*}
Hence
\begin{equation*}
    \begin{aligned}
        (\kappa_X - \E_N[\psi(N)])^2 &= \kappa_X^2 + \E_N[\psi(N)]^2 -2\kappa_X \E_N[\psi(N)]
        \\&\le \kappa_X^2 +  \kappa_X \E_{N,N'}[\psi(N-N')]-2\kappa_X \E_N[\psi(N)]
        \\&=\kappa_X(\kappa_X+\E_{N,N'}[\psi(N-N')] - 2 \E_N[\psi(N)]) \qquad \text{by }\eqref{eq:ME-MMD-derviation}
        \\&=\kappa_X \MMD_{k_X}^2(F_N^0,\delta_0)
    \end{aligned}
\end{equation*}
Taking square-root and substituting into~\eqref{eq:diag}:
\begin{equation*}
    \E_{W,N}\left[\norm{\Phi_X(W + N)-\Phi_X(W)}_{\Hilb_{k_X}}^2\right] \le 2\sqrt{\kappa_X}\MMD_{k_X}(F_N^0,\delta_0)
\end{equation*}
By \eqref{eq:main-ineq}:
\begin{equation}
\label{eq:first-term}
    \MMD_k(\P_{XY}^0, \P_{WY}^0) \le \sqrt{2}\kappa_Y^{1/2}\kappa_X^{1/4} \sqrt{\MMD_{k_X}(F_N^0,\delta_0)}
\end{equation}
For the second term, since $\D = (W_i,Y_i)_{i=1}^n$ consists of iid samples  $\{(W_i,Y_i)\}_{i=1}^n\overset{i.i.d.}{\sim} \P_{WY}^0$, we have, by \citet[Lemma S6]{alquier2024universal}:
\begin{equation}
\label{eq:second-term}
    \E_{\D}\MMD_k(\hat{\P}^n_{WY}, \P_{WY}^0) \le \frac{\sqrt{\kappa}}{\sqrt{n}}.
\end{equation}
Here we generalized their last inequality in the proof where \citet{alquier2024universal} assumed $k(\cdot,\cdot)\le1$ but we have $k(\cdot,\cdot)\le \kappa$. Combining \eqref{eq:first-term} and \eqref{eq:second-term} finishes the proof for the Berkson case.

For the classical ME case, recall that 
\begin{equation*}
    W=X+N,\quad Y=g^0(X)+E,\quad X\sim \P_{X}^{0}, \quad N\sim F_{N}^{0}, \quad E\sim F_{E}^{0},\quad N,E\ind X,\; N\ind E. 
\end{equation*}
We only need to substitute every $W$ with $X$ in \eqref{kernel-mean-embeddings}, \eqref{eq:main-ineq}, and \eqref{eq:diag}, which gives
\begin{equation*}
\begin{aligned}
        \MMD_k(\P_{XY}^0, \P_{WY}^0) &\le \sqrt{\kappa_Y} \sqrt{\E_{X,N}\left[\norm{\Phi_X(X + N)-\Phi_X(X)}_{\Hilb_{k_X}}^2\right]} 
        \\ &\le  \sqrt{2}\kappa_Y^{1/2}\kappa_X^{1/4} \sqrt{\MMD_{k_X}(F_N^0,\delta_0)}.
\end{aligned}
\end{equation*}
Combining with \eqref{eq:second-term}, which only relies on $\{(W_i,Y_i)\}_{i=1}^n\overset{i.i.d.}{\sim} \P_{WY}^0$, completes the proof.
\end{proof}

\begin{proof}[of the Marginal MMD bound~\eqref{eq:nopseudo-marginal}]
By the triangle inequality and then taking expectation over $\D$,
\[
\E_{\D}\MMD_{k_X^2}(\P_X^0,\hat{\P}_W^n)
\le
\MMD_{k_X^2}(\P_X^0,\P_W^0)
+
\E_{\D}\MMD_{k_X^2}(\P_W^0,\hat{\P}_W^n).
\]
For the second term, since $W_{1:n}\simiid \P_W^0$ and $\sup_x k_X^2(x,x)=\kappa_X^2$, by \citet[Lemma~S6]{alquier2024universal}
\begin{equation}
\label{eq:marginal-sampling}
\E_{\D}\MMD_{k_X^2}(\P_W^0,\hat{\P}_W^n)\le\frac{\kappa_X}{\sqrt{n}}.
\end{equation}
It remains to bound the population term $\MMD_{k_X^2}(\P_X^0,\P_W^0)$. Let $\Phi_X^{(2)}$ be the feature map of $k_X^2(\cdot,\cdot)=k_X\otimes k_X$. As in \eqref{eq:main-ineq} (with $k$ replaced by $k_X^2$ and no $Y$ factor),
\[
\MMD_{k_X^2}(\P_X^0,\P_W^0)
=\norm{\E_{X,W}[\Phi_X^{(2)}(X)-\Phi_X^{(2)}(W)]}_{\Hilb_{k_X^2}}
\le
\sqrt{\E_{X,W}\|\Phi_X^{(2)}(X)-\Phi_X^{(2)}(W)\|^2_{\Hilb_{k_X^2}}}.
\]
By the same expansion as \eqref{eq:diag} and bounded translation-invariance of $k_X$,
\begin{equation}
\label{eq:diag-marginal}
\begin{aligned}
\E_{X,W}\norm{\Phi_X^{(2)}(X)-\Phi_X^{(2)}(W)}_{\Hilb_{k_X^2}}^2
&=\E_{X,W}[k_X^2(X,X)+k_X^2(W,W)-2k_X^2(X,W)] \\
&=2\kappa_X^2-2\E_{X,W}[k_X(X,W)^2].
\end{aligned}
\end{equation}
Under either Berkson ($X=W+N$) or classical ($W=X+N$) error, $k_X(X,W)=\psi(W-X)=\psi(\pm N)$, hence $k_X^2(X,W)=\psi(N)^2$ and
\[
\E_{X,W}\|\Phi_X^{(2)}(X)-\Phi_X^{(2)}(W)\|^2
=2\kappa_X^2-2\E_N[\psi(N)^2].
\]
Next, exactly as in \eqref{eq:ME-MMD-derviation}-\eqref{eq:first-term} but with $k_X$ replaced by $k_X^2$, we have
\[
\MMD_{k_X^2}^2(F_N^0,\delta_0)
=\E_{N,N'}[\psi(N-N')^2]+\kappa_X^2-2\E_N[\psi(N)^2],
\]
Using the same Cauchy-Schwarz argument as before, we have
\[
\E\|\Phi_X^{(2)}(X)-\Phi_X^{(2)}(W)\|^2
\le2\kappa_X\MMD_{k_X^2}(F_N^0,\delta_0),
\]
and hence
\begin{equation}
\label{eq:marginal-pop}
\MMD_{k_X^2}(\P_X^0,\P_W^0)
\le
\sqrt{\E\|\Phi_X^{(2)}(X)-\Phi_X^{(2)}(W)\|^2}
\le
\sqrt{2\kappa_X\MMD_{k_X^2}(F_N^0,\delta_0)}.
\end{equation}
Finally, combining \eqref{eq:marginal-pop} and \eqref{eq:marginal-sampling} completes the proof of \eqref{eq:nopseudo-marginal}.
\end{proof}

\subsection{Proof of Theorem~\ref{thm:consistency-main}}
\label{proof:consistency-main}
Before proving Theorem~\ref{thm:consistency-main}, we first state and prove the following lemma, which is an application of the classical concentration theory for martingales in Banach spaces by \citet{pinelis1994optimum}.
\begin{lemma}[Conditional \& unconditional Bernstein bound in Hilbert space]
\label{lem:PinelisConditional}
Let $\mathcal H$ be a real separable Hilbert space. Let $V_1,\dots,V_n:\Omega\to\mathcal H$ satisfy
\begin{enumerate}[label=(\roman*)]
\item\label{it:indep}
      $V_1,\dots,V_n$ are \emph{independent} under
      $\Pr_{\DP}(\cdot\mid\D,S)$;
\item\label{it:mean0}
      $\E_{\DP}[V_i\mid\D,S]=0$ for every $i$;
\item\label{it:bound}
      $\|V_i\|_{\mathcal H}\le B$ for a deterministic
      constant $B<\infty$.
\end{enumerate}
Then for every $\varepsilon>0$
\begin{equation}\label{eq:PinelisCond}
   \Pr_{\DP}\Bigl(
      \Bigl\|
        \frac1n\sum_{i=1}^{n}V_i
      \Bigr\|_{\mathcal H}>\varepsilon
      \Bigm|\D,S
   \Bigr)
   \le
   2\exp\Bigl(
        -\frac{n\varepsilon^{2}}{4B^{2}}
     \Bigr).
\end{equation}
A simple consequence is that the same exponential bound holds under the \emph{full} law:
\begin{equation}\label{eq:PinelisUncond}
   \Pr\Bigl(
      \Bigl\|
        \frac1n\sum_{i=1}^{n}V_i
      \Bigr\|_{\mathcal H}>\varepsilon
   \Bigr)
   \le
   2\exp\Bigl(
        -\frac{n\varepsilon^{2}}{4B^{2}}
     \Bigr),
\end{equation}
which means that
\[
\Bigl\|\frac1n\sum_{i=1}^{n}V_i\Bigr\|_{\Hilb} \overset{\Pr}{\rightarrow}0.
\]
\end{lemma}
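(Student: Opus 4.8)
The plan is to obtain \eqref{eq:PinelisCond} as a \emph{conditional} statement via Pinelis's martingale concentration inequality in smooth Banach spaces \citep{pinelis1994optimum}, and then to deduce \eqref{eq:PinelisUncond} (and the stated convergence) by integrating over $(\D,S)$. First I would fix $(\D,S)$ and work inside the probability space $(\Omega,\mathscr F,\Pr_{\DP}(\cdot\mid\D,S))$. Put $M_0:=0$, $M_k:=\sum_{i=1}^k V_i$ for $1\le k\le n$, and let $\mathcal F_k:=\sigma(V_1,\dots,V_k)$. Hypotheses \ref{it:indep} and \ref{it:mean0} make $(M_k,\mathcal F_k)_{k=0}^n$ an $\mathcal H$-valued martingale under $\Pr_{\DP}(\cdot\mid\D,S)$ whose difference sequence is $d_i=V_i$, and by \ref{it:bound} we have $\|d_i\|_{\mathcal H}\le B$ almost surely. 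Since a real separable Hilbert space is $(2,1)$-smooth in the sense of Pinelis, his inequality gives, for every $r>0$,
\[
\Pr_{\DP}\Bigl(\max_{1\le k\le n}\|M_k\|_{\mathcal H}>r\ \Bigm|\ \D,S\Bigr)\ \le\ 2\exp\!\Bigl(-\frac{r^2}{2\sum_{i=1}^n B^2}\Bigr)\ =\ 2\exp\!\Bigl(-\frac{r^2}{2nB^2}\Bigr).
\]
Taking $r=n\varepsilon$ and using $n^2\varepsilon^2/(2nB^2)=n\varepsilon^2/(2B^2)\ge n\varepsilon^2/(4B^2)$ then yields \eqref{eq:PinelisCond}; the constant $4B^2$ in the statement is deliberately loose (the argument in fact gives $2B^2$) and is kept only for uniform presentation with the other bounds.

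Next I would remove the conditioning. By the law of total probability, $\Pr(A)=\E_{\D,S}\bigl[\Pr_{\DP}(A\mid\D,S)\bigr]$ for the event $A=\{\|n^{-1}\sum_i V_i\|_{\mathcal H}>\varepsilon\}$, and since the right-hand side of \eqref{eq:PinelisCond} is a deterministic constant it passes unchanged through $\E_{\D,S}$, giving \eqref{eq:PinelisUncond}. Finally, for each fixed $\varepsilon>0$, $2\exp(-n\varepsilon^2/(4B^2))\to 0$ as $n\to\infty$, which is precisely the assertion $\|n^{-1}\sum_{i=1}^n V_i\|_{\mathcal H}\overset{\Pr}{\rightarrow}0$.

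The proof is short once the right tool is identified, so the points requiring care are essentially bookkeeping. I would check: (a) the triangular-array aspect — the laws of the $V_i$ depend on $n$ and on $(\D,S)$, so the sequence changes with $n$ — but this is harmless because $B$ is uniform in $i$ and $n$, so Pinelis's inequality applies for each fixed $n$; (b) measurability and Bochner-integrability of the $\mathcal H$-valued maps $V_i$ (needed even to make $\E_{\DP}[V_i\mid\D,S]=0$ meaningful), which follow from separability of $\mathcal H$ together with $\|V_i\|_{\mathcal H}\le B$; and (c) that the natural filtration genuinely turns $(M_k)$ into a martingale under the conditional law, which is immediate from independence and the zero-mean property. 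As a remark, the weaker conclusion $\|n^{-1}\sum_i V_i\|_{\mathcal H}\overset{\Pr}{\rightarrow}0$ by itself would also follow from the elementary estimate $\E_{\DP}\!\bigl[\|n^{-1}\sum_i V_i\|_{\mathcal H}^2\mid\D,S\bigr]=n^{-2}\sum_i\E_{\DP}\!\bigl[\|V_i\|_{\mathcal H}^2\mid\D,S\bigr]\le B^2/n$ and Markov's inequality; the martingale route is needed only for the exponential rate in \eqref{eq:PinelisCond}-\eqref{eq:PinelisUncond}.
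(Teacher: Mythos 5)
Your proposal is correct and follows essentially the same route as the paper: condition on $(\D,S)$, view $\sum_{i\le k}V_i$ as a Hilbert-space-valued martingale with bounded differences, apply the concentration theory of Pinelis (1994), and then integrate the deterministic bound over $(\D,S)$ to get \eqref{eq:PinelisUncond} and the convergence in probability. The only substantive difference is which result of Pinelis you invoke: you cite the ready-made Hoeffding/Azuma-type bound for martingales in $(2,D)$-smooth spaces (Theorem~3.5 there, with $D=1$ for a Hilbert space), which immediately yields $2\exp\{-n\varepsilon^{2}/(2B^{2})\}$ and hence the stated (looser) bound with $4B^{2}$, whereas the paper starts from the exponential-moment inequality (Theorem~3.1) and carries out the Chernoff optimisation by hand, choosing $\lambda=n\varepsilon/(2B^{2})$ and using $e^{u}-1-u\le u^{2}$ for $0\le u<1$, which is why its constant is $4B^{2}$ and why it restricts to $\varepsilon<2B$ (the case $\varepsilon\ge 2B$ being trivial since $\|n^{-1}\sum_i V_i\|_{\Hilb}\le B$). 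Your version is marginally cleaner — sharper constant, no case split — while the paper's is more self-contained in that it only uses the general exponential-moment form; your bookkeeping remarks (measurability via separability and boundedness, the martingale property under the conditional law, uniformity of $B$ in $n$) are exactly the right points to check and are handled correctly.
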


\begin{proof}
Conditioning on $(\D,S)$, we define a $\mathcal H$-valued martingale by
\begin{equation}
    f_j = \begin{cases}
        0, \qquad &j=0 \\
        \frac{1}{n}\sum_{i=1}^{j}V_i, \qquad &1\le j\le n \\
        \frac{1}{n}\sum_{i=1}^{n}V_i, \qquad & j>n.
    \end{cases}
\end{equation}
Let $\mathcal{F}_j = \sigma(V_1,\dots,V_j,\D,S)$, then the sequence $\{f_j,\mathcal{F}_j\}_{j=1}^\infty$ forms a martingale under $\Pr_{\DP}(\cdot\mid\D,S)$ by the independence of $V_i$. Its differences are
$d_j:=f_j-f_{j-1}=n^{-1}V_j$,
which satisfy $\norm{d_j}_{\Hilb}\le B/n$ for $j\le n$ and $\norm{d_j}_{\Hilb}\equiv 0$ for $j>n$. Let $f^\ast := \sup_{j\ge 0}\norm{f_j}_{\Hilb}$, we have $f^\ast  \ge \norm{f_n}_{\Hilb} =\norm{\frac{1}{n}\sum_{i=1}^n V_i}_{\Hilb}$ almost surely. Because $\mathcal H$ is a Hilbert space, it is a $(2,1)$-smooth Banach space. Here $(2,1)$-smooth means that the parallelogram identity holds: for every $x,y\in\Hilb$, we have $\|x+y\|_{\Hilb}^2+\|x-y\|_{\Hilb}^2\le 2\|x\|_{\Hilb}^2 + 2\|y\|_{\Hilb}^2$. Applying \citet[Theorem 3.1]{pinelis1994optimum} gives
\begin{equation}
\begin{aligned}
         \Pr_{\DP}\bigl(f^{\ast}\ge\varepsilon\mid\D,S\bigr) &\le 2 \exp \left\{-\lambda \varepsilon+\left\|\sum_{j=1}^{\infty} \E_{j-1}\left(\exp \left(\lambda\|d_j\|_{\Hilb}\right)-1-\lambda\|d_j\|_{\Hilb}\right)\right\|_{\infty}\right\} \\
         &= 2 \exp \left\{-\lambda \varepsilon+\left\|\sum_{j=1}^{n} \E_{j-1}\left(\exp \left(\lambda\|d_j\|_{\Hilb}\right)-1-\lambda\|d_j\|_{\Hilb}\right)\right\|_{\infty}\right\} ,
\end{aligned}
\end{equation}
where $\norm{\cdot}_{\infty}$ represents the essential supremum of the enclosed random variable, and it is required that $\E\bigl[e^{\lambda\norm{d_j}_{\Hilb}}\bigr]<\infty$.

Fix $\varepsilon>0$ with $\varepsilon<2B$. We set
$$
   \lambda:=\frac{n\varepsilon}{2B^{2}}
   \quad\Longrightarrow\quad
   \lambda\norm{d_j}_{\Hilb}\le\frac{\varepsilon}{2B}<1
   \quad\text{for all }j,
$$
so the exponential moments
$\E\bigl[e^{\lambda\norm{d_j}_{\Hilb}}\bigr]<\infty$.

Using $\exp(u)-1-u\le u^{2}$ for $0\le u<1$ and $\norm{d_j}_{\Hilb}\le B/n$, we have
$$
   \sum_{j=1}^{n}
   \E_{j-1}\bigl[\exp(\lambda\norm{d_j}_{\Hilb})-1-\lambda\norm{d_j}_{\Hilb}\bigr]
   \le
   \lambda^{2}
   \sum_{j=1}^{n}\E_{j-1}\norm{d_j}_{\Hilb}^{2}
   \le
   \lambda^{2}
   n\Bigl(\frac{B}{n}\Bigr)^{2}
   =\frac{\lambda^{2}B^{2}}{n}.
$$
The RHS is deterministic, so taking the essential supremum $\norm{\cdot}_{\infty}$ does not change the bound. Since
$$
   \exp\left\{-\lambda \varepsilon+\frac{\lambda^{2}B^{2}}{n}\right\}
   =\exp\left\{-\frac{n\varepsilon^{2}}{2B^{2}}
     +\frac{1}{n}\Bigl(\frac{n\varepsilon}{2B^{2}}\Bigr)^{2}B^{2}\right\}
   =\exp\left(-\frac{n\varepsilon^{2}}{4B^{2}}\right),
$$
we have
\begin{equation*}
       \Pr_{\DP}\Bigl(
      \Bigl\|
        \frac1n\sum_{i=1}^{n}V_i
      \Bigr\|_{\mathcal H}>\varepsilon
      \Bigm|\D,S
   \Bigr)
   \le\Pr_{\DP}\bigl(f^{\ast}\ge\varepsilon\mid\D,S\bigr) \le
   2\exp\Bigl(
        -\frac{n\varepsilon^{2}}{4B^{2}}
     \Bigr).
\end{equation*}
This is exactly \eqref{eq:PinelisCond}. Since the RHS is deterministic once $\varepsilon$ is chosen, taking the expectation with respect to
$\Pr_{\D,S}$ preserves the right-hand side, yielding
\eqref{eq:PinelisUncond}.
\end{proof}

We are now ready to prove Theorem~\ref{thm:consistency-main}.
\begin{proof}[of Theorem~\ref{thm:consistency-main}]
\emph{Step (a)}

We first recall the following notation:
\begin{equation*}
   \mathcal{P}_{XY}^{\DP}=\frac1n\sum_{i=1}^{n}\P_{XY\mid W_i}^{\DP}, \qquad
   \mathcal{P}_{X}^{\DP}=\frac1n\sum_{i=1}^{n}\P_{X\mid W_i}^{\DP}; 
\end{equation*}
\begin{equation*}
    \P^{\mathrm{base}}_{XY,i}=\frac{c}{c+m}\Q_{XY,i}+\frac{1}{c+m}\sum_{k=1}^m \delta_{\bigl(\tilde{X}_{ik},Y_i\bigr)}, \qquad \P^{\mathrm{base}}_{X,i}=\frac{c}{c+m}\Q_{X,i}+\frac{1}{c+m}\sum_{k=1}^m \delta_{\tilde{X}_{ik}},   
\end{equation*}
Then $\P^{\mathrm{base}}_{XY,i} = \E_{\DP,i}\left[\P_{XY\mid W_i}^{\DP}\mid \D,S\right]$, and $\P^{\mathrm{base}}_{X,i} = \E_{\DP,i}\left[\P_{X\mid W_i}^{\DP}\mid \D,S\right]$.
Furthermore, we define
\begin{equation*}
   \P^{\mathrm{base}}_{XY}:=\frac1n\sum_{i=1}^{n}\P^{\mathrm{base}}_{XY,i}, \qquad
   \P^{\mathrm{base}}_{X}:=\frac1n\sum_{i=1}^{n}\P^{\mathrm{base}}_{X,i}.
\end{equation*}
We also define the MMD quantities
\begin{align}
   M_n(\theta)
      &:=\MMD_k\bigl(\mathcal{P}_{XY}^{\DP},
          \mathcal{P}_{X}^{\DP} \P_{g(X, \theta)}\bigr),\label{eq:MnNEW}\\[3pt]
  \tilde M_n(\theta)
      &:=\MMD_k\bigl(\mathcal{P}_{XY}^{\DP},
          \P_X^{\mathrm{base}} \P_{g(X, \theta)}\bigr),\label{eq:MncondNEW}\\[3pt]
   M_n^\ast(\theta)
      &:=\MMD_k\bigl(\P^{\mathrm{base}}_{XY},
          \P_X^{\mathrm{base}} \P_{g(X, \theta)}\bigr).\label{eq:phinNEW}
\end{align}
By triangle inequality and \citet[Lemma 2]{alquier2024universal}, we have, for all $\theta\in\Theta$:
\begin{equation*}
    \abs{M_n(\theta) - \tilde M_n(\theta)}\le \MMD_{k}(\mathcal{P}_X^{\DP}\P_{g(X, \theta)},\P_X^{\mathrm{base}} \P_{g(X, \theta)})\le \Lambda\MMD_{k_X^{2}}\bigl(\mathcal{P}_{X}^{\DP},\P_X^{\mathrm{base}}\bigr)
\end{equation*}
The RHS does not depend on $\theta$, so 
\begin{equation*}
    \sup_{\theta\in\Theta}\abs{M_n(\theta) - \tilde M_n(\theta)}\le\Lambda\MMD_{k_X^{2}}\bigl(\mathcal{P}_{X}^{\DP},\P_X^{\mathrm{base}}\bigr)
\end{equation*}
Let $V_i:=\varphi_{k_X^{2}}(\P_{X\mid W_i}^{\DP}) - \varphi_{k_X^{2}}(\P^{\mathrm{base}}_{X,i})$, then
\begin{equation*}
    \MMD_{k_X^{2}}\bigl(\mathcal{P}_{X}^{\DP},\P_X^{\mathrm{base}}\bigr) =\norm{\varphi\left(\frac1n\sum_{i=1}^n \P_{X\mid W_i}^{\DP}\right) - \varphi\left(\frac1n\sum_{i=1}^n \P^{\mathrm{base}}_{X,i}\right)}_{\Hilb_{k_X^2}} = \norm{\frac1n\sum_{i=1}^{n}V_i}_{\Hilb_{k_X^2}}
\end{equation*}
Since $V_1,\dots,V_n$ are independent conditional on $(\D,S)$, $\E_{\DP}[V_i\mid \D,S] = 0$, and $\norm{V_i}_{\Hilb_{k_X^2}} \le 2\kappa_X$, we get, by Lemma~\ref{lem:PinelisConditional}:
\[
\Bigl\|\frac1n\sum_{i=1}^{n}V_i\Bigr\|_{\Hilb_{k_X^2}} \overset{\Pr}{\rightarrow}0.
\]
Therefore,
\begin{equation}
    \sup_{\theta\in\Theta}\abs{M_n(\theta) - \tilde M_n(\theta)}\overset{\Pr}{\rightarrow}0.
\end{equation}
Again, by triangle inequality, we have
\begin{equation*}
    \sup_{\theta\in\Theta}\abs{\tilde M_n(\theta) -  M_n^\ast(\theta)}\le \MMD_{k}(\mathcal{P}_{XY}^{\DP},\P^{\mathrm{base}}_{XY})
\end{equation*}
Similarly, we let $U_i: =\varphi_{k}(\P_{XY\mid W_i}^{\DP}) - \varphi_{k}(\P^{\mathrm{base}}_{XY,i})$ and apply Lemma~\ref{lem:PinelisConditional} with $\E_{\DP}[U_i\mid\D,S] = 0$ and $\norm{U_i}_{\Hilb_k}\le2\sqrt{\kappa}$, we have
\begin{equation}
\label{eq:step-a}
    \sup_{\theta\in\Theta}\abs{\tilde M_n(\theta) -  M_n^\ast(\theta)}\overset{\Pr}{\rightarrow}0.
\end{equation}
Since 
\begin{equation*}
    \sup_{\theta\in\Theta}\abs{M_n(\theta)-M_n^*(\theta)}\le \sup_{\theta\in\Theta}\abs{M_n(\theta) - \tilde M_n(\theta)} + \sup_{\theta\in\Theta}\abs{\tilde M_n(\theta) -  M_n^\ast(\theta)},
\end{equation*}
We have 
\begin{equation}
    \sup_{\theta\in\Theta}\abs{M_n(\theta) -  M_n^\ast(\theta)}\overset{\Pr}{\rightarrow}0.
\end{equation}
\emph{Step (b)} By the same argument as that in step (a), we have
\begin{equation}
\label{eq:step-b}
\begin{aligned}
    \sup_{\theta\in\Theta}\abs{M_n^\ast(\theta) - M(\theta)} &\le  \Lambda \MMD_{k_X^2}(\P^{\mathrm{base}}_{X}, \P_{X}^{\infty}) + \MMD_{k}(\P^{\mathrm{base}}_{XY}, \P_{XY}^{\infty})
\end{aligned}
\end{equation}
By condition~\eqref{B4}, the RHS converges to zero in $\Pr_{\D,S}$-probability, so the LHS also converges to zero in $\Pr_{\D,S}$-probability (neither $M_n^\ast$ nor $M$ depends on the random DP realizations under $\Pr_{\DP}$). Combining \eqref{eq:step-a} and \eqref{eq:step-b} and using the triangle inequality, we have
\begin{equation}
\label{eq:M_n-uniform}
    \sup_{\theta\in\Theta}\abs{M_n(\theta) - M(\theta)} \overset{\Pr}{\rightarrow}0.
\end{equation}
By \citet[Theorem~2.1]{newey1994large}, we obtain $\hat\theta_n\xrightarrow{\Pr}\theta^{\dagger}$.
\end{proof}
\subsection{Proof of Proposition~\ref{prop:cond}}
\label{proof:prop-cond}
\begin{proof}[of Proposition~\ref{prop:cond}]
    We first state the following inequality on the $\MMD$. For any $0\le\alpha,\beta\le 1$ with $\alpha+\beta=1$ and probability measures $P_1,P_2,Q_1,Q_2$:
    \begin{equation}
    \label{eq:master-decomp}
        \begin{aligned}
            \MMD_k (\alpha P_1 + \beta P_2, \alpha Q_1 + \beta Q_2) &= \norm{\varphi(\alpha P_1 + \beta P_2) - \varphi(\alpha Q_1 + \beta Q_2)}_{\Hilb_k} 
            \\ &= \norm{\alpha\left[\varphi(P_1)-\varphi(Q_1)\right]+\beta\left[\varphi(P_2)-\varphi(Q_2)\right]}_{\Hilb_k} 
            \\&\le\alpha\norm{\varphi(P_1)-\varphi(Q_1)}_{\Hilb_k} + \beta \norm{\varphi(P_2)-\varphi(Q_2)}_{\Hilb_k}
            \\&=\alpha \MMD_k(P_1, Q_1) +\beta\MMD_k(P_2,Q_2)
        \end{aligned}
    \end{equation}
Now we show part~\ref{prop:1.1}.

For equation~\eqref{eq:pseudo-with-prior} with non-vanishing prior effect, we recall         
\begin{equation*}
    \P_{XY}^{\infty} = \frac{c}{c+m}\Q_{XY}^\infty+\frac{m}{c+m}\Pi_{\theta^\ast}^{XY}, \qquad  \P_{X}^{\infty} = \frac{c}{c+m}\Q_{X}^\infty+\frac{m}{c+m}\Pi_{\theta^\ast}^{X}.
\end{equation*}
By \eqref{eq:master-decomp} we can decompose
\begin{equation}
\label{eq:base-decomp}
    \MMD_{k}(\P^{\mathrm{base}}_{XY},\P_{XY}^\infty)\le \frac{c}{c+m} \MMD_k\left(\frac{1}{n}\sum_{i=1}^n \Q_{XY,i}, \Q_{XY}^\infty\right) + \frac{m}{c+m}\MMD_{k}\left(\P^{\mathrm{pseudo}}_{XY}, \Pi_{\theta^\ast}^{XY}\right)   
\end{equation}
The first term converges to zero in probability by condition \ref{prop:D}. For the second term, we recall in the proof of Theorem~\ref{thm:pseudo-classical} for the classical ME model (or Proposition~\ref{cor:pseudo-berkson} for Berkson ME) that, by term A~\eqref{eq:term-A} and term B~\eqref{eq:term-B}, we have, for any $M_n\to\infty$,
\begin{equation}
    \E_{\D,S}\left[\MMD_k\left(\P^{\mathrm{pseudo}}_{XY}, \Pi_{\theta^\ast}^{XY}\right)\right] \le \frac{4\sqrt{\kappa}}{\sqrt{n}}+\frac{M_n}{\sqrt{n}}+\sqrt{\kappa}\,r_n
\end{equation}
where $r_n\to 0$ as $n\to\infty$. Taking $M_n = \log n$, we have $\E_{\D, S}\left[\MMD_k\left(\P^{\mathrm{pseudo}}_{XY}, \Pi_{\theta^\ast}^{XY}\right)\right]\to0$ as $n\to \infty$.
By Markov's inequality, we have
\begin{equation}
    \Pr\left(\MMD_k\left(\P^{\mathrm{pseudo}}_{XY}, \Pi_{\theta^\ast}^{XY}\right)>\varepsilon\right) \le \frac{\E_{\D, S}\left[\MMD_k\left(\P^{\mathrm{pseudo}}_{XY}, \Pi_{\theta^\ast}^{XY}\right)\right]}{\varepsilon}.
\end{equation}
The RHS converges to zero as $n\to\infty$, which proves $\MMD_k\left(\P^{\mathrm{pseudo}}_{XY}, \Pi_{\theta^\ast}^{XY}\right)\overset{\Pr}{\rightarrow}0$. Substituting this into \eqref{eq:base-decomp} gives 
\begin{equation*}
    \MMD_k\left(\P^{\mathrm{base}}_{XY},\P_{XY}^\infty\right)\overset{\Pr}{\rightarrow}0 \qquad \text{as }n\to\infty.
\end{equation*}
The same argument applied to $\P^{\mathrm{base}}_{X}$ and $\Pi_{\theta^\ast}^X$ combined with Proposition~\ref{cor:pseudo-marginalX-classical} (classical) or Proposition~\ref{cor:pseudo-berkson} (Berkson) shows
\begin{equation*}
    \MMD_{k_X^2}\left(\P^{\mathrm{base}}_{X}, \P_{X}^\infty\right)\overset{\Pr}{\rightarrow}0.
\end{equation*}

When $c/m\to 0$ as $n\to\infty$, we can decompose
\begin{equation}
\begin{aligned}
    \MMD_{k}(\P^{\mathrm{base}}_{XY},\Pi_{\theta^\ast}^{XY}) &\le \frac{c}{c+m} \MMD_k\left(\frac{1}{n}\sum_{i=1}^n \Q_{XY,i}, \Pi_{\theta^\ast}^{XY}\right) + \frac{m}{c+m}\MMD_{k}\left(\P^{\mathrm{pseudo}}_{XY}, \Pi_{\theta^\ast}^{XY}\right)   
    \\&\le\underbrace{\frac{2\kappa c}{c+m}}_{\to 0\text{ as }m/c\to\infty}+\MMD_{k}\left(\P^{\mathrm{pseudo}}_{XY}, \Pi_{\theta^\ast}^{XY}\right)  
\end{aligned}
\end{equation}
The convergence of the second term gives $ \MMD_{k}(\P^{\mathrm{base}}_{XY},\Pi_{\theta^\ast}^{XY}) \overset{\Pr}{\rightarrow} 0$. Similarly, we have $ \MMD_{k_X^2}(\P^{\mathrm{base}}_{X},\Pi_{\theta^\ast}^{X}) \overset{\Pr}{\rightarrow} 0$.

Next, we show \ref{prop:1.2}. Since $(W_i,Y_i)\overset{i.i.d.}{\sim} \P_{WY}^0$, we have, by \cite[Lemma S6]{alquier2024universal},
\begin{equation}
    \E_{\D}\left[\MMD_{k}\left(\frac{1}{n}\sum_{i=1}^n\delta_{\left(W_i,Y_i\right)},\P_{WY}^0\right)\right] \le \frac{\sqrt{\kappa}}{\sqrt{n}}.
\end{equation}
Again, by Markov's inequality, 
\begin{equation}
\label{eq:nopseudo-convergence}
\MMD_{k}\left(\frac{1}{n}\sum_{i=1}^n\delta_{\left(W_i,Y_i\right)},\P_{WY}^0\right) \overset{\Pr}{\rightarrow}0
\end{equation}
By the same decomposition as in the proof of part~\ref{prop:1.1}, \eqref{eq:nopseudo-with-prior} follows from Assumption~\ref{prop:D} and \eqref{eq:nopseudo-convergence}; the final statement follows from $c\to 0$ and \eqref{eq:nopseudo-convergence}.

\end{proof}

\section{Sufficient conditions for Assumptions~\ref{A1}--\ref{A2} and verifying them in two scenarios}\label{sec:appendix-A1}

Throughout, $\|\cdot\|$ denotes the Euclidean norm, and for a $p\times p$ matrix $A$ we write $\|A\|$ for the operator norm.

\subsection{Sufficient conditions for Assumptions~\ref{A1}--\ref{A2}}
Recall that the true DGP under classical ME is 
\begin{equation}
    W=X+N,\quad Y=g^0(X)+E,\quad X\sim \P_{X}^{0}, \quad N\sim F_{N}^{0}, \quad E\sim F_{E}^{0},\quad N,E\ind X,\; N\ind E. 
\end{equation}
The joint density of the observed pair \((W,Y)\) is $p^{0}_{WY}(w,y)=\int_{\mathcal X}p^{0}_{X}(x)f^{0}_{N}(w-x)f_E^0\bigl(y-g^0(x)\bigr)\dd x$.

We work under a misspecified model
\[W=X+N,\quad Y=g(X, \theta)+E,\quad X\sim \P_{X}, \quad N\sim F_{N}, \quad E\sim F_{E},\quad N,E\ind X,\; N\ind E, \] 
and \(g^0(\cdot)\notin\{g(\cdot, \theta):\theta\in\Theta\}\); here \(\Theta\subset\R^{p}\) is compact.  For each \(\theta\) the induced density of \((W,Y)\) is \(p^{\theta}_{WY}(w,y)=\int_{\mathcal X}p_{X}(x)f_{N}(w-x)f_{E}\bigl(y-g(x, \theta)\bigr)\dd x.\)

Define the pseudo-true parameter by $\theta^{\ast}:=\arg\min_{\theta\in\Theta}\KL\bigl(p^{0}_{WY}\Vert p^{\theta}_{WY}\bigr)$. 

Write $\ell_\theta(W,Y):=\log p^{\theta}_{WY}(W,Y)$. For i.i.d.\
$(W_i,Y_i)\sim p_{WY}^{0}$ we collect and list sufficient conditions for Assumptions~\ref{A1}--\ref{A2}:

\begin{enumerate}[label=\textbf{Con.\arabic*}]\label{sec:checklist}
\item\label{cond:MC-app} \emph{Likelihood-ratio integrability:}
$\E_{p_{WY}^{0}}\big[p^{\theta}_{WY}/p^{\theta^\ast}_{WY}\big]<\infty$
for all $\theta\in\Theta$.

\item\label{cond:B1-app} \emph{Differentiability in probability:}
$\theta\mapsto\ell_\theta(W_1,Y_1)$ differentiable at $\theta^{\ast}$
in $p_{WY}^{0}$-probability.

\item\label{cond:B2-app} \emph{Local Lipschitz envelope:} there exists
$m_{\theta^{\ast}}\in L^{2}(p_{WY}^{0})$ such that for $\theta_1,\theta_2$
near $\theta^{\ast}$,
$\bigl|\ell_{\theta_1}-\ell_{\theta_2}\bigr|
\le m_{\theta^{\ast}}\|\theta_1-\theta_2\|$.

\item\label{cond:B3-app} \emph{Quadratic KL expansion:}
$-\E_{p_{WY}^{0}}\left[\log(p^{\theta}_{WY}/p^{\theta^\ast}_{WY})\right]
=\tfrac12(\theta-\theta^{\ast})^{\T}V_{\theta^{\ast}}(\theta-\theta^{\ast})
+o(\|\theta-\theta^{\ast}\|^{2})$ with $V_{\theta^{\ast}}\succ0$.

\item\label{cond:B5-app} \emph{$L^{1}(p_{WY}^{0})$ continuity:} for every fixed $\theta_0\in\Theta$, the map $\theta\mapsto p^{\theta}_{WY}/p^{\theta_0}_{WY}$ is $L^{1}(p_{WY}^{0})$-continuous at every $\theta\in\Theta$.

\item\label{cond:B6-app} \emph{Exponential moment of the envelope:}
$\exists s>0$ with $\E_{p_{WY}^{0}}[e^{s m_{\theta^{\ast}}}]<\infty$.

\item\label{cond:ScorePD-app} \emph{Non-singular score covariance:}
$S_{\theta^{\ast}}:=\E_{p_{WY}^{0}}\big[\dot\ell_{\theta^{\ast}}\dot\ell_{\theta^{\ast}}^{\T}\big]$
is invertible.

\item\label{cond:B4-app} \emph{Compact $\Theta$ and uniqueness:}
$\Theta$ compact and $\theta^{\ast}\in\mathrm{int}(\Theta)$ the unique minimizer of
$\theta\mapsto -\E_{p_{WY}^{0}}\log p^{\theta}_{WY}$.

\item\label{cond:Prior-appendix}
The prior $\Pi$ on $\Theta$ admits a density $\pi$ with respect to
Lebesgue measure that is continuous and strictly positive on a neighbourhood of $\theta^{\ast}$.
\item\label{cond:MMD-Lip-app} \emph{Posterior Lipschitz in MMD:}
\[
\MMD_k\left(\Pi_{\theta_1}(\cdot\mid w,y),\Pi_{\theta_2}(\cdot\mid w,y)\right)
\le L(w,y)\,\|\theta_1-\theta_2\|,\qquad L(W,Y)\in L^{2}(p_{WY}^{0}),
\]
for $\theta_1,\theta_2$ in a neighbourhood $\Theta_\rho$ of $\theta^\ast$.
\end{enumerate}

Assumption~\ref{A1} invokes the local asymptotic normality (LAN), smoothness, integrability and regularity requirements of \citet[Theorem~3.1]{kleijn2012bernstein} for $\{p^{\theta}_{WY}:\theta\in\Theta\}$ around $\theta^{\ast}$. We now explain why the conditions stated above are sufficient:

\begin{enumerate}
\item \emph{LAN via Lemma~2.1.}
Our \ref{cond:B1-app} (differentiability in probability), \ref{cond:B2-app} (local Lipschitz envelope) and \ref{cond:B3-app} (quadratic KL expansion with $V_{\theta^\ast}\succ0$) yield LAN with $\delta_n=n^{-1/2}$ and central sequence $V_{\theta^\ast}^{-1}\mathbb G_n\dot\ell_{\theta^\ast}$. 

\item \emph{Existence of tests via Theorem~3.2.}
Compactness and uniqueness (\ref{cond:B4-app}) together with the $L^1$-continuity of likelihood ratios (\ref{cond:B5-app}) satisfy the sufficient conditions in \citet[Theorem~3.2]{kleijn2012bernstein}, guaranteeing tests $(\phi_n)$ with the properties required in \citet[Theorem~3.1]{kleijn2012bernstein}.

\item \emph{Moment and prior conditions for Theorem~3.1.}
Our \ref{cond:MC-app} ensures $\E_{p_{WY}^{0}}[p_\theta/p_{\theta^\ast}]<\infty$ for all $\theta\in\Theta$. Condition~\ref{cond:B6-app} provides $\E_{p_{WY}^{0}}[e^{s m_{\theta^\ast}}]<\infty$ for some $s>0$.
Condition~\ref{cond:Prior-appendix} supplies a prior density continuous and strictly positive near $\theta^\ast$. Invertibility of $\E_{p_{WY}^{0}}[\dot\ell_{\theta^\ast}\dot\ell_{\theta^\ast}^{\T}]$
is \ref{cond:ScorePD-app}.
\end{enumerate}

Consequently, Theorem~3.1 of \citet{kleijn2012bernstein} applies under conditions~\ref{cond:MC-app}-\ref{cond:Prior-appendix}. Condition~\ref{cond:MMD-Lip-app} implies Assumption~\ref{A2}. Conditions~\ref{cond:MC-app}--\ref{cond:Prior-appendix} are a compilation of \emph{sufficient} conditions for Assumption~\ref{A1}, and many of them can be relaxed depending on the specific model and true distribution. We refer the reader to \citet{kleijn2012bernstein} for a detailed discussion on weaker forms of these conditions and circumstances where they can be relaxed.

\subsection{Two scenarios where Assumptions~\ref{A1}--\ref{A2} hold}
Next, we demonstrate two scenarios under Gaussian noise models where conditions~\ref{cond:MC-app}--\ref{cond:MMD-Lip-app} are satisfied. In this section, ME and outcome noise are modelled as independent centred Gaussians
\(
  N\sim\mathcal N_{d}(0,\Sigma_{N}),
  E\sim\mathcal N(0,\sigma_{E}^{2}),
\)
with known $\Sigma_{N}\succ0$, $\sigma_{E}^{2}>0$. A working prior density for $X$ is $p_{X}$. Write
\[
  \varphi_{\Sigma}(z):=(2\pi)^{-d/2}(\det\Sigma)^{-1/2}
  e^{-\frac12 z^{\T}\Sigma^{-1}z},\qquad
  \varphi_{\sigma}(t):=(2\pi\sigma^{2})^{-1/2}e^{-t^{2}/(2\sigma^{2})}.
\]
Then
\begin{equation}\label{eq:working-density-app}
  p^{\theta}_{WY}(w,y)=\int_{\R^{d}}
      p_{X}(x)\varphi_{\Sigma_{N}}(w-x)
      \varphi_{\sigma_{E}}\bigl(y-g(x, \theta)\bigr)\dd x.
\end{equation}
We collect here the standing assumptions used throughout. They are invoked in both scenarios below.

\begin{enumerate}[label=(M\arabic*)]

\item\label{ass:G4-appendix} \emph{Local smoothness and curvature near $\theta^\ast$:}
there exists a neighbourhood $U$ of $\theta^\ast$ such that
$g(\cdot, \theta)$ is $C^{2}$ in $\theta$ on $U$; for $\theta\in U$, $\dot\ell_\theta$ and $\ddot\ell_\theta$ admit an $L^{1}(p_{WY}^{0})$ envelope uniform in $\theta\in U$; and
$H(\theta):=-\E_{p_{WY}^{0}}\big[\partial^{2}_{\theta}\ell_{\theta}(W,Y)\big]$ exists for $\theta\in U$, is continuous at $\theta^\ast$, and $H^{\ast}:=H(\theta^{\ast})\succ0$.

\item\label{ass:G5-appendix} \emph{Outcome tail:}
$\E\big[e^{\tau_0 Y^{2}}\big]<\infty$ for some $\tau_0>0$.

\item\label{ass:G3-appendix} \emph{Information non-singularity at $\theta^\ast$:}
$S_{\theta^\ast}:=\E_{p_{WY}^{0}}\big[\dot\ell_{\theta^\ast}\dot\ell_{\theta^\ast}^{\T}\big]$ is invertible. This is \ref{cond:ScorePD-app}.

\item\label{ass:G6-appendix} \emph{Compactness and uniqueness:}
$\Theta$ is compact and $\theta^{\ast}\in\mathrm{int}(\Theta)$ is the unique minimizer of $R(\theta)=-\E_{p_{WY}^{0}}[\ell_{\theta}(W,Y)]$. This is \ref{cond:B4-app}.

\item\label{ass:Prior-appendix} \emph{Continuous and positive prior:}
The prior $\Pi$ on $\Theta$ admits a density $\pi$ with respect to
Lebesgue measure that is continuous and strictly positive on a neighbourhood of $\theta^{\ast}$: this is \ref{cond:Prior-appendix}.
\end{enumerate}

For smooth finite-dimensional nonlinear regressions, these requirements are mild: 
\ref{ass:G4-appendix} is a local $C^2$ and integrability condition that justifies differentiation under the integral and yields positive-definite local curvature $H^\ast$. \ref{ass:G5-appendix} is used for algebraic convenience to control exponential envelopes; it can be weakened to a sub-exponential tail (e.g. $\E[e^{\tau|Y|}]<\infty$) while adjusting the envelope arguments. \ref{ass:G3-appendix} is standard: finiteness of $S_{\theta^\ast}$ follows in both scenarios below from the score envelopes and tail conditions, so it effectively asks only for non-degeneracy of the score covariance at $\theta^\ast$. \ref{ass:G6-appendix} is the standard well-separated pseudo-true parameter assumption. The prior condition \ref{ass:Prior-appendix} is the usual prior thickness requirement.

We consider two concrete scenarios that ensure all conditions~\ref{cond:MC-app}--\ref{cond:MMD-Lip-app} are satisfied. In both cases we assume \ref{ass:G4-appendix}--\ref{ass:Prior-appendix}.
\noindent\textit{Scenario 1 (bounded regression).}\label{ass:S1-appendix}
This covers models where the regression surface and its first two derivatives are uniformly bounded over $\Theta\times\R^d$.
\begin{enumerate}[label=(S1.\arabic*)]
\item\label{ass:S1.1-appendix} $C_g:=\sup_{\theta\in\Theta,x\in\R^{d}}|g(x, \theta)|<\infty$.
\item\label{ass:S1.2-appendix} There exist finite constants $C_{\partial g},C_{\partial^{2} g}$ such that for all $(\theta,x)\in\Theta\times\mathcal{X}$,
\(
\|\partial_\theta g(x, \theta)\|\le C_{\partial g},
\|\partial^2_{\theta\theta} g(x, \theta)\|\le C_{\partial^{2} g}.
\)
\end{enumerate}
\noindent\textit{Scenario 2 (compact latent support and working prior).}\label{ass:S2-appendix}
This covers settings where $X$ is confined to a compact region and the working prior respects that support; boundedness of $g$ and its derivatives is then only needed on that region.
\begin{enumerate}[label=(S2.\arabic*)]
\item\label{ass:S2.1-appendix} $\operatorname{supp}p_X\subseteq B_M:=\{x:\|x\|\le M\}$.
\item\label{ass:S2.2-appendix} $\|X\|\le M$ $\P_{X}^{0}$-a.s.
\item\label{ass:S2.3-appendix}
$\widetilde C_{\partial g}:=\sup_{\theta\in\Theta,x\in B_M}\|\partial_\theta g(x, \theta)\|<\infty$ and
$\widetilde C_{\partial^{2} g}:=\sup_{\theta\in\Theta,x\in B_M}\|\partial^2_{\theta\theta} g(x, \theta)\|<\infty$.
\item\label{ass:S2.4-appendix}
$g$ is continuous on $\Theta\times B_M$, hence
$C_{g,M}:=\sup_{\theta\in\Theta,\|x\|\le M}|g(x, \theta)|<\infty$.
\end{enumerate}

Now we verify \ref{cond:MC-app}--\ref{cond:MMD-Lip-app} for both scenarios. Since \ref{cond:ScorePD-app}-\ref{cond:Prior-appendix} are already assumed by \ref{ass:G3-appendix}-\ref{ass:Prior-appendix}, we only need to verify \ref{cond:MC-app}-\ref{cond:B6-app}, and \ref{cond:MMD-Lip-app}.

\subsection{Verification of~\ref{cond:MC-app} likelihood-ratio integrability}\label{subsec:MC-app}

\emph{Goal:} $\E_{p_{WY}^{0}}\big[p^{\theta}_{WY}/p^{\theta^\ast}_{WY}\big]<\infty$ for all $\theta\in\Theta$.

For $R_\theta(W,Y):=p_{WY}^{\theta}(W,Y)/p_{WY}^{\theta^{\ast}}(W,Y)$ write
$a(x):=p_{X}(x)\varphi_{\Sigma_N}(W-x)$ and
$b_\theta(x):=\varphi_{\sigma_E}(Y-g(x, \theta))$. Then
\[
R_\theta=\frac{\int a(x)b_\theta(x)\dd x}{\int a(x)b_{\theta^{\ast}}(x)\dd x}
\le \sup_{x}\frac{b_\theta(x)}{b_{\theta^{\ast}}(x)}.
\]
Since $b_\theta(x)=c\exp\{-(Y-g(x, \theta))^{2}/(2\sigma_E^{2})\}$,
\[
\log\frac{b_\theta(x)}{b_{\theta^{\ast}}(x)}
=-\frac{(Y-g(x, \theta))^{2}-(Y-g(x,\theta^{\ast}))^{2}}{2\sigma_E^{2}}
=\frac{(g(x, \theta)-g(x,\theta^{\ast}))}{\sigma_E^{2}}Y
+\frac{g(x,\theta^{\ast})^{2}-g(x, \theta)^{2}}{2\sigma_E^{2}}.
\]

\noindent\textit{Scenario 1.}
Using $|g(x, \theta)-g(x,\theta^{\ast})|\le 2C_g$ and
$|g(x,\theta^{\ast})^{2}-g(x, \theta)^{2}|\le 4C_g^{2}$,
\[
R_\theta\le
\exp\Bigl\{\frac{2C_g}{\sigma_E^{2}}|Y|+\frac{2C_g^{2}}{\sigma_E^{2}}\Bigr\}.
\]
By the bound $e^{a|Y|}\le e^{a^{2}/(4\varepsilon)}e^{\varepsilon Y^{2}}$
(valid for all $a,\varepsilon>0$) and \ref{ass:G5-appendix}
(with any $\varepsilon<\tau_0$), $\E R_\theta<\infty$.

\noindent\textit{Scenario 2.}
Because $\operatorname{supp}p_X\subseteq B_M$ and $\|X\|\le M$ a.s., the integrals in \eqref{eq:working-density-app} are over $B_M$. With
$\Delta_\theta(M):=\sup_{\|x\|\le M}|g(x, \theta)-g(x,\theta^{\ast})|$,
\[
R_\theta\le
\exp\Bigl\{\frac{\Delta_\theta(M)}{\sigma_E^{2}}|Y|
+\frac{C_{g,M}\Delta_\theta(M)}{\sigma_E^{2}}\Bigr\}.
\]
Since $\Delta_\theta(M)\le 2C_{g,M}$ by \ref{ass:S2.4-appendix}, \ref{ass:G5-appendix} implies $\E R_\theta<\infty$.

\subsection{Verification of~\ref{cond:B1-app} differentiability}\label{subsec:B1-app}
\emph{Goal:} $\theta\mapsto\ell_\theta(W,Y)$ differentiable at $\theta^{\ast}$ in $p_{WY}^{0}$-probability. 

For $(w,y)\in\R^{d}\times\R$,
\[
p_\theta(w,y)=\int p_X(x)\varphi_{\Sigma_N}(w-x)
\varphi_{\sigma_E}\bigl(y-g(x, \theta)\bigr)\dd x.
\]
Under \ref{ass:S1.2-appendix} or \ref{ass:S2.3-appendix} there exists an integrable envelope $\Gamma$ (constant in Scenario~1, bounded on $B_M$ in Scenario~2) such that $\|\partial_\theta g(x, \theta)\|\le\Gamma(x)$ and $\int p_X(x)\Gamma(x)\dd x<\infty$. Differentiation under the integral (Leibniz rule) gives
\[
\nabla_\theta p_\theta(w,y)=\int p_X(x)\varphi_{\Sigma_N}(w-x)
\nabla_\theta\varphi_{\sigma_E}\bigl(y-g(x, \theta)\bigr)\dd x,
\]
and
\[
\nabla_\theta\varphi_{\sigma_E}(y-g(x, \theta))
=\varphi_{\sigma_E}(y-g(x, \theta))
\frac{y-g(x, \theta)}{\sigma_E^{2}}
\nabla_\theta g(x, \theta).
\]
Hence
\[
\nabla_\theta p_\theta(w,y)=\frac1{\sigma_E^{2}}
\int p_X(x)\varphi_{\Sigma_N}(w-x)\varphi_{\sigma_E}(y-g(x, \theta))
(y-g(x, \theta))\nabla_\theta g(x, \theta)\dd x.
\]
Introduce the $\theta$-posterior density
\[
q_\theta(x\mid w,y):=
\frac{p_X(x)\varphi_{\Sigma_N}(w-x)\varphi_{\sigma_E}(y-g(x, \theta))}
{p_\theta(w,y)} .
\]
Dividing by $p_\theta(w,y)$ yields the score
\begin{equation}\label{eq:def-score-app}
\dot\ell_\theta(W,Y)=\frac1{\sigma_E^{2}}
\E_{q_{\theta}(\cdot\mid W,Y)}\Bigl[(Y-g(X, \theta))\partial_\theta g(X, \theta)\mid W,Y\Bigr].
\end{equation}
Let $f_\theta(x):=p_X(x)\varphi_{\Sigma_N}(W-x)\varphi_{\sigma_E}(Y-g(x, \theta))$,
so $\ell_\theta(W,Y)=\log\int f_\theta(x)\dd x$. For $h\in\R^{p}$ and
$\theta_t:=\theta+th$,
\[
\ell_{\theta+h}-\ell_{\theta}
=\int_{0}^{1}
\frac{\int \langle\partial_\theta f_{\theta_t}(x),h\rangle \dd x}
{\int f_{\theta_t}(x) \dd x}\dd t
=\int_{0}^{1}\langle\dot\ell_{\theta_t},h\rangle \dd t.
\]
A second differentiation gives
\begin{equation}\label{eq:def-hess-app}
\begin{aligned}
\ddot\ell_{\theta}(W,Y)
&:=\partial_{\theta\theta}^{2}\ell_\theta(W,Y)\\
&=\E_{q_{\theta}(\cdot\mid W,Y)}\Bigl[
-\tfrac{1}{\sigma_E^{2}}\partial_\theta g\partial_\theta g^{\T}
+\tfrac{Y-g}{\sigma_E^{2}}\partial_{\theta\theta}^{2}g\ \Bigm|\ W,Y\Bigr]
+\Var_{q_{\theta}(\cdot\mid W,Y)}\Bigl( \tfrac{Y-g}{\sigma_E^{2}}\partial_\theta g\ \Bigm|\ W,Y\Bigr).
\end{aligned}
\end{equation}
Therefore
\[
\ell_{\theta+h}-\ell_{\theta}
=\langle\dot\ell_{\theta},h\rangle
+\int_{0}^{1}(1-t)\langle\ddot\ell_{\theta_t}h,h\rangle \dd t,
\]
and hence
\begin{equation}\label{eq:rem-bnd-app}
\frac{|\ell_{\theta+h}-\ell_{\theta}-\langle\dot\ell_{\theta},h\rangle|}
{\|h\|}\le \frac{\|h\|}{2}\sup_{t\in[0,1]}\|\ddot\ell_{\theta_t}(W,Y)\|.
\end{equation}
From \eqref{eq:def-hess-app} and \ref{ass:S1.2-appendix} (Scenario~1) or
\ref{ass:S2.3-appendix}-\ref{ass:S2.4-appendix} (Scenario~2),
there exist finite $B_0,B_1,B_2$ (scenario-dependent, $\theta$-uniform) such that
\[
\|\ddot\ell_{\theta}(W,Y)\|
\le \frac{B_0+B_1|Y|}{\sigma_E^{2}}
+\frac{B_2(|Y|+C)^{2}}{\sigma_E^{4}},
\qquad C=C_g\ \text{or}\ C_{g,M}.
\]
Thus $\|\ddot\ell_{\theta}(W,Y)\|\le a_0+a_1|Y|+a_2 Y^{2}$ for some finite $(a_0,a_1,a_2)$, which is integrable by \ref{ass:G5-appendix}. Denote $M(W,Y):=a_0+a_1|Y|+a_2 Y^{2}\in L^{1}(p_{WY}^{0})$.

Fix $\theta=\theta^{\ast}$. For every $\varepsilon>0$,
\[
p_{WY}^{0}\Bigl(
\frac{|\ell_{\theta^{\ast}+h}-\ell_{\theta^{\ast}}-\langle\dot\ell_{\theta^{\ast}},h\rangle|}
{\|h\|}>\varepsilon\Bigr)
\le p_{WY}^{0}\bigl(\tfrac{\|h\|}{2}M(W,Y)>\varepsilon\bigr)\xrightarrow[h\to0]{}0.
\]
Hence \ref{cond:B1-app} holds. The bounds also give
$\dot\ell_{\theta^{\ast}}\in L^{2}(p_{WY}^{0})$, used below. The same domination shows differentiability in $p_{WY}^{0}$-probability holds uniformly along compact line segments in $\Theta$, as used in Section~\ref{subsec:B5-app}.

\subsection{Verification of~\ref{cond:B2-app} local Lipschitz envelope}\label{subsec:B2-app}
By the fundamental theorem of calculus,
\[
\ell_{\theta_1}-\ell_{\theta_2}
=\int_{0}^{1}\bigl\langle\dot\ell_{\theta_t},\theta_1-\theta_2\bigr\rangle \dd t
\]
and hence
\[
|\ell_{\theta_1}-\ell_{\theta_2}|
\le \|\theta_1-\theta_2\| \sup_{t\in[0,1]}\|\dot\ell_{\theta_t}\|.
\]
From \eqref{eq:def-score-app} and \ref{ass:S1.2-appendix} (Scenario~1),
\[
\|\dot\ell_\theta\|\le \frac{C_{\partial g}}{\sigma_E^{2}}(|Y|+C_g)
=:m_{\theta^\ast}^{(1)}(W,Y),
\]
and from \ref{ass:S2.3-appendix}-\ref{ass:S2.4-appendix} (Scenario~2),
\[
\|\dot\ell_\theta\|\le \frac{\widetilde C_{\partial g}}{\sigma_E^{2}}(|Y|+C_{g,M})
=:m_{\theta^\ast}^{(2)}(W,Y).
\]
By \ref{ass:G5-appendix}, $m_{\theta^\ast}^{(j)}\in L^{2}(p_{WY}^{0})$ ($j=1,2$), proving \ref{cond:B2-app}.

\subsection{Verification of~\ref{cond:B3-app} curvature of the KL risk}\label{subsec:B3-app}
Let $H^{\ast}:=-\E_{p_{WY}^{0}}[\ddot\ell_{\theta^{\ast}}]$, which is finite by the moment bounds used in \ref{subsec:B1-app}. Differentiation under the integral (per \ref{ass:G4-appendix}) yields
\[
\nabla R(\theta)=-\E_{p_{WY}^{0}}[\dot\ell_{\theta}(W,Y)].
\]
Since $\theta^{\ast}$ minimizes $R$ and $\theta^{\ast}\in\operatorname{int}(\Theta)$, the first-order condition gives $\E_{p_{WY}^{0}}[\dot\ell_{\theta^{\ast}}]=0$. A Taylor expansion of $R$ at $\theta^{\ast}$ then implies
\[
-\E_{p_{WY}^{0}}\bigl[\ell_\theta-\ell_{\theta^{\ast}}\bigr]
=\tfrac12(\theta-\theta^{\ast})^{\T} H^{\ast}(\theta-\theta^{\ast})
+o(\|\theta-\theta^{\ast}\|^{2}),
\]
so $V_{\theta^{\ast}}=H^{\ast}$. By \ref{ass:G4-appendix}, $H^{\ast}\succ0$, proving \ref{cond:B3-app}.

\subsection{Verification of~\ref{cond:B5-app} \texorpdfstring{$L^{1}$}{L1}-continuity of the likelihood ratio}\label{subsec:B5-app}
Fix $\theta_{0}\in\Theta$ and let $\theta,\theta_1\in\Theta$ be arbitrary. Set $h:=\theta-\theta_{1}$, $\theta_t:=\theta_{1}+t h$. Define
\[
r_{t}(W,Y):=\frac{p_{WY}^{\theta_{t}}(W,Y)}{p_{WY}^{\theta_{0}}(W,Y)}
=\exp\bigl\{\ell_{\theta_{t}}(W,Y)-\ell_{\theta_{0}}(W,Y)\bigr\}.
\]
Since $t\mapsto r_t$ is absolutely continuous and
$\partial_{t}r_t=r_t\langle h,\dot\ell_{\theta_t}\rangle$, we obtain the identity
\begin{equation}\label{eq:B5-ftc-app}
r_{\theta,\theta_{0}}(W,Y)-r_{\theta_{1},\theta_{0}}(W,Y)
=\int_{0}^{1}\langle h,\dot\ell_{\theta_t}(W,Y)\rangle\, r_t(W,Y)\dd t.
\end{equation}
Taking absolute values and expectations, by the triangle inequality, Tonelli and Cauchy-Schwarz,
\begin{align}
\norm{r_{\theta,\theta_{0}}-r_{\theta_{1},\theta_{0}}}_{L^{1}(p^{0}_{WY})}
&\le \norm{h}\int_{0}^{1}
\E_{p^{0}_{WY}}\left[\norm{\dot\ell_{\theta_t}} r_t\right]\dd t
\notag\\
&\le \norm{h}\int_{0}^{1}
\left(\E_{p^{0}_{WY}}\norm{\dot\ell_{\theta_t}}^{2}\right)^{1/2}
\left(\E_{p^{0}_{WY}} r_t^{2}\right)^{1/2} \dd t.
\label{eq:B5-L1bound-app}
\end{align}
Using \ref{ass:S1.2-appendix} or \ref{ass:S2.3-appendix}-\ref{ass:S2.4-appendix} together with \ref{ass:G5-appendix},
\[
M_{1}:=\sup_{\vartheta\in\Theta}\E_{p^{0}_{WY}}\|\dot\ell_{\vartheta}\|^{2}<\infty.
\]
From \ref{subsec:MC-app} we have the envelopes
\[
r_{\vartheta,\theta_0}\le \exp\Bigl\{\frac{2C_g}{\sigma_E^{2}}|Y|
+\frac{2C_g^{2}}{\sigma_E^{2}}\Bigr\}
\quad\text{(Scenario 1)},\;
r_{\vartheta,\theta_0}\le \exp\Bigl\{\frac{\Delta_{\vartheta,\theta_0}(M)}{\sigma_E^{2}}|Y|
+\frac{C_{g,M}\Delta_{\vartheta,\theta_0}(M)}{\sigma_E^{2}}\Bigr\}
\quad\text{(Scenario 2)},
\]
where $\Delta_{\vartheta,\theta_0}(M):=\sup_{\|x\|\le M}|g(x,\vartheta)-g(x,\theta_0)|\le 2C_{g,M}$. Hence by \ref{ass:G5-appendix},
\[
M_{2}:=\sup_{\vartheta\in\Theta}\E r_{\vartheta,\theta_0}^{2}<\infty.
\]
Therefore \eqref{eq:B5-L1bound-app} gives
\[
\bigl\|r_{\theta,\theta_{0}}-r_{\theta_1,\theta_{0}}\bigr\|_{L^{1}(P^{0}_{WY})}
\le \|h\|\sqrt{M_{1}M_{2}}\xrightarrow[\theta\to\theta_{1}]{}0,
\]
proving \ref{cond:B5-app}.

\subsection{Verification of~\ref{cond:B6-app} exponential moment}\label{subsec:B6-app}
Recall
\[
m_{\theta^{\ast}}=
\begin{cases}
\dfrac{C_{\partial g}}{\sigma_E^{2}}(|Y|+C_g), & \text{Scenario 1},\\[4pt]
\dfrac{\widetilde C_{\partial g}}{\sigma_E^{2}}(|Y|+C_{g,M}), & \text{Scenario 2}.
\end{cases}
\]
In Scenario~1, for any $s>0$,
\[
\E e^{s m_{\theta^{\ast}}}
=e^{s C_{\partial g}C_g/\sigma_E^{2}}
\E \exp\Bigl\{\frac{s C_{\partial g}}{\sigma_E^{2}}|Y|\Bigr\}
\le e^{s C_{\partial g}C_g/\sigma_E^{2}}
e^{\frac{(s C_{\partial g}/\sigma_E^{2})^{2}}{4\varepsilon}}
\E e^{\varepsilon Y^{2}}<\infty
\]
for any $\varepsilon<\tau_0$ by \ref{ass:G5-appendix}. Thus \ref{cond:B6-app} holds for all $s>0$. The same argument applies in Scenario~2:
\[
\E e^{s m_{\theta^{\ast}}}
\le \exp\Bigl\{\frac{s \widetilde C_{\partial g} C_{g,M}}{\sigma_E^{2}}
+\frac{(s \widetilde C_{\partial g}/\sigma_E^{2})^{2}}{4\varepsilon}\Bigr\}
\E e^{\varepsilon Y^{2}}<\infty,
\]
for any $s>0$ and $\varepsilon<\tau_0$. Hence \ref{cond:B6-app} holds.

\subsection{Verification of~\ref{cond:MMD-Lip-app} posterior Lipschitz in total variation}\label{subsec:TVD-Lip-app}
\begin{lemma}[Pathwise total-variation bound]\label{lem:TV-path-app}
Fix $(w,y)\in\R^{d}\times\R$ and define
\[
f_\theta(x):=p_X(x)\varphi_{\Sigma_N}(w-x)\varphi_{\sigma_{E}}\bigl(y-g(x, \theta)\bigr),\quad
Z_\theta:=\int f_\theta(u)du,\quad
\pi_\theta(x):=\frac{f_\theta(x)}{Z_\theta}.
\]
For $\theta_t:=\theta_2+t(\theta_1-\theta_2)$,
\begin{equation}\label{eq:TVD-path-final-app}
\bigl\|\Pi_{\theta_1}(\cdot\mid w,y)-\Pi_{\theta_2}(\cdot\mid w,y)\bigr\|_{\TV}
\le \|\theta_1-\theta_2\|
\int_0^1 \E_{\Pi_{\theta_t}^{w,y}}
\bigl[|y-g(X,\theta_t)|\|\partial_\theta g(X,\theta_t)\|\bigr]
\frac{\dd t}{\sigma_E^{2}}.
\end{equation}
\end{lemma}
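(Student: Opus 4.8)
The plan is to differentiate the posterior density along the straight line $t\mapsto\theta_t$, bound the $L^{1}$-norm of that derivative, and integrate back, so that the $\tfrac12$ in the half-$L^{1}$ normalization of $\|\cdot\|_{\mathrm{TV}}$ is cancelled by a factor $2$ coming from a centring step. First I would record the derivative of $\pi_\theta$. Since $\log f_\theta(x)=\log p_X(x)+\log\varphi_{\Sigma_N}(w-x)-\tfrac{(y-g(x,\theta))^{2}}{2\sigma_E^{2}}+\mathrm{const}$, only the last term is $\theta$-dependent, and $\nabla_\theta\log f_\theta(x)=s_\theta(x):=\sigma_E^{-2}\bigl(y-g(x,\theta)\bigr)\partial_\theta g(x,\theta)$. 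Under the global smoothness/envelope hypothesis \ref{ass:G4-appendix} together with the scenario assumptions (Scenario~1: $g$ and $\partial_\theta g$ globally bounded; Scenario~2: $\operatorname{supp}p_X\subseteq B_M$ with $g,\partial_\theta g$ bounded on $B_M$) and the Gaussian outcome tail \ref{ass:G5-appendix}, the integrand $p_X(x)\varphi_{\Sigma_N}(w-x)\varphi_{\sigma_E}(y-g(x,\theta))\|\partial_\theta g(x,\theta)\|$ admits an $x$-integrable envelope that is locally uniform in $\theta$, so one may differentiate $Z_\theta=\int f_\theta$ under the integral sign exactly as in the computation leading to \eqref{eq:def-score-app}. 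This gives $\nabla_\theta Z_\theta=Z_\theta\,\E_{\pi_\theta}[s_\theta(X)]$ and hence
\[
\nabla_\theta\pi_\theta(x)=\pi_\theta(x)\Bigl(s_\theta(x)-\E_{\pi_\theta}\bigl[s_\theta(X)\bigr]\Bigr).
\]

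Next, writing $\pi_{\theta_1}(x)-\pi_{\theta_2}(x)=\int_0^1\bigl\langle\nabla_\theta\pi_\theta(x)\big|_{\theta=\theta_t},\,\theta_1-\theta_2\bigr\rangle\,dt$ by the fundamental theorem of calculus (legitimate since $t\mapsto\pi_{\theta_t}(x)$ is $C^{1}$ by the same domination), I would apply Cauchy–Schwarz to the inner product, pass absolute values inside the $t$-integral, integrate over $x$, and swap the order of integration by Tonelli (the integrand is nonnegative). This yields
\[
\bigl\|\Pi_{\theta_1}(\cdot\mid w,y)-\Pi_{\theta_2}(\cdot\mid w,y)\bigr\|_{\mathrm{TV}}
=\tfrac12\!\int|\pi_{\theta_1}-\pi_{\theta_2}|
\le\tfrac12\|\theta_1-\theta_2\|\int_0^1\!\!\Bigl(\int\pi_{\theta_t}(x)\,\bigl\|s_{\theta_t}(x)-\E_{\pi_{\theta_t}}[s_{\theta_t}]\bigr\|\,dx\Bigr)dt .
\]
The inner integral is $\E_{\Pi_{\theta_t}^{w,y}}\bigl\|s_{\theta_t}(X)-\E_{\Pi_{\theta_t}^{w,y}}[s_{\theta_t}]\bigr\|$, which is at most $2\,\E_{\Pi_{\theta_t}^{w,y}}\|s_{\theta_t}(X)\|$ by the triangle inequality and Jensen; the factor $2$ cancels the $\tfrac12$, and since $\|s_{\theta_t}(x)\|=\sigma_E^{-2}|y-g(\theta_t,x)|\,\|\partial_\theta g(\theta_t,x)\|$, this is precisely \eqref{eq:TVD-path-final-app}.

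The only delicate point is the justification for differentiating $Z_\theta$ under the integral and for the $C^{1}$-in-$t$ claim, i.e. producing a single $x$-integrable dominating function valid for all $\theta$ along the segment $[\theta_2,\theta_1]$. This is routine under the stated hypotheses: $\varphi_{\sigma_E}(y-g(x,\theta))\,|y-g(x,\theta)|$ is bounded uniformly in $(x,\theta)$ by a pure function of $\sigma_E$, and $\|\partial_\theta g\|$ is either globally bounded (Scenario~1) or bounded on the compact support of $p_X$ (Scenario~2), so the envelope is $x$-integrable against $p_X(x)\varphi_{\Sigma_N}(w-x)$. I would therefore simply invoke the domination already established in Subsection~\ref{subsec:B1-app} rather than reproduce it, which leaves the remainder of the argument a short chain of elementary inequalities.
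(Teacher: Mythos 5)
Your proposal is correct and follows essentially the same route as the paper's proof: differentiate $\pi_\theta$ along the segment to get the centred-score representation $\partial_\theta\pi_\theta=\pi_\theta(s_\theta-\E_{\pi_\theta}[s_\theta])$, integrate via the fundamental theorem of calculus and Tonelli, and bound the centred score by $2\E_{\pi_{\theta_t}}\|s_{\theta_t}\|$ so that the factor $2$ cancels the $\tfrac12$ in the total-variation norm. The only cosmetic difference is that you invoke the domination argument of Subsection~\ref{subsec:B1-app} explicitly to justify differentiating under the integral, which the paper leaves implicit.
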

\begin{proof}
Total variation is $\|P-Q\|_{\TV}=\tfrac12\int|p-q|\dd x$. With $\pi_t:=\pi_{\theta_t}$,
\[
\|\Pi_{\theta_1}(\cdot\mid w,y)-\Pi_{\theta_2}(\cdot\mid w,y)\|_{\TV}
=\tfrac12\int\Bigl|\int_0^1\partial_t\pi_t(x)\dd t\Bigr|\dd x
\le \tfrac12\int_0^1\int |\partial_t\pi_t(x)|\dd x \dd t,
\]
by the triangle inequality and Tonelli. Since $\partial_t\pi_t=(\partial_\theta\pi_\theta)|_{\theta=\theta_t}(\theta_1-\theta_2)$,
\[
|\partial_t\pi_t(x)|\le \|\theta_1-\theta_2\|\|\partial_\theta\pi_{\theta_t}(x)\|.
\]
Using $\pi_\theta=f_\theta/Z_\theta$,
\[
\partial_\theta\pi_\theta(x)
=\pi_\theta(x)\Bigl(\partial_\theta\log f_\theta(x)
-\E_{\Pi_{\theta}(\cdot\mid w,y)}[\partial_\theta\log f_\theta(X)]\Bigr).
\]
Hence
\[
\int\|\partial_\theta\pi_\theta(x)\|\dd x
\le 2\E_{\Pi_{\theta}(\cdot\mid w,y)}\bigl[\|\partial_\theta\log f_\theta(X)\|\bigr].
\]
Cancelling the prefactor $\tfrac12$ gives
\[
\|\Pi_{\theta_1}(\cdot\mid w,y)-\Pi_{\theta_2}(\cdot\mid w,y)\|_{\TV}
\le \|\theta_1-\theta_2\|\int_0^1
\E_{\Pi_{\theta_t}^{w,y}}\bigl[\|\partial_\theta\log f_{\theta_t}(X)\|\bigr]\dd t.
\]
Finally,
$\partial_\theta\log f_\theta(x)=\tfrac{y-g(x, \theta)}{\sigma_E^{2}}\partial_\theta g(x, \theta)$,
which gives \eqref{eq:TVD-path-final-app}.
\end{proof}

Let $A_{1}:=C_{\partial g}/\sigma_E^{2}$ and
$A_{2}:=\widetilde C_{\partial g}/\sigma_E^{2}$. In Scenario~1,
\[
\E_{\Pi_{\theta}(\cdot\mid w,y)}\bigl[|y-g(X, \theta)|\|\partial_\theta g(X, \theta)\|\bigr]
\le C_{\partial g}(|y|+C_g),
\]
so
\[
\bigl\|\Pi_{\theta_1}(\cdot\mid w,y)-\Pi_{\theta_2}(\cdot\mid w,y)\bigr\|_{\TV}
\le A_{1}(|y|+C_g)\|\theta_1-\theta_2\|.
\]
In Scenario~2 (on $B_M$),
\[
\E_{\Pi_{\theta}(\cdot\mid w,y)}\bigl[|y-g(X, \theta)|\|\partial_\theta g(X, \theta)\|\bigr]
\le \widetilde C_{\partial g}(|y|+C_{g,M}),
\]
hence
\[
\bigl\|\Pi_{\theta_1}(\cdot\mid w,y)-\Pi_{\theta_2}(\cdot\mid w,y)\bigr\|_{\TV}
\le A_{2}(|y|+C_{g,M})\|\theta_1-\theta_2\|.
\]
By \ref{ass:G5-appendix}, $L(W,Y)\in L^{2}(p_{WY}^{0})$ for
\[
L(w,y):=
\begin{cases}
A_{1}(|y|+C_g), & \text{Scenario 1},\\[4pt]
A_{2}(|y|+C_{g,M}), & \text{Scenario 2}.
\end{cases}
\]
By Lemma~\ref{lem:MMD-TVD}, we have 
\[
\MMD_k\left(\Pi_{\theta_1}(\cdot\mid w,y),\Pi_{\theta_2}(\cdot\mid w,y)\right)
\le 2\sqrt{\kappa}\|\Pi_{\theta_1}(\cdot\mid w,y)-\Pi_{\theta_2}(\cdot\mid w,y)\|_{\TV}
\le 2\sqrt{\kappa} L(w,y)\|\theta_1-\theta_2\|,
\]
which proves \ref{cond:MMD-Lip-app}.

\section{Implementation and additional experiment set-up details}\label{app:exp-details}
We optimize all MMD objectives using automatic differentiation with the Adam optimizer \citep{kingma2014adam} as implemented in \texttt{JAX}. The squared MMD between two probability measures $\P$ and $\Q$ with kernel $k$ is approximated by the unbiased U-statistic \citep{gretton2012kernel} using independent samples $\{x_i\}_{i=1}^n \sim \P$ and $\{y_j\}_{j=1}^s \sim \Q$:
\[
\widehat{\MMD}^2_k(\P,\Q)
=\frac{1}{n(n-1)}\!\!\sum_{i\neq i'}\! k(x_i,x_{i'})
+\frac{1}{s(s-1)}\!\!\sum_{j\neq j'}\! k(y_j,y_{j'})
-\frac{2}{ns}\sum_{i=1}^{n}\sum_{j=1}^{s} k(x_i,y_j).
\]
In all experiments we set $s=n$ equal to the number of observations in the corresponding datasets.

For the Berkson ME experiments, the observed covariates $W$ are organized into $100$ distinct groups, each repeated $3$ times (group size $=3$). The $100$ distinct group values are drawn i.i.d.\ from $\mathcal{N}(0,2)$. This design reflects common Berkson settings where $W$ are pre-specified targets, categories, or group averages.

For the classical ME experiments, the latent covariates $X$ are i.i.d.\ $\mathcal{N}(0,3)$. We choose the classical-error variance of $X$ to be larger than the variance of $W$ in the Berkson setting so that the marginal scales of $X$ are comparable across the two regimes (recall that in Berkson error $\sigma_X^2=\sigma_W^2+\sigma_N^2$).

We use $B_{\mathrm{boot}}=200$ posterior bootstrap realizations for both Robust--MEM and NPL--HMC in synthetic experiments, and $B_{\mathrm{boot}}=100$ for real-world experiments. Since we do not assume a strong prior, the DP concentration parameter is set to $c=10^{-4}$ for both methods. In all experiments, we use Gaussian (RBF) kernels for $k_X$ and $k_Y$, with bandwidth selected by the median heuristic in every MMD computation \citep{gretton2012kernel}.

All HMC sampling is performed using \texttt{cmdstanpy} (the Python interface to \texttt{CmdStan}). Code to reproduce all results in this paper is available at \url{https://github.com/MengqiChenMC/tot_robust_code}.

\section{HMC diagnostics and sensitivity}
\label{sec:HMC}
\subsection{HMC mixing diagnostics}
\label{subsec:HMC-mixing-diagnostics}
We report diagnostics for $\theta$ in the HMC runs used to produce pseudo-samples under the setting with ME scale \(1.5\), \(10\%\) Huber contamination (contaminated points having \(9\times\) the clean noise scale), and a working ME scale equal to \(0.7\times\) the true scale. Four chains were run with \((T,B)=(10{,}000,5{,}000)\) and \(20{,}000\) post-warm-up draws were retained in total for both the classical and Berkson ME models. Table~\ref{tab:hmc-diag} summarizes the scalar diagnostics for the components of \(\theta\) together with sampler-level checks. All \(\hat R\) values are \(1.0\), bulk and tail effective sample sizes are large, and Monte Carlo standard errors are small relative to posterior standard deviations. There were no divergent transitions, no iterations reached the configured maximum tree depth (10), and the per-chain BFMI values are high in both models ($\ge 0.92$ for all chains), indicating good exploration of energy levels. Fig.~\ref{fig:hmc-trace} shows well-mixed traces with stable marginal densities, consistent with sampling from the stationary distribution and low autocorrelation in the retained states. Fig.~\ref{fig:hmc-energy} overlays the marginal and transition energy densities and reports the BFMI per chain; the close overlap supports the absence of pathologies. These checks justify using the HMC draws of \(\theta\) to implement the independent posterior predictive scheme described in the paper for both ME models.

\begin{table}[t]
\centering
\small
\begin{tabular}{llrrrrrrrrr}
\hline
\textbf{Model} &  & Mean & SD & HDI 3\% & HDI 97\% & MCSE mean & MCSE SD & ESS bulk & ESS tail & \(\hat R\) \\
\hline
\multirow{3}{*}{\textbf{Classical ME}}
& \(\theta_1\) & 4.965&0.158&4.676&5.268&0.002&0.001&7089&11685&1.0 \\
& \(\theta_2\) & 1.322&0.212&0.943&1.716&0.003&0.002&6199&7580&1.0 \\
& \(\theta_3\) & 0.052&0.152&-0.233&0.338&0.002&0.001&5286&8461&1.0 \\
\multicolumn{11}{l}{\textit{Sampler-level:} draws \(=20000\);\; divergences \(=0\);\; max tree depth \(=10\) with 0 hits.} \\
\hline
\multirow{3}{*}{\textbf{Berkson ME}}
& \(\theta_1\) & 4.907&0.164&4.608&5.223&0.002&0.001&6512&9658&1.0 \\
& \(\theta_2\) & 1.814&0.278&1.319&2.337&0.004&0.002&5901&9086&1.0 \\
& \(\theta_3\) & -0.087&0.124&-0.326&0.139&0.002&0.001&5332&8806&1.0 \\
\multicolumn{11}{l}{\textit{Sampler-level:} draws \(=20000\);\; divergences \(=0\);\; max tree depth \(=10\) with 0 hits.} \\
\hline
\end{tabular}
\caption{HMC summary diagnostics for \(\theta\) under classical and Berkson ME.}
\label{tab:hmc-diag}
\end{table}

\begin{figure}[t]
  \centering
  \includegraphics[width=0.8\textwidth]{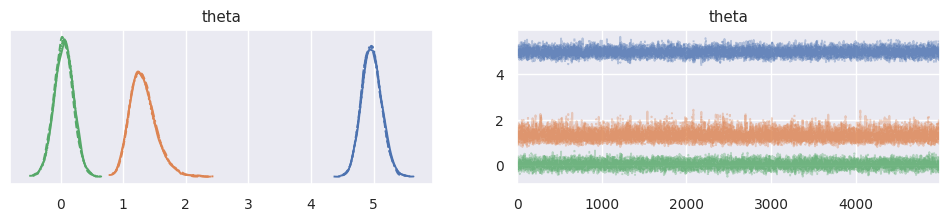}\\[1ex]
  \includegraphics[width=0.8\textwidth]{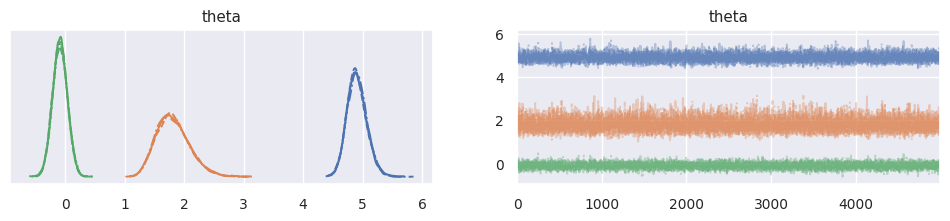}
  \caption{Trace and marginal density for \(\theta\) ($\theta_1$: blue, $\theta_2$: orange, $\theta_3$: green) across four chains: classical (top) and Berkson (bottom).}
  \label{fig:hmc-trace}
\end{figure}

\begin{figure}[t]
  \centering
  \begin{minipage}{0.4\textwidth}
    \centering
    \includegraphics[width=\linewidth]{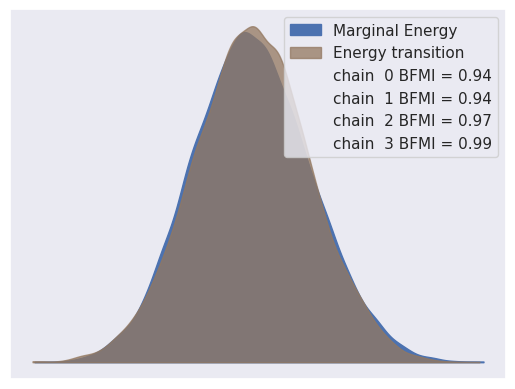}
  \end{minipage}\hfill
  \begin{minipage}{0.4\textwidth}
    \centering
    \includegraphics[width=\linewidth]{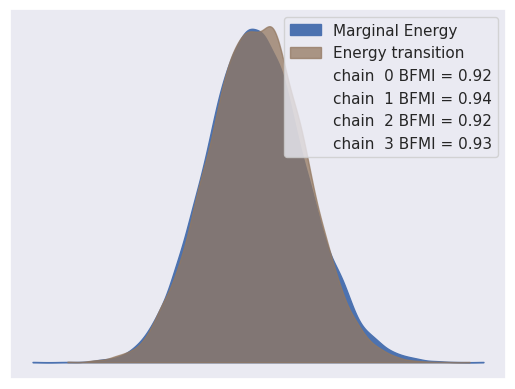}
  \end{minipage}
  \caption{Marginal energy and energy transitions with per-chain BFMI: classical (left) and Berkson (right).}
  \label{fig:hmc-energy}
\end{figure}

\subsection{Sensitivity analysis for the HMC posterior bootstrap}
\label{sec:HMC-sensitivity}
We compare three ways to draw the pseudo-samples used by the NPL update.
\begin{enumerate}
    \item \emph{Regime~A} runs $n\times m$ \emph{independent} HMC chains and retains one post-warm-up draw $\theta_{ij}\sim\Pi_n(\theta\mid W_{1:n},Y_{1:n})$ from each, followed by one latent draw $X_{ij}\sim \Pi(X_i\mid \theta_{ij},W_i,Y_i)$. This aligns exactly with the theoretical construct in Section~\ref{sec:pseudo-sampling} but is computationally expensive.
    \item \emph{Regime~B} fits a single multi-chain HMC run with four parallel chains and takes every 50th state for $\theta$, pairing the corresponding latent draws $X_i$ from the same iterations. This assesses sensitivity to thinning.
    \item \emph{Regime~C} (default) uses the same multi-chain HMC (four parallel chains) but, instead of systematic thinning, selects $m$ spaced-out states to define $\theta_{ij}$ and the corresponding $X_{ij}$ for each $i$. No additional thinning is applied. 
\end{enumerate}

We use the classical or Berkson ME model with the same ME, model misspecification, and HMC settings as in Section~\ref{subsec:HMC-mixing-diagnostics}. We consider \(n\in\{50,100,500\}\) with \(m=3\) (so \(N=n m\) pseudo-samples per regime).

We compare regimes using the unbiased MMD$^2$ with a Gaussian kernel and bandwidth fixed by the median heuristic, applied to the \emph{joint} empirical law of $(X,Y)$. We report (i) the point estimate $\widehat{\MMD}^{2}$, (ii) a bootstrap $95\%$ confidence interval (resampling within each regime), and (iii) a permutation $p$-value based on $2{,}000$ randomizations. We present results for $n\in\{50,100,500\}$ and $m=3$ under both Berkson and classical ME models: see Tables~\ref{tab:mmd-sensitivity-joint-berkson} and~\ref{tab:mmd-sensitivity-joint-classical}. 
The unbiased MMD$^2$ estimator can be slightly negative in finite samples. Under equality of distributions, it is $O_p(1/N)$ with $N=n m$.

\begin{table}[t]
\centering
\small
\begin{tabular}{lccc}
\toprule
Comparison & joint \(\widehat{\MMD}^{2}\) & Bootstrap 95\% CI & Permutation \(p\)-value \\
\midrule
\multicolumn{4}{c}{\(n=50,\; m=3\)}\\
\midrule
A vs B & \(-5.51\times 10^{-3}\) & \([-4.71\times 10^{-3},\; 1.41\times 10^{-2}]\) & \(1.000\) \\
A vs C & \(-4.32\times 10^{-3}\) & \([-4.44\times 10^{-3},\; 1.49\times 10^{-2}]\) & \(0.919\) \\
B vs C & \(-3.82\times 10^{-3}\) & \([-4.18\times 10^{-3},\; 1.52\times 10^{-2}]\) & \(0.846\) \\
\addlinespace
\multicolumn{4}{c}{\(n=100,\; m=3\)}\\
\midrule
A vs B & \(-2.65\times 10^{-3}\) & \([-2.28\times 10^{-3},\; 6.69\times 10^{-3}]\) & \(0.997\) \\
A vs C & \(-2.21\times 10^{-3}\) & \([-2.25\times 10^{-3},\; 7.51\times 10^{-3}]\) & \(0.925\) \\
B vs C & \(-2.57\times 10^{-3}\) & \([-2.21\times 10^{-3},\; 6.72\times 10^{-3}]\) & \(0.990\) \\
\addlinespace
\multicolumn{4}{c}{\(n=500,\; m=3\)}\\
\midrule
A vs B & \(-5.56\times 10^{-4}\) & \([-4.88\times 10^{-4},\; 1.20\times 10^{-3}]\) & \(1.000\) \\
A vs C & \(-5.56\times 10^{-4}\) & \([-4.68\times 10^{-4},\; 1.24\times 10^{-3}]\) & \(1.000\) \\
B vs C & \(-5.46\times 10^{-4}\) & \([-4.73\times 10^{-4},\; 1.25\times 10^{-3}]\) & \(1.000\) \\
\bottomrule
\end{tabular}
\caption{Berkson ME: sensitivity of pseudo-sampling schemes across sample sizes.}
\label{tab:mmd-sensitivity-joint-berkson}
\end{table}

\begin{table}[t]
\centering
\small
\begin{tabular}{lccc}
\toprule
Comparison & joint \(\widehat{\MMD}^{2}\) & Bootstrap 95\% CI & Permutation \(p\)-value \\
\midrule
\multicolumn{4}{c}{\(n=50,\; m=3\)}\\
\midrule
A vs B & \(-4.72\times 10^{-3}\) & \([-4.65\times 10^{-3},\; 1.64\times 10^{-2}]\) & \(0.960\) \\
A vs C & \(-3.94\times 10^{-3}\) & \([-3.94\times 10^{-3},\; 1.72\times 10^{-2}]\) & \(0.859\) \\
B vs C & \(-4.93\times 10^{-3}\) & \([-4.47\times 10^{-3},\; 1.62\times 10^{-2}]\) & \(0.977\) \\
\addlinespace
\multicolumn{4}{c}{\(n=100,\; m=3\)}\\
\midrule
A vs B & \(-2.67\times 10^{-3}\) & \([-2.27\times 10^{-3},\; 7.35\times 10^{-3}]\) & \(1.000\) \\
A vs C & \(-2.86\times 10^{-3}\) & \([-2.32\times 10^{-3},\; 6.72\times 10^{-3}]\) & \(1.000\) \\
B vs C & \(-2.72\times 10^{-3}\) & \([-2.41\times 10^{-3},\; 6.31\times 10^{-3}]\) & \(0.999\) \\
\addlinespace
\multicolumn{4}{c}{\(n=500,\; m=3\)}\\
\midrule
A vs B & \(-5.56\times 10^{-4}\) & \([-4.73\times 10^{-4},\; 1.17\times 10^{-3}]\) & \(0.9995\) \\
A vs C & \(-5.70\times 10^{-4}\) & \([-4.72\times 10^{-4},\; 1.18\times 10^{-3}]\) & \(1.000\) \\
B vs C & \(-5.65\times 10^{-4}\) & \([-4.76\times 10^{-4},\; 1.35\times 10^{-3}]\) & \(1.000\) \\
\bottomrule
\end{tabular}
\caption{Classical ME: sensitivity of pseudo-sampling schemes across sample sizes.}
\label{tab:mmd-sensitivity-joint-classical}
\end{table}

Across $n\in\{50,100,500\}$ the three pairwise \emph{joint} MMD$^2$ estimates are close to zero and decrease in magnitude as $N=n m$ increases, consistent with the $O_p(1/N)$ scale under equality. The bootstrap intervals contain $0$ and the permutation $p$-values are large (Tables~\ref{tab:mmd-sensitivity-joint-berkson}-\ref{tab:mmd-sensitivity-joint-classical}) for all $n$. There is no evidence that the joint distribution of the pseudo-samples $\{(X_{ij},Y_i)\}$ differs across regimes. In particular, using a few well-mixed chains with sparse retention and paired latent draws yields pseudo-samples that are empirically indistinguishable from those obtained by launching $n\times m$ independent chains. Hence our practical implementation gives the same pseudo-sample distribution, within Monte Carlo uncertainty, as the theoretical construction.

\subsection{Discussion: independence requirement of Theorem~\ref{thm:pseudo-classical}}
\label{appendix-independence-discussion}
Our theoretical construct in Section~\ref{sec:pseudo-sampling} imposes independence of the pseudo-samples \(\{X_{ij}\}\) given \(\D\) by drawing \(\theta_{ij}\stackrel{\text{iid}}{\sim}\Pi_n(\theta\mid\D)\) and then \(X_{ij}\sim \Pi(\,\cdot\mid W_i,Y_i,\theta_{ij})\). In this section we relax that requirement across \((i,j)\): we draw \(m\) parameter states \(\theta_j\sim \Pi_n(\theta\mid\D)\) that may be dependent (e.g. states from one or a few HMC chains), and for each fixed \(j\) we set \(X_{ij}\sim \Pi(\,\cdot\mid W_i,Y_i,\theta_j)\) for \(i=1,\dots,n\). Conditional on \((\D,\theta_j)\), the collection \(\{X_{ij}\}_{i=1}^{n}\) is independent across \(i\) by the i.i.d.\ nature of \(\{(W_i,Y_i)\}\).

We show below that, in finite samples with small \(n\), enforcing independence of \(\theta_{ij}\) can reduce the sampling error bound (term~A in the proof of Theorem~\ref{thm:pseudo-classical}) from \(2\sqrt{\kappa}/\sqrt{n}+2\,{M_n}/{\sqrt{n}}+2\sqrt{\kappa}\,r_n\) to \(2\sqrt{\kappa}/\sqrt{nm}\). However, due to posterior contraction, the parameter-mixture contribution associated with the \(\theta\)-mixture in \eqref{eq:predictive-kernel} can be bounded by a quantity of order \({M_n}/{\sqrt{n}}+r_n\) regardless of whether or not the \(\theta_{ij}\) are independent given $\D$. As \(n\) grows (with \(m\) fixed), the gain from enforcing independence across \(\theta_{ij}\) becomes negligible.

This yields the following practical implementation guide:

\begin{enumerate}
    \item \emph{Small \(n\):} the posterior \(\Pi_n(\theta\mid\D)\) may be dispersed; independent (or well-spaced) posterior draws of \(\theta\) can improve exploration of the posterior and reduce Monte Carlo error in the pseudo-samples, at modest cost when \(m\) is small. 
    \item \emph{Large \(n\):} as \(\Pi_n\) concentrates, the bound is driven by \(\frac{M_n}{\sqrt{n}}+r_n\) and near-independence of \(\theta\) is unnecessary. Our sensitivity analysis above (Appendix~\ref{sec:HMC-sensitivity}) empirically confirms that both implementations deliver indistinguishable pseudo-sample distributions as $n$ increases in the regimes considered.
\end{enumerate}

We now derive an alternative bound for term A in the proof of Theorem~\ref{thm:pseudo-classical} without the independence condition on $\{\tilde X_{ij}\}_{i=1,j=1}^{n,\;\;\;\,m}$. All expectations over \(\theta_{1:m}\) are taken with respect to their joint law induced by the sampler. For each fixed \(j\), conditional on \((\D,\theta_j)\) the latent coordinates factorize as
\[
\Pi(X_{1:n,j}\mid \D,\theta_j)=\prod_{i=1}^n \Pi(X_{ij}\mid W_i,Y_i,\theta_j),
\]
because the observations \(\{(W_i,Y_i)\}_{i=1}^n\) are iid. Therefore, the model implies conditional independence of \(X_i\) given \((W_i,Y_i,\theta_j)\). The proof below first exploits this conditional independence to obtain the \(1/\sqrt{n}\) bound, then averages over \(j\), and finally integrates over the (possibly dependent) vector \(\theta_{1:m}\) using Jensen’s inequality and applies posterior contraction.

Recall that term A is
\[
\E_{\D,S} \MMD_{k}(\P^{\mathrm{pseudo}}_{XY},Q_n),
\]
where
\( Q_{i} =\Psi_n\bigl(\cdot\mid W_i,Y_i\bigr)\delta_{Y_i}\)
and its embedding is \( \varphi(Q_i)\in\Hilb_k \).

Given \((\D,S)\equiv\bigl\{(W_i,Y_i);(\tilde X_{ij})\bigr\}_{i=1,j=1}^{n,\;\;\;\,m}\) we form the empirical measures for each $i$
\[ \widehat{\Q}_i :=\frac1m\sum_{j=1}^{m}\delta_{(\tilde X_{ij},Y_i)}, \qquad \varphi\bigl(\widehat{\Q}_i\bigr) =\frac1m\sum_{j=1}^{m} k\bigl((\tilde X_{ij},Y_i),\cdot\bigr)\in\Hilb_k. \]
Rewrite the MMD by re-indexing the double sum:
\begin{equation}\label{eq:A-decomp}
\begin{aligned}
\MMD_{k}(\P^{\mathrm{pseudo}}_{XY},Q_n) &= 
\norm{\varphi\left(\frac{1}{n}\sum_{i=1}^n\widehat{\Q}_i\right) - \varphi(Q_n)}_{_{\Hilb_k}}
\\&\le \frac{1}{m}\norm{\sum_{j=1}^{m}\left\{
      \varphi\left(\frac{1}{n}\sum_{i=1}^{n}\delta_{(\tilde X_{ij},Y_i)}\right)
      -\varphi\left(\frac{1}{n}\sum_{i=1}^{n}\tilde{\Q}_{ij}\right)
   \right\}}_{_{\Hilb_k}}  \\
&\quad+\frac{1}{m}\norm{\sum_{j=1}^{m}\left\{\varphi\left(\frac{1}{n}\sum_{i=1}^{n}\tilde{\Q}_{ij}\right)-\varphi(Q_n)\right\}}_{\Hilb_k},
\end{aligned}
\end{equation}

where, for each \(i,j\), we set
\(
  \tilde{\Q}_{ij}:=\Pi(\cdot\mid W_i,Y_i,\theta_j)\delta_{Y_i}.
\)
Fix \(j\). Conditional on \((\D,\theta_j)\), the vectors
\[
\phi_{ij}:=k\bigl((\tilde X_{ij},Y_i),\cdot\bigr)\in\Hilb_k,\qquad i=1,\dots,n,
\]
are independent with mean
\(
  \E_{S\mid\D,\theta_j}[\phi_{ij}]
  =\varphi(\tilde{\Q}_{ij})
\)
and
\(
  \|\phi_{ij}\|_{\Hilb_k}^2
  \le \kappa_X\kappa_Y.
\)
Therefore,
\begin{equation*}
\begin{aligned}
&\E_{S\mid\D,\theta_j}
\left\|
   \varphi\left(\frac{1}{n}\sum_{i=1}^{n}\delta_{(\tilde X_{ij},Y_i)}\right)
  -\varphi\left(\frac{1}{n}\sum_{i=1}^{n}\tilde{\Q}_{ij}\right)
\right\|_{\Hilb_k}^{2} \\
&\qquad=
\E_{S\mid\D,\theta_j}\left\|
   \frac{1}{n}\sum_{i=1}^{n}\bigl(\phi_{ij}-\E_{S\mid\D,\theta_j}\phi_{ij}\bigr)
\right\|_{\Hilb_k}^{2}
=\frac{1}{n^{2}}\sum_{i=1}^{n}
  \E_{S\mid\D,\theta_j}\left\|
     \phi_{ij}-\E_{S\mid\D,\theta_j}\phi_{ij}
  \right\|_{\Hilb_k}^{2} \\
&\qquad\le \frac{1}{n^{2}}\sum_{i=1}^{n}\bigl(2\sqrt{\kappa_X\kappa_Y}\bigr)^{2}
= \frac{4\kappa_X\kappa_Y}{n}.
\end{aligned}
\end{equation*}
By Jensen’s inequality,
\[
\E_{S\mid\D,\theta_j}
\left\|
   \varphi\left(\frac{1}{n}\sum_{i=1}^{n}\delta_{(\tilde X_{ij},Y_i)}\right)
  -\varphi\left(\frac{1}{n}\sum_{i=1}^{n}\tilde{\Q}_{ij}\right)
\right\|_{\Hilb_k}
\le \frac{2\sqrt{\kappa_X\kappa_Y}}{\sqrt{n}}.
\]
Averaging over \(j=1,\dots,m\) and then over \(\theta_{1:m}\) and \(\D\), we obtain the bound
\begin{equation}
\begin{aligned}
\label{eq:term-no-mix}
&\E_{\D}\E_{\theta_{1:m},S\mid\D}
\left\|
   \frac{1}{m}\sum_{j=1}^{m}
     \Biggl[
       \varphi\left(\frac{1}{n}\sum_{i=1}^{n}\delta_{(\tilde X_{ij},Y_i)}\right)
      -\varphi\left(\frac{1}{n}\sum_{i=1}^{n}\tilde{\Q}_{ij}\right)
     \Biggr]
\right\|_{\Hilb_k}\le \frac{2\sqrt{\kappa_X\kappa_Y}}{\sqrt{n}}.
\end{aligned}
\end{equation}
The extra (parameter-mixture) term can be bounded by posterior contraction. By the triangle inequality in \(\Hilb_k\) and linearity of \(\varphi(\cdot)\),
\begin{equation*}
\left\|
   \varphi\left(\frac{1}{n}\sum_{i=1}^{n}\tilde{\Q}_{ij}\right)
  -\varphi(Q_n)
\right\|_{\Hilb_k}
\le \frac{1}{n}\sum_{i=1}^{n}
  \MMD_{k}\left(\tilde{\Q}_{ij},\Q_i\right),
\qquad
\Q_i:=\Psi_n(\cdot\mid W_i,Y_i)\delta_{Y_i}.
\end{equation*}
By Lemma~\ref{lem:MMD-Jensen} and Lemma~\ref{lem:MMD-conditional},
\begin{equation*}
\begin{aligned}
\MMD_{k}\left(\tilde{\Q}_{ij},\Q_i\right)
&=\MMD_{k}\left(
     \Pi_{\theta_j}(\cdot\mid W_i,Y_i)\delta_{Y_i},
     \int \Pi_{\vartheta}(\cdot\mid W_i,Y_i)\delta_{Y_i}\Pi_n(d\vartheta)
   \right) \\
&\le \int
     \MMD_{k}\left(
       \Pi_{\theta_j}(\cdot\mid W_i,Y_i)\delta_{Y_i},
       \Pi_{\vartheta}(\cdot\mid W_i,Y_i)\delta_{Y_i}
     \right)\Pi_n(d\vartheta) \\
&\le \sqrt{\kappa_Y}\int
     \MMD_{k_X}\left(
       \Pi_{\theta_j}(\cdot\mid W_i,Y_i),
       \Pi_{\vartheta}(\cdot\mid W_i,Y_i)
     \right)\Pi_n(d\vartheta).
\end{aligned}
\end{equation*}
Split the \(\vartheta\)-integral over \(B_n\cup B_n^c\).
On \(B_n\), Assumption~\ref{A2} gives
\[
  \MMD_{k_X}\left(\Pi_{\theta_j}(\cdot\mid W_i,Y_i),\Pi_{\vartheta}(\cdot\mid W_i,Y_i)\right)
  \le L(W_i,Y_i)\|\theta_j-\vartheta\|.
\]
Hence, for any fixed \(\theta_j\),
\begin{equation*}
    \int_{B_n}\MMD_{k_X}\bigl(\Pi_{\theta_j},\Pi_{\vartheta}\bigr)\Pi_n(d\vartheta)
\le
\begin{cases}
L(W_i,Y_i)\int_{B_n}\|\theta_j-\vartheta\|\Pi_n(d\vartheta) \le \frac{2M_n}{n}L(W_i,Y_i), & \theta_j\in B_n,\\
\sqrt{\kappa_X}\Pi_n(B_n), & \theta_j\in B_n^c,
\end{cases}
\end{equation*}
where we used \(\|\theta_j-\vartheta\|\le \|\theta_j-\theta^\ast\|+\|\vartheta-\theta^\ast\|\le 2M_n/\sqrt{n}\) in the first case and \(\MMD_{k_X}\le \sqrt{\kappa_X}\) in the second.
On \(B_n^c\), we have
\[
\int_{B_n^c}\MMD_{k_X}\bigl(\Pi_{\theta_j},\Pi_{\vartheta}\bigr)\Pi_n(d\vartheta)
\le \sqrt{\kappa_X}\Pi_n(B_n^c).
\]
Combining the pieces and averaging over \(i\) gives, for each fixed \(j\) and \(\theta_j\),
\[
\left\|
   \varphi\left(\frac{1}{n}\sum_{i=1}^{n}\tilde{\Q}_{ij}\right)
  -\varphi(Q_n)
\right\|_{\Hilb_k}
\le \sqrt{\kappa_Y}\Biggl[
   \indicator_{\{\theta_j\in B_n\}}
     \cdot \frac{2M_n}{\sqrt{n}}\bar L(W,Y)
 + \indicator_{\{\theta_j\in B_n^c\}}
     \cdot \sqrt{\kappa_X}\Pi_n(B_n)
 + \sqrt{\kappa_X}\Pi_n(B_n^c)
\Biggr],
\]
where \(\bar L(W,Y):=\frac{1}{n}\sum_{i=1}^{n}L(W_i,Y_i)\).
Taking expectation over \(\theta_j\sim\Pi_n(\cdot\mid\D)\) and using
\[
\E_{\theta_j\mid\D}\bigl[\indicator_{\{\theta_j\in B_n\}}\bigr]=\Pi_n(B_n),\qquad
\E_{\theta_j\mid\D}\bigl[\indicator_{\{\theta_j\in B_n^c\}}\bigr]=\Pi_n(B_n^c),
\]
together with
\(
\int_{B_n}\|\vartheta-\theta^\ast\|\Pi_n(d\vartheta)\le M_n/\sqrt{n}
\),
we obtain
\[
\E_{\theta_j\mid\D}
\left\|
   \varphi\left(\frac{1}{n}\sum_{i=1}^{n}\tilde{\Q}_{ij}\right)
  -\varphi(Q_n)
\right\|_{\Hilb_k}
\le \frac{2\sqrt{\kappa_Y}M_n}{\sqrt{n}}\bar L(W,Y)
 +2\sqrt{\kappa_X\kappa_Y}\Pi_n(B_n^c).
\]
By convexity of the norm,
\[
\E_{\theta_{1:m}\mid\D}
\left\|
   \frac{1}{m}\sum_{j=1}^{m}
     \left\{
       \varphi\left(\frac{1}{n}\sum_{i=1}^{n}\tilde{\Q}_{ij}\right)
      -\varphi(Q_n)
     \right\}
\right\|_{\Hilb_k}
\le \frac{2\sqrt{\kappa_Y}M_n}{\sqrt{n}}\bar L(W,Y)
 + 2\sqrt{\kappa_X\kappa_Y}\Pi_n(B_n^c).
\]
Finally, taking expectation over \(\D\) and using
\(C_L:=\E[\bar L(W,Y)]<\infty\) and \(r_n:=\E_{\D}[\Pi_n(B_n^c)]\to 0\),
we obtain
\begin{equation}
\label{eq:term-mix}
\E_{\D}\E_{\theta_{1:m}\mid\D}
\left\|
   \frac{1}{m}\sum_{j=1}^{m}
     \left\{
       \varphi\left(\frac{1}{n}\sum_{i=1}^{n}\tilde{\Q}_{ij}\right)
      -\varphi(Q_n)
     \right\}
\right\|_{\Hilb_k}
\le \frac{2\sqrt{\kappa_Y}C_LM_n}{\sqrt{n}}
  + 2\sqrt{\kappa_X\kappa_Y}r_n.
\end{equation}
As in \eqref{eq:Qn-1'}, rescale \(M_n\) by a fixed constant if desired (replace \(M_n\) with \(M_n/\max\{2\sqrt{\kappa_Y}C_L,1\}\)) to write the right-hand side as \( \frac{M_n}{\sqrt{n}} + \sqrt{\kappa_X\kappa_Y}r_n \).
Combining \eqref{eq:term-no-mix} and \eqref{eq:term-mix} gives
\begin{equation}
    \E_{\D,S} \MMD_{k}(\P^{\mathrm{pseudo}}_{XY},Q_n)\le \frac{2\sqrt{\kappa}}{\sqrt{n}} + \frac{2M_n}{\sqrt{n}}+2\sqrt{\kappa}r_n.
\end{equation}

\end{document}\vspace{0.4cm}